\newcommand*{\Title}{Distributed Automata and Logic}
\newcommand*{\Author}{Fabian Reiter}
\newcommand*{\Subject}{PhD thesis in Theoretical Computer Science}
\begin{document}

\pagenumbering{Alph}
\KOMAoptions{titlepage=firstiscover}
\newlength{\boffset}
\IfEq{\version}{print}{
  \setlength{\boffset}{12mm}
}{
  \setlength{\boffset}{0mm}
}
\renewcommand*{\coverpagetopmargin}{15mm}
\renewcommand*{\coverpagebottommargin}{15mm}
\renewcommand*{\coverpageleftmargin}{\dimexpr16mm+\boffset\relax}
\renewcommand*{\coverpagerightmargin}{16mm}

\setkomafont{subject}{\large\itshape}
\setkomafont{publishers}{\large}

\extratitle{%
  \centering%
  \Large%
  {%
    \fontsize{75}{0}\selectfont \color{black!11}%
    $\displaystyle \TransFunc \,\colon\, \StateSet \times 2^\StateSet \,\to\, \StateSet$%
  }%
  \\[5.5\baselineskip]%
  {%
    \usekomafont{subject}%
    \Subject%
  }%
  \\[2\baselineskip]%
  {%
    \usekomafont{title}\huge\color{chaptergrey}%
    \Title%
  }
  \\[2\baselineskip]%
  {%
    \usekomafont{author}%
    \Author%
  }%
  \\[5\baselineskip]%
    { \large%
      \osf{Dissertation defense on \printdate{2017-12-12}.}%
    }%
  \\[1\baselineskip]%
  { \large%
    \renewcommand{\arraystretch}{1.8}%
    \begin{tabular}{r>{\itshape}l}
      Olivier Carton    & supervisor \\ 
      Bruno Courcelle   & examiner   \\ 
      Pierre Fraigniaud & examiner   \\ 
      Nicolas Ollinger  & examiner   \\ 
      Jukka Suomela     & reviewer   \\ 
      Christine Tasson  & examiner   \\ 
      Wolfgang Thomas   & reviewer      
    \end{tabular}%
  } \\%
  \vfill%
  \includegraphics[width=8ex]{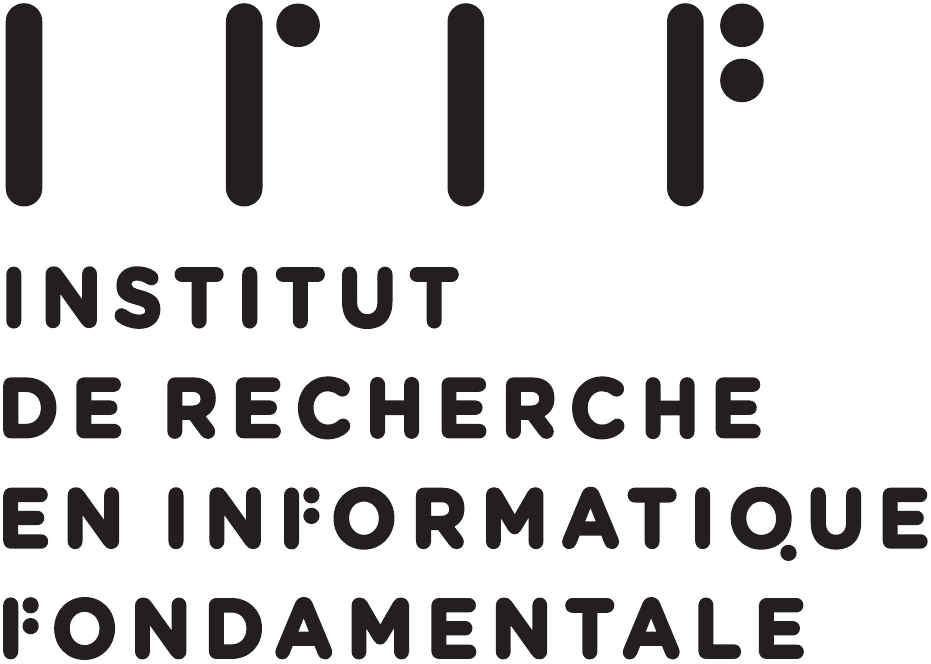}%
  \IfEq{\version}{print}{%
    \hspace{0.17\textwidth}%
  }{%
    \hspace{0.19\textwidth}%
  }%
  \parbox[b]{0.3\textwidth}{%
    \footnotesize\sffamily%
    {\scshape\lsstyle École Doctorale 386} \\%
    {\scriptsize Sciences Mathématiques de Paris Centre} \\[-1.5ex]%
  }%
  \hfill
  \includegraphics[width=20ex]{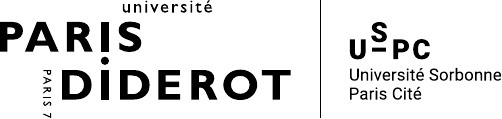}%
}

\subject{\Subject}

\title{\Title}

\author{\Author}

\date{}

\publishers{{%
    \osf{Paris,\, 2017}%
}}

\uppertitleback{%
  Last revision: \osf{\printdate{2017-11-24}} \\[\baselineskip]
  Contact address:\, %
  \href{mailto: \Author <\Email>}{\nolinkurl{\Email}}%
}

\lowertitleback{%
  \href{http://creativecommons.org/licenses/by/4.0/}%
       {\includegraphics[height=\baselineskip+\heightof{A}]{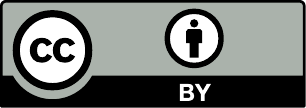}}%
  \quad%
  \parbox[b]{0.7\textwidth}{
    \textcopyright\ \Author.\, %
    This work is licensed under the \\ %
    \href{http://creativecommons.org/licenses/by/4.0/}%
         {Creative Commons Attribution 4.0 International License}.%
  }%
}
\maketitle

\frontmatter

\pdfbookmark[section]{\contentsname}{toc} 
\tableofcontents

\chapter{Abstract}

\kl{Distributed automata} are finite-state machines
that operate on finite \kl{directed graphs}.
Acting as synchronous distributed algorithms,
they use their input \kl[digraph]{graph} as a \mbox{network}
in which identical processors communicate
for a possibly infinite number of synchronous rounds.
For the \emph{\kl{local}} variant
of those \kl[distributed automata]{automata},
where the number of rounds is bounded by a constant,
Hella~et~al.\ (\osf{2012,~2015})
have established a \kl[formula]{logical} characterization
in terms of basic \kl{modal logic}.
In this thesis,
we provide similar \kl[formula]{logical} characterizations
for two more expressive classes of \kl{distributed automata}.

The first class extends \kl{local automata}
with a \kl{global acceptance condition}
and the ability to \kl[\ALDAg]{alternate} between
nondeterministic and parallel \kl[alternating run]{computations}.
We show that it is \kl[device equivalent]{equivalent} to
\kl{monadic second-order logic} on \kl[digraphs]{graphs}.
By restricting \kl[global transition function]{transitions} to be
\kl[\NLDAg]{nondeterministic} or \kl[\DLDAg]{deterministic},
we also obtain two strictly weaker variants
for which the emptiness problem is decidable.

Our second class transfers the standard notion
of \kl[asynchronous automaton]{asynchronous algorithm}
to the setting of \kl{nonlocal distributed automata}.
The resulting machines
are shown to be \kl[device equivalent]{equivalent}
to a small fragment of least fixpoint logic,
and more specifically, to a
\kl[backward $\mu$-fragment]{restricted variant of the modal $\mu$-calculus}
that allows \kl{least fixpoints} but forbids greatest fixpoints.
Exploiting the connection with \kl[backward $\mu$-fragment]{logic},
we additionally prove that the expressive power
of those \kl{asynchronous automata}
is independent of whether or not messages can be~lost.

We then investigate the decidability of the
\kl{emptiness problem} for several classes
of \kl{nonlocal automata}.
We show that the problem is undecidable in general,
by simulating a Turing machine
with a \kl{distributed automaton}
that exchanges the roles of space and time.
On the other hand,
the problem is found to be decidable in $\LOGSPACE$
for a class of \kl{forgetful automata},
where the \kl{nodes} see the messages
received from their \kl[incoming neighbors]{neighbors}
but cannot remember their own \kl{state}.

As a minor contribution,
we also give new proofs of the strictness
of several \kl{set quantifier} alternation hierarchies
that are based on \kl{modal logic}.

\paragraph{Keywords.}

Automata,
Distributed algorithms,
Modal logic,
Monadic second-order logic,
Graphs.

\chapter{Résumé}

Les automates distribués sont des machines à états finis
qui opèrent sur des graphes orientés finis.
Fonctionnant en tant qu'algorithmes distribués synchrones,
ils utilisent leur graphe d'entrée comme un réseau
dans lequel des processeurs identiques communiquent entre eux
pendant un certain nombre (éventuellement infini) de rondes synchrones.
Pour la variante \emph{locale}
de ces automates,
où le nombre de rondes est borné par une constante,
Hella~et~al.\ (\osf{2012,~2015})
ont établi une caractérisation logique
par des formules de la logique modale de base.
Dans le cadre de cette thèse,
nous présentons des caractérisations logiques similaires
pour deux classes d'automates distribués plus expressives.

La première classe étend les automates locaux
avec une condition d'acceptation globale
et la capacité d'alterner entre
des modes de calcul non-déterministe et parallèle.
Nous montrons qu'elle est équivalente à
la logique monadique du second ordre sur les graphes.
En nous restreignant à des transitions
non-déterministes ou déterministes,
nous obtenons également deux variantes d'automates strictement plus faibles
pour lesquelles le problème du vide est décidable.

Notre seconde classe adapte la notion standard
d'algorithme asynchrone
au cadre des automates distribués non-locaux.
Les machines résultantes
sont prouvées équivalentes
à un petit fragment de la logique de point fixe,
et plus précisément, à une
variante restreinte du $\mu$-calcul modal
qui autorise les plus petits points fixes
mais interdit les plus grands points fixes.
Profitant du lien avec la logique,
nous montrons aussi que la puissance expressive
de ces automates asynchrones
est indépendante du fait que des messages puissent être perdus ou non.

Nous étudions ensuite la décidabilité du
problème du vide pour plusieurs classes
d'automates non-locaux.
Nous montrons que le problème est indécidable en général,
en simulant une machine de Turing
par un automate distribué
qui échange les rôles de l'espace et du temps.
En revanche,
le problème s'avère décidable en $\LOGSPACE$
pour une classe d'automates oublieux,
où les nœuds voient les messages
reçus de leurs voisins,
mais ne se souviennent pas de leur propre état.

Finalement, à titre de contribution mineure,
nous donnons également de nouvelles preuves de séparation
pour plusieurs hiérarchies d'alternance de quantificateurs
basées sur la logique modale.

\paragraph{Mots-clés.}

Automates,
Algorithmes distribués,
Logique modale,
Logique monadi\-que du second ordre,
Graphes.
\chapter{Acknowledgments}

First and foremost,
I would like to thank my advisor, Olivier Carton,
for his continuous support during the past three years.
This included
finding a scholarship for me,
spending countless hours with me in front of a whiteboard,
as well as helping me in the writing
of several papers and this thesis.
I~am especially grateful
for the immense freedom he granted me
throughout the entire period,
letting me pursue ideas of my own,
but at the same time always being available for discussion.
He provided guidance whenever I needed it,
but never exerted pressure,
gave very good advice,
but always let me decide for myself.
In my opinion,
this is exactly how a doctoral thesis should be supervised,
but it can by no means be taken for granted.
I~therefore consider myself very fortunate
to have had Olivier as my advisor.

My sincere thanks extend to
Bruno Courcelle,
Pierre Fraigniaud,
Nicolas Ollinger,
Jukka Suomela,
Christine Tasson, and
Wolfgang Thomas,
who kindly accepted to be part of my thesis committee.

Jukka Suomela and Wolfgang Thomas
did me the great honor of reviewing
a preliminary version of the manuscript.
They gave detailed and extremely flattering feedback,
and both made an important remark
that led to a major improvement of this document:
I had failed to include a discussion
of perspectives for future research.
This shortcoming has now been addressed
by the addition of \cref{ch:perspectives}.
Furthermore,
Wolfgang compiled a very helpful collection of suggestions,
which I~have tried to incorporate into the present version.

Bruno Courcelle graciously gave his time
to read my master's thesis in \osf{2014},
although he had never heard of me before.
He then informed Géraud Sénizergues,
who most kindly invited me
to the final conference of the \textsc{frec} project in Marseille.
This opened the door for me
into the French community of automata theory,
especially since I met my future doctoral advisor
at that conference.
Thus,
it is indirectly through Bruno that I came to Paris.

There, at \textsc{irif} (formerly \textsc{liafa}),
Pierre Fraigniaud showed a very kind interest in my work
and opened another door for me,
this time into the community of distributed computing.
He did so by referring to my first paper
in several of his own collaborations
and by providing various opportunities for me to meet his colleagues,
in particular at two international workshops in Bertinoro and Oaxaca.
The latter was made possible
through a joint effort of Pierre and Sergio Rajsbaum.

In addition to the committee,
I am grateful
to Fabian Kuhn and Andreas Podelski,
who supervised my master's thesis
(the starting point for the present thesis),
to Antti Kuusisto,
who collaborated with me on the work in \cref{ch:emptiness},
to Laurent Feuilloley,
who proofread and corrected
the overview of distributed decision
in \cref{ssec:related-distributed-computing},
to Charles Paperman,
who was the first to tell me
that I was unknowingly working with some kind of \kl{modal logic},
to Nicolas Bacquey,
who informed me that
exchanging space and time is a common technique
in cellular automata theory,
and to Thomas Colcombet,
who developed the
\href{https://www.ctan.org/pkg/knowledge}{\textsf{knowledge}}
package
and encouraged me to use it~here.

\begin{center}
  \adfflourishleftdouble
\end{center}

We have just crossed the dividing line,
where I stop mentioning people by name.
This may seem hasty, or even harsh,
but there are two good reasons.
First,
I value my privacy very much
and do not want to share personal details in a document
that will be publicly available on the Internet.
Second,
the larger the circle of people I~include,
the greater the risk of forgetting someone.
A simple rule that helps to avoid both of these issues
is to mention only people
who stand in some direct professional relation to the thesis.
Nevertheless,
many more have helped me over the years
and had a tremendous influence
on my life and work.
I therefore sincerely hope
not to offend anyone
by expressing my gratitude
in the following simplistic manner:

\begin{center}
  \adfflatleafleft\quad
  \emph{Many thanks to my colleagues, friends, and family\:\!!}\!
  \quad\adfflatleafright
\end{center}

\mainmatter

\chapter{Introduction}
\label{ch:introduction}

The present thesis aims to contribute
to the recently initiated development of
a \emph{descriptive complexity} theory
for \emph{distributed computing}.

\paragraph{What does this mean?}

Descriptive complexity~\cite{DBLP:books/daglib/Immerman99}
basically compares the expressive powers of
certain classes of \emph{algorithms}, or abstract machines,
\marginnote{We identify algorithms with abstract machines.}
with those of
certain classes of logical \emph{\kl{formulas}}.
The Holy Grail, so to speak,
is to establish \kl[device equivalences]{equivalences} of the form:
\begin{quote}
  “Algorithm class~$\AlgorithmSet$
  has exactly the same power as
  \kl{formula} class~$\FormulaSet$.”
\end{quote}
Probably the most famous result in this area
is Fagin's theorem from \osf{1974}~\cite{Fagin74},
which roughly states that
a \kl[digraph language]{graph property} can be recognized
by a nondeterministic Turing machine in polynomial time
if and only if
it can be \kl{defined}
by a \kl{formula} of existential second-order logic.
The theorem thereby provides
a logical characterization of the complexity class~$\NPTIME$.

Distributed computing~\cite{DBLP:books/mk/Lynch96,Peleg00},
on the other hand,
studies networks composed of several interconnected processors
that share a common goal.
The processors communicate with each other by passing messages
along the links of the network
in order to collectively solve some computational problem.
In many cases,
this is a \kl[digraph]{graph} problem,
where the considered problem instance
is precisely the \kl[digraph]{graph} defined by the network itself.
All processors run the same algorithm concurrently,
and often make no prior assumptions
about the size and topology of the \kl[digraph]{graph}.
Typical problems
that can be solved by such \emph{distributed algorithms}
include
\kl[coloring]{graph coloring}, leader election,
and the construction of spanning trees and maximal independent~sets.

Now,
the ultimate objective that motivates this thesis
is to develop an extension of descriptive complexity
for the classes of algorithms considered in distributed computing.
This means
that we seek to establish \kl[device equivalences]{equivalences} of the form:
\begin{quote}
  “Distributed algorithm class~$\AlgorithmSet$
  has the same power as
  \kl{formula} class~$\FormulaSet$.”
\end{quote}
\marginnote{Distributed algorithms \\ are abstract machines \\ that communicate.}
However,
such a statement can only be substantial
if we have a precise definition of class~$\AlgorithmSet$.
Therefore,
we will formally represent distributed algorithms
as abstract machines,
instead of the more common, but informal, representations in pseudocode.

\paragraph{Why is this interesting?}

First and foremost,
a descriptive complexity theory for distributed computing
would offer the same benefits as
its classical counterpart does for sequential computing:
\begin{enumerate}[wide,labelwidth=!,labelindent=0ex,nosep] 
\item \label{itm:naturalness}
  If distributed algorithm class~$\AlgorithmSet$
  turns out to be \kl[device equivalent]{equivalent}
  to \kl{formula} class~$\FormulaSet$,
  then this provides strong evidence
  for the naturalness of both classes.
  Indeed,
  the definition of any mathematical device may,
  by itself, seem arbitrary.
  \emph{Why should distributed machines communicate precisely that way?}
  \emph{Why should logical \kl{formulas} contain precisely those components?}
  But if two devices,
  that appear rather different on the surface,
  turn out to be descriptions of the exact same thing,
  then this is unlikely to be pure coincidence.
\item \label{itm:connecting-fields}
  Connecting two seemingly unrelated fields
  --~here, distributed computing and logic~--
  can provide new insights into both fields.
  Some proofs might be easier to perform
  if one adopts the point of view of one setting
  rather than the other.
  \marginnote{
    Formal logic dates back
    to the \osf{mid-19th} century, \\
    while distributed computing started in the \osf{1970s/1980s}.
  }
  Furthermore,
  some open questions in one field
  might already have well-known answers in the other.
  Especially the field of distributed computing
  could benefit from this,
  as it is more than a century younger than formal logic,
  and therefore has had less time to evolve.
\end{enumerate}

Second,
distributed computing also brings an interesting new perspective
to the field of descriptive complexity itself:
\begin{enumerate}[resume*]
\item Distributed algorithms can be evaluated
  on the same input as logical \kl{formulas},
  without any need for encoding that input.
  More precisely,
  the network in which a distributed algorithm is executed
  may be considered identical to
  the \kl{structure}
  on which the truth of a corresponding \kl{formula} is evaluated.
  This stands in sharp contrast
  to classical descriptive complexity theory.
  For instance,
  in the case of Fagin's theorem,
  the input of a Turing machine is
  a binary string that encodes a finite \kl[digraph]{graph}
  in form of an adjacency matrix.
  Hence,
  the equivalence of
  nondeterministic polynomial-time Turing machines
  and
  existential second-order logic
  is actually stated
  with respect to such an encoding.
\end{enumerate}

\section{Background}

Let us now take a step back and put the subject into context.
We start with a brief summary of some classical results in automata theory,
and then turn to more recent developments in distributed computing.

\subsection{Related work in automata theory}
\label{ssec:related-automata-theory}

Although the field of descriptive complexity theory
really started with Fagin's theorem in the \osf{1970s},
the idea of characterizing abstract machines through logical \kl{formulas}
had already appeared earlier in automata theory.
In the early \osf{1960s},
Büchi~\cite{Buchi60},
Elgot~\cite{Elgot61}
and Trakhtenbrot~\cite{Trakhtenbrot61}
discovered independently of each other
that the regular languages,
which are recognized by \emph{finite automata}
on \kl[pointed dipaths]{words},
are precisely
the languages \kl{definable} in \emph{\kl{monadic second-order logic}},
or~$\MSOL$
(see, e.g., \cite[Thm~3.1]{Thomas97b}).
The latter is an extension of \kl{first-order logic},
which in addition to allowing quantification
over elements of a given \kl{domain}
(e.g., positions in a \kl[pointed dipath]{word}),
also allows to quantify over sets of such elements.
Along with several other equivalent characterizations,
in particular through
regular expressions~\cite{Kleene56},
regular grammars~\cite{DBLP:journals/tit/Chomsky56},
and finite monoids~\cite{Nerode58},
the \kl[device equivalence]{equivalence} between automata and logic
helped to legitimize regularity
as a highly natural concept in formal language theory
(cf.\ \cref{itm:naturalness}, above).
Furthermore,
it proved that
the satisfiability and validity problems for $\MSOL$
on \kl[pointed dipaths]{words} are decidable,
because so are the corresponding problems for finite automata.
In this way,
the field of logic directly benefited
from the connection with automata theory
(cf.\ \cref{itm:connecting-fields}).
Nowadays,
such connections also play a central role in model checking,
where one needs to decide whether a system,
represented by an automaton,
satisfies a given specification,
expressed as a logical \kl{formula}.

About a decade later,
the result was generalized from \kl[pointed dipaths]{words}
to \kl{labeled} \kl[pointed ordered ditrees]{trees}
by Thatcher and Wright~\cite{DBLP:journals/mst/ThatcherW68}
and Doner~\cite{DBLP:journals/jcss/Doner70}
(see, e.g., \cite[Thm~3.6]{Thomas97b}).
The corresponding \kl[pointed ordered ditree]{tree} automata
can be seen as a canonical extension of finite automata
to \kl[pointed ordered ditrees]{trees};
as far as $\MSOL$ is concerned,
the generalization to \kl[pointed ordered ditrees]{trees}
is even more straightforward,
since both \kl[pointed dipaths]{words}
and \kl[pointed ordered ditrees]{trees} are merely special cases
of the relational \kl{structures}
on which logical \kl{formulas} are usually evaluated.
The other characterizations of regular languages
can also be generalized from \kl[pointed dipaths]{words}
to \kl[pointed ordered ditrees]{trees} in a natural manner,
and quite remarkably,
they all remain equivalent on \kl[pointed ordered ditrees]{trees}
(see, e.g., \cite{TATA2008}).
Hence,
the notion of regularity extends directly
to \kl[pointed ordered ditree]{tree} \kl{languages}.
Moreover,
similar \kl[device equivalences]{equivalences} have been established
for several other generalizations of \kl[pointed dipaths]{words},
such as nested words (see~\cite{DBLP:journals/jacm/AlurM09})
and Mazurkiewicz traces (see, e.g., \cite{DiekertM97}).

In contrast,
the situation becomes far more complicated
if we expand our field of interest to arbitrary finite \kl[digraphs]{graphs}
(possibly with \kl{node} \kl{labels} and multiple \kl{edge relations}).
Although some of the characterizations mentioned above
can be generalized to \kl[digraphs]{graphs} in a meaningful way,
they are, in general, no longer equivalent.
The logical approach
is certainly the easiest to generalize,
since \kl[digraphs]{graphs} are
yet another special case of relational \kl{structures}.
While on \kl[pointed dipaths]{words} and \kl[pointed ordered ditrees]{trees}
the existential fragment of $\MSOL$ ($\EMSOL$)
is already sufficient to characterize regularity,
it is strictly less expressive than full $\MSOL$ on \kl[digraphs]{graphs},
\marginnote{
  Fagin's result was later extended
  by Matz, \\
  Schweikardt and Thomas
  to yield a complete \\
  separation
  of the $\MSOL$ \kl[set quantifier]{quantifier} alternation hierarchy
  \textnormal{(\textit{see}~\cite{DBLP:journals/iandc/MatzST02})}.
}
as has been shown by Fagin in~\cite{DBLP:journals/mlq/Fagin75}.
Similarly,
the algebraic approach (based on monoids)
has been adapted to \kl[digraphs]{graphs} by \mbox{Courcelle}
in~\cite{DBLP:journals/iandc/Courcelle90},
and it turns out that $\MSOL$ is strictly less powerful
than his notion of recognizability.
(The latter is defined in terms of homomorphisms
into many-sorted algebras that are finite in each sort.)
A common pattern that emerges from such results
is that the different characterizations of regularity drift apart
as the complexity of the considered \kl{structures} increases.
In this sense,
regularity cannot be considered
a natural --~or even well-defined~-- property
of \kl[digraph languages]{graph languages}.

To complicate matters even further,
the automata-theoretic characterization
which is instrumental in the theory of \kl[pointed dipath]{word}
and \kl[pointed ordered ditree]{tree} \kl{languages},
does not seem to have a natural counterpart on \kl[digraphs]{graphs}.
A \kl[pointed dipath]{word} or \kl[pointed ordered ditree]{tree} automaton
can scan its entire input
in a single canonical traversal,
which is completely determined by the structure of the input
(i.e., left-to-right, for \kl[pointed dipaths]{words},
and bottom-up, for \kl[pointed ordered ditrees]{trees}).
On arbitrary \kl[digraphs]{graphs}, however,
there is no sense of a global direction
that the automaton could follow,
especially since we do not even require \kl{connectivity} or acyclicity.
This is one of the reasons
why much research in the area of \kl[digraph languages]{graph languages}
has focused on $\MSOL$.
In the words of Courcelle and Engelfriet~\cite[p.~3]{CourcelleE12}:
\begin{quote}
  \textit{
    \dots
    \kl{monadic second-order logic} can be viewed as playing
    the role of \\
    “finite automata on \kl[digraphs]{graphs}”
    \dots
  }
\end{quote}

Another approach,
investigated by Thomas
in~\cite{DBLP:conf/icalp/Thomas91} and~\cite{DBLP:conf/tapsoft/Thomas97},
is to nondeterministically assign a state of the automaton
to each \kl{node} of the \kl[digraph]{graph},
and then check that this assignment satisfies
certain local “transition” conditions for each \kl{node}
(specified with respect to \kl[neighbor]{neighboring} \kl{nodes} within a fixed radius)
as well as certain global occurrence conditions
at the level of the entire \kl[digraph]{graph}.
The \emph{\kl[digraph]{graph} acceptors} devised by Thomas
turn out to be equivalent to $\EMSOL$ on \kl[digraphs]{graphs} of bounded degree.
Following up on this idea in~\cite{DBLP:journals/dmtcs/SchwentickB99},
Schwentick and Barthelmann have also suggested a more general model,
which remains very close to a normal form of $\EMSOL$,
but overcomes the constraint of boundedness on the degree.
Both of these \kl[digraph]{graph} automaton models are legitimate generalizations
of classical finite automata,
in the sense that they are equivalent to them
and can easily simulate them if we restrict the input
to (\kl[digraphs]{graphs} representing) \kl[pointed dipaths]{words}
or \kl[pointed ordered ditrees]{trees}.
However,
on arbitrary \kl[digraphs]{graphs}, they are less well-behaved,
which is a direct consequence of their \kl[device equivalence]{equivalence} with $\EMSOL$.
In particular,
they do not satisfy closure under complementation,
and their emptiness problem is undecidable.
It is worth noting that
both models are somewhat similar to
the local distributed algorithms considered in the next
\lcnamecref{ssec:related-distributed-computing},
insofar as they take into account the local view
that each \kl{node} has of its fixed-radius neighborhood.
This connection has already been recognized and exploited
by Göös and Suomela
in~\cite{DBLP:conf/podc/GoosS11,DBLP:journals/toc/GoosS16};
we will mention it again below.

\subsection{Related work in distributed computing}
\label{ssec:related-distributed-computing}

Rather surprisingly,
the idea of extending descriptive complexity theory
to the setting of distributed computing
seems to be relatively new.
The first research in that direction
(of which the author is aware)
started in the early \osf{2010s}
as a collaboration between
the Finnish communities of logic and distributed algorithms.

In~\cite{DBLP:conf/podc/HellaJKLLLSV12,DBLP:journals/dc/HellaJKLLLSV15},
Hella~et~al.\ have presented a systematic study
of several models of distributed computing
that impose restrictions of varying degrees
on the communication between the \kl{nodes} of a network.
Their most permissive model
corresponds to the well-established port-numbering model,
where every \kl{node} has
a separate communication channel with each of its \kl{neighbors}
and is guaranteed that
the messages sent and received through that channel
relate consistently to the same \kl{neighbor};
the network is anonymous
in the sense that \kl{nodes} are not equipped with unique identifiers.
In the nomenclature
of~\cite{DBLP:conf/podc/HellaJKLLLSV12,DBLP:journals/dc/HellaJKLLLSV15},
the class of \kl[digraph]{graph} problems
solvable in this model
by deterministic \mbox{\emph{synchronous}} algorithms
is denoted by~$\VVc$.
\marginnote{
  The classes of Hella~et~al.:
  \normalfont
  \begin{tabular}{@{\,}l@{\;}r@{\;\;}l@{\,}}
    \toprule
           & \emph{Incoming} & \emph{Outgoing} \\
    \midrule
    $\VVc$ & vector          & vector \\
    $\VV$  & \threeemdash    & \threeemdash \\
    $\MV$  & multiset        & \threeemdash \\
    $\SV$  & set             & \threeemdash \\
    $\VB$  & vector          & singleton \\
    $\MB$  & multiset        & \threeemdash \\
    $\SB$  & set             & \threeemdash \\
    \bottomrule
  \end{tabular}
}
Here,
“synchronous” means that
all \kl{nodes} of the network share a global clock,
thereby allowing the computation
to proceed in an infinite sequence of rounds;
in each round,
all the \kl{nodes} simultaneously exchange messages with their \kl{neighbors},
and then update their local state
based on the newly obtained information.
Next,
by dropping the channel-consistency guarantee,
one obtains the class~$\VV$,
where in each round,
every \kl{node} sees a vector consisting of
all the incoming messages received from its \kl{neighbors},
and generates a vector of outgoing messages
that are sent to the \kl{neighbors};
the difference with~$\VVc$ is that
the two vectors are not necessarily sorted in the same order,
so the \kl{node} cannot assume that
the \kl{neighbor} who sends the $i$-th incoming message
is the same who receives the $i$-th outgoing message.
(However, the sorting orders do not change throughout the rounds.)
Communication is further restricted in the classes~$\MV$ and~$\SV$,
where the vector of incoming messages is replaced
by a multiset and a set, respectively.
In the former case,
a \kl{node} cannot identify the senders of its incoming messages,
whereas in the latter,
it cannot even distinguish between several identical messages.
Similarly,
the classes $\VB$, $\MB$, and \Intro*{$\SB$} are characterized
by the fact that the outgoing vector is replaced by a singleton,
meaning that
a \kl{node} must broadcast the same message to all of its \kl{neighbors}.

The main result
of~\cite{DBLP:conf/podc/HellaJKLLLSV12,DBLP:journals/dc/HellaJKLLLSV15}
is that the preceding classes satisfy the linear order
\begin{equation*}
  \SB \subsetneqq \MB = \VB \subsetneqq \SV = \MV = \VV \subsetneqq \VVc.
\end{equation*}
The same order holds for the so-called \emph{local}
(or constant-time) versions of these classes,
which contain only those \kl[digraph]{graph} problems
that can be solved in a constant number of communication rounds,
regardless of the size of the network.
(For a relatively recent survey of local algorithms,
see~\cite{DBLP:journals/csur/Suomela13}.)

Most relevant for the present thesis,
the same paper also establishes a very natural correspondence
between these local classes
and several variants of \kl{modal logic}.
In particular,
a \kl[digraph language]{graph property} lies in~$\SB[1]$, the local version of~$\SB$,
if and only if
it can be \kl{defined} by a \kl{formula} of \emph{\kl{backward modal logic}}.
\marginnote{
  Using the backward version of standard \kl{modal logic}
  is merely a presentational choice,
  motivated by the intuition
  that the messages of a distributed algorithm
  should flow in the same direction
  as the network links on which they travel.
  The presentation
  in~\textnormal{\cite{DBLP:conf/podc/HellaJKLLLSV12,DBLP:journals/dc/HellaJKLLLSV15}}
  is a bit different.
}
Just like a distributed algorithm,
such a \kl{formula} is evaluated
from the local point of view of a particular \kl{node} in the input \kl[digraph]{graph}.
In order to make a statement about the \kl{incoming neighborhood} of that \kl{node},
\kl{backward modal logic} allows to move the current point of evaluation
to one of the \kl{incoming neighbors}
by means of a special operator, called \emph{\kl{backward modality}}.
The key insight of Hella~et~al.\ is
that the nesting depth of these \kl{modalities} corresponds precisely to
the running time of the local algorithms that solve problems in~$\SB[1]$.
With this idea in mind,
it is possible to derive similar characterizations
for the other local classes
$\MB[1], \dots, \VVc[1]$
in terms of extensions of \kl{backward modal logic}
that offer additional types of \kl{modalities}
(viz., multimodal and graded modal logic).

Motivated by these results,
the connection between distributed algorithms and \kl{modal logic}
was further investigated by Kuusisto
in~\cite{DBLP:conf/csl/Kuusisto13} and~\cite{DBLP:journals/corr/Kuusisto14a}.
The first paper lifts the constraint of locality required
in~\cite{DBLP:conf/podc/HellaJKLLLSV12,DBLP:journals/dc/HellaJKLLLSV15},
thereby allowing algorithms with arbitrary running times.
Now,
for local algorithms,
it does not matter whether we impose a restriction
on the amount of memory space used by each \kl{node},
because in a constant number of rounds,
a \kl{node} can only visit a constant number of different states.
Therefore the local algorithms characterized by Hella~et~al.\
are implicitly finite-state machines.
On the other hand,
in the \kl[nonlocal automata]{nonlocal} case considered by Kuusisto,
space restrictions have to be made explicit.
His papers focus on algorithms for the class~$\SB$,
since results for that class can easily be adapted to the others.
In~\cite{DBLP:conf/csl/Kuusisto13},
particular attention is devoted to
a category of such algorithms
that act as finite-state semi-deciders;
we shall refer to them as \emph{\kl{distributed automata}}.
The main result establishes
a logical characterization of \kl{distributed automata}
in terms of a new recursive logic dubbed
\emph{modal substitution calculus}.
In the same vein,
it is also shown that
the infinite-state generalizations of \kl{distributed automata}
recognize precisely those \kl[digraph languages]{graph properties}
whose complement is definable by
the conjunction of a possibly infinite number
of \kl[backward modal logic]{backward modal} \kl{formulas}
(called modal theory).
Furthermore,
it is proven that on finite \kl[digraphs]{graphs},
\kl{distributed automata} are strictly more expressive
than the least-fixpoint fragment of the backward $\mu$-calculus.
This logic,
which we shall refer to simply as the \emph{\kl{backward $\mu$-fragment}},
extends \kl{backward modal logic} with a least fixpoint operator
that may not be negated.
It thus allows to express statements using least fixpoints,
but unlike in the full backward $\mu$-calculus,
greatest fixpoints are forbidden.
Finally,
the second paper~\cite{DBLP:journals/corr/Kuusisto14a}
makes crucial use of the connection with logic
to show that
universally halting distributed automata are necessarily local
if infinite graphs are allowed into the picture.

Closely related to the work mentioned above,
the last decade has also seen active research in
\emph{distributed decision}~\cite{DBLP:journals/eatcs/FeuilloleyF16},
a field that aims to develop
a counterpart of computational complexity theory
for distributed computing.
In that context,
the \kl{nodes} of a given network have to collectively decide
whether or not their network satisfies
some \kl[digraph language]{global property}.
Every \kl{node} first computes a local answer,
based on the information received from its \kl{neighbors}
over several rounds of communication,
and then all answers are aggregated
to produce a global verdict.
Typically,
the network is considered to be in a valid state
if it has been unanimously accepted by all \kl{nodes};
in other words,
the global answer is the logical conjunction
of the local answers.

Just as in classical complexity theory,
a common approach in distributed decision is
to start with some base class of deterministic algorithms,
and then extend it with additional features,
such as nondeterminism and randomness.
However,
depending on the underlying model of distributed computing,
these additional features
can quickly lead to excessive expressive power.
For instance,
if we add unrestricted nondeterminism
to the widely adopted \textsc{local} model,
\marginnote{
  The \textsc{local} model
  allows unbounded synchronous communication
  between Turing-complete processors
  that are equipped with unique identifiers.
  Despite the name,
  algorithms in this model are not necessarily local.
}
then the \kl{nodes} can simply guess
a representation of the entire network
and verify in one round that their guess was correct.
Consequently,
nondeterministic algorithms in the \textsc{local} model
can already decide every Turing-decidable
\kl[digraph language]{graph property}
in a single round of communication
(see, e.g., \cite[\S~4.1.1]{DBLP:journals/eatcs/FeuilloleyF16}).
To make things more interesting,
one therefore often imposes a restriction
on the number of bits
that each \kl{node} can nondeterministically choose;
viewing nondeterminism as the ability to “guess and verify”,
we refer to the bit strings guessed by the \kl{nodes}
as \emph{certificates}.
A typically chosen bound on the size of those certificates
is logarithmic in the size of the network
because this allows each \kl{node} to guess
only a constant number of processor identifiers.
In stark contrast to the unbounded case,
where Turing-decidability is the only limit,
there are natural decision problems
that cannot be solved by any nondeterministic local algorithm
whose certificates are logarithmically bounded.
An example of such a problem is to verify
whether a given \kl[ditree]{tree} is a minimum spanning tree,
as has been shown by Korman and Kutten
in~\cite{DBLP:journals/dc/KormanK07}.
Nevertheless,
on \kl{connected} \kl{graphs},
nondeterminism with logarithmic certificates provides enough power
to decide every \kl[digraph language]{property}
\kl{definable} in $\EMSOL$
within a constant number of rounds,
essentially by using nondeterministic bits
to construct a spanning tree
and simulate existential \kl{set quantifiers}.
This observation has been made by Göös and Suomela
in~\cite{DBLP:conf/podc/GoosS11,DBLP:journals/toc/GoosS16},
based on the work of Schwentick and Barthelmann
mentioned in the previous \lcnamecref{ssec:related-automata-theory}.

Once existential quantification
has been introduced into the system,
a natural follow-up
is to complement it with universal quantification;
for instance,
in classical complexity theory,
alternating the two types of quantifiers
leads to the polynomial hierarchy,
which generalizes the classes $\NPTIME$ and $\coNPTIME$.
While not very interesting for the unrestricted \textsc{local} model
with unbounded certificates
(where nondeterminism already suffices to decide everything possible),
this form of alternation
provides a genuine increase of power
if we consider distributed algorithms
that are oblivious to the \kl{node} identifiers.
In~\cite{DBLP:conf/stacs/BalliuDFO17},
Balliu, D'Angelo, Fraigniaud and Olivetti
showed that we require one alternation
between universal and existential quantifiers
in order to be able to decide
every Turing-decidable \kl[digraph language]{property}
in the identifier-oblivious variant of the \textsc{local} model
(with unbounded certificates);
hence the corresponding alternation hierarchy
collapses to its second level.
On the other hand,
the hierarchy of the standard \textsc{local} model
with certificates of logarithmic size
is much less well understood;
in particular,
it is still open whether or not that hierarchy is infinite.
As a first step towards an answer,
Feuilloley, Fraigniaud and Hirvonen showed
in~\cite{DBLP:conf/icalp/FeuilloleyFH16}
that if there is equality between the existential and universal versions
of a given level in the logarithmic hierarchy,
then the entire hierarchy collapses to that level.
Furthermore,
they could identify a decision problem
that lies outside of the hierarchy,
which shows that even with the full power of alternation,
algorithms whose certificates are logarithmically bounded
remain weaker than their unrestricted counterparts.

\section{Contributions}

Obviously,
developing a descriptive complexity theory for distributed computing
is a highly ambitious project,
of which the present work can only strive to be a small building block.
As its title suggests,
this thesis does not deal with the powerful models of computation
that are usually considered in distributed computing.
Instead,
it takes an automata-theoretic approach
and focuses on a rather weak model
that has already been explored by Hella~et~al.\ and Kuusisto,
namely \kl{distributed automata}.
The main contributions are two new logical characterizations
related to that model.

The first covers a variant of \kl{local distributed automata},
extended with a global acceptance condition
and the ability to alternate between
nondeterministic decisions of the individual processors
and the creation of parallel computation branches.
This kind of alternation constitutes
a canonical generalization of nondeterminism,
and is nowadays standard in automata theory.
We show that the resulting
\emph{\kl{alternating local automata with global acceptance}}
are \kl[device equivalent]{equivalent} to $\MSOL$ on finite \kl{directed graphs}.
In spirit,
they are similar to
the alternation hierarchies
considered in the distributed-decision community,
even though their expressive power is much more restricted.
They also share some similarities
with Thomas' \kl[digraph]{graph} acceptors,
as they use a combination of
local conditions,
checked by the \kl{nodes} based on their \kl{neighborhood},
and global conditions,
checked at the level of the entire \kl[digraph]{graph}.
However,
both types of conditions are much simpler than in Thomas' model,
which allows us to consider \kl[digraphs]{graphs} of unbounded degree.
To a certain extent,
the \kl[device equivalence]{equivalence} with $\MSOL$ can be considered as
a generalization to \kl[digraphs]{graphs}
of the classical result of Büchi, Elgot and Trakhtenbrot,
although the machines involved are by no means deterministic;
whereas on \kl[pointed dipaths]{words}
and \kl[pointed ordered ditrees]{trees},
alternation simply provides
a more succinct representation of deterministic automata,
it turns out to be a crucial ingredient in our case.
If we allow only nondeterminism,
we get a model that is not closed under complementation,
and is even strictly weaker than $\EMSOL$,
but has a decidable emptiness problem.
Interestingly,
that model is still powerful enough
to characterize precisely the regular languages
when restricted to \kl[pointed dipaths]{words}
or \kl[pointed ordered ditrees]{trees}.
Hence,
this work also contributes to the general observation,
made in \cref{ssec:related-automata-theory},
that regularity becomes a moving target
when lifted to the setting of \kl[digraphs]{graphs}.

The second main contribution consists in a logical characterization
of a fully deterministic class of \kl{nonlocal automata}.
As mentioned in \cref{ssec:related-distributed-computing},
Kuusisto has noticed that \kl{distributed automata},
in their unrestricted form,
are strictly more powerful
than the \kl{backward $\mu$-fragment} on finite \kl[digraphs]{graphs}.
While it is straightforward
to evaluate any \kl{formula} of the \kl{backward $\mu$-fragment}
via a \kl{distributed automaton},
there also exist \kl[distributed automata]{automata} that exploit the fact
that a \kl{node} can determine if it receives the same information
from all of its \kl{neighbors} at the exact same time.
Such a behavior cannot be simulated in the backward \mbox{$\mu$-fragment},
and actually not even in the much more expressive $\MSOL$.
However,
since the argument relies solely on synchrony,
it seems natural to ask
whether removing this feature can lead to a distributed automaton model
that has the same expressive power as the \kl{backward $\mu$-fragment}.
To answer this question,
we introduce several classes of \emph{\kl{asynchronous automata}}
that transfer the standard notion of asynchronous algorithm
to the setting of finite-state machines.
Basically,
this means that we eliminate the global clock from the network,
thus making it possible for \kl{nodes} to operate at different speeds
and for messages to be delayed for arbitrary amounts of time,
or even be lost.
From the syntactic point of view,
an \kl{asynchronous automaton} is the same as a synchronous one,
but it has to satisfy an additional semantic condition:
its \kl{acceptance behavior} must be independent of any timing-related issues.
Taking a closer look at the \kl[distributed automata]{automata}
obtained by translating \kl{formulas} of the \kl{backward $\mu$-fragment},
we can easily see that they are in fact asynchronous.
Furthermore,
their \kl{state} diagrams are almost acyclic,
except that the \kl{states} are allowed to have self-loops;
we call this property \emph{quasi-acyclic}.
As it turns out,
the two properties put together are sufficient
to give us the desired characterization:
\kl{quasi-acyclic} \kl{asynchronous automata}
are \kl[device equivalent]{equivalent} to the \kl{backward $\mu$-fragment} on finite \kl[digraphs]{graphs}.
Incidentally,
this remains true
even if we consider a seemingly more powerful variant of \kl{asynchronous automata},
where all messages are guaranteed to be delivered.

Another aspect of \kl{distributed automata} investigated in this thesis
are decision problems,
and more specifically emptiness problems,
where the task is to decide
whether a given \kl[distributed automaton]{automaton}
\kl{accepts} on at least one input \kl[digraph]{graph}.
As all the \kl[device equivalences]{equivalences} mentioned above are effective,
we can immediately settle the decidability
of the \kl{emptiness problem} for \kl{local automata}:
it is decidable for the basic variant
of Hella~et~al.,
but undecidable for the alternating extension
that we shall consider.
This is because the (finite) satisfiability problem
is $\PSPACE$-complete for (backward) \kl{modal logic}
but undecidable for $\MSOL$.
The problem is also decidable for our classes of \kl{asynchronous automata},
since (finite) satisfiability for the (backward) $\mu$-calculus
is $\EXPTIME$-complete.
However,
the corresponding question for unrestricted,
\kl[nonlocal automata]{\mbox{nonlocal} automata}
was left open in~\cite{DBLP:conf/csl/Kuusisto13}.
Here,
we answer this question negatively for the general case
and also consider it for three special cases.
On the positive side,
we obtain a $\LOGSPACE$ decision procedure
for a class of \kl[forgetful automata]{\emph{forgetful} automata},
where the \kl{nodes} see the messages received from their \kl{neighbors}
but cannot remember their own \kl{state}.
When restricted to the appropriate families of \kl[digraphs]{graphs},
these \kl{forgetful automata} are
\kl[device equivalent]{equivalent} to classical finite \kl[pointed dipath]{word} automata,
but strictly more expressive than finite \kl[pointed ordered ditree]{tree} automata.
On the negative side,
we show that the emptiness problem is already undecidable
for two heavily restricted classes of \kl{distributed automata}:
the \kl{quasi-acyclic} ones,
and those that reject immediately
if they receive more than one distinct message per round.
For the latter class,
we present a proof with an unusual twist,
where a Turing machine is simulated by a \kl{distributed automaton}
in such a way that the roles of space and time
are reversed between the two devices.

Finally, as a minor contribution,
we investigate the problem
of separating \kl[set quantifier]{quantifier} alternation hierarchies
for several classes of \kl{formulas} that are based on \kl{modal logic}.
Essentially,
these classes are hybrids,
obtained by adding the \kl{set quantifiers} of $\MSOL$
to some variant of \kl{modal logic}.
They are motivated by
the above characterizations of \kl{local distributed automata}
in terms of (backward) \kl{modal logic} and $\MSOL$.
The contribution is a toolbox of simple encoding techniques
that allow to easily transfer to the \kl[modal logic]{modal} setting
the separation results for $\MSOL$
established by Matz, Schweikardt and Thomas
in~\cite{DBLP:conf/lics/MatzT97,DBLP:conf/csl/Schweikardt97,DBLP:journals/iandc/MatzST02}.
We thereby provide alternative proofs to similar findings
previously reported by Kuusisto
in~\cite{DBLP:conf/aiml/Kuusisto08,DBLP:journals/apal/Kuusisto15}.

\section{Outline}

The structure of this thesis is rather straightforward.
All the notions that occur in several places
are defined in \cref{ch:preliminaries}.
In particular,
there is a simple definition of \kl{distributed automata}
that subsumes most of the variants we shall consider.
The subsequent four \lcnamecrefs{ch:local}
(i.e., \labelcref{ch:local,ch:nonlocal,ch:emptiness,ch:alternation})
are independent of each other
and thus can be read in any order.
In \cref{ch:local},
we focus on \kl{local distributed automata}
and present the \kl[\ALDAg]{alternating variant with global acceptance},
which is shown to be \kl[device equivalent]{equivalent} to $\MSOL$.
\Cref{ch:nonlocal}
shifts the focus to \kl{nonlocal automata};
there we introduce the semantic notion
of \kl[asynchronous automaton]{asynchrony}
and show that
\kl{quasi-acyclic} \kl{asynchronous automata}
are captured by the \kl{backward $\mu$-fragment}.
\kl{Nonlocal automata} are also the subject of \cref{ch:emptiness},
where we present both positive and negative decidability results
on the \kl{emptiness problem} for several restricted classes.
Then, in \cref{ch:alternation},
we switch completely to logic
and consider issues related to
\kl[set quantifier]{quantifier} alternation hierarchies.
Finally,
some perspectives for future research
are briefly outlined in \cref{ch:perspectives}.

\paragraph{Note to the reader of the electronic version.}
The PDF version of this document
makes extensive use of hyperlinks.
In addition to the cross-reference links inserted automatically
by the standard \LaTeX\ package
\href{https://www.ctan.org/pkg/hyperref}{\textsf{hyperref}},
most of the notions defined within the document
are linked to their point of definition.
This new feature,
which concerns both text and mathematical notation,
is based on the
\href{https://www.ctan.org/pkg/knowledge}{\textsf{knowledge}}
package developed by Thomas Colcombet.
Beware that there can be several links
within a single symbolic expression;
for instance,
the expression $\semF{\BC\SigmaMSO{\Level}(\ML)}[\pDIGRAPH]$
contains links to five different concepts:
$\semF{\dots}$,\, $\BC$,\, $\SigmaMSO{\Level}$,\, $\ML$, and $\pDIGRAPH$.
\chapter{Preliminaries}
\label{ch:preliminaries}

This \lcnamecref{ch:preliminaries}
introduces essential notation and terminology
that will be recurring throughout this thesis.
It is meant to be consulted for specific information
rather than for consecutive reading.
Concepts that are specific to a single \lcnamecref{ch:preliminaries},
will be introduced later,
along with the topic.

\section{Basic notation}

We denote
the empty set by~\Intro*{$\EmptySet$},
the set of Boolean values by
\Intro*{$\Boolean = \set{0,1}$},
the set of non-negative integers by
\Intro*{$\Natural = \set{0,1,2,\dots}$},
the set of positive integers by
\Intro*{$\Positive = \Natural \setminus \set{0}$},
and the set of integers by
\Intro*{$\Integer = \set{\dots,-1,0,1,\dots}$}.

Integer intervals of the form $\setbuilder{i\in\Integer}{m≤i≤n}$,
where $m,n \in \Integer$ and $m \leq n$,
will sometimes be denoted by \Intro*{$\range[m]{n}$}.
We may also use the shorthand \reintro*{$\range{n} \defeq \range[1]{n}$},
and, by analogy with the Bourbaki notation for real intervals,
we indicate that we exclude an endpoint
by reversing the square bracket corresponding to that endpoint,
e.g., \Intro*{$\lrange[m]{n} \defeq \range[m]{n} \setminus \set{m}$}.

For any two sets $\Set[1]$ and $\Set[2]$,
the set of all functions from $\Set[1]$ to $\Set[2]$
is denoted \Intro*{$\FunctionSpace{\Set[1]}{\Set[2]}$}.
This notation gives rise to two important special cases.
First,
we write \Intro*{$\powerset{\Set[1]}$} for the power set of $\Set[1]$,
since we can identify it
with the set of all functions from $\Set[1]$ to $\set{0,1}$.
Second,
given $k \in \Natural$,
we write
$\Pseudointro{\Set[1]^k} \defeq \FunctionSpace{\range{k}}{\Set[1]}$
for the set of all $k$-tuples over $\Set[1]$,
since we can identify it
with the set of functions from $\range{k}$ to $\Set[1]$.
All of these notations have another special case in common:
the set of binary strings of length $k$,
denoted $\Boolean^k$,
can be interpreted as either
the function space from $\range{k}$ to $\Boolean$,
or the power set of $\range{k}$,
or the set of $k$-tuples over~$\Boolean$.
By the first interpretation,
the individual letters
of a string~$\String$ of length~$k$
will be denoted $\String(1), \dots, \String(k)$.
Furthermore,
we write
\Intro*{$\card{\Set[1]}$} for the cardinality of~$\Set[1]$
and \Intro*{$\length{\String}$} for the length~of~$\String$.

\section{Symbols}
\label{sec:symbols}

Since logic plays an important role in this thesis,
it also has an influence
on how we present other concepts;
in particular,
our definition of \kl{directed graphs} in \cref{sec:digraphs}
will refer to the notion of (abstract) \kl{symbol}.

We shall not always make a sharp distinction
between \kl{variables} and (non-logical) \kl{constants}.
Instead,
there is simply a fixed supply of \kl{symbols},
which can serve both as \kl{variables} and as \kl{constants}.
Hence,
the terms “\Intro{variable}” and “\Intro{constant}”
are just synonyms for “\kl{symbol}”;
we will use them
whenever we want to clarify the intended role of a \kl{symbol}
within a given context.

The set~\Intro*{$\NodeSymbolSet$} contains our \Intro{node symbols},
which within \kl{formulas} will represent \kl{nodes} of \kl{structures}
such as \kl[digraphs]{graphs};
among them,
there is a special \Intro{position symbol}~\Intro*{$\PosSymbol$}.
Moreover,
for every integer $\Arity \geq 1$,
we let~\Intro*{$\RelSymbolSet{\Arity}$} denote
the set of \Intro[relation symbol]{$\Arity$-ary relation symbols}.
All of these sets are infinite and pairwise disjoint.
If a \kl{symbol} lies in~$\RelSymbolSet{\Arity}$,
for $\Arity \geq 0$,
then we call $\Arity$ the \Intro{arity} of that \kl{symbol}.
We also denote the set of all \Intro{symbols} by~$\SymbolSet$,
i.e., \Intro*{$\SymbolSet \defeq \bigcup_{\Arity\geq 0}\RelSymbolSet{\Arity}$},
\marginnote{
  $\SymbolSet$ contains both \\
  \kl{variables} and \\
  \kl{constants}.
}
and shall often refer to
the unary \mbox{\kl{relation symbols}} in $\SetSymbolSet$
as \Intro{set symbols}.

\kl{Node symbols} will always be represented by lower-case letters,
and \kl{relation symbols} by upper-case ones,
often decorated with subscripts.
Typically,
we use
$\NodeVariable[1], \NodeVariable[2], \NodeVariable[3]$
for \kl{node variables} or arbitrary \kl{node symbols},
$\SetVariable[1], \SetVariable[2], \SetVariable[3]$
for \kl{set variables} or arbitrary \kl{set symbols},
$\SetConstant[1], \SetConstant[2]$
for \kl{set constants},
and
$\RelSymbol[1], \RelSymbol[2]$
for \kl{relation constants} of higher \kl{arity}
or arbitrary \kl{symbols}.
(See \cref{sec:example-formulas} for some simple examples.)

\section{Structures}

Before we formally introduce \kl{directed graphs}
in the next \lcnamecref{sec:digraphs},
we define the more general concept of a relational \kl{structure}.
Although the present thesis focuses mainly
on variants of \kl{directed graphs},
this top-down approach will allow us to specify
the semantics of several types of logical \kl{formulas}
in a unified framework,
using a consistent notation.
In particular,
it will be apparent
that \kl{modal logic} simply provides an alternative syntax
for a certain fragment of \kl{first-order logic}
(see \cref{sec:logics}).

Let~$\Signature$ be any subset of~$\SymbolSet$.
A (relational) \Intro{structure}~$\Structure$ of \Intro{signature}~$\Signature$
consists of
a nonempty set of \Intro{nodes}~\Intro*{$\NodeSet{\Structure}$}
(also called the \Intro{domain} of $\Structure$),\,
a \kl{node}~\Intro*{$\inp{\NodeSymbol}{\Structure}$} of~$\NodeSet{\Structure}$
for each \kl{node symbol}~$\NodeSymbol$ in~$\Signature$,
and a $\Arity$-ary relation~\reintro*{$\inp{\RelSymbol}{\Structure}$}
on~$\NodeSet{\Structure}$
for each $\Arity$-ary \kl{relation symbol}~$\RelSymbol$ in~$\Signature$.
Here,
$\inp{\NodeSymbol}{\Structure}$ and $\inp{\RelSymbol}{\Structure}$ are called
$\Structure$'s \Intro{interpretations} of the \kl{symbols} $\NodeSymbol$ and $\RelSymbol$.
We may also say that
$\Structure$ is a \kl{structure} \reintro[signature]{over}~$\Signature$,
or that $\Signature$ is the \reintro{underlying signature} of $\Structure$,
and we denote $\Signature$ by \Intro*{$\signature(\Structure)$}.
In case the \kl{position symbol}~$\PosSymbol$ lies in $\signature(\Structure)$,
we call $\Structure$ a \Intro{pointed structure}
and~$\inp{\PosSymbol}{\Structure}$
the \Intro{distinguished node} of~$\Structure$.

A set of \kl{structures} will be referred to as
a \Intro{structure language}.
As is customary,
we are only interested in \kl{structures} up to isomorphism.
That is,
two \kl{structures} over $\Signature$ are considered to be equal
if there is a bijection between their \kl{domains}
that preserves the \kl{interpretations} of all \kl{symbols}~in~$\Signature$.
Consequently,
our \kl{structure languages} characterize only properties
that are invariant under isomorphism.

Let $\Structure$ be a \kl{structure}
and $\Assignment$ be a map
of the form
$\set{\map{\Symbol_1}{\Interpretation_1}, \,\dots\,, \map{\Symbol_n}{\Interpretation_n}}$
that
assigns to each \kl{symbol}~$\Symbol_i \in \SymbolSet$,
for $i \in \range{n}$,
a suitable \kl{interpretation}~$\Interpretation_i$
over the \kl{domain} of~$\Structure$.
That is,
if $\Symbol_i∈\NodeSymbolSet$,
then $\Interpretation_i∈\NodeSet{\Structure}$,
and if $\Symbol_i∈\RelSymbolSet{\Arity}$,
for $\Arity \geq 1$,
then $\Interpretation_i⊆(\NodeSet{\Structure})^\Arity$.
We use the notation
\Intro*{$\aver{\Structure}{\Assignment}$}
to designate the \Intro[extended variant]{$\Assignment$-extended variant}
of~$\Structure$,
which is the \kl{structure}~$\Structure'$
obtained from~$\Structure$
by \kl{interpreting} each \kl{symbol}~$\Symbol_i$ as~$\Interpretation_i$,
while maintaining the other \kl{interpretations} provided by~$\Structure$.
More formally,
letting $\Signature = \set{\Symbol_1,\dots,\Symbol_n}$,
we have
$\NodeSet{\Structure'} = \NodeSet{\Structure}$,\,
$\signature(\Structure') = \signature(\Structure)∪\Signature$,\,
$\inp{\Symbol_i}{\Structure'} = \Interpretation_i$
for $i \in \range{n}$,\,
and $\inp{\Symbol[2]}{\Structure'} = \inp{\Symbol[2]}{\Structure}$
for $\Symbol[2]∈\signature(\Structure)\setminus\Signature$.
Often,
we do not want to give an explicit name to the assignment~$\Assignment$,
in which case we may denote $\Structure'$ by
\Intro*{$\ver{\Structure}{\Symbol_1,\dots,\Symbol_n}{\Interpretation_1,\dots,\Interpretation_n}$}.
If the \kl{interpretations} of the \kl{symbols} in~$\Signature$
are clear from context,
we may also refer to~$\Structure'$ as
the \reintro[extended variant]{$\Signature$-extended variant} of $\Structure$.
Furthermore,
as we will often consider
\kl{pointed} \kl[extended variant]{variants} of \kl{structures},
we introduce the shorthand
\Intro*{$\pver{\Structure}{\Node} \defeq \ver{\Structure}{\PosSymbol}{\Node}$}
for $\Node \in \NodeSet{\Structure}$,
and refer to
$\pver{\Structure}{\Node}$ as
the \Intro[pointed variant]{$\Node$-pointed variant} of~$\Structure$
(i.e., the \kl[extended variant]{variant} of~$\Structure$
with \kl{distinguished node}~$\Node$).

\section{Different kinds of digraphs}
\label{sec:digraphs}

The \kl{structures} we are actually interested in
are several variants of \kl{directed graphs};
these are \kl{structures} with finite \kl{domains}
and relations of \kl{arity} at most $2$.
To facilitate lookup and comparison,
we present them all in the same \lcnamecref{sec:digraphs}.
In the following definitions,
let $\BitCount$ and $\RelCount$ be non-negative integers.

An \Intro[digraph]{$\BitCount$-bit labeled, $\RelCount$-relational directed graph},
abbreviated \reintro{digraph},
is a finite \kl{structure}~$\Digraph$ of \kl{signature}
$\set{\SetConstant_1,…,\SetConstant_\BitCount,\RelSymbol_1,…,\RelSymbol_\RelCount}$,
where
$\SetConstant_1,…,\SetConstant_\BitCount$
are \kl{set symbols},
and
$\RelSymbol_1,…,\RelSymbol_\RelCount$
are binary \kl{relation symbols}.

The sets
$\inp{\SetConstant_1}{\Digraph},…,\inp{\SetConstant_\BitCount}{\Digraph}$,
which we shall call \Intro{labeling sets},
determine a (\kl{node}) \Intro{labeling}
\Intro*{$\Labeling{\Graph} \colon \NodeSet{\Graph} \to \Boolean^\BitCount$}
that assigns a binary string of length~$\BitCount$ to each \kl{node}.
More precisely,
we define $\Labeling{\Graph}$
such that
\begin{equation*}
  \Labeling{\Graph}(\Node)(i) =
  \begin{cases*}
    0 & if $\Node \notin \SetConstant_i$, \\
    1 & otherwise,
  \end{cases*}
\end{equation*}
for all $\Node \in \NodeSet{\Graph}$ and $i \in \range{\BitCount}$.
Given another mapping
$\Relabeling \colon \NodeSet{\Digraph} \to \Boolean^{\BitCount'}$
with $\BitCount' \in \Natural$,
we shall denote by \Intro*{$\lver{\Digraph}{\Relabeling}$}
\marginnote{
  Our bracket notation is overloaded,
  but if one knows the type of $\Relabeling$, the \\
  \kl[relabeled variant]{$\Relabeling$-relabeled variant}
  $\lver{\Digraph}{\Relabeling}$ \\
  of\, $\Digraph$
  should be easy to distinguish from an \\
  \kl[extended variant]{$\Assignment$-extended variant}
  $\aver{\Digraph}{\Assignment}$, \\
  as well as from a \\
  \kl[pointed variant]{$\Node$-pointed variant}
  $\pver{\Digraph}{\Node}$.
}
the \Intro[relabeled variant]{$\Relabeling$-relabeled variant}
of $\Digraph$,
i.e., the $\BitCount'$-bit labeled \kl{digraph}~$\Digraph'$
that is the same as~$\Digraph$,
except that its \kl{labeling}~$\Labeling{\Digraph'}$
is equal to~$\Relabeling$.

It is often convenient to regard the \kl{labels}
of an $\BitCount$-bit \kl{labeled} \kl{digraph}
as the binary encodings of
letters of some finite alphabet~$\Alphabet$.
With respect to a given injective map
$f \colon \Alphabet \to \Boolean^\BitCount$,
a \reintro[digraph]{$\Alphabet$-labeled digraph}
is an $\BitCount$-bit \kl{labeled} \kl{digraph}~$\Digraph$
such that for every \kl{node} $\Node \in \NodeSet{\Graph}$,
we have
$\Labeling{\Graph}(\Node) = f(\Letter)$
for some $\Letter \in \Alphabet$.
Since we do not care about the specific encoding function~$f$,
we will never mention it explicitly,
and just call $\Digraph$ a
$\Alphabet$-\kl{labeled}, $\RelCount$-relational \kl{digraph}.

The binary relations
$\inp{\RelSymbol_1}{\Digraph},…,\inp{\RelSymbol_\RelCount}{\Digraph}$
will be referred to as \Intro{edge relations}.
If $\edge{\Node[1]}{\Node[2]}$ is an \Intro{edge}~in $\EdgeSet[i]{\Digraph}$,
then $\Node[1]$ is called
an \Intro[incoming neighbor]{incoming $i$-neighbor} of~$\Node[2]$,
or simply an \kl{incoming neighbor},
and $\Node[2]$ is called
an \Intro[outgoing neighbor]{outgoing $i$-neighbor} of~$\Node[1]$,
or just \kl{outgoing neighbor}.
We also say that $\Node[1]$ and $\Node[2]$ are \Intro{adjacent},
and without further qualification,
the term \Intro{neighbor} refers to both
\kl[incoming neighbor]{incoming} and \kl{outgoing neighbors}.
The (undirected) \Intro{neighborhood} of a \kl{node}
is the set of all of its \kl{neighbors},
and the \Intro[incoming neighborhood]{incoming}
and \Intro[outgoing neighborhood]{outgoing} \kl{neighborhoods}
are defined analogously.
A \kl{node} without \kl{incoming neighbors} is called a \Intro{source},
whereas a \kl{node} without \kl{outgoing neighbors} is called a \Intro{sink}.

The class of all
$\BitCount$-bit \kl{labeled}, $\RelCount$-relational \kl{digraphs}
is denoted by
\Intro*{$\DIGRAPH[\BitCount][\RelCount]$}.
In case the parameters are
$\BitCount=0$ and $\RelCount=1$,
we may omit them and use the shorthand
\reintro*{$\DIGRAPH \defeq \DIGRAPH[0][1]$}.
We shall also drop the subscripts on the \kl{symbols},
and just write $\SetConstant$ or $\RelSymbol$,
if there is only one \kl{symbol} of a given \kl{arity}.
Furthermore,
we denote by
\reintro*{$\DIGRAPH[\Alphabet][\RelCount]$}
the class of all
$\Alphabet$-\kl{labeled}, $\RelCount$-relational \kl{digraphs}.

As can be easily guessed from the previous definitions,
a \Intro{pointed digraph} is a \kl{digraph}
in which some \kl{node} has been marked
by the \kl{position symbol}~$\PosSymbol$,
i.e., it is a \kl{structure} of the form
$\ver{\Digraph}{\PosSymbol}{\Node}$,
with $\Digraph∈\DIGRAPH[\BitCount][\RelCount]$ and $\Node∈\NodeSet{\Digraph}$.
We write \Intro*{$\pDIGRAPH[\BitCount][\RelCount]$}
for the set of all
$\BitCount$-bit \kl{labeled}, $\RelCount$-relational \kl{pointed digraphs},
and define \reintro*{$\pDIGRAPH \defeq \pDIGRAPH[0][1]$}.

A \kl{digraph}~$\Digraph$ is called
an ($\BitCount$-bit \kl{labeled}, $\RelCount$-relational) \Intro{undirected graph},
or simply \reintro{graph},
if all of its \kl{edge relations} are irreflexive and symmetric,
i.e., if for all
$\Node[1],\Node[2]∈\NodeSet{\Digraph}$ and $i \in \range{\RelCount}$,
it holds that
$\edge{\Node[1]}{\Node[1]} \notin \EdgeSet[i]{\Digraph}$,
and
$\edge{\Node[1]}{\Node[2]} \in \EdgeSet[i]{\Digraph}$
if and only if
$\edge{\Node[2]}{\Node[1]} \in \EdgeSet[i]{\Digraph}$.
The corresponding class is denoted by
\Intro*{$\GRAPH[\BitCount][\RelCount]$},
and we may use the shorthand
\reintro*{$\GRAPH \defeq \GRAPH[0][1]$}.

A \kl{digraph}~$\Digraph$ is (weakly) \Intro{connected}
if for every nonempty proper subset $\NodeSubset$ of $\NodeSet{\Digraph}$,
there exist two \kl{nodes}
$\Node[1]∈\NodeSubset$ and $\Node[2]∈\NodeSet{\Digraph} \setminus \NodeSubset$
that are \kl{adjacent}.

The \kl[labeling]{node labeling} $\Labeling{\Digraph}$ of a $\Alphabet$-\kl{labeled} \kl{digraph}
constitutes a valid \Intro{coloring} of $\Digraph$
if no two \kl{adjacent} \kl{nodes} share the same \kl{label},
i.e.,
if $\edge{\Node[1]}{\Node[2]} \in \EdgeSet[i]{\Digraph}$
implies
$\Labeling{\Digraph}(\Node[1]) \neq \Labeling{\Digraph}(\Node[2])$,
for all $\Node[1],\Node[2]∈\NodeSet{\Digraph}$ and
$i \in \range{\RelCount}$.
If $\card{\Alphabet}=k$,
such a \kl{coloring} is called a \kl{$k$-coloring} of $\Digraph$,
and any $\RelCount$-relational \kl{digraph}
for which a \kl{$k$-coloring} exists
is said to be \Intro{$k$-colorable}.
Note that,
by definition,
a \kl{digraph} that contains self-loops
is not \kl{$k$-colorable} for any~$k$.

A \Intro{directed rooted tree}, or \reintro{ditree},
is an ($\BitCount$-bit \kl{labeled}) $\RelCount$-relational \kl{digraph}~$\Graph$
that has a distinct \kl{node}~$\Root$, called the \Intro{root},
such that
$\EdgeSet[i]{\Graph} \cap \EdgeSet[j]{\Graph} = \EmptySet$
for $i \neq j$,
and from each \kl{node}~$\Node$ in~$\NodeSet{\Digraph}$,
there is exactly one way to reach~$\Root$
by following the directed \kl{edges} in
$\bigcup_{1\leq i\leq\RelCount} \EdgeSet[i]{\Graph}$.
A \Intro{pointed ditree} is a \kl{pointed digraph}
$\pver{\Digraph}{\Root}$,
where $\Digraph$ is a \kl{ditree} and $\Root$ is its \kl{root}.
Moreover,
a (\kl[pointed ditree]{pointed}) $\RelCount$-relational \kl{ditree}
is called \Intro{ordered}
if for $1\leq i\leq\RelCount$,
every \kl{node} has at most one \kl[incoming neighbor]{incoming $i$-neighbor}
and every \kl{node} that has an \kl[incoming neighbor]{incoming $(i+1)$-neighbor}
also has an \kl[incoming neighbor]{incoming $i$-neighbor}.
As a special case,
an ordered $1$-relational \kl{ditree}
is referred to as a \Intro{directed path}, or \reintro{dipath}.
Accordingly,
the \kl{distinguished node} of a \Intro{pointed dipath}
is the last \kl{node} (the one with no \kl{outgoing neighbor}).
The classes of \kl{pointed dipaths} and \kl{pointed ordered ditrees}
can be identified with the \kl{structures}
on which classical word and tree automata are run.
We denote them by
\Intro*{$\pDIPATH[\BitCount]$}
and
\Intro*{$\pODITREE[\BitCount][\RelCount]$},
respectively.

We shall also consider an important subclass
of $\DIGRAPH[\BitCount][2]$
whose members represent rectangular labeled grids
(also called pictures).
In such a \kl{structure}~$\Grid$,
each \kl{node} is identified with a grid cell,
and the \kl{edge relations}
$\inp{\RelSymbol_1}{\Grid}$ and $\inp{\RelSymbol_2}{\Grid}$
are interpreted as
the “vertical” and “horizontal” successor relations,
respectively.
The unique \kl{node} that has no predecessor at all
is regarded as the “upper-left corner”,
and all the usual terminology of matrices applies.
Formally,
$\Grid$ is a \Intro[grid]{$\BitCount$-bit labeled grid}
if, for some $\GridHeight,\GridWidth≥1$,
it is isomorphic to a \kl{structure}
with \kl{domain}
$\set{1,…,\GridHeight}×\set{1,…,\GridWidth}$
and \kl{edge relations}
\begin{align*}
  \inp{\RelSymbol_1}{\Grid}
    &= \bigsetbuilder{\bigtuple{\tuple{i,j},\tuple{i+1,j}}}{1≤i<\GridHeight,\,1≤j≤\GridWidth},
    \\[0.5ex]
  \inp{\RelSymbol_2}{\Grid}
    &= \bigsetbuilder{\bigtuple{\tuple{i,j},\tuple{i,j+1}}}{1≤i≤\GridHeight,\,1≤j<\GridWidth}.
\end{align*}
If $\BitCount=0$,
we refer to $\Grid$ simply as a \reintro{grid}.
In alignment with the previous nomenclature,
we let
\reintro*{$\GRID$} and \Intro*{$\GRID[\BitCount]$}
denote
the classes of \kl{grids} and
\kl[grid]{$\BitCount$-bit labeled grids}.

A \Intro{digraph language} is a \kl{structure language}
that consist of \kl{digraphs}
with a fixed number of \kl{labeling sets} and \kl{edge relations},
i.e., a subset of $\DIGRAPH[\BitCount][\RelCount]$,
for some $\BitCount, \RelCount \in \Positive$.
The notion is defined analogously
for all the other classes of \kl{structures} introduced above.
In particular,
a \Intro{pointed-digraph language} is
a subset of $\pDIGRAPH[\BitCount][\RelCount]$.

\section{The considered logics}
\label{sec:logics}

As we shall contemplate both
classical logic and several variants of \kl{modal logic},
we introduce them all in a common framework.
First we define the syntax and semantics of a generalized language,
and then we specify
which particular syntactic fragments we are interested~in.
Some examples will follow in \cref{sec:example-formulas}.

\begin{table*}[p]
  \Phantomintro{\PosIs}
  \Phantomintro{\Equals}
  \Phantomintro{\PosIn}
  \Phantomintro{\InSet}
  \Phantomintro{\InRel}
  \Phantomintro{\NOT}
  \Phantomintro{\OR}
  \Phantomintro{\dm}
  \Phantomintro{\bdm}
  \Phantomintro{\gdm}
  \Phantomintro{\Exists}
  \centering
  \widefloat{
    \small
    \begin{tabularx}{\tablewidth}{llY}
      \toprule
      \textit{Syntax} & \textit{Free symbols} & \textit{Semantics}
        \\
      Formula $\Formula[2]$
        & Symbol set $\free(\Formula[2])$
        & Necessary and sufficient condition for $\Structure\Models\Formula[2]$
        \\
      \midrule\addlinespace
      \reintro*{$\PosIs{\NodeSymbol[1]}$}
        & $\set{\PosSymbol,\NodeSymbol[1]}$
        & $\inp{\PosSymbol}{\Structure} = \inp{\NodeSymbol[1]}{\Structure}$
        \\\addlinespace
      \reintro*{$(\NodeSymbol[1] \Equals \NodeSymbol[2])$}
        & $\set{\NodeSymbol[1],\NodeSymbol[2]}$
        & $\swl{\inp{\NodeSymbol[1]}{\Structure}}
               {\inp{\PosSymbol}{\Structure}}
           = \inp{\NodeSymbol[2]}{\Structure}$
        \\\addlinespace
      \reintro*{$\PosIn{\SetSymbol}$}
        & $\set{\PosSymbol,\SetSymbol}$
        & $\inp{\PosSymbol}{\Structure} ∈ \inp{\SetSymbol}{\Structure}$
        \\\addlinespace
      \reintro*{$\InSet{\SetSymbol}{\NodeSymbol[1]}$}
        & $\set{\NodeSymbol[1],\SetSymbol}$
        & $\swl{\inp{\NodeSymbol[1]}{\Structure}}
               {\inp{\PosSymbol}{\Structure}}
           ∈ \inp{\SetSymbol}{\Structure}$
        \\\addlinespace
      \reintro*{$\InRel{\RelSymbol}{\NodeSymbol[1]_0,…,\NodeSymbol[1]_\Arity}$}
        & $\set{\NodeSymbol[1]_0,…,\NodeSymbol[1]_\Arity,\RelSymbol}$
        & $\tuple{\inp{\NodeSymbol[1]_0}{\Structure},…,\inp{\NodeSymbol[1]_\Arity}{\Structure}}
           ∈ \inp{\RelSymbol}{\Structure}$
        \\\addlinespace
      \reintro*{$\NOT\Formula$}
        & $\free(\Formula)$
        & not\, $\Structure\Models\Formula$
        \\\addlinespace
      \reintro*{$(\Formula_1\OR\Formula_2)$}
        & $\free(\Formula_1)∪\free(\Formula_2)$
        & $\Structure\Models\Formula_1$ \,or\, $\Structure\Models\Formula_2$
        \\\addlinespace
      \reintro*{$\dm[\RelSymbol](\Formula_1,…,\Formula_\Arity)$}
        & $\set{\PosSymbol,\RelSymbol}
           ∪\, \smashoperator{\bigcup_{1≤i≤\Arity}}\free(\Formula_i)$
        & For some $\Node_1,\twodots,\Node_\Arity\mathbin∈\NodeSet{\Structure}$ such that
          $\tuple{\inp{\PosSymbol}{\Structure},\Node_1,\twodots,\Node_\Arity}
           ∈\inp{\RelSymbol}{\Structure}$,
          we have $\ver{\Structure}{\PosSymbol}{\Node_i}\Models\Formula_i$
          for each $i∈\set{1,\twodots,\Arity}$.
        \\\addlinespace[0.8\defaultaddspace]
      \reintro*{$\bdm[\RelSymbol](\Formula_1,…,\Formula_\Arity)$}
        & same as above
        & As above, except for the condition
          $\tuple{\Node_\Arity,…,\Node_1,\inp{\PosSymbol}{\Structure}}
           ∈\inp{\RelSymbol}{\Structure}$.
        \\\addlinespace
      \reintro*{$\gdm \Formula$}
        & $\free(\Formula)\setminus\set{\PosSymbol}$
        & $\swl{\ver{\Structure}{\PosSymbol}{\Node}}
               {\ver{\Structure}{\SetSymbol}{\NodeSubset}}
           \Models\Formula$
          for some $\Node∈\NodeSet{\Structure}$
        \\\addlinespace
      \reintro*{$\Exists{\NodeSymbol[1]}\Formula$}
        & $\free(\Formula)\setminus\set{\NodeSymbol[1]}$
        & $\swl{\ver{\Structure}{\NodeSymbol[1]}{\Node}}
               {\ver{\Structure}{\SetSymbol}{\NodeSubset}}
           \Models\Formula$
          for some $\Node∈\NodeSet{\Structure}$
        \\\addlinespace
      \reintro*{$\Exists{\SetSymbol}\Formula$}
        & $\free(\Formula)\setminus\set{\SetSymbol}$
        & $\ver{\Structure}{\SetSymbol}{\NodeSubset}\Models\Formula$
          for some $\NodeSubset⊆\NodeSet{\Structure}$
        \\\addlinespace
      \midrule
      \multicolumn{3}{l}{
        Here,\,
        $\NodeSymbol[1],\NodeSymbol[1]_0,…,\NodeSymbol[1]_\Arity,\NodeSymbol[2]
         ∈\NodeSymbolSet$,\:
        $\SetSymbol∈\SetSymbolSet$,\,
        $\RelSymbol∈\RelSymbolSet{\Arity+1}$,
        and $\Formula,\Formula_1,…,\Formula_\Arity$
        are \kl{formulas}, for $\Arity≥1$.}
        \\
      \bottomrule
    \end{tabularx}
  }
  \caption{Syntax and semantics of the considered logics.}
  \label{tab:syntax-semantics}
\end{table*}

\begin{table*}[p]
  \Phantomintro{\FOL}
  \Phantomintro{\EMSOL}
  \Phantomintro{\MSOL}
  \Phantomintro{\ML}
  \Phantomintro{\bML}
  \Phantomintro{\dML}
  \Phantomintro{\MLg}
  \Phantomintro{\bMLg}
  \Phantomintro{\dMLg}
  \Phantomintro{\MSO}
  \centering
  \newcommand*{\seplinespace}{\addlinespace[1.5\defaultaddspace]}
  \widefloat{
    \small
    \begin{tabularx}{\tablewidth}{lYl}
      \toprule
      \multicolumn{2}{l}{\textit{Class of formulas}} & \textit{Generating grammar}
        \\
      \midrule\addlinespace
      \reintro*{$\FOL$}
        & First-order
        & $\Formula
           \Coloneqq (\NodeSymbol[1] \Equals \NodeSymbol[2])
                \mid \InSet{\SetSymbol}{\NodeSymbol[1]}
                \mid \InRel{\RelSymbol}{\NodeSymbol[1]_0,…,\NodeSymbol[1]_\Arity}
                \mid \NOT\Formula
                \mid (\Formula_1\OR\Formula_2)
                \mid \Exists{\NodeSymbol[1]}\Formula$
        \\\seplinespace
      \reintro*{$\EMSOL$}
        & Existential $\MSOL$
        & $\Formula[1]
           \Coloneqq \Formula[2]
                \mid \Exists{\SetSymbol}\Formula[1]$,\:
          where $\Formula[2] \in \FOL$.
          \\
        && Equivalently, $\EMSOL \defeq \SigmaMSO{1}(\FOL)$;\,
           see \cref{sec:alternation-preliminaries}.
        \\\seplinespace
      \reintro*{$\MSOL$}
        & Monadic
        & $\Formula
           \Coloneqq (\NodeSymbol[1] \Equals \NodeSymbol[2])
                \mid \InSet{\SetSymbol}{\NodeSymbol[1]}
                \mid \InRel{\RelSymbol}{\NodeSymbol[1]_0,…,\NodeSymbol[1]_\Arity}
                \mid \NOT\Formula
                \mid (\Formula_1\OR\Formula_2)
                \mid \Exists{\NodeSymbol[1]}\Formula
                \mid \Exists{\SetSymbol}\Formula$
        \\
        & second-order
        & Equivalently, $\MSOL \defeq \MSO(\FOL)$.
        \\\seplinespace
      \reintro*{$\ML$}
        & Modal
        & $\Formula
           \Coloneqq \PosIs{\NodeSymbol[1]}
                \mid \PosIn{\SetSymbol}
                \mid \NOT\Formula
                \mid (\Formula_1\OR\Formula_2)
                \mid \dm[\RelSymbol](\Formula_1,…,\Formula_\Arity)$
        \\\seplinespace
      \reintro*{$\bML$}
        & Backward modal
        & $\Formula
           \Coloneqq \PosIs{\NodeSymbol[1]}
                \mid \PosIn{\SetSymbol}
                \mid \NOT\Formula
                \mid (\Formula_1\OR\Formula_2)
                \mid \bdm[\RelSymbol](\Formula_1,…,\Formula_\Arity)$
        \\\seplinespace
      \reintro*{$\dML$}
        & \mbox{Bidirectional modal}
        & $\Formula
           \Coloneqq \PosIs{\NodeSymbol[1]}
                \mid \PosIn{\SetSymbol}
                \mid \NOT\Formula
                \mid (\Formula_1\OR\Formula_2)
                \mid \dm[\RelSymbol](\Formula_1,…,\Formula_\Arity)
                \mid \bdm[\RelSymbol](\Formula_1,…,\Formula_\Arity)$
        \\\seplinespace
      \reintro*{$\MLg$}
        & Modal with \mbox{\kl{global modalities}}
        & $\Formula
           \Coloneqq \PosIs{\NodeSymbol[1]}
                \mid \PosIn{\SetSymbol}
                \mid \NOT\Formula
                \mid (\Formula_1\OR\Formula_2)
                \mid \dm[\RelSymbol](\Formula_1,…,\Formula_\Arity)
                \mid \gdm \Formula$
        \\\seplinespace
      \multicolumn{2}{l}{\reintro*{$\bMLg$},\, \reintro*{$\dMLg$}}
        & Analogous to the preceding grammars.
        \\\seplinespace
      \reintro*{$\MSO(\FormulaSet)$}
        & $\FormulaSet$ extended with \mbox{\kl{set quantifiers}}
        & Same grammar as $\FormulaSet$ with the additional choice
          “$\+\mid \Exists{\SetSymbol}\Formula\+$”.
        \\\addlinespace
      \midrule
      \multicolumn{3}{l}{
        Here,\,
        $\NodeSymbol[1],\NodeSymbol[1]_0,…,\NodeSymbol[1]_\Arity,\NodeSymbol[2]∈\NodeSymbolSet$,\:
        $\SetSymbol∈\SetSymbolSet$,\:
        $\RelSymbol∈\RelSymbolSet{\Arity+1}$,
        for $\Arity≥1$,
        and $\FormulaSet \in \set{\ML, \bML, \dots, \dMLg, \FOL}$.}
        \\
      \bottomrule
    \end{tabularx}
  }
  \caption{The considered classes of \kl{formulas}.}
  \label{tab:languages}
\end{table*}

\Cref{tab:syntax-semantics} shows
how \Intro{formulas} are built up,
and what they mean.
Furthermore,
it indicates how to obtain
the set \Intro*{$\free(\Formula)$} of \kl{symbols}
that occur \Intro{freely} in a given \kl{formula} $\Formula$,
i.e., outside the scope of a binding operator.
If $\free(\Formula) ⊆ \Signature$,
we say that $\Formula$ is a \Intro{sentence} over $\Signature$.
Sometimes,
when the notions of “\kl{variable}” and “\kl{constant}”
are clear from context,
we also use the notation
\Intro*{$\param{\Formula}{\NodeVariable_1,\dots,\NodeVariable_m,\SetVariable_1,\dots,\SetVariable_n}$}
to indicate that
at most the \kl{variables} given in brackets
occur \kl{freely} in $\Formula$,
i.e., that no other \kl{variables} than
$\NodeVariable_1,\dots,\NodeVariable_m,\SetVariable_1,\dots,\SetVariable_n$
lie in $\free(\Formula)$.
The relation $\Models$ defined in \cref{tab:syntax-semantics}
specifies in which cases
a \kl{structure} $\Structure$ \Intro{satisfies} $\Formula$,
written \Intro*{$\Structure\Models\Formula$},
assuming that $\Formula$ is
a \kl{sentence} over $\signature(\Structure)$.
Otherwise,
we stipulate that \Intro*{$\Structure\ModelsNot\Formula$}.

Of particular interest for this thesis are those \kl{formulas}
in which the \kl{node symbol}~$\PosSymbol$ is considered to be \kl{free},
although it might not occur explicitly.
They are evaluated on a \kl{pointed structure} $\Structure$
from the perspective of
the \kl{node} $\inp{\PosSymbol}{\Structure}$\!.
Atomic \kl{formulas} of the form
$\PosIs{\NodeSymbol}$ or $\PosIn{\SetSymbol}$,
with $\NodeSymbol∈\NodeSymbolSet$ and $\SetSymbol∈\SetSymbolSet$,
are \kl{satisfied} if
$\inp{\PosSymbol}{\Structure}$
is labeled by the corresponding \kl{symbol}.
Using the operator~$\dm[\RelSymbol]$\!,
which is called the \Intro[diamond]{$\RelSymbol$-diamond},
we can remap the \kl[position symbol]{symbol}~$\PosSymbol$
through existential quantification
over the \kl{nodes} in $\Structure$
that are reachable from $\inp{\PosSymbol}{\Structure}$
through the relation $\inp{\RelSymbol}{\Structure}$.
If we want to do the same
with respect to the inverse relation of
$\inp{\RelSymbol}{\Structure}$,
we can use the
\Intro[backward diamond]{backward $\RelSymbol$-diamond}~$\bdm[\RelSymbol]$.
In addition,
there is also the \Intro{global diamond}~$\gdm$
(unfortunately often called “universal modality”),
which ranges over all \kl{nodes} of $\Structure$.
It can be considered as the
\kl[diamond]{diamond operator}
corresponding to the relation
$\NodeSet{\Structure} \times \NodeSet{\Structure}$,
i.e., the \kl{edge relation}
of the complete \kl{digraph} over $\NodeSet{\Structure}$.
To facilitate certain descriptions,
we shall sometimes treat
$\bdm[\RelSymbol]$ and $\gdm$
as special cases of $\dm[\RelSymbol]$,
assuming that they are implicitly associated
with the reserved \kl{relation symbols}
\Intro*{$\invr{\RelSymbol}$} and \Intro*{$\GlobalRelSymbol$},
respectively.
These \kl[relation symbol]{symbols}
do not belong to~$\SymbolSet$,
and therefore cannot be \kl{interpreted}
by any \kl{structure}.

Allowing a bit of syntactic sugar,
we will make liberal use of
the remaining operators of predicate logic,
i.e., \Intro*{$\AND$}, \Intro*{$\IMP$}, \Intro*{$\IFF$}, \Intro*{$\Forall{}$},
and we may leave out some parentheses,
assuming that~$\OR$ and $\AND$
take precedence over~$\IMP$ and~$\IFF$.
Furthermore,
we define the abbreviations
\Phantomintro{\True}
\Phantomintro{\False}
\Phantomintro{\bx}
\begin{align*}
  &\reintro*{\True} \defeq \PosIs{\PosSymbol},
  \quad
  \reintro*{\False} \defeq \NOT\PosIs{\PosSymbol},
  \quad \text{and} \\
  &\reintro*{\bx[\RelSymbol](\Formula_1,…,\Formula_\Arity)}
  \,\defeq\, \NOT\dm[\RelSymbol](\NOT\Formula_1,…,\NOT\Formula_\Arity).
\end{align*}
Note that it makes sense to define
$\True$ (“true”) as $\PosIs{\PosSymbol}$,
since by definition,
the atomic \kl{formula} $\PosIs{\PosSymbol}$
is always \kl{satisfied} at the point of evaluation.
Also,
the second line remains applicable
if one substitutes $\invr{\RelSymbol}$ or $\GlobalRelSymbol$ for $\RelSymbol$.
The resulting
operators~$\bx[\RelSymbol]$,~\Intro*{$\bbx[\RelSymbol]$} and~\Intro*{$\gbx$}
provide universal quantification
and are called \Intro{boxes}
(using the same attributes as for \kl{diamonds}).
\kl{Diamonds} and \kl{boxes} are collectively referred to as
\Intro{modalities} or \reintro{modal operators}.
In case we restrict ourselves to
\kl{structures} that only have a single relation,
we may omit the \kl{relation symbol}~$\RelSymbol$,
and just use empty \kl{modalities} such as~$\dm$.
Similarly,
if the \kl{relation symbols} involved are indexed,
like $\RelSymbol_1,…,\RelSymbol_\RelCount$,
we associate them with \kl{modalities}
of the form~$\dm[i]$, for $1≤i≤\RelCount$.

Let us now turn to the particular classes of \kl{formulas}
considered in this thesis,
which are specified in \cref{tab:languages}.
The languages of
\Intro{first-order logic}~($\FOL$),
\Intro{existential monadic second-order logic}~($\EMSOL$), and
\Intro{monadic second-order logic}~($\MSOL$)
are defined in the usual way.
When evaluated on some \kl{structure}~$\Structure$,
their atomic \kl{formulas} allow to compare \kl{nodes}
assigned to \kl{node symbols} in $\signature(\Structure)$
with respect to
the equality relation
and any other relation
assigned to a \kl{relation symbol} in $\signature(\Structure)$.
In~$\FOL$,
we can assign new \kl{interpretations} to \kl{node symbols}
by means of existential and universal \Intro[node quantifiers]{quantification over nodes}.
In $\EMSOL$,
we may additionally \kl[interpret]{reinterpret} \kl{set symbols}
using existential \Intro[set quantifiers]{quantifiers over sets} of \kl{nodes},
and in $\MSOL$,
we can also use the corresponding universal \kl[set quantifiers]{quantifiers}.

The remaining classes of \kl{formulas}
can all be qualified as modal languages,
insofar as they include \kl{modal operators}
instead of the classical \kl[node quantifiers]{first-order quantifiers}.
By performing this change of paradigm,
we lose our “bird's-eye view” of the \kl{structure}~$\Structure$,
and now see it from the local point of view
of the \kl{node}~$\inp{\PosSymbol}{\Structure}$.
(For this,
$\Structure$ obviously has to be \kl[pointed structure]{pointed}.)
In basic \Intro{modal logic}~($\ML$),
a \kl{node} “sees” only its \kl{outgoing neighbors},
and thus our domain of quantification
is restricted to those \kl[outgoing neighbors]{neighbors}.
Furthermore,
the \kl{position symbol}~$\PosSymbol$
is the only \kl{node symbol}
whose \kl{interpretation} can be changed by a \kl{modal operator}.
\Intro{Backward modal logic}~($\bML$)
is the variant of~$\ML$
where a \kl{node} “sees” its \kl{incoming neighbors}
instead of its \kl{outgoing neighbors},
whereas \Intro{bidirectional modal logic}~($\dML$)
is the combination
where a \kl{node} “sees” both
\kl[incoming neighbor]{incoming} and \kl{outgoing neighbors}.
We will also look at
\Intro{modal logic with global modalities}~($\MLg$),
where we regain the possibility to quantify
over the entire \kl{domain} of the \kl{structure},
but are still confined
to remapping only the \kl{position symbol}~$\PosSymbol$.
The backward and bidirectional variants $\bMLg$ and $\dMLg$
are defined analogously.
Finally,
we also consider crossover versions of \kl{modal logic}
that are enriched with the \kl{set quantifiers} of $\MSOL$.
Given a class of \kl{formulas}~$\FormulaSet$,
we denote by $\MSO(\FormulaSet)$ the corresponding enriched class.
For instance,
the \kl{formulas} of $\MSO(\ML)$ are generated by the grammar
\begin{equation*}
  \Formula \Coloneqq \PosIs{\NodeSymbol}
                \mid \PosIn{\SetSymbol}
                \mid \NOT\Formula
                \mid (\Formula_1\OR\Formula_2)
                \mid \dm[\RelSymbol](\Formula_1,…,\Formula_\Arity)
                \mid \Exists{\SetSymbol} \Formula,
\end{equation*}
where
$\NodeSymbol∈\NodeSymbolSet$,\:
$\SetSymbol∈\SetSymbolSet$, and
$\RelSymbol∈\RelSymbolSet{\Arity+1}$.
Note that by this notation,
$\MSOL = \MSO(\FOL)$.

For any class of \kl{formulas} $\FormulaSet$,
we shall refer to its members as
\Intro[class-formula]{\mbox{$\FormulaSet$-formulas}}.
Given a \kl[class-formula]{$\FormulaSet$-formula}~$\Formula$,
a class of \kl{structures} $\StructClass$ (e.g., $\DIGRAPH$),
and a \kl{structure}~$\Structure$,
we use the semantic bracket notations
$\semf{\Formula}[\StructClass]$ and
$\lsemf{\Formula}[\Structure]$
to denote the \kl{structure language}
\Intro{defined}\, by~$\Formula$ over~$\StructClass$, and
the set of \kl{nodes} of~$\Structure$ at which~$\Formula$ holds.
More formally,
\Phantomintro{\semf}
\Phantomintro{\lsemf}
\newcommand*{\StrutSemPhiC}{\vphantom{\semf{\Formula}[\StructClass]}}
\begin{align*}
  \reintro*{\swl{\semf{\Formula}[\StructClass]}{\lsemf{\Formula}[\Structure]}}
    &\defeq
     \lrsetbuilder{\Structure∈\StructClass}{\Structure\Models\Formula \StrutSemPhiC},
     \quad \text{and} \\
  \reintro*{\lsemf{\Formula}[\Structure]}
    &\defeq
     \lrsetbuilder{\Node \in \NodeSet{\Structure}}
                  {\ver{\Structure}{\PosSymbol}{\Node} \Models \Formula}.
\end{align*}
Furthermore,
$\semF{\FormulaSet}[\StructClass]$
denotes the family of \kl{structure languages}
that are \Intro{definable} in $\FormulaSet$
(or \reintro[definable]{$\FormulaSet$-definable}\:\!)
over $\StructClass$, i.e.,
\Phantomintro{\semF}
\begin{equation*}
  \reintro*{\semF{\FormulaSet}[\StructClass]}
  \defeq
  \lrsetbuilder{\semf{\Formula}[\StructClass]}{\Formula∈\FormulaSet}.
\end{equation*}
If $\StructClass$ is equal to the set of all \kl{structures},
then we do not have to specify it explicitly as a subscript;
that is, we may simply write
\reintro*{$\semf{\Formula}$} and \reintro*{$\semF{\FormulaSet}$}
instead of
$\semf{\Formula}[\StructClass]$ and $\semF{\FormulaSet}[\StructClass]$.
Similarly,
we use
\Phantomintro{\eclf}
\begin{equation*}
  \reintro*{\eclf{\Formula[1]}[\StructClass]}
  \defeq
  \lrsetbuilder{\Formula[2]}
               {\semf{\Formula[2]}[\StructClass]=\semf{\Formula[1]}[\StructClass]}
\end{equation*}
for the \kl[device equivalence]{equivalence} class of $\Formula[1]$ over $\StructClass$,\,
and
\Phantomintro{\eclF}
\begin{equation*}
  \reintro*{\eclF{\FormulaSet}[\StructClass]}
  \defeq \bigcup_{\Formula∈\FormulaSet}\eclf{\Formula}[\StructClass]
\end{equation*}
for the set of all \kl{formulas}
that are \kl[device equivalent]{equivalent} over $\StructClass$
to some \kl{formula} in $\FormulaSet$.
Again,
we may drop the subscript
if we do not want to restrict to a particular class of \kl{structures}.

\section{Example formulas}
\label{sec:example-formulas}

In order to illustrate the syntax
introduced in the previous \lcnamecref{sec:logics},
we now look at two simple examples.

The first is a great classic
that is often used to show
how a widely known \kl[digraph language]{graph property}
can be expressed in $\MSOL$ without too much effort.

\begin{example}[3-Colorability] \label{ex:MSO-3-colorable}
  The following $\EMSOL$-\kl{formula} \kl{defines} the
  \kl[digraph language]{language}
  of \kl[colorable]{3-colorable} \kl{digraphs} over~$\DIGRAPH$.
  \begin{align*}
    \dphi[color]{3} \defeq
    \Exists{\SetVariable_1,\SetVariable_2,\SetVariable_3}
    \biggl( \,
      & \Forall{\NodeVariable[1]}
        \Bigl( \,
          \bigl(
            \InSet{\SetVariable_1}{\NodeVariable[1]} \OR
            \InSet{\SetVariable_2}{\NodeVariable[1]} \OR
            \InSet{\SetVariable_3}{\NodeVariable[1]}
          \bigr)
          \AND \vphantom{\biggl(} \\[-2.3ex]
      & \hspace{10.6ex}
          \NOT
          \bigl(
            \InSet{\SetVariable_1}{\NodeVariable[1]} \AND
            \InSet{\SetVariable_2}{\NodeVariable[1]}
          \bigr) \AND \vphantom{\biggl(} \\[-2.3ex]
      & \hspace{10.6ex}
          \NOT
          \bigl(
            \InSet{\SetVariable_1}{\NodeVariable[1]} \AND
            \InSet{\SetVariable_3}{\NodeVariable[1]}
          \bigr) \AND \vphantom{\biggl(} \\[-1.3ex]
      & \hspace{10.6ex}
          \NOT
          \bigl(
            \InSet{\SetVariable_2}{\NodeVariable[1]} \AND
            \InSet{\SetVariable_3}{\NodeVariable[1]}
          \bigr) \hspace{4.2ex}
        \Big) \AND \vphantom{\Big)} \\[-.4ex]
      & \Forall{\NodeVariable[1],\NodeVariable[2]}
        \Big(
          \InRel{\RelSymbol}{\NodeVariable[1],\NodeVariable[2]}
          \:\IMP\:
          \NOT
          \bigl(
            \InSet{\SetVariable_1}{\NodeVariable[1]} \AND
            \InSet{\SetVariable_1}{\NodeVariable[2]}
          \bigr) \AND \vphantom{\biggl(} \\[-2.3ex]
      & \hspace{15.7ex}
          \NOT
          \bigl(
            \InSet{\SetVariable_2}{\NodeVariable[1]} \AND
            \InSet{\SetVariable_2}{\NodeVariable[2]}
          \bigr) \AND \vphantom{\biggl(} \\[-2.3ex]
      & \hspace{15.7ex}
          \NOT
          \bigl(
            \InSet{\SetVariable_3}{\NodeVariable[1]} \AND
            \InSet{\SetVariable_3}{\NodeVariable[2]}
          \bigr) \hspace{4.2ex}
        \Bigr) \;
    \biggr)
  \end{align*}
  The existentially quantified \kl{set variables}
  $\SetVariable_1,\SetVariable_2,\SetVariable_3 \in \SetSymbolSet$
  represent the three possible colors.
  In the first four lines,
  we specify that the sets assigned
  to these \kl[set variables]{variables}
  form a partition of the set of \kl{nodes}
  (possibly with empty components).
  The remaining three lines constitute
  the actual definition of a valid \kl{coloring}:
  no two \kl{adjacent} \kl{nodes} share the same color,
  which means that \kl{adjacent} \kl{nodes} are in different sets.
\end{example}

Our second example is equally simple,
but less glamorous
because it illustrates a technical issue
that will concern us in \cref{ch:alternation},
where we shall work with $\MSO(\MLg)$ and some variants thereof.
As we do not allow \kl[node quantifiers]{first-order quantification}
in \kl{modal logic} with \kl{set quantifiers},
some properties that seem very natural in $\FOL$ (and thus~$\MSOL$)
become rather cumbersome to express.
Nevertheless,
translation from $\FOL$ to $\MSO(\MLg)$ is always possible
because we can simulate \kl[node quantifiers]{first-order quantifiers}
by \kl{set quantifiers} relativized to singletons,
which, by extension, also entails
the \kl[device equivalence]{equivalence} of $\MSOL$ and $\MSO(\MLg)$.
\Cref{ex:uniqueness} presents the basic construction
that allows us to do this.
We will refer to it several times in \cref{ch:alternation}.

\begin{example}[Uniqueness] \label{ex:uniqueness}
Consider the following \kl{formula} schema,
where $\SetVariable∈\SetSymbolSet$,\, $\RelSymbol∈\RelSymbolSet{2}$,
and $\Formula$ can be any \kl[class-formula]{$\dMLg$-formula}:
\Phantomintro{\seeone}
\begin{equation*}
  \reintro*{\seeone[\RelSymbol](\Formula)} \defeq
  \dm[\RelSymbol] \Formula \,\AND\;
  \Forall{\SetVariable} \bigl( \dm[\RelSymbol](\Formula {\,\AND\,} \PosIn{\SetVariable}) \IMP \bx[\RelSymbol](\Formula {\,\IMP\,} \PosIn{\SetVariable}) \bigr).
\end{equation*}
When evaluated on a \kl{pointed structure} $\Structure$
whose \kl{signature} includes $\set{\PosSymbol,\RelSymbol}∪\free(\Formula)$,
the \kl{formula} $\seeone[\RelSymbol](\Formula)$ states that there is exactly one \kl{node} $\Node∈\NodeSet{\Structure}$
reachable from $\inp{\PosSymbol}{\Structure}$ through an $\inp{\RelSymbol}{\Structure}$-\kl{edge},
such that $\Formula$ is \kl{satisfied} at $\Node$ (i.e., by the \kl{structure} $\ver{\Structure}{\PosSymbol}{\Node}$).
In the context of $1$-relational \kl{digraphs},
we may use the shorthand \reintro*{$\seeone(\Formula)$} to invoke this schema.
Using the same construction with \kl{global modalities},
we also~define
\Phantomintro{\totone}
\begin{equation*}
  \reintro*{\totone(\Formula)} \defeq \seeone[\GlobalRelSymbol](\Formula),
\end{equation*}
which states that
there is precisely one \kl{node} in the entire \kl{structure} $\Structure$
at which $\Formula$ is \kl{satisfied}.
Here,
$\Structure$ does not necessarily have to be \kl[pointed structure]{pointed},
and, of course, $\signature(\Structure)$ does not contain $\GlobalRelSymbol$
(since it is the \kl{symbol} reserved for the total symmetric relation).
\end{example}

Anticipating the notation of \cref{sec:alternation-preliminaries},
the \kl{formulas} obtained by the construction in \cref{ex:uniqueness}
can be classified as~\kl[class-formula]{$\eclF{\PiMSO{1}(\FormulaSet)}$-formulas},
where $\FormulaSet∈\set{\ML,\dML,\MLg,\dMLg}$
depends on the specific \kl{modalities} that occur in $\Formula$.

\section{Distributed automata}
\label{sec:distributed-automata}

We conclude this preliminary chapter
by introducing our primary objects of interest.
Simply put,
a \kl{distributed automaton}
is a deterministic finite-state machine~$\Automaton$
that reads sets of \kl{states} instead of the usual alphabetic symbols.
To \kl{run} $\Automaton$ on a $1$-relational \kl{digraph}~$\Digraph$,
we place a separate copy of the \kl[distributed automaton]{machine}
on every \kl{node}~$\Node$ of $\Digraph$,
\kl[initialization function]{initialize} it to a \kl{state}
that may depend on $\Node$'s \kl{label} $\Labeling{\Digraph}(\Node)$,
and then let all the \kl{nodes} communicate
in an infinite sequence of synchronous rounds.
In every round,
each \kl{node} computes its next \kl{state}
as a \kl[transition function]{function}
of its own current \kl{state}
and the set of \kl{states} of its \kl{incoming neighbors}.
Intuitively,
\kl{node}~$\Node$ broadcasts its current \kl{state}~$\State$
to every \kl{outgoing neighbor},
while at the same time collecting
the \kl{states} received from its \kl{incoming neighbors}
into a set~$\NeighborSet$;
the successor \kl{state} of~$\State$ is then computed
as a \kl[transition function]{function}
of~$\State$ and~$\NeighborSet$.
Since~$\NeighborSet$ is a set
(as opposed to a multiset or a vector),
$\Node$ cannot distinguish between
two \kl{incoming neighbors} that share the same \kl{state}.
Now,
acting as a semi-decider,
the \kl[distributed automaton]{machine} at \kl{node}~$\Node$ \kl{accepts}
precisely if it visits an \kl{accepting state} at some point in time.
Either way,
all \kl[distributed automaton]{machines} of the network
keep running and communicating forever.
This is because even if a \kl{node} has already \kl{accepted},
it may still obtain new information
that affects the \kl{acceptance behavior} of its \kl{outgoing neighbors}.

Let us now define the notion of \kl{distributed automaton} more formally,
and generalize it to \kl{digraphs}
with an arbitrary number of \kl{edge relations}.

\begin{definition}[Distributed automaton]
  \label{def:distributed-automaton}
  A (deterministic, \Intro{nonlocal}\,) \Intro{distributed automaton} (\Intro*{$\DA$})
  over $\Alphabet$-\kl{labeled}, $\RelCount$-relational \kl{digraphs}
  is a tuple
  $\Automaton = \tuple{\StateSet,\InitFunc,\TransFunc,\AcceptSet}$,
  where
  $\StateSet$
  is a finite nonempty set of \Intro{states},
  $\InitFunc \colon \Alphabet \to \StateSet$
  is an \Intro{initialization function},
  $\TransFunc \colon \StateSet \times (\powerset{\StateSet})^\RelCount \to \StateSet$
  is a \Intro{transition function}, and
  $\AcceptSet \subseteq \StateSet$
  is a set of \Intro{accepting states}.
\end{definition}

Let $\Digraph$ be a $\Alphabet$-\kl{labeled}, $\RelCount$-relational \kl{digraph}.
The \Intro{run} of $\Automaton$ on $\Digraph$ is an infinite sequence
$\Run = \tuple{\Run_0, \Run_1, \Run_2, \dots}$
of maps
$\Run_{\Time} \colon \NodeSet{\Digraph} \to \StateSet$,
called \Intro{configurations},
which are defined inductively as follows,
for $\Time \in \Natural$ and $\Node[2] \in \NodeSet{\Digraph}$:
\begin{equation*}
  \Run_0(\Node[2]) = \InitFunc(\Labeling{\Digraph}(\Node[2]))
  \qquad\text{and}\qquad
  \Run_{\Time+1}(\Node[2]) =
  \TransFunc
      \Bigl(\Run_{\Time}(\Node[2]),\,
            \bigtuple{\setbuilder{\Run_{\Time}(\Node[1])}
                                 {\edge{\Node[1]}{\Node[2]} \in \EdgeSet[i]{\Digraph}}
                     }_{1 \leq i \leq \RelCount}
      \Bigr).
\end{equation*}
For $\Node \in \NodeSet{\Digraph}$,
the automaton~$\Automaton$ \Intro{accepts}
the \kl{pointed digraph} $\pver{\Digraph}{\Node}$
if $\Node$ visits an \kl{accepting state} at some point
in the \kl{run} $\Run$ of $\Automaton$ on $\Digraph$,
i.e., if there exists $\Time \in \Natural$
such that $\Run_{\Time}(\Node) \in \AcceptSet$.
The \kl{pointed-digraph language} of~$\Automaton$,
or \kl{pointed-digraph language} \Intro{recognized}\, by~$\Automaton$,
is the set of all \kl{pointed digraphs}
that are \kl{accepted} by~$\Automaton$.
We denote this \kl{language}
by~\Intro*{$\sema{\Automaton}[\pDIGRAPH[\Alphabet][\RelCount]]$},
in analogy to our notation for logical \kl{formulas}.
Similarly,
given a class of \kl[distributed automata]{automata}~$\AutomatonSet$,
we write
\Intro*{$\semA{\AutomatonSet}[\pDIGRAPH[\Alphabet][\RelCount]]$}
for the class of \kl{pointed-digraph languages}
over $\pDIGRAPH[\Alphabet][\RelCount]$
that are \kl{recognized} by some member of $\AutomatonSet$;
we call them \Intro[recognizable]{$\AutomatonSet$-recognizable}.

As usual, two devices
(i.e., \kl[distributed automata]{automata} or \kl{formulas})
are \Intro[device equivalent]{equivalent} if they specify
(i.e., \kl{recognize} or \kl{define})
the same \kl{language}.

In distributed computing,
one often considers algorithms
that run in a constant number of synchronous rounds.
They are known as \emph{local algorithms}
(see, e.g., \cite{DBLP:journals/csur/Suomela13}).
Here,
we use the same terminology for \kl{distributed automata}
and give a syntactic definition of \kl{locality}
in terms of \kl{state} diagrams.
Basically,
a \kl{distributed automaton} is \kl{local}
if its \kl{state} diagram does not contain any directed cycles,
except for self-loops on \kl{sink} \kl{states}.
This is equivalent to requiring
that all \kl{nodes} stop changing their \kl{state}
after a constant number of rounds.
\begin{definition}[Local distributed automaton]
  \label{def:local-distributed-automaton}
  A \Intro{local distributed automaton} (\Intro*{$\LDA$})
  over $\RelCount$-relational \kl{digraphs}
  is a \kl{distributed automaton}
  $\Automaton = \tuple{\StateSet,\InitFunc,\TransFunc,\AcceptSet}$
  whose \kl{state} diagram satisfies the following two conditions:
  \begin{enumerate}
  \item \label{itm:quasi-acyclic}
    The only directed cycles are self-loops.
    That is,
    for every sequence
    $\State_1, \State_2, \dots, \State_n$ of \kl{states} in~$\StateSet$
    such that
    $\State_1 = \State_n$ and
    $\TransFunc(\State_i,\vec{\NeighborSet}_i) = \State_{i+1}$
    for some $\vec{\NeighborSet}_i \in (\powerset{\StateSet})^\RelCount$,
    it must be that all \kl{states} of the sequence are the same.
  \item \label{itm:looping-sink-states}
    Self-loops occur only on \kl{sink} \kl{states}.
    That is,
    for every \kl{state} $\State \in \StateSet$,
    if $\TransFunc(\State,\vec{\NeighborSet}) = \State$
    for some $\vec{\NeighborSet} \in (\powerset{\StateSet})^\RelCount$,
    then the same must hold
    for all $\vec{\NeighborSet} \in (\powerset{\StateSet})^\RelCount$.
    \qedhere
  \end{enumerate}
\end{definition}

Deviating only in nonessential details from the original presentation
given by Hella~et~al.\
in~\cite{DBLP:conf/podc/HellaJKLLLSV12,DBLP:journals/dc/HellaJKLLLSV15},
we can now restate their logical characterization of the class~$\SB[1]$
using the terminology introduced above.

\begin{theorem}[{
    $\semA{\LDA}[\pDIGRAPH[\Alphabet][\RelCount]] =
     \semF{\bML}[\pDIGRAPH[\Alphabet][\RelCount]]$;
    \cite{DBLP:conf/podc/HellaJKLLLSV12,DBLP:journals/dc/HellaJKLLLSV15}
  }]
  \label{thm:LDA-bML}
  \strut
  A \kl{pointed-digraph language} is \kl{recognizable}
  by a \kl{local distributed automaton}
  if and only if
  it is \kl{definable} by a \kl{formula} of \kl{backward modal logic}.
  There are effective translations in both directions.
\end{theorem}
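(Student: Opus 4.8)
The plan is to establish the two inclusions separately, in each case by an explicit and manifestly effective construction. The guiding principle --- which is the heart of the argument of Hella~et~al.\ --- is that the nesting depth of backward modalities matches the running time of a local automaton: after $t$ synchronous rounds, a node has gathered exactly the information about the radius-$t$ ball of its incoming neighborhood that a backward modal formula of modal depth $t$ can inspect. What makes this number of rounds a \emph{constant} is precisely the syntactic locality conditions of \cref{def:local-distributed-automaton}: together they force every node to reach an absorbing sink state after at most $\card{\StateSet}$ rounds, since along any run the sequence of states of a single node is a repetition-free path in the state diagram followed by an absorbing self-loop on a sink.

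\textbf{From formulas to automata.} Given a $\bML$-formula $\Formula$ of modal depth $d$, I would build an $\LDA$ whose states are pairs $(\tau,i)$, where $i$ is a counter ranging over $\set{0,\dots,d}$ and $\tau$ ranges over the \emph{depth-$i$ backward bisimulation types} over the fixed finite alphabet $\Alphabet$ and $\RelCount$ edge relations --- of which there are only finitely many for each $i$. The initialization function maps a label to its depth-$0$ type with counter $0$; the transition function, reading a node's own state $(\tau,i)$ with $i<d$ and the sets $S_1,\dots,S_\RelCount$ of states currently held by its incoming $1$-, \dots, $\RelCount$-neighbors, outputs the depth-$(i+1)$ type assembled from $\tau$ and the types occurring in the $S_j$'s, with counter $i+1$; as soon as the counter reaches $d$ it jumps to one of two sink states $\mathsf{acc}$ or $\mathsf{rej}$, according to whether the --- now fully determined --- depth-$d$ type of the node satisfies $\Formula$. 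This is well defined because a $\bML$-formula of modal depth at most $d$ is invariant under depth-$d$ backward bisimulation. Taking $\AcceptSet = \set{\mathsf{acc}}$, a node accepts iff $\Formula$ holds at it; and since the counter increases strictly until one of the two sinks is entered, the state diagram has no directed cycles other than the self-loops on $\mathsf{acc}$ and $\mathsf{rej}$, so the automaton is local.

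\textbf{From automata to formulas.} Given a local distributed automaton $\Automaton = \tuple{\StateSet,\InitFunc,\TransFunc,\AcceptSet}$, the observation above shows that every state ever visited by a node in the run on any digraph already occurs within the first $T \defeq \card{\StateSet}$ rounds. For each state $s \in \StateSet$ and each $t \in \set{0,\dots,T}$, I would construct by induction on $t$ a $\bML$-formula $\Formula_{s,t}$ such that, writing $\Run$ for the run of $\Automaton$ on $\Digraph$, we have $\pver{\Digraph}{\Node} \Models \Formula_{s,t}$ iff $\Run_t(\Node) = s$. For $t = 0$, $\Formula_{s,0}$ is the Boolean combination of atomic label formulas describing exactly the letters $\Letter$ with $\InitFunc(\Letter) = s$. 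For the induction step, $\Run_{t+1}(\Node) = s$ holds iff there exist a state $s_0$ and sets $S_1,\dots,S_\RelCount$ with $\TransFunc(s_0,(S_1,\dots,S_\RelCount)) = s$ such that $\Run_t(\Node) = s_0$ and, for every $j$, the set of states of the incoming $j$-neighbors at round $t$ equals $S_j$; the latter is expressible as $\bigwedge_{s' \in S_j} \bdm[j] \Formula_{s',t} \AND \bigwedge_{s' \notin S_j} \NOT \bdm[j] \Formula_{s',t}$, so $\Formula_{s,t+1}$ is the finite disjunction, over all admissible tuples $(s_0,S_1,\dots,S_\RelCount)$, of $\Formula_{s_0,t}$ conjoined with these $\RelCount$ ``exact-set'' formulas. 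Finally, $\Automaton$ accepts $\pver{\Digraph}{\Node}$ iff $\Node$ visits an accepting state, which by the observation above happens iff it does so in one of rounds $0,\dots,T$; hence $\Automaton$ is equivalent to $\bigvee_{t=0}^{T} \bigvee_{s \in \AcceptSet} \Formula_{s,t}$.

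\textbf{Where the work is.} Once the modal-depth/running-time dictionary is in place, no deep new idea is required; the effort goes into making the two constructions interact correctly with the two ``boundary'' features of the definitions. On the automata side, one must check that the formula-to-automaton construction genuinely produces a \emph{local} automaton in the strict sense of \cref{def:local-distributed-automaton} --- which is exactly what the explicit counter and the two sink states $\mathsf{acc}$, $\mathsf{rej}$ are for. On the logic side, one must keep in mind that acceptance is a semi-decision, ``visits an accepting state at \emph{some} point in time'': reducing this to ``within $T$ rounds'' is legitimate only because the no-nontrivial-cycles condition forbids a node from drifting through infinitely many states, and this is the one place where locality is genuinely used in the $\LDA$-to-$\bML$ direction. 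The only non-bookkeeping ingredients are the finiteness of the set of depth-$d$ backward bisimulation types --- which is what keeps the automaton's state space finite --- and the bisimulation-invariance of modal logic; both are entirely standard.
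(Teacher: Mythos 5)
Your proof is correct, and it is essentially the intended argument: the paper does not prove \cref{thm:LDA-bML} itself but imports it from Hella et al., and your two constructions implement exactly the key correspondence the paper attributes to them, namely that the nesting depth of backward modalities matches the number of synchronous rounds. Both directions are sound and effective — the type automaton with a round counter and two absorbing sink states satisfies the locality conditions of \cref{def:local-distributed-automaton}, and the state-characterizing formulas $\Formula_{s,t}$ with exact-neighborhood conjunctions, truncated at $\card{\StateSet}$ rounds, are justified precisely because quasi-acyclicity with self-loops only on absorbing states forces every node to stabilize within that many rounds.
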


The notion of \kl{locality} plays a major role in \cref{ch:local},
where we extend $\LDA$'s
with the capacity of alternation and a global acceptance condition.
Our extension leaves the realm of basic $\DA$'s,
since we show that it is \kl[device equivalent]{equivalent} to $\MSOL$,
which by~\mbox{\cite[Prp.~6~\&~8]{DBLP:conf/csl/Kuusisto13}}
is incomparable with $\DA$'s.

On the other hand,
in \cref{ch:nonlocal,ch:emptiness},
we consider a simpler extension of $\LDA$'s,
which can be seen as a natural intermediate stage
between $\LDA$'s and $\DA$'s.
Given the above definition of \kl{local automata},
a rather obvious generalization is
to allow self-loops on all \kl{states},
even if they are not \kl{sink} \kl{states};
we call this property \kl{quasi-acyclic}.
More formally,
a \Intro{quasi-acyclic distributed automaton} (\Intro*{$\QDA$})
is a $\DA$ that satisfies
\cref{itm:quasi-acyclic} of \cref{def:local-distributed-automaton},
but not necessarily \cref{itm:looping-sink-states}.
An example of such an \kl[quasi-acyclic automaton]{automaton}
will be provided in
\cref{sec:preliminaries-nonlocal}
(\cref{fig:automaton} on page \pageref{fig:automaton}).
\mychapterpreamble{%
  \nameCref{ch:local} based on the
  conference paper~\cite{DBLP:conf/lics/Reiter15}.}

\chapter{Alternating Local Automata}
\label{ch:local}

In this \lcnamecref{ch:local},
we transfer the well-established notion of alternating automaton
to the setting of \kl{local distributed automata}
and combine it with a \kl{global acceptance condition}.
This gives rise to a new class of graph automata
that \kl[global recognize]{recognize} precisely the
\kl[digraph languages]{languages} of finite \kl{digraphs}
\kl{definable} in $\MSOL$.
By restricting transitions to be nondeterministic or deterministic,
we also obtain two strictly weaker variants
for which the emptiness problem is decidable.

\section{Informal description}
\label{sec:dga-preview}

We start with an informal description of the adjustments
that we make to the basic model of \kl{local automata}
(see \cref{sec:distributed-automata}).
Formal definitions will follow in \cref{sec:dga-definitions}.

The term
“\Intro{local distributed automaton with global acceptance condition}”
(\Intro*{$\LDAg$})
will be used to refer collectively
to the deterministic, nondeterministic and alternating
versions of our model.
Let us first mention the properties they have in common.

\begin{description}[wide,labelindent=0ex]
\item[Levels of states.]
  As for basic \kl{local automata},
  the number of communication rounds is limited by a constant.
  To make this explicit
  and to simplify the subsequent definition of alternation,
  we associate a number, called \kl[state level]{level}, with every \kl[alternating state]{state}.
  In most cases,
  this number indicates the round in which the \kl[alternating state]{state} may occur.
  We require that potentially \kl[alternating initialization function]{initial} \kl[alternating states]{states} are at \kl[state level]{level} $0$, and outgoing
  transitions from \kl[alternating states]{states} at \kl[state level]{level} $i$ go to \kl[alternating states]{states} at \kl[state level]{level}
  $i+1$. There is an exception, however: the \kl[alternating states]{states} at the highest
  \kl[state level]{level}, called the \kl{permanent states}, can also be \kl[alternating initialization function]{initial} \kl[alternating states]{states}
  and can have incoming transitions from any \kl[state level]{level}. Moreover, all their
  outgoing transitions are self-loops. The idea is that, once a \kl{node} has
  reached a \kl{permanent state}, it terminates its local computation, and
  waits for the other \kl{nodes} in the \kl{digraph} to terminate too.
\item[Global acceptance.]
  Unlike for basic \kl{local automata},
  the considered input is a \kl{digraph},
  not a \kl{pointed digraph},
  and consequently the \kl{language} \kl[global recognized]{recognized} by an $\LDAg$
  is a \kl{digraph language}.
  For this reason,
  once all the \kl{nodes} have reached a \kl{permanent state},
  the $\LDAg$ ceases to operate as a distributed algorithm,
  and collects all the reached \kl{permanent states} into a set $F$.
  This set is the sole \kl[global acceptance]{acceptance} criterion:
  if $F$ is part of the $\LDAg$'s \kl{accepting sets},
  then the input \kl{digraph} is \kl[global accepted]{accepted},
  otherwise it is rejected.
  In particular,
  the \kl[\ALDAg]{automaton} cannot detect
  whether several \kl{nodes} have reached the same \kl{permanent state}.
  This limitation is motivated by the desire
  to have a simple finite representation of $\LDAg$'s;
  in other words,
  the same reason why we do not allow \kl{nodes}
  to distinguish between several \kl[incoming neighbors]{neighbors}
  that are in the same \kl[alternating state]{state}.
\end{description}

As an introductory example,
we translate the \kl[class-formula]{$\MSOL$-formula}
$\dphi[color]{3}$ from \cref{ex:MSO-3-colorable} 
in \cref{sec:example-formulas}
to the setting of $\LDAg$'s.

\begin{example}[3-colorability] \label{ex:ADGA-3-colorable}
  \Cref{fig:ADGA-3-colorable} shows the \kl[alternating state]{state} diagram of a simple
  \kl[\NLDAg]{nondeterministic} $\LDAg$ $\dA[color]{3}$\!. The \kl[alternating states]{states} are arranged in
  columns corresponding to their \kl[state levels]{levels}, ascending from left to right.
  $\dA[color]{3}$ expects an \kl[labeled]{unlabeled} \kl{digraph} as
  input, and \kl[global accepts]{accepts} it if and only if it is \kl[colorable]{3-colorable}.
  The \kl[\ALDAg]{automaton} proceeds
  as follows: All \kl{nodes} of the input \kl{digraph} are \kl[alternating initialization function]{initialized} to the
  \kl[alternating state]{state} $\q{ini}$. In the first round, each \kl{node} nondeterministically
  chooses to go to one of the \kl[alternating states]{states} $\State_1$, $\State_2$ and $\State_3$, which
  represent the three possible colors. Then, in the second round, the
  \kl{nodes} verify locally that the chosen \kl{coloring} is valid. If the set
  received from their \kl{incoming neighborhood} (only one, since there is
  only a single \kl{edge relation}) contains their own \kl[alternating state]{state}, they go to
  $\q{no}$, otherwise to $\q{yes}$. The \kl[\ALDAg]{automaton} then \kl[global accepts]{accepts} the
  input \kl{digraph} if and only if all the \kl{nodes} are in $\q{yes}$, i.e., $\{\q{yes}\}$
  is its only \kl{accepting set}. This is indicated by the bar to the right
  of the \kl[alternating state]{state} diagram. We shall refer to such a representation of
  sets using bars as \emph{barcode}.
\end{example}

\begin{figure}
  \centering
  \begin{tikzpicture}[automaton, half row sep]
  \matrix[states] {
        & \node[existential] (q1) {$q_1$}; &[6ex] \\
        &     & \node[permanent] (q-yes) {$\q{yes}$}; \\
    \node[initial,existential] (q-ini) {$\q{ini}$}; & \node[existential] (q2) {$q_2$}; \\
        &     & \node[permanent] (q-no) {$\q{no}$}; \\
        & \node[existential] (q3) {$q_3$}; \\
  };
  \path[use as bounding box]
        (q-ini)  edge (q1)
                 edge (q2)
                 edge (q3)
        (q1)     edge[bend left=25] node[above=-.1ex,xshift=.4ex] {$\xnni q_1$} (q-yes)
                 edge[bend left=15] node[above=.2ex] {$\xni q_1$} (q-no)
        (q2.10)  edge[bend right=5] node[above left=-.5ex,xshift=-1.5ex] {$\xnni q_2$} (q-yes)
        (q2.350) edge[bend left=5] node[below left=-.5ex,xshift=-1.5ex] {$\xni q_2$} (q-no)
        (q3)     edge[bend right=15] node[below=.4ex,xshift=-.2ex] {$\xnni q_3$} (q-yes)
                 edge[bend right=25] node[below=.1ex,xshift=.4ex] {$\xni q_3$} (q-no);
  \matrix[accepting sets] {
       \\
    \x \\
       \\
       \\
       \\
  };
  \DrawColumnBackground{2}{4}{1}
\end{tikzpicture}
  \caption{$\dA[color]{3}$\!, a nondeterministic $\LDAg$ over
    \kl[labeled]{unlabeled}, $1$-relational \kl{digraphs} whose \kl{digraph language}
    consists of the \kl[colorable]{3-colorable} \kl{digraphs}.}
  \label{fig:ADGA-3-colorable}
\end{figure}
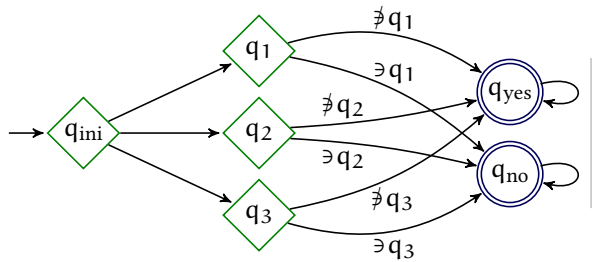

The last property,
which applies only to our most powerful version of $\LDAg$'s,
is \emph{alternation}, a generalization of nondeterminism
introduced by Chandra, Kozen and Stockmeyer
in~\cite{DBLP:journals/jacm/ChandraKS81}
(there, for Turing machines and other types of \kl[pointed dipath]{word} automata).

\begin{description}[resume*]
\item[Alternation.]
  In addition to being able to
  nondeterministically choose between different transitions, \kl{nodes} can
  also explore several choices in parallel. To this end, the
  \kl{nonpermanent states} of an \kl[\ALDAg]{alternating} $\LDAg$ ($\ALDAg$) are partitioned into
  two types, \kl[existential states]{existential} and \kl[universal states]{universal}, such that \kl[alternating states]{states}
  on the same \kl[state level]{level} are of the same type. If, in a given round, the
  \kl{nodes} are in \kl{existential states}, then they nondeterministically choose
  a single \kl[alternating state]{state} to go to in the next round, as described above. In
  contrast, if they are in \kl{universal states}, then the \kl[alternating run]{run} of the $\ALDAg$ is
  split into several parallel branches, called universal branches, one
  for each possible combination of choices of the \kl{nodes}. This procedure
  of splitting is repeated recursively for each round in which the \kl{nodes}
  are in \kl{universal states}. The $\ALDAg$ then \kl[global accepts]{accepts} the input \kl{digraph} if and only if
  its \kl[global acceptance condition]{acceptance condition} is satisfied in every universal branch of the
  \kl[alternating run]{run}.
\end{description}

\begin{example}[Non-3-colorability]
  To illustrate the notion of universal branching, consider the $\ALDAg$
  $\dcA[color]{3}$ shown in \cref{fig:ADGA-not-3-colorable}. It is a
  complement \kl[\ALDAg]{automaton} of $\dA[color]{3}$ from
  Example~\ref{ex:ADGA-3-colorable}, i.e., it \kl[global accepts]{accepts} precisely those
  (\kl[labeled]{unlabeled}) \kl{digraphs} that are \emph{not}
  \kl[colorable]{3-colorable}. \kl[alternating states]{States} represented as boxes are \kl[universal states]{universal}
  (whereas the diamonds in \cref{fig:ADGA-3-colorable} stand for
  \kl{existential states}). Given an input \kl{digraph} with $n$ \kl{nodes}, $\dcA[color]{3}$
  proceeds as follows: All \kl{nodes} are \kl[alternating initialization function]{initialized} to $\q{ini}$. In the
  first round, the \kl[alternating run]{run} is split into $3^n$ universal branches, each of
  which corresponds to one possible outcome of the first round of
  $\dA[color]{3}$ running on the same input \kl{digraph}. Then, in the second
  round, in each of the $3^n$ universal branches, the \kl{nodes} check
  whether the \kl{coloring} chosen in that branch is valid. As indicated by
  the barcode, the \kl[global acceptance condition]{acceptance condition} of $\dcA[color]{3}$ is
  satisfied if and only if at least one \kl{node} is in \kl[alternating state]{state} $\q{no}$, i.e., the
  \kl{accepting sets} are $\{\q{no}\}$ and $\{\q{yes},\q{no}\}$. Hence, the
  \kl[\ALDAg]{automaton} \kl[global accepts]{accepts} the input \kl{digraph} if and only if no valid \kl{coloring} was found
  in any universal branch. Note that we could also have chosen to make
  the \kl[alternating states]{states} $\State_1$, $\State_2$ and $\State_3$ \kl[existential state]{existential},
  since their outgoing
  transitions are deterministic. Regardless of their type, there is no
  branching in the second round.
\end{example}

\begin{figure}
  \centering
  \begin{tikzpicture}[automaton, half row sep]
  \matrix[states] {
        & \node[universal] (q1) {$q_1$}; &[6ex] \\
        &     & \node[permanent] (q-yes) {$\q{yes}$}; \\
    \node[initial,universal] (q-ini) {$\q{ini}$}; & \node[universal] (q2) {$q_2$}; \\
        &     & \node[permanent] (q-no) {$\q{no}$}; \\
        & \node[universal] (q3) {$q_3$}; \\
        & \\
  };
  \path[use as bounding box]
        (q-ini.20) edge (q1)
        (q-ini)    edge (q2)
                   edge (q3)
        (q1)       edge[bend left=25] node[above=-.1ex,xshift=.4ex] {$\xnni q_1$} (q-yes)
                   edge[bend left=15] node[above=.2ex] {$\xni q_1$} (q-no)
        (q2.10)    edge[bend right=5] node[above left=-.5ex,xshift=-1.5ex] {$\xnni q_2$} (q-yes)
        (q2.350)   edge[bend left=5] node[below left=-.5ex,xshift=-1.5ex] {$\xni q_2$} (q-no)
        (q3)       edge[bend right=15] node[below=.4ex,xshift=-.2ex] {$\xnni q_3$} (q-yes)
                   edge[bend right=25] node[below=.1ex,xshift=.4ex] {$\xni q_3$} (q-no);
  \matrix[accepting sets] {
            \\
       & \x \\
            \\
    \x & \x \\
            \\
            \\
  };
  \DrawColumnBackground{2}{4}{2}
\end{tikzpicture}
\vspace{-2ex}
  \caption{$\dcA[color]{3}$\!, an alternating $\LDAg$ over
    \kl[labeled]{unlabeled}, $1$-relational \kl{digraphs} whose \kl{digraph language}
    consists of the \kl{digraphs} that are \emph{not} \kl[colorable]{3-colorable}.}
  \label{fig:ADGA-not-3-colorable}
\end{figure}

\section{Formal definitions}
\label{sec:dga-definitions}

We now repeat and clarify the notions from \cref{sec:dga-preview} in a
more formal setting, beginning with our most general definition of
$\LDAg$'s.

\begin{definition}[Alternating local distributed automaton]
  \label{def:adga}
  An \Intro{alternating local distributed automaton with global acceptance condition} (\Intro*{$\ALDAg$}) over
  $\Alphabet$-\kl{labeled}, $\RelCount$-relational \kl{digraphs} is a tuple $\Automaton=\tuple{\StatePartition,\InitFunc,\TransFunc,\AcceptCondition}$, where
  \begin{itemize}
  \item $\StatePartition=\tuple{\ExistentialStateSet,\UniversalStateSet,\PermanentStateSet}$
    is a collection of \kl[alternating states]{states},
    with $\ExistentialStateSet$, $\UniversalStateSet$ and $\PermanentStateSet≠\EmptySet$ being pairwise disjoint finite sets of
    \Intro[existential states]{existential}, \Intro[universal states]{universal} and \Intro{permanent states},
    respectively, also referred to by the notational
    shorthands
    \begin{itemize}[topsep=0ex,itemsep=0ex]
    \item $\swl{\StateSet}{\NonpermanentStateSet} \defeq \ExistentialStateSet∪\UniversalStateSet∪\PermanentStateSet,$\, for the entire
      set of \Intro[alternating states]{states},
    \item $\NonpermanentStateSet \defeq \ExistentialStateSet∪\UniversalStateSet,$\, for the set of
      \Intro{nonpermanent states},
    \end{itemize}
  \item $\InitFunc\colon \Alphabet→\StateSet$ is an \Intro[alternating initialization function]{initialization function},
  \item $\TransFunc\colon \StateSet×(\powerset{\StateSet})^\RelCount→\powerset{\StateSet}$ is a (local)
    \Intro[alternating transition function]{transition function},\, and
  \item $\AcceptCondition⊆\powerset{\PermanentStateSet}$ is a set of \Intro{accepting sets}
    of \kl{permanent states}.
  \end{itemize}
  The functions $\InitFunc$ and $\TransFunc$ must be such that one can unambiguously
  associate with every \kl[alternating state]{state} $\State∈\StateSet$ a \Intro[state level]{level}\, \Intro*{$\level[\Automaton](\State)∈\Natural$}
  satisfying the following conditions:
  \begin{itemize}[itemsep=0ex]
    \setlength\abovedisplayskip{1ex}
    \setlength\belowdisplayskip{1ex}
  \item \kl[alternating states]{States} on the same \kl[state level]{level} are of the same type, i.e., for every
    $i∈\Natural$,
    \begin{equation*}
      \{\State∈\StateSet \mid \level[\Automaton](\State)=i\}∈(\powerset{\ExistentialStateSet}∪\powerset{\UniversalStateSet}∪\powerset{\PermanentStateSet}).
    \end{equation*}
  \item \kl[alternating initialization function]{Initial} \kl[alternating states]{states} are either on the lowest \kl[state level]{level} or \kl[permanent states]{permanent},
    i.e., for every $\State∈\StateSet$,
    \begin{equation*}
      \exists\Letter∈\Alphabet\colon \InitFunc(\Letter)=\State \quad \text{implies} \quad \level[\Automaton](\State)=0 \,\lor\, \State∈\PermanentStateSet.
    \end{equation*}
  \item \kl{Nonpermanent states} without incoming transitions are on the
    lowest \kl[state level]{level}, and transitions between \kl{nonpermanent states} go only
    from one \kl[state level]{level} to the next, i.e., for every $\State[2]∈\NonpermanentStateSet$,
    \begin{equation*}
      \level[\Automaton](\State[2])=
      \begin{cases}
        0 & \text{\parbox[t]{0.5\textwidth}{if for all $\State[1]∈\StateSet$ and $\vec{\NeighborSet}∈(\powerset{\StateSet})^\RelCount$\!, \\
            it holds that $\State[2]∉\TransFunc(\State[1],\vec{\NeighborSet})$,}} \\[3ex]
        i+1 & \text{\parbox[t]{0.5\textwidth}{if there are $\State[1]∈\NonpermanentStateSet$ and
              $\vec{\NeighborSet}∈(\powerset{\StateSet})^\RelCount$ \\ such that $\level[\Automaton](\State[1])=i$ and $\State[2]∈\TransFunc(\State[1],\vec{\NeighborSet})$.}}
      \end{cases}
    \end{equation*}
  \item The \kl{permanent states} are one \kl[state level]{level} higher than the highest
    \kl[nonpermanent states]{nonpermanent ones}, and have only self-loops as outgoing
    transitions, i.e., for every $\State∈\PermanentStateSet$,
    \begin{gather*}
      \level[\Automaton](\State)=
      \begin{cases}
        0                          & \text{if\; $\NonpermanentStateSet=\EmptySet$}, \\
        \max\{\level[\Automaton](\State)\mid \State∈\NonpermanentStateSet\}+1 & \text{otherwise},
      \end{cases} \\[.2ex]
      \TransFunc(\State,\vec{\NeighborSet})=\{\State\} \quad \text{for every $\vec{\NeighborSet}∈(\powerset{\StateSet})^\RelCount$}\!. \qedhere
    \end{gather*}
  \end{itemize}
\end{definition}

For any $\ALDAg$ $\Automaton=\tuple{\StatePartition,\InitFunc,\TransFunc,\AcceptCondition}$, we define its
\Intro[automaton length]{length} $\autolength(\Automaton)$ to be its highest \kl[state level]{level}, that is,
\Intro*{$\autolength(\Automaton) \defeq \max\{\level[\Automaton](\State)\mid \State∈\StateSet\}$}.

Next, we want to give a formal definition of a \kl[alternating run]{run}. For this, we need
the notion of a \kl[alternating configuration]{configuration}, which can be seen as the global state
of an $\ALDAg$.

\begin{definition}[Configuration]
  Consider a \kl{digraph}~$\Digraph$ and an $\ALDAg$ $\Automaton=\tuple{\StatePartition,\InitFunc,\TransFunc,\AcceptCondition}$.
  For any map $\Relabeling \colon \NodeSet{\Digraph} \to \StateSet$,
  we call the
  $\StateSet$-\kl{labeled} \kl[relabeled variant]{variant} $\lver{\Digraph}{\Relabeling}$ of $\Digraph$ a \Intro[alternating configuration]{configuration}
  of $\Automaton$ on $\Digraph$. If every \kl{node} in $\lver{\Digraph}{\Relabeling}$ is \kl{labeled} by a
  \kl{permanent state}, we refer to that \kl[alternating configuration]{configuration} as a \Intro{permanent configuration}.
  Otherwise, if $\lver{\Digraph}{\Relabeling}$ is a \kl[permanent configuration]{nonpermanent configuration}
  whose \kl{nodes} are \kl{labeled} exclusively by \kl[existential state]{existential} and (possibly)
  \kl{permanent states}, we say that $\lver{\Digraph}{\Relabeling}$ is an \Intro{existential configuration}.
  Analogously, the \kl[alternating configuration]{configuration} is \Intro[universal configuration]{universal} if it is
  \kl[permanent configuration]{nonpermanent} and only \kl{labeled} by \kl[universal states]{universal} and (possibly)
  \kl{permanent states}.

  Additionally, we say that a \kl{permanent configuration} $\lver{\Digraph}{\Relabeling}$ is
  \Intro[accepting configuration]{accepting} if the set of \kl[alternating states]{states} occurring in it is \kl[accepting set]{accepting},
  i.e., if $\{\Relabeling(\Node) \mid \Node∈\NodeSet{\Digraph}\}∈\AcceptCondition$. Any other \kl{permanent configuration}
  is called \Intro[rejecting configuration]{rejecting}.
  \kl[permanent configuration]{Nonpermanent configurations} are neither
  \kl[accepting configurations]{accepting} nor \kl[rejecting configurations]{rejecting}.
\end{definition}

The (local) \kl[alternating transition function]{transition function} of an $\ALDAg$ specifies for each \kl[alternating state]{state} a
set of potential successors, for a given family of sets of
\kl[alternating states]{states}. This can be naturally extended to \kl[alternating configurations]{configurations}, which leads
us to the definition of a \kl{global transition function}.

\pagebreak
\begin{definition}[Global transition function]
  The \Intro{global transition function} $\GlobTransFunc$ of an $\ALDAg$
  $\Automaton=\tuple{\StatePartition,\InitFunc,\TransFunc,\AcceptCondition}$ over $\Alphabet$-\kl{labeled}, $\RelCount$-relational \kl{digraphs} assigns to each \kl[alternating configuration]{configuration} $\lver{\Digraph}{\Relabeling[1]}$
  of $\Automaton$ the set of all of its \Intro{successor configurations} $\lver{\Digraph}{\Relabeling[2]}$,
  by combining all possible outcomes of local transitions on $\lver{\Digraph}{\Relabeling[1]}$,
  i.e.,
  \begin{align*}
    \GlobTransFunc \colon \DIGRAPH[\StateSet][\RelCount] &→ \powerset{(\DIGRAPH[\StateSet][\RelCount])} \\
    \lver{\Digraph}{\Relabeling[1]} &↦ \biggl\{\lver{\Digraph}{\Relabeling[2]} \biggm| \bigwedge_{\Node[2]∈\NodeSet{\Digraph}} \Relabeling[2](\Node[2])∈\TransFunc\Bigl(\Relabeling[1](\Node[2]),\,\bigtuple{\{\Relabeling[1](\Node[1])\mid \edge{\Node[1]}{\Node[2]} \in \EdgeSet[i]{\Digraph}\}}_{i \in \range{\RelCount}}\Bigr)\biggr\}.
    \qedhere
  \end{align*}
\end{definition}

We now have everything at hand to formalize the notion of a \kl[alternating run]{run}.

\begin{definition}[Run]
  A \Intro[alternating run]{run} of an $\ALDAg$ $\Automaton=\tuple{\StatePartition,\InitFunc,\TransFunc,\AcceptCondition}$ over $\Alphabet$-\kl{labeled}, $\RelCount$-relational \kl{digraphs} on a given \kl{digraph}
  $\Digraph∈\DIGRAPH[\Alphabet][\RelCount]$ is an acyclic \kl{digraph}~$\Run$ whose
  \kl{nodes} are \kl[alternating configurations]{configurations} of $\Automaton$ on $\Digraph$, such that
  \begin{itemize}
  \item the \Intro{initial configuration} $\lver{\Digraph}{\InitFunc∘\Labeling{\Digraph}} \in \NodeSet{\Run}$ is the only
    \kl{source},\marginnote{Here, the operator $∘$ denotes standard function
      composition, such that \mbox{$(\InitFunc∘\Labeling{\Digraph})(\Node)=\InitFunc(\Labeling{\Digraph}(\Node))$}.}
  \item every \kl[permanent configuration]{nonpermanent configuration}
    $\lver{\Digraph}{\Relabeling[1]} \in \NodeSet{\Run}$ with
    set of \kl{successor configurations}
      $\GlobTransFunc(\lver{\Digraph}{\Relabeling[1]})=\{\lver{\Digraph}{\Relabeling[2]_1},…,\lver{\Digraph}{\Relabeling[2]_m}\}$
    has
    \begin{itemize}[topsep=0ex,itemsep=0ex]
    \item exactly one \kl{outgoing neighbor} $\lver{\Digraph}{\Relabeling[2]_i}∈\GlobTransFunc(\lver{\Digraph}{\Relabeling[1]})$\, if
      $\lver{\Digraph}{\Relabeling[1]}$ is \kl[existential configuration]{existential},
    \item exactly $m$ \kl{outgoing neighbors} $\lver{\Digraph}{\Relabeling[2]_1},…,\lver{\Digraph}{\Relabeling[2]_m}$\, if
      $\lver{\Digraph}{\Relabeling[1]}$ is \kl[universal configuration]{universal},\, and
    \end{itemize}
  \item every \kl{permanent configuration} $\lver{\Digraph}{\Relabeling[1]} \in \NodeSet{\Run}$ is a \kl{sink}.
  \end{itemize}
  The \kl[alternating run]{run}~$\Run$ is \Intro[accepting run]{accepting} if every \kl{permanent configuration}
  $\lver{\Digraph}{\Relabeling[1]} \in \NodeSet{\Run}$ is \kl[accepting configuration]{accepting}.
\end{definition}

An $\ALDAg$ $\Automaton=\tuple{\StatePartition,\InitFunc,\TransFunc,\AcceptCondition}$ over $\Alphabet$-\kl{labeled}, $\RelCount$-relational \kl{digraphs} \Intro[global accepts]{accepts} a given \kl{digraph}
$\Digraph∈\DIGRAPH[\Alphabet][\RelCount]$ if and only if there exists an \kl{accepting run}~$\Run$ of $\Automaton$
on $\Digraph$. The \kl{digraph language} \Intro[global recognized]{recognized} by $\Automaton$ is the set
\Phantomintro{\semag}
\begin{equation*}
  \reintro*{\semag{\Automaton}[\DIGRAPH[\Alphabet][\RelCount]]} \defeq \bigl\{ \Digraph∈\DIGRAPH[\Alphabet][\RelCount] \bigm| \text{$\Automaton$ \kl[global accepts]{accepts} $\Digraph$} \bigr\}.
\end{equation*}
A \kl{digraph language} that is \kl[global recognized]{recognized} by some $\ALDAg$ is called
\Intro[global recognizable]{$\ALDAg$-recognizable}. We denote by \Intro*{$\semAg{\ALDAg}[\DIGRAPH[\Alphabet][\RelCount]]$} the class of
all such \kl{digraph languages}.

The $\ALDAg$ $\Automaton$ is \kl[device equivalent]{equivalent} to some \kl[class-formula]{$\MSOL$-formula} $\Formula$ if
it \kl[global recognizes]{recognizes} precisely the \kl{digraph language} \kl{defined} by $\Formula$ over $\DIGRAPH[\Alphabet][\RelCount]$, i.e., if
$\semag{\Automaton}[\DIGRAPH[\Alphabet][\RelCount]]=\semf{\Formula}[\DIGRAPH[\Alphabet][\RelCount]]$.

We inductively define that a \kl[alternating configuration]{configuration} $\lver{\Digraph}{\Relabeling[1]}∈\DIGRAPH[\StateSet][\RelCount]$ is
\Intro[reachable configuration]{reachable} by $\Automaton$ on $\Digraph$ if either $\lver{\Digraph}{\Relabeling[1]}=\lver{\Digraph}{\InitFunc∘\Labeling{\Digraph}}$, or
$\lver{\Digraph}{\Relabeling[1]}∈\GlobTransFunc(\lver{\Digraph}{\Relabeling[2]})$ for some \kl[alternating configuration]{configuration} $\lver{\Digraph}{\Relabeling[2]}∈\DIGRAPH[\StateSet][\RelCount]$
\kl[reachable configuration]{reachable} by $\Automaton$ on $\Digraph$. In case $\Digraph$ is irrelevant, we simply say
that $\lver{\Digraph}{\Relabeling[1]}$ is \kl[reachable configuration]{reachable} by~$\Automaton$.

The automaton $\Automaton$ is called a
\Intro[nondeterministic local distributed automaton with global acceptance condition]{nondeterministic $\LDAg$} (\Intro*{$\NLDAg$})
if
it has no \kl{universal states}, i.e., if $\UniversalStateSet=\EmptySet$. If additionally every
\kl[alternating configuration]{configuration} $\lver{\Digraph}{\Relabeling[1]}∈\DIGRAPH[\StateSet][\RelCount]$ that is \kl[reachable configuration]{reachable} by $\Automaton$ has
precisely one \kl{successor configuration}, i.e., $\card{\GlobTransFunc(\lver{\Digraph}{\Relabeling[1]})}=1$,
then we refer to $\Automaton$ as a
\Intro[deterministic local distributed automaton with global acceptance condition]{deterministic $\LDAg$} (\Intro*{$\DLDAg$}).
We denote
the classes of $\NLDAg$- and \kl[global recognizable]{$\DLDAg$-recognizable} \kl{digraph languages} by
\reintro*{$\semAg{\NLDAg}[\DIGRAPH[\Alphabet][\RelCount]]$} and \reintro*{$\semAg{\DLDAg}[\DIGRAPH[\Alphabet][\RelCount]]$}.

Let us now illustrate the notion of $\ALDAg$ using a slightly more
involved example.

\begin{example}[Concentric circles]
  Consider the $\ALDAg$ $\sA{centric}=\tuple{\StatePartition,\InitFunc,\TransFunc,\AcceptCondition}$ over $\set{\Letter[1],\Letter[2],\Letter[3]}$-\kl{labeled} \kl{digraphs} represented by the
  \kl[alternating state]{state} diagram in \cref{fig:ADGA-concentric-circles}. Again,
  \kl{existential states} are represented by diamonds, \kl{universal states} by
  boxes, and \kl{permanent states} by double circles. The short arrows
  mapping \kl[labels]{node labels} to \kl[alternating states]{states} indicate the \kl[alternating initialization function]{initialization function}
  $\InitFunc$. For instance, $\InitFunc(\Letter[1])=\qa$. The other arrows specify the
  \kl[alternating transition function]{transition function} $\TransFunc$. A label on such a transition arrow
  indicates a requirement on the set of \kl[alternating states]{states} that a \kl{node} receives
  from its \kl{incoming neighborhood} (only one set, since there is only a
  single \kl{edge relation}). For instance,
  $\TransFunc\bigl(\qb,\tuple{\{\qa,\qc\}}\bigl)=\{\State_{\Letter[2]:1},\State_{\Letter[2]:2}\}$. If there is
  no label, any set is permitted. Finally, as indicated by the barcode
  on the far right, the set of \kl{accepting sets} is
  $\AcceptCondition=\bigl\{\{\State_{\Letter[1]:3},\q{yes}\},\{\State_{\Letter[1]:4},\q{yes}\}\bigr\}$.

  Intuitively, $\sA{centric}$ proceeds as follows: In the first round,
  the $\Letter[1]$-\kl{labeled} \kl{nodes} do nothing but update their \kl[alternating state]{state}, while the
  $\Letter[2]$- and $\Letter[3]$-\kl{labeled} \kl{nodes} verify that the \kl{labels} in their
  \kl{incoming neighborhood} satisfy the condition of a valid graph
  \kl{coloring}. The $\Letter[3]$-\kl{labeled} \kl{nodes} additionally check that they do not
  see any $\Letter[1]$'s, and then directly terminate. Meanwhile, the
  $\Letter[2]$-\kl{labeled} \kl{nodes} nondeterministically choose one of the markers
  $1$ and $2$. In the second round, only the $\Letter[1]$-\kl{labeled} \kl{nodes} are
  busy. They verify that their \kl{incoming neighborhood} consists
  exclusively of $\Letter[2]$-\kl{labeled} \kl{nodes}, and that both of the markers $1$
  and $2$ are present, thus ensuring that they have at least two
  \kl{incoming neighbors}. Then, they simultaneously pick the markers $3$
  and $4$, thereby creating different universal branches, and the \kl[alternating run]{run}
  of the \kl[\ALDAg]{automaton} terminates. Finally, the $\ALDAg$ checks that all the
  \kl{nodes} approve of the \kl{digraph} (meaning that none of them has reached
  the \kl[alternating state]{state} $\q{no}$), and that in each universal branch, precisely
  one of the markers $3$ and $4$ occurs, which implies that there is a
  unique $\Letter[1]$-\kl{labeled} \kl{node}.

  To sum up, the \kl{digraph language} $\semag{\sA{centric}}[\DIGRAPH[\Alphabet][\RelCount]]$ consists of all the
  $\{\Letter[1],\Letter[2],\Letter[3]\}$-\kl{labeled}, \kl{digraphs} such that
  \begin{itemize}
    \item the \kl{labeling} constitutes a valid \kl[coloring]{3-coloring},
    \item there is precisely one $\Letter[1]$-\kl{labeled} \kl{node} $\Node_{\Letter[1]}$, and
    \item $\Node_{\Letter[1]}$ has only $\Letter[2]$-\kl{labeled} \kl{nodes} in its
      \kl{undirected neighborhood}, and at least two \kl{incoming neighbors}.
  \end{itemize}
  The name “$\sA{centric}$” refers to the fact that, in the (weakly)
  \kl{connected} component of $\Node_{\Letter[1]}$, the $\Letter[2]$- and $\Letter[3]$-\kl{labeled} \kl{nodes} form
  con\emph{centric} circles around $\Node_{\Letter[1]}$, i.e., \kl{nodes} at distance~1
  of $\Node_{\Letter[1]}$ are \kl{labeled} with $\Letter[2]$, \kl{nodes} at distance~2 (if existent)
  with $\Letter[3]$, \kl{nodes} at distance~3 (if existent) with $\Letter[2]$, and so
  forth.

  \Cref{fig:graph-labeled-pentagon} shows an example of a \kl{labeled}
  \kl{digraph} that lies in $\semag{\sA{centric}}[\DIGRAPH[\Alphabet][\RelCount]]$. A corresponding \kl{accepting run}
  can be seen in \cref{fig:run-accepting}. The leftmost \kl[alternating configuration]{configuration}
  is \kl[existential configuration]{existential}, the next one is \kl[universal configuration]{universal},
  and the two double-circled ones are \kl[permanent configuration]{permanent}. In the first round, the three
  \kl{nodes} that are in \kl[alternating state]{state} $\qb$ have a nondeterministic choice between
  $\State_{\Letter[2]:1}$ and $\State_{\Letter[2]:2}$. Hence, the second \kl[alternating configuration]{configuration} is one of
  eight possible choices. The branching in the second round is due to
  the \kl{node} in \kl[alternating state]{state} $\qaprime$ which goes simultaneously to $\State_{\Letter[1]:3}$
  and $\State_{\Letter[1]:4}$. In both branches, an \kl{accepting configuration} is
  reached, since $\{\State_{\Letter[1]:3},\q{yes}\}$ and $\{\State_{\Letter[1]:4},\q{yes}\}$ are
  both \kl{accepting sets}. Therefore, the entire \kl[alternating run]{run} is \kl[accepting run]{accepting}.
\end{example}

\begin{figure}[p]
  \centering
  \scalebox{0.99}{\begin{tikzpicture}[automaton, half row sep]
  \matrix[states] {
        \node[existential,draw=none] {}; &[3.5ex]&[7ex] \node[permanent] (q-a3) {$q_{\Letter[1]:3}$}; \\
    \node[existential] (q-a) {$\qa$}; & \node[universal] (q-apr) {$\qaprime$};  \\
        &     & \node[permanent] (q-a4) {$q_{\Letter[1]:4}$}; \\
        & \node[universal] (q-b1) {$q_{\Letter[2]:1}$}; \\
    \node[existential] (q-b) {$\qb$}; &     & \node[permanent] (q-yes) {$\q{yes}$}; \\
        & \node[universal] (q-b2) {$q_{\Letter[2]:2}$}; \\
        \\
    \node[existential] (q-c) {$\qc$}; &     & \node[permanent] (q-no) {$\q{no}$}; \\
  };
  \matrix[symbols] {
        \\
    \node (a) {$\Letter[1]$}; \\
        \\
        \\
    \node (b) {$\Letter[2]$}; \\
        \\
        \\
    \node (c) {$\Letter[3]$}; \\
  };
  \matrix[accepting sets] {
    \x &    \\
       &    \\
       & \x \\
       &    \\
    \x & \x \\
       &    \\
       &    \\
       &    \\
  };
  \DrawColumnBackground{1}{8}{2}
  \path (a)         edge (q-a)
        (b)         edge (q-b)
        (c)         edge (q-c)
        (q-a)       edge (q-apr)
        (q-b.25)    edge node[above] {$\xnni \qb$} (q-b1)
        (q-b.0)     edge node[above,xshift=-0.3ex] {$\xnni \qb$} (q-b2)
        (q-b.330)   edge[bend right=20] node[above=4.2ex,xshift=-9.2ex] {$\xni \qb$} (q-no)
        (q-c.0)     edge[bend right=25] node[above=-.5ex,xshift=-12ex] {$\xnni \qc ∧ \xnni \qa$} (q-yes)
        (q-c.330)   edge[bend right=22] node[above=.2ex] {$\xni \qc ∨ \xni \qa$} (q-no)
        (q-apr.22)  edge[bend left=8] node[above=.7ex,xshift=1ex] {$\xeq \{q_{\Letter[2]:1},q_{\Letter[2]:2}\}$} (q-a3)
        (q-apr)     edge node[above=.4ex,xshift=.9ex] {$\xeq \{q_{\Letter[2]:1},q_{\Letter[2]:2}\}$} (q-a4)
        (q-apr.320) edge[bend right=9] node[above=4.5ex,xshift=1.5ex] {$\xneq \{q_{\Letter[2]:1},q_{\Letter[2]:2}\}$} (q-no)
        (q-b1)      edge (q-yes)
        (q-b2)      edge (q-yes);
\end{tikzpicture}}
  \caption{$\sA{centric}$, an $\ALDAg$ over
    $\set{\Letter[1],\Letter[2],\Letter[3]}$-\kl{labeled} \kl{digraphs} whose \kl{digraph language}
    consists of the \kl{labeled} \kl{digraphs} that satisfy the following
    conditions: the \kl{labeling} constitutes a valid \kl[coloring]{3-coloring}, there is
    precisely one $\Letter[1]$-\kl{labeled} \kl{node} $\Node_{\Letter[1]}$, the
    \kl{undirected neighborhood} of $\Node_{\Letter[1]}$ contains only $\Letter[2]$-\kl{labeled} \kl{nodes}, and
    $\Node_{\Letter[1]}$ has at least two \kl{incoming neighbors}.}
  \label{fig:ADGA-concentric-circles}
\end{figure}

\begin{figure}[p]
  \centering
  \pentagraphPic{input graph}{\lnodedistIG}{$\Letter[1]$}{$\Letter[2]$}{$\Letter[3]$}{$\Letter[2]$}{$\Letter[2]$}{$\Letter[3]$}
  \caption{An $\{\Letter[1],\Letter[2],\Letter[3]\}$-\kl{labeled}, \kl{digraph}.}
  \label{fig:graph-labeled-pentagon}
\end{figure}

\begin{figure}[p]
  \centering
  \scalebox{0.98}{\begin{tikzpicture}[run or game]
  \matrix {
      & &[-2ex] \node[config,pacc] (c3a)
           {\pentagraphPic{configuration}{\lnodedistC}{$q_{\Letter[1]:3}$}{$\q{yes}$}{$\q{yes}$}{$\q{yes}$}{$\q{yes}$}{$\q{yes}$}}; \\
    \node[config,exis] (c1)
     {\pentagraphPic{configuration}{\lnodedistC}{$\qa$}{$\qb$}{$\qc$}{$\qb$}{$\qb$}{$\qc$}};
      & \node[config,univ] (c2)
         {\pentagraphPic{configuration}{\lnodedistC}{$\qaprime$}{$q_{\Letter[2]:1}$}{$\q{yes}$}{$q_{\Letter[2]:1}$}{$q_{\Letter[2]:2}$}{$\q{yes}$}}; \\
      & & \node[config,pacc] (c3b)
           {\pentagraphPic{configuration}{\lnodedistC}{$q_{\Letter[1]:4}$}{$\q{yes}$}{$\q{yes}$}{$\q{yes}$}{$\q{yes}$}{$\q{yes}$}}; \\
  };
  \path (c1) edge (c2)
        (c2) edge (c3a)
             edge (c3b);
\end{tikzpicture}}
  \caption{An \kl{accepting run} of the $\ALDAg$ of
    \cref{fig:ADGA-concentric-circles} on the \kl{labeled} \kl{digraph} shown in
    \cref{fig:graph-labeled-pentagon}.}
  \label{fig:run-accepting}
\end{figure}

In the following subsections (\ref{sec:hierarchy-closure},
\ref{sec:adga=mso} and \ref{sec:ndga-emptiness}), we derive our
results on several properties of $\LDAg$'s.

\section{Hierarchy and closure properties}
\label{sec:hierarchy-closure}

By a (\kl{node}) \Intro{projection} we mean a mapping $\Projection \colon \Alphabet → \Alphabet'$
between two alphabets~$\Alphabet$ and~$\Alphabet'$. With slight abuse of notation,
such a mapping is extended to \kl{labeled} \kl{digraphs} by applying it to each
node label, and to \kl{digraph languages} by applying it to each \kl{labeled}
\kl{digraph}. That is, for every $\Digraph∈\DIGRAPH[\Alphabet][\RelCount]$ and
$\Language⊆\DIGRAPH[\Alphabet][\RelCount]$,
\begin{equation*}
  \Projection(\Digraph) \defeq \lver{\Digraph}{\Projection∘\Labeling{\Digraph}},\!
  \quad \text{and} \quad
  \Projection(\Language) \defeq \setbuilder{\Projection(\Digraph)}{\Digraph∈\Language},
\end{equation*}
where the operator $∘$ denotes function composition, such that
$(\Projection∘\Labeling{\Digraph})(\Node)=\Projection(\Labeling{\Digraph}(\Node))$.

\begin{proposition}[{Closure properties of $\semAg{\ALDAg}[\DIGRAPH[\Alphabet][\RelCount]]$}] \label{prop:adga-closure}
  The class $\semAg{\ALDAg}[\DIGRAPH[\Alphabet][\RelCount]]$ of \kl[global recognizable]{$\ALDAg$-recognizable} \kl{digraph languages} is
  effectively closed under Boolean set operations and under
  \kl{projection}.
\end{proposition}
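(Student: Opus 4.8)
The plan is to handle complementation, union/intersection, and projection separately; each construction will be effective. Throughout, I rely on the fact that an $\ALDAg$ is local, so that on a fixed $\Digraph$ every run has depth at most $\autolength(\Automaton)$ and is finitely branching.

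\textbf{Complementation.} Given an $\ALDAg$ $\Automaton=\tuple{\StatePartition,\InitFunc,\TransFunc,\AcceptCondition}$ with $\StatePartition=\tuple{\ExistentialStateSet,\UniversalStateSet,\PermanentStateSet}$, I would form its dual $\overline{\Automaton}$ by exchanging the existential and universal state sets (i.e.\ using $\tuple{\UniversalStateSet,\ExistentialStateSet,\PermanentStateSet}$) and replacing $\AcceptCondition$ by $\powerset{\PermanentStateSet}\setminus\AcceptCondition$, while keeping $\InitFunc$, $\TransFunc$ and the level assignment unchanged. Every structural clause of \cref{def:adga} refers only to levels and to the permanent states, which are untouched, so $\overline{\Automaton}$ is again a well-formed $\ALDAg$ of the same length. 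To see that it complements $\Automaton$, read a run on $\Digraph$ as a finite, hence determined, two-player game: the existential player resolves the existential choices, the universal player resolves the universal ones, the existential player wins a play that reaches an accepting permanent configuration and loses one that reaches a rejecting one, and a branch that gets stuck at an existential (resp.\ universal) configuration without successors counts as a loss (resp.\ a win) for the existential player — this precisely mirrors the run definition. Dualizing swaps the two players and flips the verdict on permanent configurations, so by determinacy $\overline{\Automaton}$ accepts $\Digraph$ iff $\Automaton$ does not; thus $\semag{\overline{\Automaton}}[\DIGRAPH[\Alphabet][\RelCount]]=\DIGRAPH[\Alphabet][\RelCount]\setminus\semag{\Automaton}[\DIGRAPH[\Alphabet][\RelCount]]$. (Note that alternation is genuinely used here.)

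\textbf{Union and intersection.} Given $\Automaton_1,\Automaton_2$ with lengths $\ell_1,\ell_2$, I would build an automaton $\Automaton$ of length $\ell_1+\ell_2$ that runs $\Automaton_1$ during rounds $0,\dots,\ell_1-1$ and $\Automaton_2$ during rounds $\ell_1,\dots,\ell_1+\ell_2-1$: every node carries its original $\Alphabet$-label until round $\ell_1$, at which point it reinitializes a second component via $\InitFunc_2$, and from then on it also drags along, in an inert first component, the permanent state of $\Automaton_1$ in which it ended up, so the final verdict can inspect both automata. The alternation type of level $j$ is that of $\Automaton_1$'s level $j$ for $j<\ell_1$ and that of $\Automaton_2$'s level $j-\ell_1$ for $j\ge\ell_1$; there is no clash, since in each phase the other automaton is either a deterministic ``waiting'' chain or a family of self-looping permanent states, neither of which branches. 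Taking as accepting sets all $F\subseteq\PermanentStateSet_1\times\PermanentStateSet_2$ with $\pi_1(F)\in\AcceptCondition_1$ \emph{and} $\pi_2(F)\in\AcceptCondition_2$ gives $\semag{\Automaton_1}\cap\semag{\Automaton_2}$, and using \emph{or} gives the union. Correctness is again game bookkeeping: a branch of $\Automaton$ splits into a branch of $\Automaton_1$ followed by one of $\Automaton_2$, and since the first-component data stays frozen once phase two starts, any second-phase sub-run is a genuine run of $\Automaton_2$, so an accepting strategy for $\Automaton$ restricts to accepting strategies for the two factors and conversely. (With intersection and complementation available, union also follows by De Morgan, so one product construction could in fact be skipped.)

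\textbf{Projection.} For $\Projection\colon\Alphabet\to\Alphabet'$ and an $\ALDAg$ $\Automaton$ over $\DIGRAPH[\Alphabet][\RelCount]$, I would shift all levels of $\Automaton$ up by one and prepend a new existential level $0$ with one fresh state $q_{a'}$ for each $a'\in\Alphabet'$, set $\InitFunc'(a')=q_{a'}$, and let $q_{a'}$ transition (independently of the received state sets) to $\InitFunc(a)$ for every $a\in\Projection^{-1}(a')$; the accepting sets are unchanged. Each node thus guesses a preimage of its label and then behaves as $\Automaton$, so $\Automaton'$ accepts $\Digraph'\in\DIGRAPH[\Alphabet'][\RelCount]$ iff some $\Digraph$ with $\Projection(\Digraph)=\Digraph'$ is accepted by $\Automaton$, i.e.\ $\semag{\Automaton'}[\DIGRAPH[\Alphabet'][\RelCount]]=\Projection(\semag{\Automaton}[\DIGRAPH[\Alphabet][\RelCount]])$; a node whose label has no preimage leaves the (existential) initial configuration with no successor, so that digraph is correctly rejected. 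This also works if $\InitFunc(a)$ is a permanent state, since permanent states may receive transitions from any level.

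\textbf{Where the difficulty lies.} The only really delicate part is the union/intersection construction: one must interleave two automata with possibly different level counts and incompatible existential/universal patterns, and at the same time route the original labels and the first automaton's terminal state through the second phase, all the while respecting every clause of \cref{def:adga} — in particular that states sharing a level share a type and that the permanent states sit exactly one level above the highest nonpermanent one. By comparison, dualization and projection are essentially bookkeeping.
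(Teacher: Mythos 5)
Your complementation and projection constructions are essentially the paper's: dualize the state partition and complement $\AcceptCondition$ (justified by the same game reading), and guess a preimage label before simulating $\Automaton$. For union and intersection, however, you take a genuinely different route. The paper obtains union by having every node nondeterministically choose, in the first round, which of $\Automaton_1,\Automaton_2$ to simulate, detecting disagreement either locally or at the global acceptance check (important for disconnected inputs), and it gets intersection implicitly via complement and De~Morgan; your sequential composition runs $\Automaton_1$ for $\ell_1$ rounds and then $\Automaton_2$, freezing the $\Automaton_1$-verdict in an inert component and reading off both verdicts from a product acceptance condition. Your route sidesteps the consensus/conflict machinery and the need to align the existential/universal pattern of two automata running simultaneously (the main "omitted technicality" of the paper's sketch), handles intersection and union symmetrically, and only costs length $\ell_1+\ell_2$; the waiting-chain and level-typing bookkeeping you describe is adequate at the paper's own level of rigor.

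There is, however, one concrete step that fails as stated: the claim that "an accepting strategy for $\Automaton$ restricts to accepting strategies for the two factors and conversely" silently assumes that every branch of a run eventually reaches a \kl{permanent configuration}. \Cref{def:adga} allows $\TransFunc(\State,\vec{\NeighborSet})=\EmptySet$, so a branch may die at a configuration with no \kl{successor configurations}; by the run definition this is harmless (indeed a "free win") at a \kl[universal configuration]{universal} configuration and fatal at an \kl[existential configuration]{existential} one — a case distinction you yourself use in the complementation argument. Concretely, let $\Automaton_2$ have a single \kl[universal states]{universal} level-$0$ state $u$ with $\TransFunc_2(u,\cdot)=\EmptySet$ and $\AcceptCondition_2=\EmptySet$: then $\Automaton_2$ \kl[global accepts]{accepts} every \kl{digraph} vacuously, and in your intersection automaton every run dies in phase two before any \kl{permanent configuration} appears, so the composed automaton accepts every \kl{digraph} regardless of $\Automaton_1$, whereas the intersection is $\semag{\Automaton_1}[\DIGRAPH[\Alphabet][\RelCount]]$. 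Dually, an $\Automaton_2$ whose initial \kl[existential states]{existential} states have empty transition sets admits no runs at all, which kills phase two and breaks the union. The fix is to first normalize both automata so that every \kl[reachable configuration]{reachable} configuration has at least one successor — the same assumption the paper invokes without loss of generality in the proof of \cref{thm:adga=mso} — e.g.\ by redirecting empty transition sets to fresh \kl{permanent states} that record the round and the type of the level at which the run got stuck, and adjusting $\AcceptCondition$ so that the earliest such flag decides the verdict; this preserves the \kl[global recognized]{recognized} language and is effective. With that preprocessing your argument goes through, but without it the composition is unsound, so the normalization (or an equivalent guard) must be part of the proof.
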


\begin{proofsketch}
  As usual for alternating automata, complementation can be achieved
  by simply swapping the \kl[existential states]{existential} and \kl{universal states}, and
  complementing the \kl[global acceptance condition]{acceptance condition}. That is, for an $\ALDAg$
  $\Automaton=\bigtuple{\tuple{\ExistentialStateSet,\UniversalStateSet,\PermanentStateSet},\InitFunc,\TransFunc,\AcceptCondition}$ over $\Alphabet$-\kl{labeled}, $\RelCount$-relational \kl{digraphs}, a
  complement \kl[\ALDAg]{automaton} is $\cA=\bigtuple{\tuple{\UniversalStateSet,\ExistentialStateSet,\PermanentStateSet},\InitFunc,\TransFunc,\powerset{\PermanentStateSet}\setminus\AcceptCondition}$. This
  can be easily seen by associating a two-player game with $\Automaton$ and
  any $\Alphabet$-\kl{labeled}, $\RelCount$-relational \kl{digraph} $\Digraph$. One player tries to come up with an
  \kl{accepting run} of $\Automaton$ on $\Digraph$, whereas the other player seeks to
  find a (path to a) \kl{rejecting configuration} in any \kl[alternating run]{run} proposed by
  the adversary. The first player has a winning strategy if and only if $\Automaton$
  \kl[global accepts]{accepts} $\Digraph$. (This game-theoretic characterization will be used
  and explained more extensively in the proof of \cref{thm:adga=mso}.)
  From this perspective, the construction of $\cA$ corresponds to
  interchanging the roles and winning conditions of the two players.

  For two $\ALDAg$'s $\Automaton_1$ and $\Automaton_2$, we can effectively construct an
  $\ALDAg$ $\Automaton_∪$ that \kl[global recognizes]{recognizes} $\semag{\Automaton_1}[\DIGRAPH[\Alphabet][\RelCount]]∪\semag{\Automaton_2}[\DIGRAPH[\Alphabet][\RelCount]]$ by taking advantage
  of nondeterminism. The approach is, in principle, very similar to
  the corresponding construction for nondeterministic finite automata
  on \kl[pointed dipaths]{words}. In the first round of $\Automaton_∪$, each \kl{node} in the input \kl{digraph}
  nondeterministically and independently decides whether to behave
  like in $\Automaton_1$ or in $\Automaton_2$. If there is a consensus, then the \kl[alternating run]{run}
  continues as it would in the unanimously chosen \kl[\ALDAg]{automaton} $\Automaton_j$,
  and it is \kl[accepting run]{accepting} if and only if it corresponds to an \kl{accepting run} of
  $\Automaton_j$. Otherwise, a conflict is detected, either locally by
  \kl{adjacent} \kl{nodes} that have chosen different \kl[\ALDAg]{automata}, or at the
  latest, when \kl[global acceptance]{acceptance} is checked globally (important for
  \kl[connected]{disconnected} \kl{digraphs}), and in either case the \kl[alternating run]{run} is \kl[accepting run]{rejecting}. (Note
  that we have omitted some technicalities that ensure that the
  construction outlined above satisfies all the properties of an
  $\ALDAg$.)

  Closure under node \kl{projection} is straightforward, again by
  exploiting nondeterminism. Given an $\ALDAg$ $\Automaton$ with node alphabet $\Alphabet$
  and a \kl{projection} $\Projection \colon \Alphabet → \Alphabet'$, we can effectively construct an
  $\ALDAg$ $\Automaton'$ that \kl[global recognizes]{recognizes} $\Projection(\semag{\Automaton}[\DIGRAPH[\Alphabet][\RelCount]])$ as follows: For every $\Letter[2]∈\Alphabet'$,
  each \kl{node} \kl{labeled} with $\Letter[2]$ nondeterministically chooses a new label
  $\Letter[1]∈\Alphabet$, such that $\Projection(\Letter[1])=\Letter[2]$. Then, the \kl[\ALDAg]{automaton} $\Automaton$ is simulated on
  that new input.
\end{proofsketch}

\begin{proposition}[{$\semAg{\NLDAg}[\DIGRAPH[\Alphabet][\RelCount]]⊂\semAg{\ALDAg}[\DIGRAPH[\Alphabet][\RelCount]]$}] \label{prop:ndga<adga}
  There are (infinitely many) \kl[global recognizable]{$\ALDAg$-recognizable} \kl{digraph languages} that
  are not \kl[global recognizable]{$\NLDAg$-recognizable}.
\end{proposition}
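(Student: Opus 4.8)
The inclusion $\semAg{\NLDAg}[\DIGRAPH[\Alphabet][\RelCount]] \subseteq \semAg{\ALDAg}[\DIGRAPH[\Alphabet][\RelCount]]$ is immediate, since an $\NLDAg$ is by definition an $\ALDAg$ with $\UniversalStateSet = \EmptySet$. To obtain a strict --- and in fact infinite --- separation, the plan is to sandwich the two classes: I will place $\semAg{\NLDAg}$ inside the class of existential-monadic-second-order-definable languages, whereas $\semAg{\ALDAg}$, being all of $\MSOL$ by \cref{thm:adga=mso}, reaches infinitely far beyond that by Fagin's theorem.

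The heart of the argument is the inclusion $\semAg{\NLDAg}[\DIGRAPH[\Alphabet][\RelCount]] \subseteq \semF{\EMSOL}[\DIGRAPH[\Alphabet][\RelCount]]$. Fix an $\NLDAg$ $\Automaton$ with state set $\StateSet$, and let $\ell = \autolength(\Automaton)$. The level conditions of \cref{def:adga} ensure that after $\ell$ synchronous rounds every node has reached a permanent state and stays there forever, so a run of $\Automaton$ on a digraph $\Digraph$ is completely described by the tuple of sets $(X_{i,\State})_{0 \le i \le \ell,\; \State \in \StateSet}$, where $X_{i,\State}$ collects the nodes whose state after round $i$ equals $\State$. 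An $\EMSOL$-sentence can guess these finitely many sets and then verify, using only first-order quantification, that: (a) at every round the sets $X_{i,\cdot}$ partition $\NodeSet{\Digraph}$; (b) $X_{0,\cdot}$ is obtained by applying $\InitFunc$ to the node labeling; (c) for every node $v$ and round $i < \ell$, the state of $v$ after round $i{+}1$ lies in $\TransFunc$ applied to the state of $v$ after round $i$ together with, for each relation index $j$, the \emph{set} of states carried by the incoming $j$-neighbors of $v$ after round $i$ --- a finite disjunction over the patterns $(\State, T_1, \dots, T_{\RelCount})$, each pattern being first-order definable precisely because ``the incoming $j$-neighbors carry exactly the states in $T_j$'' is a first-order statement; and (d) $\{\State \in \StateSet : X_{\ell,\State} \neq \EmptySet\} \in \AcceptCondition$. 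Such a sentence defines exactly $\semag{\Automaton}[\DIGRAPH[\Alphabet][\RelCount]]$.

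To finish, combine this with two imported results. By \cref{thm:adga=mso}, $\semF{\MSOL}[\DIGRAPH[\Alphabet][\RelCount]] \subseteq \semAg{\ALDAg}[\DIGRAPH[\Alphabet][\RelCount]]$, and by Fagin's theorem~\cite{DBLP:journals/mlq/Fagin75}, $\semF{\EMSOL}[\DIGRAPH[\Alphabet][\RelCount]] \subsetneq \semF{\MSOL}[\DIGRAPH[\Alphabet][\RelCount]]$; let $L_0$ be weak connectivity, which lies in this difference and contains digraphs of arbitrarily large size. For $n \ge 1$ set $L_n \defeq L_0 \cap \{D : \card{\NodeSet{D}} \ge n\}$. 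Each $L_n$ is $\MSOL$-definable (intersection with a first-order property), but none of them is $\EMSOL$-definable: if some $L_n$ were, then $L_0 = L_n \cup (L_0 \cap \{D : \card{\NodeSet{D}} < n\})$ would be $\EMSOL$-definable too, since $\EMSOL$ is closed under union and the second set --- being finite up to isomorphism --- is $\FOL$-definable. Since the $L_n$ are pairwise distinct, we obtain infinitely many $\MSOL$-definable, hence (by the displayed inclusion) $\ALDAg$-recognizable, digraph languages lying outside $\semF{\EMSOL}[\DIGRAPH[\Alphabet][\RelCount]] \supseteq \semAg{\NLDAg}[\DIGRAPH[\Alphabet][\RelCount]]$; none of them is $\NLDAg$-recognizable. (This infinitude also follows from the strictness of the $\MSOL$ set-quantifier alternation hierarchy over graphs~\cite{DBLP:journals/iandc/MatzST02}.)

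The only genuine work lies in the $\EMSOL$-encoding of the middle step, and its one delicate point is that an $\NLDAg$ sees its neighbors' states as a \emph{set}, not a multiset --- which is exactly what keeps condition (c) first-order expressible. As an aside, mere properness (without the infinitude count) can be had more cheaply: the emptiness problem for $\NLDAg$'s is decidable, so if $\semAg{\NLDAg}$ equalled $\semAg{\ALDAg} = \semF{\MSOL}$ we could decide finite satisfiability of $\MSOL$ on digraphs, contradicting Trakhtenbrot's theorem.
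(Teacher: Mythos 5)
Your proof is correct, but it follows a genuinely different route from the paper's. The paper's own argument is elementary and self-contained: for each $k \geq 1$ it takes the language $\dL[card]{\leq k}$ of digraphs with at most $k$ nodes, recognizes it by an $\ALDAg$ that universally branches every node into $k+1$ states and accepts exactly when no branch exhibits all $k+1$ states at once, and refutes any candidate $\NLDAg$ by a duplication argument: given a $k$-node digraph with an accepting run, duplicating one node together with all its incident edges yields a $(k+1)$-node digraph on which the copy can mimic the original round for round, so the automaton would wrongly accept. You instead sandwich the classes through logic: $\semAg{\NLDAg}[\DIGRAPH[\Alphabet][\RelCount]] \subseteq \semF{\EMSOL}[\DIGRAPH[\Alphabet][\RelCount]]$ by guessing the round-by-round state sets of a run (essentially the encoding the paper itself gives later when justifying \cref{fig:venn-diagram}; your appeal to the level discipline and to the set-based, hence first-order checkable, transition patterns is sound), then $\semF{\MSOL}[\DIGRAPH[\Alphabet][\RelCount]] \subseteq \semAg{\ALDAg}[\DIGRAPH[\Alphabet][\RelCount]]$ by \cref{thm:adga=mso}, and finally connectivity-based witnesses in $\semF{\MSOL}[\DIGRAPH[\Alphabet][\RelCount]] \setminus \semF{\EMSOL}[\DIGRAPH[\Alphabet][\RelCount]]$ via Fagin, padded with cardinality constraints to get infinitely many; the padding argument (union with the FO-definable class of small connected digraphs) is correct. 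The trade-off: you import the hard direction of \cref{thm:adga=mso}, which is proved only after this proposition (not circularly, since that proof relies only on \cref{prop:adga-closure}), as well as Fagin's inexpressibility theorem, whereas the paper needs neither; on the other hand, the paper's witnesses $\dL[card]{\leq k}$ are even FO-definable, which pins down the weakness of $\NLDAg$'s more sharply than witnesses drawn from $\MSOL$ beyond $\EMSOL$. One caveat on your closing aside: decidability of $\NLDAg$ emptiness together with a merely semantic equality of $\semAg{\NLDAg}[\DIGRAPH[\Alphabet][\RelCount]]$ and $\semF{\MSOL}[\DIGRAPH[\Alphabet][\RelCount]]$ would not by itself decide finite satisfiability of $\MSOL$; that shortcut requires an effective translation from formulas to automata, which class equality alone does not supply. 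Since the remark is inessential, your main proof stands.
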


\begin{proof}
  For any constant $k≥1$, we consider the
  \kl[digraph language]{language} $\dL[card]{≤k}$ of all \kl{digraphs} that have at most $k$ \kl{nodes},
  i.e., $\dL[card]{≤k}=\bigl\{\Digraph∈\DIGRAPH \bigm| \card{\NodeSet{\Digraph}}≤k
  \bigr\}$. We can easily construct an $\ALDAg$ that \kl[global recognizes]{recognizes} this \kl{digraph language}:
  In a universal branching, each \kl{node} goes to $k+1$
  different \kl[alternating states]{states} in parallel. The \kl[\ALDAg]{automaton} \kl[global accepts]{accepts} if and only if there is no
  branch in which the $k+1$ \kl[alternating states]{states} occur all at once. Now, assume for
  sake of contradiction that $\dL[card]{≤k}$ is also \kl[global recognized]{recognized} by some
  $\NLDAg$ $\Automaton$, and let $\Digraph$ be a \kl{digraph} with $k$ \kl{nodes}. We construct a
  variant $\Digraph'$ of $\Digraph$ with $k+1$ \kl{nodes} by duplicating some \kl{node} $\Node$,
  together with all of its incoming and outgoing \kl{edges}. Observe that
  any \kl{accepting run} of $\Automaton$ on $\Digraph$ can be extended to an \kl{accepting run}
  on $\Digraph'$, where the copy of $\Node$ behaves exactly like $\Node$ in every
  round.
\end{proof}

\begin{proposition}[{Closure properties of $\semAg{\NLDAg}[\DIGRAPH[\Alphabet][\RelCount]]$}] \label{prop:ndga-closure}
  The class $\semAg{\NLDAg}[\DIGRAPH[\Alphabet][\RelCount]]$ of \kl[global recognizable]{$\NLDAg$-recognizable} \kl{digraph languages} is
  effectively closed under union, intersection and \kl{projection}, but
  \emph{not} closed under complementation.
\end{proposition}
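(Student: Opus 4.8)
The plan is to obtain closure under union and \kl{projection} almost for free from earlier constructions, to handle intersection by a product automaton, and to establish non-closure under complementation by a small explicit witness. For union and \kl{projection} I would not build anything new but only reexamine the two constructions from the proof of \cref{prop:adga-closure}: the automaton $\Automaton_\cup$ recognizing $\semag{\Automaton_1}[\DIGRAPH[\Alphabet][\RelCount]] \cup \semag{\Automaton_2}[\DIGRAPH[\Alphabet][\RelCount]]$ only lets every \kl{node} guess, nondeterministically and independently, whether to imitate $\Automaton_1$ or $\Automaton_2$, and then does so; and the automaton recognizing $\Projection(\semag{\Automaton}[\DIGRAPH[\Alphabet][\RelCount]])$ only lets every \kl{node} guess a $\Projection$-preimage of its own label and then simulates $\Automaton$. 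Neither construction ever introduces any \kl{universal states}, and in both of them the \kl[state level]{level} structure is preserved (one simply prepends a single new bottom \kl[state level]{level} on which the \kl{nodes} make their guess), so when the inputs are $\NLDAg$'s, so are the outputs. This already gives effective closure under union and \kl{projection}.

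For intersection I would take the product automaton $\Automaton$ in which every \kl{node} carries a pair formed of a \kl{state} of $\Automaton_1$ and a \kl{state} of $\Automaton_2$: its transition acts componentwise, each component inspecting only the corresponding entries of the incoming pairs; its initial \kl{state} for a label $\Letter$ is $(\InitFunc_1(\Letter), \InitFunc_2(\Letter))$; a pair is \kl[permanent states]{permanent} iff both of its components are; and a set of \kl{permanent states} of $\Automaton$ is one of its \kl{accepting sets} iff its first-coordinate projection is an \kl{accepting set} of $\Automaton_1$ and its second-coordinate projection is an \kl{accepting set} of $\Automaton_2$. Since $\Automaton$ simulates $\Automaton_1$ and $\Automaton_2$ in lockstep, every \kl{run} of $\Automaton$ on a \kl{digraph} $\Digraph$ projects coordinatewise to a \kl{run} of $\Automaton_1$ and a \kl{run} of $\Automaton_2$ on $\Digraph$, and conversely every pair of a \kl{run} of $\Automaton_1$ and a \kl{run} of $\Automaton_2$ on $\Digraph$ is realized by a \kl{run} of $\Automaton$ (the nondeterministic choices of the two components being mutually independent); by the choice of the \kl{accepting sets} of $\Automaton$, such a \kl{run} of $\Automaton$ is \kl[accepting run]{accepting} iff both of its constituents are. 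Hence $\semag{\Automaton}[\DIGRAPH[\Alphabet][\RelCount]] = \semag{\Automaton_1}[\DIGRAPH[\Alphabet][\RelCount]] \cap \semag{\Automaton_2}[\DIGRAPH[\Alphabet][\RelCount]]$, and since $\Automaton$ has no \kl{universal states} it is an $\NLDAg$. The one delicate point, which I expect to be the main obstacle (though it is purely technical), is to equip $\Automaton$ with a \kl[state level]{level} function satisfying all clauses of \cref{def:adga}: I would give each \kl[reachable configuration]{reachable} \kl[nonpermanent states]{nonpermanent} pair the round at which it can occur in a \kl{run} (well defined precisely because the two components advance in lockstep, so that the two component \kl[state level]{levels} of such a pair coincide), discard the remaining, unreachable pairs with mismatched component \kl[state level]{levels}, and place the \kl{permanent states} one \kl[state level]{level} above the highest \kl[nonpermanent states]{nonpermanent} one; checking \cref{def:adga} for this assignment is then routine.

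Finally, for non-closure under complementation I would exhibit a witness. Fix $k \geq 1$ and consider the \kl{digraph language} $\dL[card]{>k} = \bigl\{ \Digraph \in \DIGRAPH \bigm| \card{\NodeSet{\Digraph}} > k \bigr\}$ of \kl{digraphs} with more than $k$ \kl{nodes}. I would first check that $\dL[card]{>k}$ is \kl[global recognizable]{$\NLDAg$-recognizable}, via the $\NLDAg$ defined as follows: its \kl{states} are $\{\q{ini}\} \cup \{p_1, \dots, p_{k+1}\}$, with $\q{ini}$ an \kl[existential states]{existential} \kl{state} at \kl[state level]{level} $0$ and $p_1, \dots, p_{k+1}$ \kl{permanent states} at \kl[state level]{level} $1$; it initializes every \kl{node} to $\q{ini}$; in the first round each \kl{node} moves nondeterministically to one of $p_1, \dots, p_{k+1}$ (intuitively, guessing one of $k+1$ colours); and its only \kl{accepting set} is $\{p_1, \dots, p_{k+1}\}$. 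A \kl{run} of this automaton is \kl[accepting run]{accepting} iff all $k+1$ colours are used, which is possible iff the \kl{digraph} has at least $k+1$ \kl{nodes}; so $\dL[card]{>k}$ is indeed \kl[global recognizable]{$\NLDAg$-recognizable}. I would then invoke the proof of \cref{prop:ndga<adga}, which shows, by duplicating a \kl{node}, that the complement $\DIGRAPH \setminus \dL[card]{>k} = \dL[card]{\leq k}$ is \emph{not} \kl[global recognizable]{$\NLDAg$-recognizable}. These two facts together show that $\semAg{\NLDAg}[\DIGRAPH[\Alphabet][\RelCount]]$ is not closed under complementation, which completes the plan.
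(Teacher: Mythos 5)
Your proposal is correct and follows essentially the same route as the paper: union and projection are inherited from the $\ALDAg$ constructions of \cref{prop:adga-closure} (which introduce no \kl{universal states}), intersection is handled by the same product construction, and non-closure under complementation is witnessed by the pair $\dL[card]{≥k+1}$ (recognizable) versus $\dL[card]{≤k}$ (not recognizable, by the duplication argument of \cref{prop:ndga<adga}). The only cosmetic difference is that you write out the $\NLDAg$ for $\dL[card]{≥k+1}$ explicitly, whereas the paper obtains the very same automaton by complementing the $\ALDAg$ for $\dL[card]{≤k}$; your extra care with the \kl[state level]{level} function of the product automaton is a detail the paper glosses over.
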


\begin{proof}
  For union and \kl{projection}, we simply use the same constructions as
  for $\ALDAg$'s (see \cref{prop:adga-closure}).

  Intersection can be handled by a product construction, similar to
  the one for finite automata on \kl[pointed dipaths]{words}. Given two $\NLDAg$'s $\Automaton_1$ and
  $\Automaton_2$, we construct an $\NLDAg$ $\Automaton_⊗$ that operates on the Cartesian
  product of the \kl[alternating state]{state} sets of $\Automaton_1$ and $\Automaton_2$. It simulates the two
  \kl[\NLDAg]{automata} simultaneously and \kl[global accepts]{accepts} if and only if both of them reach an
  \kl{accepting configuration}.

  To see that $\semAg{\NLDAg}[\DIGRAPH[\Alphabet][\RelCount]]$ is not closed under complementation, we
  recall from the proof of \cref{prop:ndga<adga} that for any $k≥1$,
  the \kl[digraph language]{language} $\dL[card]{≤k}$ of all \kl{digraphs} that have at most $k$
  \kl{nodes} is not \kl[global recognizable]{$\NLDAg$-recognizable}. However, complementing the $\ALDAg$
  given for $\dL[card]{≤k}$ yields an $\NLDAg$ that \kl[global recognizes]{recognizes} the
  complement \kl[digraph language]{language} $\dL[card]{≥k+1}$.
\end{proof}

\begin{proposition}[{$\semAg{\DLDAg}[\DIGRAPH[\Alphabet][\RelCount]]⊂\semAg{\NLDAg}[\DIGRAPH[\Alphabet][\RelCount]]$}] \label{prop:ddga<ndga}
  There are (infinitely many) \kl[global recognizable]{$\NLDAg$-recognizable} \kl{digraph languages} that
  are not \kl[global recognizable]{$\DLDAg$-recognizable}.
\end{proposition}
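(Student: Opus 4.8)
The plan is to reuse the counting \kl{digraph languages} already introduced in the proof of \cref{prop:ndga<adga}. For every constant $k \geq 1$, let $\dL[card]{\geq k+1}$ be the \kl{digraph language} of all \kl{digraphs} in $\DIGRAPH[\Alphabet][\RelCount]$ that have at least $k+1$ \kl{nodes}. These \kl{languages} are pairwise distinct, so it suffices to show that each of them lies in $\semAg{\NLDAg}[\DIGRAPH[\Alphabet][\RelCount]]$ but not in $\semAg{\DLDAg}[\DIGRAPH[\Alphabet][\RelCount]]$; this at once yields infinitely many witnesses. Membership in $\semAg{\NLDAg}[\DIGRAPH[\Alphabet][\RelCount]]$ was essentially observed in the proof of \cref{prop:ndga-closure}: $\dL[card]{\geq k+1}$ is the complement of the $\dL[card]{\leq k}$ construction, and swapping the branching replaces the \kl{universal states} by \kl{existential states}, so the resulting automaton is an $\NLDAg$. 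Explicitly, such an $\NLDAg$ \kl[alternating initialization function]{initializes} every \kl{node} to a state $\q{ini}$, then in the first round lets each \kl{node} nondeterministically jump to one of $k+1$ \kl{permanent states} $\State_1,\dots,\State_{k+1}$, and \kl[global accepts]{accepts} iff the set of \kl{permanent states} reached equals $\set{\State_1,\dots,\State_{k+1}}$ — which is realizable precisely when the input has at least $k+1$ \kl{nodes}.

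The real content is showing $\dL[card]{\geq k+1} \notin \semAg{\DLDAg}[\DIGRAPH[\Alphabet][\RelCount]]$. Fix any letter $\Letter \in \Alphabet$ and, for $n \geq 1$, let $\Digraph_n \in \DIGRAPH[\Alphabet][\RelCount]$ consist of $n$ \kl{nodes} all \kl{labeled} by $\Letter$, with all \kl{edge relations} empty. Suppose for contradiction that a $\DLDAg$ $\Automaton = \tuple{\StatePartition,\InitFunc,\TransFunc,\AcceptCondition}$ \kl[global recognizes]{recognizes} $\dL[card]{\geq k+1}$. Since $\Automaton$ has no \kl{universal states}, its \kl{run} on each $\Digraph_n$ is a single path of \kl{configurations}, and by the \kl[state level]{level} discipline of \cref{def:adga} this \kl{run} reaches a \kl{permanent configuration} after at most $\autolength(\Automaton)$ rounds. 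In $\Digraph_n$ every \kl{node} has an empty \kl{incoming neighborhood}, so the only local transition ever applied to a \kl{node} has the form $\TransFunc\bigl(\State, \bigtuple{\EmptySet}_{i\in\range{\RelCount}}\bigr)$; and determinism of $\Automaton$ forces this value to be a singleton whenever $\State$ actually occurs along the \kl{run} (otherwise, since all \kl{nodes} of $\Digraph_n$ are interchangeable, the \kl{global transition function} would return more than one \kl{successor configuration} already on $\Digraph_1$). By induction on the round number it follows that, in the \kl{run} on $\Digraph_n$, all \kl{nodes} carry one and the same \kl[alternating state]{state} at every step, and that this common state sequence does not depend on $n$ — it depends only on $\Letter$ and $\Automaton$.

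Hence there is a single \kl[permanent states]{permanent state} $p$ such that, for every $n \geq 1$, the final \kl{permanent configuration} of $\Automaton$ on $\Digraph_n$ has every \kl{node} in $p$, so the set of \kl[permanent states]{permanent states} occurring in it equals $\set{p}$, independently of $n$. Consequently $\Automaton$ \kl[global accepts]{accepts} every $\Digraph_n$ if $\set{p} \in \AcceptCondition$, and rejects every $\Digraph_n$ otherwise. But $\Digraph_{k+1} \in \dL[card]{\geq k+1}$ while $\Digraph_k \notin \dL[card]{\geq k+1}$, contradicting the assumption that $\Automaton$ \kl[global recognizes]{recognizes} $\dL[card]{\geq k+1}$. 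Since this holds for every $k \geq 1$ and the \kl{languages} $\dL[card]{\geq 2} \supsetneq \dL[card]{\geq 3} \supsetneq \cdots$ are pairwise distinct, we obtain infinitely many \kl[global recognizable]{$\NLDAg$-recognizable} \kl{digraph languages} that are not \kl[global recognizable]{$\DLDAg$-recognizable}.

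The step that needs care is the middle paragraph: making rigorous the claim that determinism, together with the bounded \kl[state level]{level} structure, forces a $\DLDAg$ to behave on the edgeless \kl{digraphs} $\Digraph_n$ in a way that is uniform across \kl{nodes} and independent of $n$. Concretely, one must check that the requirement $\card{\GlobTransFunc(\cdot)} = 1$ on every \kl[reachable configuration]{reachable configuration}, applied first to $\Digraph_1$ and then transported to each $\Digraph_n$, indeed collapses every relevant local transition $\TransFunc(\State, \bigtuple{\EmptySet}_{i\in\range{\RelCount}})$ to a singleton; once this is established, the remaining induction on rounds is routine.
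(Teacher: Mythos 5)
Your proof is correct, and it establishes the statement with the same witness languages as the paper ($\dL[card]{\geq k}$ for varying $k$) and the same argument for $\NLDAg$-membership (complementing the alternating automaton for "at most $k$ nodes"). Where you diverge is in the non-$\DLDAg$-recognizability half: the paper argues by node duplication — take an arbitrary digraph with $k-1$ nodes, duplicate one node together with all its incident edges, and observe that determinism forces the copy to mirror the original in the unique run, so the set of permanent states (and hence acceptance, which only sees that set) is unchanged between the $(k-1)$-node and $k$-node digraphs. You instead restrict attention to the fully symmetric edgeless digraphs $\Digraph_n$ and show that determinism makes every node traverse one and the same state sequence, independently of $n$, so acceptance cannot depend on the number of nodes at all. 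Both arguments exploit exactly the same two facts — determinism cannot break symmetry among indistinguishable nodes, and the global acceptance condition is blind to multiplicities — so yours is best viewed as a specialization of the paper's duplication argument to a maximally symmetric family; it is slightly more elementary (no bookkeeping about copied edges, and the "needs care" step you flag does go through, since on $\Digraph_1$, or indeed on $\Digraph_n$ itself, a non-singleton value of $\TransFunc(\State,\tuple{\EmptySet,\dots,\EmptySet})$ at a reachable uniform configuration would immediately yield more than one successor configuration), while the paper's version is marginally more general in that it compares runs on an arbitrary digraph and its one-node enlargement rather than on a special family.
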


\begin{proof}
  Let $k≥2$. As mentioned in the proof of \cref{prop:ndga-closure}, the
  \kl[digraph language]{language} $\dL[card]{≥k}$ of all \kl{digraphs} that have at least $k$ \kl{nodes} is
  \kl[global recognizable]{$\NLDAg$-recognizable}. To see that it is not \kl[global recognizable]{$\DLDAg$-recognizable}, consider
  (similarly to the proof of \cref{prop:ndga<adga}) a \kl{digraph} $\Digraph$ with
  $k-1$ \kl{nodes} and a variant $\Digraph'$ with $k$ \kl{nodes} obtained from $\Digraph$ by
  duplicating some \kl{node} $\Node$, together with all of its incoming and
  outgoing \kl{edges}. Given any $\DLDAg$ $\Automaton$, the determinism of $\Automaton$
  guarantees that $\Node$ and its copy $\Node'$ behave the same way in the
  (unique) \kl[alternating run]{run} of $\Automaton$ on $\Digraph'$. Hence, if that \kl[alternating run]{run} is \kl[accepting run]{accepting}, so is
  the \kl[alternating run]{run} on $\Digraph$.
\end{proof}

\begin{proposition}[{Closure properties of $\semAg{\DLDAg}[\DIGRAPH[\Alphabet][\RelCount]]$}] \label{prop:ddga-closure}
  The class $\semAg{\DLDAg}[\DIGRAPH[\Alphabet][\RelCount]]$ of \kl[global recognizable]{$\DLDAg$-recognizable} \kl{digraph languages} is
  effectively closed under Boolean set operations, but \emph{not}
  closed under \kl{projection}.
\end{proposition}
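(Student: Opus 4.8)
The plan is to split the statement into its positive half (closure under Boolean operations) and its negative half (non-closure under \kl{projection}), treating each as a variation on the constructions already used for $\ALDAg$'s and $\NLDAg$'s.

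Closure under complementation is the easiest part, and is essentially the $\ALDAg$ construction of \cref{prop:adga-closure} specialized to the deterministic case: since a $\DLDAg$ has no \kl{universal states}, there is nothing to swap, so one simply complements the \kl{acceptance condition}. Concretely, on any input \kl{digraph} $\Digraph$ a $\DLDAg$ $\Automaton$ has a unique \kl[alternating run]{run}, and this \kl[alternating run]{run} is finite — by the level discipline of $\LDAg$'s, after at most $\autolength(\Automaton)$ rounds every \kl{node} is in a \kl{permanent state}, so the \kl[alternating run]{run} reaches a single \kl{permanent configuration} $\lver{\Digraph}{\Relabeling}$, and $\Automaton$ \kl[global accepts]{accepts} $\Digraph$ exactly when $\setbuilder{\Relabeling(\Node)}{\Node \in \NodeSet{\Digraph}} \in \AcceptCondition$. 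Hence the $\DLDAg$ obtained from $\Automaton$ by keeping $\StatePartition$, $\InitFunc$ and $\TransFunc$ and replacing $\AcceptCondition$ with $\powerset{\PermanentStateSet} \setminus \AcceptCondition$ \kl[global recognizes]{recognizes} the complement \kl{language}; since only the \kl[accepting sets]{accepting sets} change, the structural conditions and determinism hold a fortiori.

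For union and intersection I would use a product construction, exactly as for $\NLDAg$'s in \cref{prop:ndga-closure}, noting that it additionally preserves determinism — so the nondeterministic trick used there for $\NLDAg$-union is not needed. Given $\DLDAg$'s $\Automaton_1$ and $\Automaton_2$, the product automaton operates on the reachable pairs $(\State_1, \State_2)$ of \kl{states}, with $\InitFunc(\Letter) = (\InitFunc_1(\Letter), \InitFunc_2(\Letter))$ and a \kl[transition function]{transition function} that feeds each component the componentwise projection of the set of state-pairs received from the \kl{incoming neighbors}; a pair is declared \kl{permanent} when both of its components are, and \kl[existential states]{existential} otherwise. One checks that the $\Time$-th \kl[alternating configuration]{configuration} of the product assigns to every \kl{node} the pair of \kl{states} assigned by the two original \kl[alternating run]{runs} (so the \kl[alternating run]{run} of the product is again unique), and that the reachable pairs admit a consistent \kl[state level]{level} structure, the \kl[state level]{level} of a pair having a \kl[nonpermanent states]{nonpermanent} component being the unique round at which that component is entered. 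Choosing the \kl[accepting sets]{accepting sets} to be those $F$ whose first-component projection lies in $\AcceptCondition_1$ and whose second-component projection lies in $\AcceptCondition_2$ gives intersection; replacing ``and'' by ``or'' gives union (which also follows from complementation and intersection by De~Morgan). The one genuinely delicate point in the whole proof is this last verification that the \kl[state level]{level} structure is well defined — which is exactly why one restricts to reachable pairs and why the synchrony of the two simulated \kl[alternating run]{runs} is used.

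Finally, to show non-closure under \kl{projection} I would exhibit a \kl[global recognizable]{$\DLDAg$-recognizable} \kl{language} whose \kl{projection} is not \kl[global recognizable]{$\DLDAg$-recognizable}. By \cref{prop:ddga<ndga}, for every $k \geq 2$ the \kl[digraph language]{language} $\dL[card]{\geq k}$ of \kl{digraphs} with at least $k$ \kl{nodes} is not \kl[global recognizable]{$\DLDAg$-recognizable}. Fix such a $k$, put $\Alphabet = \set{0, \dots, k-1}$, and let $\Language \subseteq \DIGRAPH[\Alphabet][1]$ consist of the $\Alphabet$-\kl{labeled} \kl{digraphs} in which every letter of $\Alphabet$ occurs as a \kl{label}. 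Then $\Language$ is \kl[global recognized]{recognized} by the trivial $\DLDAg$ all of whose \kl{states} are \kl{permanent}: take $\PermanentStateSet = \set{p_0, \dots, p_{k-1}}$, $\InitFunc(i) = p_i$, $\TransFunc(p_i, \vec{\NeighborSet}) = \set{p_i}$, and $\AcceptCondition = \set{\set{p_0, \dots, p_{k-1}}}$, so that every \kl{node} freezes immediately and the global \kl[global acceptance]{acceptance} test succeeds precisely when all $k$ \kl{states} occur. For the constant \kl{projection} $\Projection \colon \Alphabet \to \set{a}$ we get $\Projection(\Language) = \dL[card]{\geq k}$, since a \kl{digraph} admits an $\Alphabet$-\kl{labeling} using every letter of $\Alphabet$ exactly when it has at least $k$ \kl{nodes}. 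Thus the \kl[global recognizable]{$\DLDAg$-recognizable} \kl{language} $\Language$ has a \kl{projection} that is not \kl[global recognizable]{$\DLDAg$-recognizable}, as required.
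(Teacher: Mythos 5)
Your proposal is correct and follows essentially the same route as the paper: complementation by complementing the set of \kl{accepting sets}, intersection (and hence union) by the product construction carried over from $\NLDAg$'s, and non-closure under \kl{projection} via the language of \kl{labeled} \kl{digraphs} in which every letter occurs, whose constant \kl{projection} is $\dL[card]{≥k}$ — the paper uses exactly this counterexample with $k=3$. Your extra care about the \kl[state level]{level} structure of the product and the explicit trivial $\DLDAg$ for the "all letters occur" language are details the paper leaves implicit, but not a different argument.
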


\begin{proof}
  To complement a $\DLDAg$, we can simply complement its set of \kl{accepting sets}.
  The product construction for intersection of $\NLDAg$'s mentioned
  in \cref{prop:ndga-closure} remains applicable when restricted to
  $\DLDAg$'s.

  Closure under node \kl{projection} does not hold because we can, for
  instance, construct a $\DLDAg$ that \kl[global recognizes]{recognizes} the \kl[digraph language]{language}
  $\dL[occur]{\Letter[1],\Letter[2],\Letter[3]}$ of all $\{\Letter[1],\Letter[2],\Letter[3]\}$-\kl{labeled} \kl{digraphs} in which
  each of the three node labels occurs at least once. However,
  \kl{projection} under the mapping $\Projection \colon \{\Letter[1],\Letter[2],\Letter[3]\}→\{\EmptyWord\}$, with
  \Phantomintro{\EmptyWord}
  $\Projection(\Letter[1])=\Projection(\Letter[2])=\Projection(\Letter[3])=\reintro*{\EmptyWord}$
  (the empty \kl[pointed dipath]{word}), yields the \kl{digraph language}
  $\Projection(\dL[occur]{\Letter[1],\Letter[2],\Letter[3]})=\dL[card]{≥3}$, which is not
  \kl[global recognizable]{$\DLDAg$-recognizable} (see the proof of \cref{prop:ddga<ndga}).
\end{proof}

\section{Equivalence with monadic second-order logic}
\label{sec:adga=mso}

\begin{theorem}[{$\semAg{\ALDAg}[\DIGRAPH[\Alphabet][\RelCount]]=\semF{\MSOL}[\DIGRAPH[\Alphabet][\RelCount]]$}]
  \label{thm:adga=mso}
  A \kl{digraph language} is \kl[global recognizable]{$\ALDAg$-recognizable} if and only if it is
  $\MSOL$-\kl{definable}. There are effective translations in both directions.
\end{theorem}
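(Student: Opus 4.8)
The plan is to prove the two inclusions separately: the direction from automata to logic by directly encoding \kl[accepting run]{accepting runs}, and the direction from logic to automata by structural induction on the \kl{formula}, using the closure properties of \cref{prop:adga-closure}.

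\textbf{From automata to logic.} Fix an $\ALDAg$ $\Automaton$ over $\DIGRAPH[\Alphabet][\RelCount]$ of \kl[automaton length]{length} $\ell = \autolength(\Automaton)$. Since every transition between \kl{nonpermanent states} moves one \kl[state level]{level} up, after $\ell$ rounds every \kl{node} is in a \kl{permanent state}; hence an \kl{accepting run} on a \kl{digraph}~$\Digraph$ is entirely described by $\ell$ successive \kl[alternating configurations]{configurations} $\bar{Z}^{(1)},\dots,\bar{Z}^{(\ell)}$ following the (determined) \kl{initial configuration}. Each $\bar{Z}^{(i)}$ is a map $\NodeSet{\Digraph} \to \StateSet$ and, as $\StateSet$ is fixed and finite, is coded by a constant number of \kl{set variables}. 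I would then write the \kl[class-formula]{$\MSOL$-sentence} $Q_1 \bar{Z}^{(1)} \cdots Q_\ell \bar{Z}^{(\ell)}\,\psi$, where $Q_i$ is $\Exists{}$ if the \kl{states} of \kl[state level]{level} $i$ are \kl[existential states]{existential} and $\Forall{}$ if they are \kl[universal states]{universal} (\kl{permanent states} only self-loop and contribute no choice), and $\psi$ is an \kl[class-formula]{$\FOL$-formula} asserting that (i) each $\bar{Z}^{(i)}$ is a successor of $\bar{Z}^{(i-1)}$ under $\GlobTransFunc$ — a \kl{local} condition on each \kl{node} and its \kl{incoming neighbors}, with $\bar{Z}^{(0)}$ replaced by $\InitFunc \circ \Labeling{\Digraph}$ — and (ii) the set of \kl{states} occurring in $\bar{Z}^{(\ell)}$ lies in $\AcceptCondition$, a finite disjunction over the constantly many \kl[accepting sets]{sets} in $\AcceptCondition$. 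The legality condition is placed under an implication at each \kl[universal states]{universal} \kl[state level]{level} and under a conjunction at each \kl[existential states]{existential} one, faithfully mirroring the definition of an \kl{accepting run}.

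\textbf{From logic to automata.} I would prove $\semF{\MSOL}[\DIGRAPH[\Alphabet][\RelCount]] \subseteq \semAg{\ALDAg}[\DIGRAPH[\Alphabet][\RelCount]]$ by induction on a \kl[class-formula]{$\MSOL$-formula} $\Formula$, encoding its free \kl{set variables} and free \kl{node variables} as extra \kl{labeling} bits and constructing an $\ALDAg$ that decides $\Formula$ correctly on every input in which each free-\kl[node variable]{node-variable} bit marks exactly one \kl{node}. Atoms of the forms $(x \Equals y)$, $\InSet{\SetConstant_i}{x}$, $\InSet{X}{x}$ and $\InRel{\RelSymbol_j}{x,y}$ are decided by trivial constant-round \kl[\ALDAg]{automata} whose verdict comes solely from the \kl[global acceptance]{global acceptance} check. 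Negation is closure under complement, disjunction is closure under union, and $\Exists{X}\Formula$ is precisely closure under \kl{projection} (forget the $X$-bit). \kl[node quantifiers]{First-order quantification} $\Exists{x}\Formula$ is obtained by first intersecting the \kl[\ALDAg]{automaton} for $\Formula$ with an auxiliary one that tests "the $x$-bit marks exactly one \kl{node}'', and then \kl[projection]{projecting} the $x$-bit away. All these operations are provided by \cref{prop:adga-closure} and preserve \kl{locality}.

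\textbf{Main obstacle.} The delicate point, I expect, is realizing the "exactly one marked \kl{node}'' test — hence the simulation of \kl[node quantifiers]{first-order quantifiers} — as a genuine \kl[\ALDAg]{local automaton}. Because the \kl[global acceptance condition]{acceptance condition} sees only the \emph{set} of \kl{permanent states} reached, not their multiplicities, "at most one'' is not decidable \kl[\NLDAg]{nondeterministically}: it requires a \kl[universal states]{universal} branching in which each marked \kl{node} goes, in parallel over all combinations, to one of two \kl{permanent states}, a branch being accepted exactly when those two \kl{states} do not both occur — the same device used by $\sA{centric}$. This is where \kl[\ALDAg]{alternation} is genuinely indispensable, and one should check that the bookkeeping of free variables survives complementation, which is the reason the singleton test is re-applied explicitly at every \kl[node quantifiers]{first-order quantifier} rather than folded once and for all into each subformula.
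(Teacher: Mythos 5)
Your proposal is correct and essentially coincides with the paper's own proof: the automaton-to-logic direction is exactly the paper's construction (there phrased via a two-player game and winning strategies, following Löding--Thomas), namely a prefix of alternating blocks of \kl{set variables} coding the \kl[alternating configurations]{configurations} round by round, with the successor conditions placed under conjunctions at \kl[existential states]{existential} levels and implications at \kl[universal states]{universal} ones and the \kl{accepting sets} checked at the end — the paper merely adds the harmless normalization that every \kl[reachable configuration]{reachable} \kl[alternating configuration]{configuration} has a successor and that no \kl{permanent configuration} is reached before round $\autolength(\Automaton)$, which tidies up your ``exactly $\ell$ configurations'' bookkeeping (and note the quantifier type for the $i$-th block is that of level $i-1$, the source of the transition). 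The logic-to-automata direction is likewise the paper's argument: structural induction using \cref{prop:adga-closure}, free \kl{node symbols} encoded into the \kl{labels}, $\Exists{\SetVariable}$ by \kl{projection}, and $\Exists{\NodeVariable}$ by first intersecting with an \kl[\ALDAg]{automaton} enforcing the singleton condition via \kl[universal states]{universal} branching (as in $\sA{centric}$) and then \kl[projection]{projecting}.
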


\begin{proofsketch}
  \newcommand{\Fsucc}[1]{\text{$\Formula_{#1 \vphantom{i}}^{\hspace{.05ex}\textnormal{succ\vphantom{g}}}$}\hspace{-.1ex}}
  \newcommand{\Fwin}[1]{\text{$\Formula_{#1 \vphantom{i}}^{\hspace{.05ex}\textnormal{win\vphantom{g}}}$}\hspace{-.1ex}}

  $(⇒)$ We start with the direction
  $\semAg{\ALDAg}[\DIGRAPH[\Alphabet][\RelCount]]⊆\semF{\MSOL}[\DIGRAPH[\Alphabet][\RelCount]]$.
  Let $\Automaton=\tuple{\StatePartition,\InitFunc,\TransFunc,\AcceptCondition}$
  be an $\ALDAg$ of \kl[automaton length]{length} $n$ over $\Alphabet$-\kl{labeled}, $\RelCount$-relational \kl{digraphs}.
  Without loss of generality,
  we may assume that every \kl[alternating configuration]{configuration} \kl[reachable configuration]{reachable} by $\Automaton$
  has at least one \kl{successor configuration} and that no \kl{permanent configuration}
  is \kl[reachable configuration]{reachable} in less than $n$ rounds. In order to
  encode the \kl[global acceptance behavior]{acceptance behavior} of $\Automaton$
  into an \kl[class-formula]{$\MSOL$-formula}~$\Formula_\Automaton$,
  we use again the game-theoretic characterization
  \marginnote{This characterization is heavily inspired by the work
    of Löding and Thomas in~\textnormal{\cite{DBLP:conf/ifipTCS/LodingT00}}.}
  briefly mentioned in the proof sketch of \cref{prop:adga-closure}. 
  Given $\Automaton$ and some
  $\Digraph∈\DIGRAPH[\Alphabet][\RelCount]$,
  we consider a game with two players: the
  \emph{automaton} (player~$\dgaEE$)
  and the \emph{pathfinder} (player~$\dgaAA$).
  This game is represented by an acyclic \kl{digraph}
  whose \kl{nodes} are precisely the \kl[alternating configurations]{configurations}
  \kl[reachable configuration]{reachable} by~$\Automaton$ on~$\Digraph$.
  For any two \kl[permanent configurations]{\emph{nonpermanent} configurations}
  $\lver{\Digraph}{\Relabeling[1]}$ and
  $\lver{\Digraph}{\Relabeling[2]}$,
  there is a directed \kl{edge} from $\lver{\Digraph}{\Relabeling[1]}$
  to $\lver{\Digraph}{\Relabeling[2]}$ if and only if
  $\lver{\Digraph}{\Relabeling[2]}∈\GlobTransFunc(\lver{\Digraph}{\Relabeling[1]})$.
  Starting at the \kl{initial configuration}
  $\lver{\Digraph}{\InitFunc∘\Labeling{\Digraph}}$,
  the two players move through the game together by
  following directed \kl{edges}. If the current \kl[alternating configuration]{configuration} is
  \kl[existential configuration]{existential}, then the automaton has to choose the next move, if it
  is \kl[universal configuration]{universal}, then the decision belongs to the pathfinder. This
  continues until some \kl{permanent configuration} is reached. The
  automaton wins if that \kl{permanent configuration} is \kl[accepting configuration]{accepting}, whereas
  the pathfinder wins if it is \kl[rejecting configuration]{rejecting}. A player is said to have a
  \emph{winning strategy} if it can always win, independently of its
  opponent's moves. It is straightforward to prove that the automaton
  has a winning strategy if and only if $\Automaton$ \kl[global accepts]{accepts} $\Digraph$.
  Our \kl[class-formula]{$\MSOL$-formula}
  $\Formula_\Automaton$ will express the existence of such a winning strategy, and
  thus be \kl[device equivalent]{equivalent} to~$\Automaton$.

  Within $\MSOL$,
  we represent a path $\lver{\Digraph}{\Relabeling[1]_0}\cdots
  \lver{\Digraph}{\Relabeling[1]_n}\strut$
  through the game by a sequence of families of \kl{set variables}
  $\vec{\SetVariable}_0,…,\vec{\SetVariable}_n$,
  where $\vec{\SetVariable}_0 = \tuple{\,}$ and
  $\vec{\SetVariable}_i = \tuple{\SetVariable_{i,\State}}_{\State∈\StateSet}$,
  for $1≤i≤n$.
  The intention is that each \kl{set variable} $\SetVariable_{i,\State}$
  is \kl{interpreted} as the set of \kl{nodes} $\Node∈\NodeSet{\Digraph}$
  for which $\Relabeling[1]_i(\Node)=\State$.
  (We do not need \kl{set variables} to represent $\lver{\Digraph}{\Relabeling[1]_0}$,
  since the players always start at $\lver{\Digraph}{\InitFunc∘\Labeling{\Digraph}}$.)

  Now, for every round $i$, we construct a \kl{formula}
  $\param{\Fwin{i}}{\vec{\SetVariable}_i}$
  (i.e., with \kl{free} \kl[set variables]{variables} in $\vec{\SetVariable}_i$),
  which expresses that the automaton has a winning strategy in the
  subgame starting at the \kl[alternating configuration]{configuration} $\lver{\Digraph}{\Relabeling[1]_i}$ represented by
  $\vec{\SetVariable}_i$.
  In case $\lver{\Digraph}{\Relabeling[1]_i}$ is \kl[existential configuration]{existential},
  this is true if the
  automaton has a winning strategy in some \kl{successor configuration} of
  $\lver{\Digraph}{\Relabeling[1]_i}$,
  whereas if $\lver{\Digraph}{\Relabeling[1]_i}$ is \kl[universal configuration]{universal},
  the automaton must
  have a winning strategy in all \kl{successor configurations} of
  $\lver{\Digraph}{\Relabeling[1]_i}$.
  This yields the following recursive definition for
  $0≤i≤n-1$:
  \begin{equation*}
    \param{\Fwin{i}}{\vec{\SetVariable}_i} \defeq
    \Exists{\vec{\SetVariable}_{i+1}}
      \Bigl(\param{\Fsucc{i+1}}{\vec{\SetVariable}_i,\vec{\SetVariable}_{i+1}} \,\AND\,
      \param{\Fwin{i+1}}{\vec{\SetVariable}_{i+1}} \Bigr),
  \end{equation*}
  if \kl[state level]{level} $i$ of $\Automaton$ is \kl[existential state]{existential}, and
  \begin{equation*}
    \param{\Fwin{i}}{\vec{\SetVariable}_i} \defeq
    \Forall{\vec{\SetVariable}_{i+1}}
      \Bigl(\param{\Fsucc{i+1}}{\vec{\SetVariable}_i,\vec{\SetVariable}_{i+1}} \IMP
      \param{\Fwin{i+1}}{\vec{\SetVariable}_{i+1}} \Bigr),
  \end{equation*}
  if \kl[state level]{level} $i$ of $\Automaton$ is \kl[universal state]{universal}. Here,
  $\param{\Fsucc{i+1}}{\vec{\SetVariable}_i,\vec{\SetVariable}_{i+1}}$
  is an \kl[class-formula]{$\FOL$-formula} expressing
  that $\vec{\SetVariable}_i$ and $\vec{\SetVariable}_{i+1}$ represent two \kl[alternating configurations]{configurations}
  $\lver{\Digraph}{\Relabeling[1]_i}$ and $\lver{\Digraph}{\Relabeling[1]_{i+1}}$ such that
  $\lver{\Digraph}{\Relabeling[1]_{i+1}}∈\GlobTransFunc(\lver{\Digraph}{\Relabeling[1]_i})$.
  As our recursion base, we can
  easily construct a \kl{formula} $\param{\Fwin{n}}{\vec{\SetVariable}_n}$ that is \kl{satisfied}
  if and only if $\vec{\SetVariable}_n$ represents an \kl{accepting configuration} of $\Automaton$.

  The desired \kl[class-formula]{$\MSOL$-formula} is
  $\Formula_\Automaton \defeq\, \param{\Fwin{0}}{\vec{\SetVariable}_0} =
  \param{\Fwin{0}}{\:}$.

  $(⇐)$ For the direction
  $\semAg{\ALDAg}[\DIGRAPH[\Alphabet][\RelCount]]⊇\semF{\MSOL}[\DIGRAPH[\Alphabet][\RelCount]]$,
  we can proceed by
  induction on the structure of an \kl[class-formula]{$\MSOL$-formula} $\Formula$.
  In order to deal with \kl{free} occurrences of \kl{node symbols},
  we encode their \kl{interpretations} into the node labels
  (which normally encode only the \kl{interpretations} of \kl{set symbols}).
  Let \Intro*{$\freeNodeSymbols(\Formula)$} be an abbreviation for $\free(\Formula) \cap \NodeSymbolSet$.
  For $\Digraph∈\DIGRAPH[\Alphabet][\RelCount]$ and $\Assignment\colon
  \freeNodeSymbols(\Formula) → \NodeSet{\Digraph}$,
  we represent $\aver{\Digraph}{\Assignment}$ as the \kl{labeled} \kl{digraph}
  $\lver{\Digraph}{\Labeling{\Digraph}×\Assignment^{-1}}$
  whose labeling $\Labeling{\Digraph}×\Assignment^{-1}$ assigns to each \kl{node}
  $\Node∈\NodeSet{\Digraph}$ the tuple
  $\bigtuple{\Labeling{\Digraph}(\Node),\:\Assignment^{-1}(\Node)}$,
  where
  $\Assignment^{-1}(\Node)$ is the set of all \kl{node symbols} in $\freeNodeSymbols(\Formula)$
  to which $\Assignment$ assigns~$\Node$.
  We now inductively construct an $\ALDAg$
  $\Automaton_{\Formula}=\tuple{\StatePartition,\InitFunc,\TransFunc,\AcceptCondition}$
  over
  $\RelCount$-relational \kl{digraphs} \kl{labeled} with the alphabet
  $\Alphabet' = \Alphabet×\powerset{\freeNodeSymbols(\Formula)}$,
  such that
  \begin{equation*}
    \lver{\Digraph}{\Labeling{\Digraph}×\Assignment^{-1}}∈
    \semag{\Automaton_{\Formula}}[\DIGRAPH[\Alphabet'][\RelCount]]
    \quad \text{if and only if} \quad
    \aver{\Digraph}{\Assignment} \Models \Formula.
  \end{equation*}

  \textit{Base case.}\,
  Let $\NodeVariable[1],\NodeVariable[2]∈\NodeSymbolSet$,\:
  $\SetVariable∈\SetSymbolSet$ and $i∈\range{\RelCount}$.
  If $\Formula$ is one of the atomic \kl{formulas}
  $\NodeVariable[1]\Equals\NodeVariable[2]$ or $\InSet{\SetVariable}{\NodeVariable[1]}$, then, in
  $\Automaton_{\Formula}$, each \kl{node} simply checks that its own label $\tuple{\Letter[1],M} ∈
  \Alphabet×\powerset{\freeNodeSymbols(\Formula)}$ satisfies the condition specified in $\Formula$ (which, in
  particular, is the case if $\NodeVariable[1],\NodeVariable[2]∉M$). Since this can be directly
  encoded into the \kl[alternating initialization function]{initialization function} $\InitFunc$, the $\ALDAg$ has \kl[automaton length]{length}
  $0$. It \kl[global accepts]{accepts} the input \kl{digraph} if and only if every \kl{node} reports that its
  label satisfies the condition.

  The case $\Formula=\InRel{\RelSymbol_i}{\NodeVariable[1],\NodeVariable[2]}$ is very
  similar, but $\Automaton_{\Formula}$ needs one communication round, after which the
  \kl{node} assigned to $\NodeVariable[2]$ can check whether it has received a message
  through an $i$-\kl{edge} from the \kl{node} assigned to $\NodeVariable[1]$. Accordingly,
  $\Automaton_{\Formula}$ has \kl[automaton length]{length} $1$.

  \medskip\textit{Inductive step.}\,
  In case $\Formula[1]$ is a composed \kl{formula}, we
  can obtain $\Automaton_{\Formula[1]}$ by means of the constructions outlined in the
  proof sketch of \cref{prop:adga-closure} (closure properties of
  $\semAg{\ALDAg}[\DIGRAPH[\Alphabet][\RelCount]]$). Let $\Formula[2]$ and $\Formula[2]'$
  be \kl[class-formula]{$\MSOL$-formulas} with
  \kl[device equivalent]{equivalent} $\ALDAg$'s $\Automaton_{\Formula[2]}$ and $\Automaton_{\Formula[2]'}$, respectively.

  If $\Formula[1]=\NOT\Formula[2]$, it suffices to define $\Automaton_{\Formula[1]}=\cA_{\Formula[2]}$.
  Similarly, if $\Formula[1]=\Formula[2]\OR\Formula[2]'$, we get $\Automaton_{\Formula[1]}$ by
  applying the union construction on $\Automaton_{\Formula[2]}$ and $\Automaton_{\Formula[2]'}$. (In
  general, we first have to extend $\Automaton_{\Formula[2]}$ and $\Automaton_{\Formula[2]'}$ such that they
  both operate on the same node alphabet
  $\Alphabet\!×\!\powerset{\freeNodeSymbols(\Formula[2])\,∪\,\freeNodeSymbols(\Formula[2]')}$.)

  Existential \kl[set quantifiers]{quantification} can be handled by node \kl{projection}. If
  $\Formula[1]=\Exists{\SetVariable}(\Formula[2])$, with $\SetVariable∈\SetSymbolSet$, we construct $\Automaton_{\Formula[1]}$
  by applying the \kl{projection} construction on $\Automaton_{\Formula[2]}$, using a mapping
  $\Projection \colon \Alphabet×\powerset{\freeNodeSymbols(\Formula[2])} → \tilde{\Alphabet}×\powerset{\freeNodeSymbols(\Formula[2])}$ that
  deletes the \kl{set variable} $\SetVariable$ from every label.
  In other words,
  the new alphabet~$\tilde{\Alphabet}$ encodes the subsets of
  $\free(\Formula[2]) \cap (\SetSymbolSet \setminus \set{\SetVariable})$.
  An analogous approach
  can be used if $\Formula[1]=\Exists{\NodeVariable[1]}(\Formula[2])$, with $\NodeVariable[1]∈\NodeSymbolSet$. The
  only difference is that, instead of applying the \kl{projection}
  construction directly on $\Automaton_{\Formula[2]}$, we apply it on a variant $\Automaton'_{\Formula[2]}$
  that operates just like $\Automaton_{\Formula[2]}$, but additionally checks that
  precisely one \kl{node} in the input \kl{digraph} is assigned to the \kl{node variable}
  $\NodeVariable[1]$.  \qedhere
\end{proofsketch}

From \cref{thm:adga=mso} we can immediately infer that it is
undecidable whether the \kl{digraph language} \kl[global recognized]{recognized} by some arbitrary
$\ALDAg$ is empty. Otherwise, we could decide the satisfiability problem
of $\MSOL$ on \kl{digraphs}, which is known to be undecidable (a direct
consequence of Trakhtenbrot's Theorem, see, e.g.,
\cite[Thm~9.2]{DBLP:books/sp/Libkin04}).

\begin{corollary}[Emptiness problem of $\ALDAg$'s] \label{cor:adga-emptiness}
  The emptiness problem for $\ALDAg$'s is undecidable.
\end{corollary}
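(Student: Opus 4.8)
The plan is to reduce the finite satisfiability problem for $\MSOL$ on \kl{digraphs} to the \kl{emptiness problem} for $\ALDAg$'s, and then invoke the undecidability of the former. By \cref{thm:adga=mso}, the $(⇐)$ direction of the \kl[device equivalence]{equivalence} — from \kl{formulas} to \kl[\ALDAg]{automata} — is effective: from any \kl[class-formula]{$\MSOL$-formula} $\Formula$ over $\DIGRAPH$ one can compute an $\ALDAg$ $\Automaton_\Formula$ with $\semag{\Automaton_\Formula}[\DIGRAPH] = \semf{\Formula}[\DIGRAPH]$. Hence $\semag{\Automaton_\Formula}[\DIGRAPH] = \EmptySet$ if and only if $\Formula$ has no model among the (finite) \kl{digraphs}, i.e., if and only if $\Formula$ is unsatisfiable on \kl{digraphs}. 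Consequently, any algorithm that decides, for an arbitrary $\ALDAg$, whether its \kl{digraph language} is empty would also decide satisfiability of $\MSOL$ on \kl{digraphs}.

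It then suffices to recall that the latter problem is undecidable. Indeed, Trakhtenbrot's Theorem already yields the undecidability of finite satisfiability for $\FOL$ over a \kl{signature} with a single binary \kl{relation symbol} (see, e.g., \cite[Thm~9.2]{DBLP:books/sp/Libkin04}), and since $\FOL ⊆ \MSOL$, the same holds a fortiori for $\MSOL$ — in particular for $\MSOL$ restricted to the class $\DIGRAPH = \DIGRAPH[0][1]$ of \kl[labeled]{unlabeled}, $1$-relational \kl{digraphs}, which is exactly the class featured in \cref{thm:adga=mso}.

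There is essentially no hard part here; the reduction is immediate once the effectiveness of \cref{thm:adga=mso} is in hand. The only point that warrants a moment's attention is the bookkeeping about classes of \kl{structures}: one must make sure that the cited undecidability result speaks about finite \kl{directed graphs} with a single binary \kl{edge relation} (and no \kl{labels}), so that it transfers verbatim along the equivalence. Since this is indeed the case, the \kl{emptiness problem} for $\ALDAg$'s is undecidable.
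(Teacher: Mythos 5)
Your reduction is correct and is exactly the argument the paper uses: apply the effective formula-to-automaton direction of \cref{thm:adga=mso} to turn a decision procedure for $\ALDAg$ emptiness into one for $\MSOL$ satisfiability on \kl{digraphs}, which is undecidable by Trakhtenbrot's Theorem. No gaps; the extra remark about the structure class matching $\DIGRAPH$ is a fair point but the paper treats it as immediate.
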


\section{Emptiness problem for nondeterministic automata}
\label{sec:ndga-emptiness}

At the cost of reduced expressive power, we can also obtain a positive
decidability result.

\begin{proposition}[Emptiness problem of $\NLDAg$'s] \label{prop:ndga-emptiness}
  The emptiness problem of $\NLDAg$'s is decidable in doubly-exponential
  time. More precisely, for any $\NLDAg$ $\Automaton=\tuple{\StatePartition, \InitFunc,\TransFunc,\AcceptCondition}$ over $\Alphabet$-\kl{labeled}, $\RelCount$-relational \kl{digraphs},
  whether its \kl[global recognized]{recognized} \kl{digraph language} $\semag{\Automaton}[\DIGRAPH[\Alphabet][\RelCount]]$ is empty or not can
  be decided in time $2^k$\!, where $k∈\BigO(\:\!
  \RelCount·\NoHeight{\card{\StateSet}^{4\autolength(\Automaton)}}·\autolength(\Automaton) )$.

  Furthermore, whether or not the \kl{digraph language} $\semag{\Automaton}[\DIGRAPH[\Alphabet][\RelCount]]$ contains any \kl[connected]{\emph{connected}},
    \kl[graph]{\emph{undirected} graph} can be decided in time $\NoHeight{2^{2^{k'}}}$\!\!, where
  $k'∈\BigO(\:\!  \RelCount·\card{\StateSet}·\autolength(\Automaton) )$\strut.
\end{proposition}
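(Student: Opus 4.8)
The plan is to exploit \emph{locality}. Writing $n = \autolength(\Automaton)$, the entire history of a node in a run of $\Automaton$ is a sequence $\tau = (\tau_0,\dots,\tau_n) \in \StateSet^{n+1}$ compatible with the level structure, with $\tau_0 \in \InitFunc(\Alphabet)$, $\tau_n \in \PermanentStateSet$, and $\tau_{t+1} \in \TransFunc(\tau_t,\vec{N}_t)$ for every $t < n$, where $\vec{N}_t \in (\powerset{\StateSet})^{\RelCount}$ collects the sets of states received from the incoming neighbors in round $t$. Since an $\NLDAg$ reads \emph{sets}, all that matters about those neighbors is which histories they realise. I would call a set $V$ of such histories \emph{self-supporting} if every $\tau \in V$ admits, for each relation $i$, a set $T_i \subseteq V$ whose induced per-round data $\bigl(\{\sigma_t : \sigma \in T_1\},\dots,\{\sigma_t : \sigma \in T_{\RelCount}\}\bigr)_{t<n}$ witnesses all of its transitions. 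The characterization to prove is then: $\semag{\Automaton}[\DIGRAPH[\Alphabet][\RelCount]] \neq \EmptySet$ iff there is some $F \in \AcceptCondition$ for which the unique largest self-supporting subset $V_F$ of $\{\tau : \tau_n \in F\}$ satisfies $\{\tau_n : \tau \in V_F\} = F$. The reduction from acceptance to this question is where the \emph{exact-set} semantics of $\AcceptCondition$ enters: $\Automaton$ accepts a digraph iff some run reaches a permanent configuration whose set of states belongs to $\AcceptCondition$.

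For the ``if'' direction I would build a digraph with one node per history in $V_F$, adding for each relation $i$ one incoming $i$-edge from the node of each $\sigma$ in a witnessing set $T_i$; then every node can follow its designated history, the local nondeterministic choices glue into a single global run, and the set of final states reached is exactly $\{\tau_n : \tau \in V_F\} = F$. The ``only if'' direction reads the histories off an accepting run and checks that the set of realised histories is self-supporting. The set $V_F$ is computed by iterated elimination, starting from all level-structure-compatible candidate histories whose last state lies in $F$ and repeatedly discarding those that are no longer supportable; as there are at most $\card{\StateSet}^{n+1}$ histories and each supportability test ranges over the finitely many possible incoming data, a routine accounting — or a cruder brute-force search over subsets of histories — stays within the claimed bound $2^{k}$.

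For the second statement the extra difficulty is that on undirected graphs a node's incoming and outgoing neighborhoods coincide, so the witnessing sets must be chosen \emph{symmetrically}: if the node of $\tau$ has the node of $\sigma$ as an $i$-neighbor, then conversely the node of $\sigma$ must accept the node of $\tau$ as an $i$-neighbor, and no node may be its own neighbor. I would package this into a notion of \emph{type} recording a history together with, for each relation, the per-round sets of states it expects to receive (an object in $\bigl((\powerset{\StateSet})^{\RelCount}\bigr)^{n}$, hence describable in $O(\RelCount \cdot \card{\StateSet} \cdot n)$ bits), define a symmetric compatibility relation on types, and call a set of types \emph{consistent} if every demanded neighbor is present and two-sidedly compatible. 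A connected undirected graph accepted by $\Automaton$ then corresponds exactly to a consistent set of types that is connected in the compatibility graph and whose set of final states is some $F \in \AcceptCondition$. The key observation is that the neighbors demanded by a type lie in its own connected component, so every component of the largest consistent subset of $\{\text{types with final state in } F\}$ is itself consistent; hence the question reduces to testing, for each $F \in \AcceptCondition$, whether some such component has final-state set exactly $F$. With $2^{O(\RelCount\,\card{\StateSet}\,n)}$ types in play, even the naive algorithm that enumerates candidate sets of types (``profiles'') and checks each for consistency, connectivity, and the final-state condition runs in time $2^{2^{k'}}$.

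I expect the main obstacle to be getting the realisation constructions exactly right, not the complexity estimates. One must verify that a self-supporting set of histories (respectively, a connected consistent set of types) really can be turned into a genuine digraph (respectively, a genuine simple \emph{connected undirected} graph, with no self-loops and with enough copies of each type to meet every two-sided neighbor demand) carrying a single global run in which every node simultaneously follows its prescribed history; and, because acceptance depends on the \emph{exact} set of permanent states reached, that this run neither omits nor adds any permanent state outside the target $F$. Once these constructions — and the matching ``read-off'' arguments in the opposite direction — are nailed down, decidability and the stated time bounds follow by direct computation over the finite (doubly-exponentially large) search spaces described above.
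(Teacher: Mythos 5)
Your proposal is correct in substance, but it takes a genuinely different route from the paper. The paper proves a bounded-model property by pigeonhole: in any run of an $\NLDAg$, each node traverses one of at most $\card{\StateSet}^{\autolength(\Automaton)+1}$ state sequences, so in a sufficiently large digraph two nodes share the same sequence and one of them can be removed, with its outgoing edges re-attached to its twin, without any other node noticing; iterating, a nonempty language must contain a digraph with at most $\card{\StateSet}^{\autolength(\Automaton)+1}$ nodes, and the algorithm is then brute-force search over all digraphs of that size. For connected undirected graphs the same idea is applied to two nodes sharing both their state sequence and their per-round received data, which can be merged, giving the bound $(\card{\StateSet}\cdot 2^{\RelCount\cdot\card{\StateSet}})^{\autolength(\Automaton)+1}$. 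You instead characterize nonemptiness by an abstract witness --- a self-supporting set of histories, respectively a connected consistent set of ``history plus expected per-round neighborhood data'' types --- computed by fixpoint elimination, together with a synthesis construction turning the witness back into a (di)graph carrying an accepting run with reached set exactly $F$. Both arguments are sound and meet the stated bounds: the paper's is shorter and yields the small-model property explicitly, while yours gives a cleaner decision procedure (the elimination computations are only singly exponential in the relevant parameters, notably $2^{\BigO(\RelCount\cdot\card{\StateSet}\cdot\autolength(\Automaton))}$ in the undirected case) and handles undirectedness via a symmetric compatibility relation rather than a merging argument; your synthesized models also implicitly reprove a small-model bound. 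Two points to nail down, which you largely flag yourself: the realization details (enough copies of each type, irreflexivity, exactness of the reached set), and the degenerate case $F=\EmptySet\in\AcceptCondition$, where your criterion as literally stated would declare nonemptiness although no digraph (domains being nonempty) can realize it --- simply require $V_F\neq\EmptySet$.
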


\begin{proofsketch}
  Let $\Digraph∈\DIGRAPH[\Alphabet][\RelCount]$. Since $\NLDAg$'s cannot perform universal
  branching, we can consider any \kl[alternating run]{run} of $\Automaton$ on $\Digraph$ as a sequence of
  \kl[alternating configurations]{configurations} $\Run=\lver{\Digraph}{\Relabeling[1]_0} \cdots \lver{\Digraph}{\Relabeling[1]_n}$,
  with $n≤\autolength(\Automaton)$.
  In~$\Run$, each \kl{node} of $\Digraph$ traverses one of at most
  $\card{\StateSet}^{\autolength(\Automaton)+1}$ possible sequences of \kl[alternating states]{states}. Now, assume
  that $\Digraph$ has more than $\card{\StateSet}^{\autolength(\Automaton)+1}$ \kl{nodes}. Then, by the
  Pigeonhole Principle, there must be two distinct \kl{nodes} $\Node,\Node'∈\NodeSet{\Digraph}$
  that traverse the same sequence of \kl[alternating states]{states} in~$\Run$. We construct a
  smaller \kl{digraph} $\Digraph'$ by removing $\Node'$ from $\Digraph$, together with its
  adjacent \kl{edges}, and adding directed \kl{edges} from $\Node$ to all of the
  former \kl[outgoing neighbors]{\emph{outgoing} neighbors} of $\Node'$. If all the \kl{nodes} in $\Digraph'$
  maintain their nondeterministic choices from~$\Run$, none of them will
  notice that $\Node'$ is missing, and consequently they all behave just
  as in~$\Run$. The resulting \kl[alternating run]{run}~$\Run'$ on $\Digraph'$ is \kl[accepting run]{accepting} if and only if $\Run$ is
  \kl[accepting run]{accepting}.

  Applying this argument recursively, we conclude that if $\semag{\Automaton}[\DIGRAPH[\Alphabet][\RelCount]]$ is
  not empty, then it must contain some \kl{labeled} \kl{digraph} that has at most
  $\card{\StateSet}^{\autolength(\Automaton)+1}$ \kl{nodes}. Hence, the emptiness problem is
  decidable because the search space is finite. The time complexity
  indicated above corresponds to the naive approach of checking every
  \kl{digraph} with at most $\card{\StateSet}^{\autolength(\Automaton)+1}$ \kl{nodes}.

  If we are only interested in \kl{connected}, \kl{undirected graphs}, the
  reasoning is very similar, but we have to require a larger minimum
  number of \kl{nodes} in order to be able to remove some \kl{node} without
  influencing the behavior of the others. In a \kl{graph} $\Digraph$ with more
  than
  \mbox{$\bigl(\card{\StateSet}·2^{\RelCount·\card{\StateSet}}\bigr)^{\autolength(\Automaton)+1}$}
  \kl{nodes}, there must be two distinct \kl{nodes} $\Node,\Node'∈\NodeSet{\Digraph}$ that, in addition
  to traversing the same sequence of \kl[alternating states]{states}, also receive the same
  family of sets of \kl[alternating states]{states} from their \kl{neighborhood} in every
  round. Observe that the \kl[\NLDAg]{automaton} will not notice if we merge $\Node$
  and $\Node'$. The rest of the argument is analogous to the previous
  scenario.
\end{proofsketch}

\section{Summary and discussion}
\label{sec:summary}

We have introduced $\ALDAg$'s, which are probably the first graph automata
in the literature to be \kl[device equivalent]{equivalent} to $\MSOL$ on \kl{digraphs}. However,
their expressive power results mainly from the use of alternation: we
have seen that the deterministic, nondeterministic and alternating
variants form a strict hierarchy, i.e.,
\begin{equation*}
  \semAg{\DLDAg}[\DIGRAPH[\Alphabet][\RelCount]] \hyperref[prop:ddga<ndga]{⊂} \semAg{\NLDAg}[\DIGRAPH[\Alphabet][\RelCount]]
  \hyperref[prop:ndga<adga]{⊂} \semAg{\ALDAg}[\DIGRAPH[\Alphabet][\RelCount]].
\end{equation*}
The corresponding closure and decidability properties are summarized
in \cref{tab:closure-decidability}.

\begin{table}[t]
  \centering
  \setlength{\tabcolsep}{1ex}
  \begin{tabular}{lccccc}
    \toprule
         & \multicolumn{4}{c}{Closure Properties} & Decidability \\
    \cmidrule(rl){2-5} \cmidrule(rl){6-6}
         & Complement & Union & Intersection & Projection & Emptiness \\
    \addlinespace
    $\ALDAg$ & \hyperref[prop:adga-closure]{\cmark} & \hyperref[prop:adga-closure]{\cmark}
         & \hyperref[prop:adga-closure]{\cmark} & \hyperref[prop:adga-closure]{\cmark}
         & \hyperref[cor:adga-emptiness]{\xmark} \\
    \addlinespace
    $\NLDAg$ & \hyperref[prop:ndga-closure]{\xmark} & \hyperref[prop:ndga-closure]{\cmark}
         & \hyperref[prop:ndga-closure]{\cmark} & \hyperref[prop:ndga-closure]{\cmark}
         & \hyperref[prop:ndga-emptiness]{\cmark} \\
    \addlinespace
    $\DLDAg$ & \hyperref[prop:ddga-closure]{\cmark} & \hyperref[prop:ddga-closure]{\cmark}
         & \hyperref[prop:ddga-closure]{\cmark} & \hyperref[prop:ddga-closure]{\xmark}
         & \hyperref[prop:ndga-emptiness]{\cmark} \\
    \bottomrule
  \end{tabular}
  \caption{Closure and decidability properties of alternating,
    nondeterministic, and deterministic $\LDAg$'s.}
  \label{tab:closure-decidability}
\end{table}

\begin{figure}[t]
  \centering
  \scalebox{0.91}{\begin{tikzpicture}[semithick]
  \tikzstyle{classL} = [font=\large]
  \tikzstyle{exampleL} = []
  \newcommand{\langelement}[1]{$\color{darkgray}\bullet\,\color{black}#1$}

  \def\Ladga{(0,0) ellipse (37ex and 30ex)}
  \def\Lemso{(0,0) ellipse (29ex and 22ex)}
  \def\Lndga{(180:8ex) circle (16ex)}
  \def\Lfo{(0:8ex) circle (16ex)}
  \def\Lddga{(90:2ex) ellipse (6.5ex and 6.5ex)}

  \draw[black] \Ladga node[above=23ex,classL] {$\semAg{\ALDAg}[\DIGRAPH[\Alphabet][\RelCount]]={\color{darkred}\semF{\MSOL}[\DIGRAPH[\Alphabet][\RelCount]]}$}
      node[below=24.5ex,exampleL] {\langelement{\sL{conn}}};
  \draw[darkred] \Lemso node[above=16.2ex,classL] {$\semF{\EMSOL}[\DIGRAPH[\Alphabet][\RelCount]]$}
      node[xshift=1ex,yshift=-18.5ex,exampleL] {\langelement{\dL[colorable]{k}∪\dL[card]{≤k'}}};
  \draw[darkred] \Lfo node[xshift=5ex,yshift=9ex,classL] {$\semF{\FOL}[\DIGRAPH[\Alphabet][\RelCount]]$}
      node[xshift=4ex,yshift=-11ex,exampleL] {\langelement{\dL[card]{≤k}}};
  \draw[black] \Lndga node[xshift=-5ex,yshift=9ex,classL] {$\semAg{\NLDAg}[\DIGRAPH[\Alphabet][\RelCount]]$}
      node[xshift=-4ex,yshift=-11ex,exampleL] {\langelement{\dL[colorable]{k}}};
  \draw[black] \Lddga node[above=-1ex,xshift=0.3ex,classL] {$\semAg{\DLDAg}[\DIGRAPH[\Alphabet][\RelCount]]$}
      node[below=.7ex,exampleL] {\langelement{\dL[colored]{k}}};
  \node[exampleL] at (-90:8ex) {\langelement{\dL[card]{≥k}}};
\end{tikzpicture}}
  \caption{Venn diagram relating the classes of \kl{digraph languages}
    \kl[global recognizable]{recognizable} by our three flavors of $\LDAg$'s to those \kl{definable} in
    $\MSOL$, $\EMSOL$ and~$\FOL$.}
  \label{fig:venn-diagram}
\end{figure}

On an intuitive level, this hierarchy and these closure properties do
not seem very surprising. One might even ask: \emph{are $\ALDAg$'s just
  another syntax for $\MSOL$?} Indeed, universal branchings
correspond to universal \kl[set quantifiers]{quantification}, and nondeterministic choices
to existential \kl[set quantifiers]{quantification}. By disallowing universal
\kl[set quantifiers]{set quantification} in $\MSOL$ we obtain $\EMSOL$, and further
disallowing existential \kl[set quantifiers]{set quantification} yields
$\FOL$. Analogously to $\LDAg$'s, the classes of \kl{digraph languages}
\kl{definable} in these logics form a strict hierarchy, i.e.,
\begin{equation*}
  \semF{\FOL}[\DIGRAPH[\Alphabet][\RelCount]] ⊂ \semF{\EMSOL}[\DIGRAPH[\Alphabet][\RelCount]] ⊂ \semF{\MSOL}[\DIGRAPH[\Alphabet][\RelCount]].
\end{equation*}
Furthermore, the closure properties of $\semF{\EMSOL}[\DIGRAPH[\Alphabet][\RelCount]]$ and $\semF{\FOL}[\DIGRAPH[\Alphabet][\RelCount]]$
coincide with those of $\semAg{\NLDAg}[\DIGRAPH[\Alphabet][\RelCount]]$ and $\semAg{\DLDAg}[\DIGRAPH[\Alphabet][\RelCount]]$,
respectively. Given that $\semAg{\ALDAg}[\DIGRAPH[\Alphabet][\RelCount]]$ and $\semF{\MSOL}[\DIGRAPH[\Alphabet][\RelCount]]$ are equal, one
might therefore expect that the analogous equalities hold for the
weaker classes. However, as already hinted by the positive
decidability properties in \cref{tab:closure-decidability}, this is
not the case.  The actual relationships between the different classes
of \kl{digraph languages} are depicted in \cref{fig:venn-diagram}. A glance
at this diagram suggests that $\ALDAg$'s are not simply a one-to-one
reproduction of $\MSOL$.

\begin{proof}[Justification of \cref{fig:venn-diagram}]
  \newcommand{\Fstate}[2]{\text{$\Formula_{#1:#2}^{\hspace{.05ex}\textnormal{state\vphantom{g}}}$}\hspace{-.1ex}}
  Fagin has shown in \cite{DBLP:journals/mlq/Fagin75} that the \kl[digraph language]{language} $\sL{conn}$ of all
  (weakly) \kl{connected} \kl{digraphs} separates $\semF{\EMSOL}[\DIGRAPH[\Alphabet][\RelCount]]$ from
  $\semF{\MSOL}[\DIGRAPH[\Alphabet][\RelCount]]$. (Since non-\kl{connectivity} is $\EMSOL$-\kl{definable}, this also
  implies that $\semF{\EMSOL}[\DIGRAPH[\Alphabet][\RelCount]]$ is not closed under complementation.) The
  inclusion $\semAg{\NLDAg}[\DIGRAPH[\Alphabet][\RelCount]]⊆\semF{\EMSOL}[\DIGRAPH[\Alphabet][\RelCount]]$ holds because we can encode every
  $\NLDAg$ into an \kl[class-formula]{$\EMSOL$-formula}, using the same construction as in the
  proof sketch of \cref{thm:adga=mso}. It is also easy to see that we
  can encode a $\DLDAg$ by inductively constructing a family of
  \kl[class-formula]{$\FOL$-formulas} $\param{\Fstate{i}{\State}}{\NodeVariable[1]}$, stating that in round $i$ (of the
  \emph{unique} \kl[alternating run]{run} of the \kl[\DLDAg]{automaton}), the \kl{node} assigned to $\NodeVariable[1]$ is in
  \kl[alternating state]{state} $\State$. Hence, $\semAg{\DLDAg}[\DIGRAPH[\Alphabet][\RelCount]]⊆\semF{\FOL}[\DIGRAPH[\Alphabet][\RelCount]]$. In the following, let
  $k,k'≥2$. The incomparability of $\semAg{\NLDAg}[\DIGRAPH[\Alphabet][\RelCount]]$ and $\semF{\FOL}[\DIGRAPH[\Alphabet][\RelCount]]$ is
  witnessed by the \kl[digraph language]{language} $\dL[colorable]{k}$ of \kl{$k$-colorable}
  \kl{digraphs}, which lies within $\semAg{\NLDAg}[\DIGRAPH[\Alphabet][\RelCount]]$ (see
  Example~\ref{ex:ADGA-3-colorable}) but outside of $\semF{\FOL}[\DIGRAPH[\Alphabet][\RelCount]]$ (see,
  e.g., \cite{DBLP:books/sp/Libkin04}), and the \kl[digraph language]{language} $\dL[card]{≤k}$ of \kl{digraphs} with
  at most $k$ \kl{nodes}, which lies outside of $\semAg{\NLDAg}[\DIGRAPH[\Alphabet][\RelCount]]$ (see the proof
  of \cref{prop:ndga<adga}) but obviously within $\semF{\FOL}[\DIGRAPH[\Alphabet][\RelCount]]$. Considering
  the union \kl[digraph language]{language} $\dL[colorable]{k}∪\dL[card]{≤k'}$ also tells us
  that the inclusion of $\semAg{\NLDAg}[\DIGRAPH[\Alphabet][\RelCount]]∪\semF{\FOL}[\DIGRAPH[\Alphabet][\RelCount]]$ in $\semF{\EMSOL}[\DIGRAPH[\Alphabet][\RelCount]]$ is
  strict. Finally, the \kl[digraph language]{language} $\dL[card]{≥k}$ of \kl{digraphs} with at
  least $k$ \kl{nodes} separates $\semAg{\DLDAg}[\DIGRAPH[\Alphabet][\RelCount]]$ from $\semAg{\NLDAg}[\DIGRAPH[\Alphabet][\RelCount]]∩\semF{\FOL}[\DIGRAPH[\Alphabet][\RelCount]]$ (see
  the proof of \cref{prop:ddga<ndga}). A simple example of a \kl[digraph language]{language}
  that lies within $\semAg{\DLDAg}[\DIGRAPH[\Alphabet][\RelCount]]$ is the set $\dL[colored]{k}$ of
  $\Alphabet$-\kl{labeled} \kl{digraphs} whose \kl{labelings} are valid \kl{$k$-colorings},
  with~$\card{\Alphabet}=k$.
\end{proof}

Nevertheless,
based on the \kl[device equivalence]{equivalence} of $\LDA$'s and $\bML$
established by Hella~et~al.\ (\cref{thm:LDA-bML}),
we can actually obtain precise logical characterizations
of $\DLDAg$'s and $\NLDAg$'s
by extending $\bML$ with
\kl{global modalities} and existential \kl{set quantifiers}.
Adapting the proofs
of~\cite{DBLP:conf/podc/HellaJKLLLSV12,DBLP:journals/dc/HellaJKLLLSV15}
to our setting,
it is relatively easy to show~that
\begin{equation*}
  \semAg{\DLDAg}[\DIGRAPH[\Alphabet][\RelCount]] =
  \semF{\bMLg}[\DIGRAPH[\Alphabet][\RelCount]]
  \quad \text{and} \quad
  \semAg{\NLDAg}[\DIGRAPH[\Alphabet][\RelCount]] =
  \semF{\SigmaMSO{1}(\bMLg)}[\DIGRAPH[\Alphabet][\RelCount]].
\end{equation*}
More generally,
one can show a levelwise \kl[device equivalence]{equivalence}
with the \kl{set quantifier} alternation hierarchy of $\MSO(\bMLg)$,
a rather unconventional logic that is \kl[device equivalent]{equivalent} to~$\MSOL$.
In other words,
two corresponding levels of alternation
in the frameworks of $\ALDAg$'s and $\MSO(\bMLg)$
characterize exactly the same \kl{digraph languages}.
Against this backdrop,
$\ALDAg$'s may be best described as
a machine-oriented syntax for $\MSO(\bMLg)$.
We shall pick up on this point
in the introduction of \cref{ch:alternation}.

As of the time of writing this thesis, no new results on
$\semF{\MSOL}[\DIGRAPH[\Alphabet][\RelCount]]$
have been inferred from the alternative characterization through
$\ALDAg$'s. On the other hand, the notion of $\NLDAg$ contributes to the
general observation, mentioned in \cref{ssec:related-automata-theory}, that many
characterizations of regularity, which are equivalent on \kl[pointed dipaths]{words} and
\kl[pointed ordered ditrees]{trees}, drift apart on \kl{digraphs}. To see this, consider $\NLDAg$'s whose input
is restricted to those $\Alphabet$-\kl{labeled}, $\RelCount$-relational \kl{digraphs} that represent \kl[pointed dipaths]{words} or
\kl[pointed ordered ditrees]{trees} over the alphabet $\Alphabet$. For \kl[pointed dipaths]{words}, $\RelCount = 1$ and \kl{edges}
simply go from one position to the next, whereas for \kl[pointed ordered ditrees]{ordered trees} of
arity $k$, we set $\RelCount = k$ and require \kl{edge relations} such that
$\NoHeight{\edge{\Node[1]}{\Node[2]} \in \EdgeSet[i]{\Digraph}}$ if and only if $\Node[1]$ is the $i$-th child of $\Node[2]$. Observe
that we can easily simulate any \kl[pointed dipath]{word} or \kl[pointed ordered ditree]{tree} automaton by an $\NLDAg$ of
\kl[automaton length]{length} $2$: guess a run of the automaton in the first round (each \kl{node}
nondeterministically chooses some state), then check whether it is a
valid accepting run in the second round (transitions are verified
locally, and \kl[global acceptance]{acceptance} is determined by the unique \kl{sink}). This
implies that the classes of \kl[global recognizable]{$\NLDAg$-recognizable} and $\MSOL$-\kl{definable}
\kl{languages} collapse on \kl[pointed dipaths]{words} and \kl[pointed ordered ditrees]{trees}, and hence that $\NLDAg$'s
\kl[global recognize]{recognize}
precisely the regular languages on those restricted \kl{structures}. (Note
that this does not hold for $\DLDAg$'s;
for instance, it is easy to see that
they cannot decide whether a given unary \kl[pointed dipath]{word} is of even length.)
In this light, the decidability of
the emptiness problem for $\NLDAg$'s can be seen as an extension to
arbitrary \kl{digraphs} of the corresponding decidability results for finite
automata on \kl[pointed dipaths]{words} and \kl[pointed ordered ditrees]{trees}.
\mychapterpreamble{%
  \nameCref{ch:nonlocal} based on the
  conference paper~\cite{DBLP:conf/icalp/Reiter17}.}

\chapter{Asynchronous Nonlocal Automata}
\label{ch:nonlocal}

In this \lcnamecref{ch:nonlocal},
we introduce a particular class of \kl{nonlocal distributed automata}
and show that on finite \kl{digraphs},
they are \kl[device equivalent]{equivalent} to the
\kl[backward $\mu$-fragment]{least-fixpoint fragment of the backward $\mu$-calculus},
or simply \kl[backward $\mu$-fragment]{\emph{backward $\mu$-fragment}}.

For the general case,
a logical characterization has been provided by Kuusisto
in~\cite{DBLP:conf/csl/Kuusisto13};
there he introduced a modal-logic-based variant of Datalog,
called \emph{modal substitution calculus},
that captures exactly the class of \kl{nonlocal automata}.
Furthermore,
\mbox{\cite[Prp.~7]{DBLP:conf/csl/Kuusisto13}} shows that
these \kl[nonlocal automata]{automata} can easily \kl{recognize}
all the \kl[pointed-digraph languages]{properties}
\kl{definable} in the \kl{backward $\mu$-fragment} on finite \kl{digraphs}.
On the other hand,
the reverse conversion from \kl{nonlocal automata} to the \kl{backward $\mu$-fragment}
is not possible in general.
As explained in \cite[Prp.~6]{DBLP:conf/csl/Kuusisto13},
it is easy to come up with an \kl[nonlocal automaton]{automaton}
that makes crucial use of the fact
that a \kl{node} can determine whether it receives the same information
from all of its \kl{incoming neighbors} at exactly the same time.
Such synchronous behavior
cannot be simulated in the \kl{backward $\mu$-fragment}
(and not even in $\MSOL$).
This leaves open the problem
of identifying a subclass of \kl{distributed automata}
for which the conversion works in both directions.

Here,
we present a very simple solution:
it basically suffices
to transfer the standard notion of asynchronous algorithm
to the setting of \kl{distributed automata}.

The organisation of this \lcnamecref{ch:nonlocal} is as follows.
After giving the necessary formal definitions in Section~\ref{sec:preliminaries-nonlocal},
we state and briefly discuss the main result in Section~\ref{sec:result}.
The proof is then developed in the last two sections.
Section~\ref{sec:logic-to-automata} presents
the rather straightforward translation from logic to automata.
The reverse translation is given in Section~\ref{sec:automata-to-logic},
which is a bit more involved
and therefore occupies the largest part of the \lcnamecref{ch:nonlocal}.

\section{Preliminaries}
\label{sec:preliminaries-nonlocal}

The class of \kl{asynchronous distributed automata}
introduced in this \lcnamecref{ch:nonlocal},
is a special case of the \kl{distributed automata}
defined in \cref{sec:distributed-automata}.
We maintain the same syntax as in \cref{def:distributed-automaton},
but reintroduce the semantics of (unrestricted) \kl{distributed automata}
from a slightly different perspective.
In order to keep notation simple,
we do this only for $1$-relational \kl{digraphs},
but everything presented here can easily be extended
to the multi-relational case.

To \kl[asynchronous run]{run} a \kl{distributed automaton}~$\Automaton$
on a \kl{digraph}~$\Digraph$,
we now regard the \kl{edges} of~$\Digraph$ as $\FIFO$ buffers.
Each buffer~$\edge{\Node[2]}{\Node[3]}$ will always contain a sequence of \kl{states}
previously traversed by \kl{node}~$\Node[2]$.
An adversary chooses
when $\Node[2]$ evaluates~$\TransFunc$
to push a new \kl{state} to the back of the buffer,
and when the current first \kl{state} gets popped from the front.
The details are clarified in the following.

A \Intro{trace} of an \kl[distributed automaton]{automaton}
$\Automaton = \tuple{\StateSet,\InitFunc,\TransFunc,\AcceptSet}$
is a finite nonempty sequence
$\Trace = \State_1 \dots \State_n$ of \kl{states} in $\StateSet$
such that
$\State_i \neq \State_{i+1}$
and
$\TransFunc(\State_i,\NeighborSet_i) = \State_{i+1}$
for some $\NeighborSet_i \subseteq \StateSet$.
Notice that
$\Automaton$ is \kl{quasi-acyclic}
if and only if its set of \kl{traces} $\TraceSet$ is finite.

For any \kl{states} $\State[1],\State[2] \in \StateSet$ and
any (possibly empty) sequence $\Trace$ of \kl{states} in $\StateSet$,
we define the unary postfix operators
\Intro*{$\first$}, \Intro*{$\last$}, \Intro*{$\pushlast$} and \Intro*{$\popfirst$}
as follows:
\begin{align*}
  \State[1]\Trace.\first &= \Trace\State[1].\last = \State[1],
  \\
  \Trace\State[1].\pushlast(\State[2]) &=
  \begin{cases*}
    \Trace\State[1]\State[2] & if $\State[1] \neq \State[2]$, \\
    \Trace\State[1]          & if $\State[1] = \State[2]$,
  \end{cases*}
  \\
  \State[1]\Trace.\popfirst &=
  \begin{cases*}
    \Trace          & if $\Trace$ is nonempty, \\
    \State[1]\Trace & if $\Trace$ is empty.
  \end{cases*}
\end{align*}

An (asynchronous) \Intro{timing} of a \kl{digraph}
$\Digraph = \tuple{\NodeSet{\Digraph}, \EdgeSet{\Digraph}, \Labeling{\Digraph}}$
is an infinite sequence
$\Timing = \tuple{\Timing_1, \Timing_2, \Timing_3, \dots}$
of maps
$\Timing_{\Time} \colon \NodeSet{\Digraph} \cup \EdgeSet{\Digraph} \to \Boolean$,
indicating which \kl{nodes} and \kl{edges} are active at time~$\Time$,
where $1$ is assigned infinitely often
to every \kl{node} and every \kl{edge}.
More formally,
for all $\Time[1] \in \Positive$, $\Node \in \NodeSet{\Digraph}$ and $\Edge \in \EdgeSet{\Digraph}$,
there exist $\Time[2],\Time[3] > \Time[1]$
such that $\Timing_{\Time[2]}(\Node) = 1$ and $\Timing_{\Time[3]}(\Edge) = 1$.
We refer to this as the \Intro{fairness property} of $\Timing$.
As a restriction,
we say that $\Timing$ is \Intro[lossless-asynchronous timing]{lossless-asynchronous}
if $\Timing_{\Time}(\edge{\Node[1]}{\Node[2]}) = 1$ implies $\Timing_{\Time}(\Node[2]) = 1$
for all $\Time \in \Positive$ and $\edge{\Node[1]}{\Node[2]} \in \EdgeSet{\Digraph}$.
Furthermore,
$\Timing$ is called the (unique) \Intro{synchronous timing} of~$\Digraph$
if $\Timing_{\Time}(\Node) = \Timing_{\Time}(\Edge) = 1$
for all $\Time \in \Positive$, $\Node \in \NodeSet{\Digraph}$ and $\Edge \in \EdgeSet{\Digraph}$.

\begin{definition}[Asynchronous Run]
  Let
  $\Automaton = \tuple{\StateSet,\InitFunc,\TransFunc,\AcceptSet}$
  be a \kl{distributed automaton} over $\BitCount$-bit \kl{labeled} \kl{digraphs}
  and~$\TraceSet$ be its set of \kl{traces}.
  Furthermore, let
  $\Digraph = \tuple{\NodeSet{\Digraph}, \EdgeSet{\Digraph}, \Labeling{\Digraph}}$
  be an $\BitCount$-bit \kl{labeled} \kl{digraph}
  and
  $\Timing = \tuple{\Timing_1, \Timing_2, \Timing_3, \dots}$
  be a \kl{timing} of $\Digraph$.
  The (asynchronous) \Intro[asynchronous run]{run}
  of $\Automaton$ on~$\Digraph$ timed by $\Timing$
  is the infinite sequence
  $\Run = \tuple{\Run_0, \Run_1, \Run_2, \dots}$
  of \Intro[asynchronous configurations]{configurations}
  $\Run_{\Time} \colon \NodeSet{\Digraph} \cup \EdgeSet{\Digraph} \to \TraceSet$,
  with $\Run_{\Time}(\NodeSet{\Digraph}) \subseteq \StateSet$,
  which are defined inductively as follows,
  for $\Time \in \Natural$, $\Node[2] \in \NodeSet{\Digraph}$ and $\edge{\Node[2]}{\Node[3]} \in \EdgeSet{\Digraph}$:
  \begin{align*}
    \Run_0(\Node[2]) &= \Run_0(\edge{\Node[2]}{\Node[3]}) = \InitFunc(\Labeling{\Digraph}(\Node[2])), \\[1ex]
    \Run_{\Time+1}(\Node[2]) &=
    \begin{cases*}
      \swl{\Run_{\Time}(\Node[2])}{\Run_{\Time}(\edge{\Node[2]}{\Node[3]}).\pushlast(\Run_{\Time+1}(\Node[2])).\popfirst}
        & if $\Timing_{\Time+1}(\Node[2]) = 0$, \\
      \TransFunc \bigl( \Run_{\Time}(\Node[2]),
                        \setbuilder{\Run_{\Time}(\edge{\Node[1]}{\Node[2]}).\first}{\edge{\Node[1]}{\Node[2]} \in \EdgeSet{\Digraph}}
                 \bigr)
        & if $\Timing_{\Time+1}(\Node[2]) = 1$,
    \end{cases*} \\[1ex]
    \Run_{\Time+1}(\edge{\Node[2]}{\Node[3]}) &=
    \begin{cases*}
      \Run_{\Time}(\edge{\Node[2]}{\Node[3]}).\pushlast(\Run_{\Time+1}(\Node[2]))
        & if $\Timing_{\Time+1}(\edge{\Node[2]}{\Node[3]}) = 0$, \\
      \Run_{\Time}(\edge{\Node[2]}{\Node[3]}).\pushlast(\Run_{\Time+1}(\Node[2])).\popfirst
        & if $\Timing_{\Time+1}(\edge{\Node[2]}{\Node[3]}) = 1$.
    \end{cases*}
  \end{align*}
  If $\Timing$ is the \kl{synchronous timing} of $\Digraph$,
  we refer to $\Run$ as the \Intro{synchronous run} of~$\Automaton$ on~$\Digraph$.
\end{definition}

Throughout this \lcnamecref{ch:nonlocal},
we assume that our \kl{digraphs}, \kl[distributed automata]{automata} and logical \kl{formulas}
agree on the number $\BitCount$ of \kl{labeling} bits.
An \kl[distributed automaton]{automaton} $\Automaton$ \Intro[asynchronous accepts]{accepts}
a \kl{pointed digraph} $\pver{\Digraph}{\Node}$
under \kl{timing} $\Timing$
if $\Node$ visits an \kl{accepting state} at some point
in the \kl[asynchronous run]{run} $\Run$ of $\Automaton$ on $\Digraph$ timed by $\Timing$,
i.e., if there exists $\Time \in \Natural$
such that $\Run_{\Time}(\Node) \in \AcceptSet$.
If we simply say that $\Automaton$ \kl[asynchronous accepts]{accepts} $\pver{\Digraph}{\Node}$,
without explicitly specifying a \kl{timing} $\Timing$,
then we stipulate that $\Run$ is the \kl{synchronous run} of $\Automaton$ on $\Digraph$.
Notice that this is coherent with
the definition of \kl{acceptance} presented in \cref{sec:distributed-automata}.

Given a \kl{digraph}
$\Digraph = \tuple{\NodeSet{\Digraph}, \EdgeSet{\Digraph}, \Labeling{\Digraph}}$
and a class $\TimingSet$ of \kl{timings} of $\Digraph$,
the \kl[distributed automaton]{automaton} $\Automaton$ is called \Intro{consistent}
for $\Digraph$ and $\TimingSet$
if for all $\Node \in \NodeSet{\Digraph}$,
either $\Automaton$ \kl[asynchronous accepts]{accepts} $\pver{\Digraph}{\Node}$
under every \kl{timing} in $\TimingSet$,
or $\Automaton$ does not \kl[asynchronous accept]{accept} $\pver{\Digraph}{\Node}$
under any \kl{timing} in $\TimingSet$.
We say that $\Automaton$ is \Intro[asynchronous automaton]{asynchronous}
if it is \kl{consistent} for every possible choice of $\Digraph$ and $\TimingSet$,
and \Intro[lossless-asynchronous automaton]{lossless-asynchronous}
if it is \kl{consistent} for every choice
where $\TimingSet$ contains only \kl{lossless-asynchronous timings}.
By contrast,
we call an \kl[distributed automaton]{automaton}
\Intro[synchronous automaton]{synchronous}
if we wish to emphasize
that no such consistency requirements are imposed.
Intuitively,
all \kl[distributed automata]{automata} can operate in the synchronous setting,
but only some of them also work reliably
in environments that provide fewer guarantees.

We denote by \Intro*{$\aDA$},\, \Intro*{$\laDA$} and \reintro*{$\DA$}
the classes of
\kl[asynchronous automata]{asynchronous},
\kl[lossless-asynchronous automata]{lossless-asynchro\-nous}
and \kl{synchronous automata},
respectively.
Similarly,
\Intro*{$\aQDA$},\, \Intro*{$\laQDA$} and \reintro*{$\QDA$}
are the corresponding classes of \kl{quasi-acyclic automata}.

Next,
we want to introduce the \kl{backward $\mu$-fragment},
for which it is convenient
to distinguish explicitly between \kl{constants} and \kl{variables}.
As our starting point,
we consider
$\bML$ restricted to~$\BitCount$ \kl{set constants}
and (arbitrarily many) unnegated \kl{set variables}.
Its \kl{formulas} are generated by the grammar
\begin{equation*}
  \Formula \Coloneqq \False
                \mid \True
                \mid \PosIn{\SetConstant_i}
                \mid \NOT \PosIn{\SetConstant_i}
                \mid \PosIn{\SetVariable}
                \mid (\Formula \OR \Formula)
                \mid (\Formula \AND \Formula)
                \mid \bdm \Formula
                \mid \bbx \Formula \,,
\end{equation*}
where
$\SetConstant_i \in \SetSymbolSet$
is considered to be a \kl{set constant},
for $1 \leq i \leq \BitCount$,
and
$\SetVariable \in
 \SetSymbolSet \setminus \set{\SetConstant_1,\dots,\SetConstant_{\BitCount}}$
is considered to be a \kl{set variable}.
Note that this syntax ensures that \kl{set variables} cannot be negated.

Traditionally,
the modal $\mu$-calculus is defined to comprise individual \kl{fixpoints}
which may be nested.
However,
it is well-known that we can add simultaneous \kl{fixpoints}
to the $\mu$-calculus
without changing its expressive power,
and that nested \kl{fixpoints} of the same type (i.e., least or greatest)
can be rewritten as non-nested simultaneous ones
(see, e.g., \cite[\S~3.7]{BradfieldS07} or \cite[\S~4.3]{Lenzi05}).
The following definition directly takes advantage of this fact.
We shall restrict ourselves to the
\Intro[backward $\mu$-fragment]{$\mu$-fragment of the backward $\mu$-calculus},
abbreviated \reintro{backward $\mu$-fragment},
where only \kl{least fixpoints} are allowed,
and where the usual \kl{modal operators} are replaced by their backward-looking variants.
Without loss of generality,
we stipulate that each \kl{formula} of
the \kl{backward $\mu$-fragment} with $\BitCount$ \kl{set constants}
is of the form
\Phantomintro{\MU}
\begin{equation*}
  \Formula \:=\: \reintro*{\MU}
  \begin{pmatrix}
    \SetVariable_1 \\
    \vdots \\
    \SetVariable_\VarMax
  \end{pmatrix}
  .
  \begin{pmatrix}
    \Formula_1(\SetConstant_1, \dots, \SetConstant_\BitCount, \SetVariable_1, \dots, \SetVariable_\VarMax) \\
    \vdots \\
    \Formula_\VarMax(\SetConstant_1, \dots, \SetConstant_\BitCount, \SetVariable_1, \dots, \SetVariable_\VarMax)
  \end{pmatrix},
\end{equation*}
where
$\SetVariable_1,\dots,\SetVariable_\VarMax \in
 \SetSymbolSet \setminus \set{\SetConstant_1,\dots,\SetConstant_{\BitCount}}$
are considered to be \kl{set variables},
and $\Formula_1,\dots,\Formula_\VarMax$ are \kl{formulas} of
$\bML$ with~$\BitCount$ \kl{set constants} and unnegated \kl{set variables}
that may contain no other \kl{set variables} than
$\SetVariable_1, \dots, \SetVariable_\VarMax$.
We shall denote the set of \kl{formulas}
of the \kl{backward $\mu$-fragment}
by~\Intro*{$\SigmaMu{1}(\bML)$}.

For every \kl{digraph}
$\Digraph = \tuple{\NodeSet{\Digraph}, \EdgeSet{\Digraph}, \Labeling{\Digraph}}$,
the tuple $\tuple{\Formula_1,\dots,\Formula_\VarMax}$
gives rise to an operator
$\Operator \colon (\powerset{\NodeSet{\Digraph}})^{\VarMax} \to (\powerset{\NodeSet{\Digraph}})^{\VarMax}$
that takes some valuation of
$\vec{\SetVariable} = \tuple{\SetVariable_1,\dots,\SetVariable_\VarMax}$
and reassigns to each $\SetVariable_i$ the resulting valuation of $\Formula_i$.
More formally,
$\Operator$~maps $\vec{\NodeSubset} = \tuple{\NodeSubset_1,\dots,\NodeSubset_\VarMax}$
to $\tuple{\NodeSubset'_1,\dots,\NodeSubset'_\VarMax}$
such that
$\NodeSubset'_i =
 \lsemf{\Formula_i}[\ver{\Digraph}{\vec{\SetVariable}}{\vec{\NodeSubset}}]$.
Here,
$\ver{\Digraph}{\vec{\SetVariable}}{\vec{\NodeSubset}}$
is the \kl{extended variant} of~$\Digraph$
that \kl{interprets} each $\SetVariable_i$~as~$\NodeSubset_i$.
A~(simultaneous) \Intro{fixpoint} of the operator~$\Operator$
is a tuple
$\vec{\NodeSubset} \in (\powerset{\NodeSet{\Digraph}})^{\VarMax}$
such that
$\Operator(\vec{\NodeSubset}) = \vec{\NodeSubset}$.
Since, by definition,
\kl{set variables} occur only positively in \kl{formulas},
the operator $\Operator$ is \Intro{monotonic}.
This means that
$\vec{\NodeSubset} \subseteq \vec{\NodeSubset}'$
implies
$\Operator(\vec{\NodeSubset}) \subseteq \Operator(\vec{\NodeSubset}')$
for all $\vec{\NodeSubset},\vec{\NodeSubset}' \in (\powerset{\NodeSet{\Digraph}})^{\VarMax}$,
where set inclusions are to be understood componentwise
(i.e., $\NodeSubset_i \subseteq \NodeSubset'_i$ for each $i$).
Therefore,
by virtue of a theorem due to Knaster and Tarski,
$\Operator$~has a \Intro{least fixpoint},
which is defined as the unique \kl{fixpoint}
$\vec{\LFixpoint} = \tuple{\LFixpoint_1,\dots,\LFixpoint_\VarMax}$
of~$\Operator$
such that
$\vec{\LFixpoint} \subseteq \vec{\NodeSubset}$
for every other \kl{fixpoint} $\vec{\NodeSubset}$ of $\Operator$.
As a matter of fact,
the Knaster-Tarski theorem even tells us that
$\vec{\LFixpoint}$ is equal to
$\bigcap \setbuilder{\vec{\NodeSubset} \in (\powerset{\NodeSet{\Digraph}})^{\VarMax}}
                    {\Operator(\vec{\NodeSubset}) \subseteq \vec{\NodeSubset}}$,
where set operations must also be understood componentwise.
Another, perhaps more intuitive, way of characterizing $\vec{\LFixpoint}$
is to consider the inductively constructed sequence of approximants
$\tuple{\vec{\LFixpoint}^0, \vec{\LFixpoint}^1, \vec{\LFixpoint}^2, \dots}$,
where
$\vec{\LFixpoint}^0 = \tuple{\EmptySet, \dots, \EmptySet}$
and
$\vec{\LFixpoint}^{j+1} = \Operator(\vec{\LFixpoint}^j)$.
Since this sequence is monotonically increasing and $\NodeSet{\Digraph}$ is finite,
there exists $n \in \Natural$ such that
$\vec{\LFixpoint}^n = \vec{\LFixpoint}^{n+1}$.
It is easy to check that
$\vec{\LFixpoint}^n$ coincides with the \kl{least fixpoint} $\vec{\LFixpoint}$.
For more details and proofs, see, e.g.,
\cite[\S~3.3.1]{DBLP:series/txtcs/GradelKLMSVVW07}.

\begin{figure}[tp]
  \centering
  \begin{tikzpicture}[->, >=stealth, shorten >=1pt, semithick, auto, node distance=25ex]
  \tikzset{every state/.append style={inner sep=0ex, minimum size=6.9ex}}
  \tikzset{every edge/.append style={inner sep=0.5ex,font=\footnotesize}}
  \tikzset{every loop/.style={inner sep=0.3ex,looseness=5}}
  \node[initial,state,initial text={\normalsize $1$},initial distance=2.5ex,align=center]
    (1)
    {$1$ \\[-1ex] $\scriptstyle (\SetConstant_1)$};
  \node[initial,state,initial text={\normalsize $0$},initial distance=2.5ex]
    (2) [below of=1,yshift=3ex]
    {$2$};
  \node[accepting,state,align=center]
    (3) [right of=1]
    {$3$ \\[-1ex] $\scriptstyle (\SetVariable[1])$};
  \node[state,align=center]
    (4) [right of=2]
    {$4$ \\[-1ex] $\scriptstyle (\SetVariable[2])$};
  \node[accepting,state,align=center]
    (5) [right of=3,xshift=-2.5ex]
    {$5$ \\[-1ex] $\scriptstyle (\SetVariable[1],\SetVariable[2])$};

  \path (1) edge [loop below] node {otherwise} (1)
            edge              node [align=center]
                                   {if $\NeighborSet \nsubseteq \set{4,5}$ \\
                                    and $\NeighborSet \nsubseteq \set{1,2,4}$} (3)
        (2) edge [loop below] node {otherwise} (2)
            edge              node [sloped,anchor=south,pos=0.38,align=left]
                                   {if $\NeighborSet \nsubseteq \set{4,5}$ \\
                                    and $\NeighborSet \nsubseteq \set{1,2,4}$} (3)
            edge              node [swap,pos=0.42]
                                   {if $\NeighborSet \subseteq \set{4}$} (4)
            edge              node [sloped,swap,anchor=north,pos=0.28]
                                   {if $\set{5} \subseteq \NeighborSet \subseteq \set{4,5}$} (5)
        (3) edge [loop below] node {otherwise} (3)
            edge              node [align=left]
                                   {if $\NeighborSet \subseteq \set{4,5}$} (5)
        (4) edge [loop below] node {otherwise} (4)
            edge              node [sloped,anchor=north,pos=0.29]
                                   {if $\set{5} \subseteq \NeighborSet$} (5)
        (5) edge [loop below] node {always} (5);
  \draw [postaction={decorate,decoration={raise=1ex,text along path,text align={left,left indent=2.6ex},
                                          text={|\footnotesize|if {$\NeighborSet \subseteq \set{4,5}$}}}}]
        (1) to [bend left=39] (5);
  \node [right of=4,xshift=-6ex,yshift=0ex,align=right,font={\small\itshape}]
        {$\NeighborSet$: set of \\ received \\ states};
\end{tikzpicture}
  \captionsetup{singlelinecheck=off}
  \caption[A quasi-acyclic asynchronous distributed automaton.]{
    A \kl{quasi-acyclic} \kl{asynchronous distributed automaton}
    that is \kl[device equivalent]{equivalent} to the \kl{formula}
    $$\MU
    \begin{pmatrix}
      \SetVariable[1] \\[0.5ex]
      \SetVariable[2]
    \end{pmatrix}
    .
    \begin{pmatrix}
      (\PosIn{\SetConstant_1} \AND \PosIn{\SetVariable[2]}) \,\OR\, \bdm \PosIn{\SetVariable[1]} \\[0.5ex]
      \bbx \, \PosIn{\SetVariable[2]}
    \end{pmatrix}$$
    of the \kl{backward $\mu$-fragment}.
    A given $1$-bit \kl{labeled} \kl{pointed digraph} $\pver{\Digraph}{\Node[2]}$
    is \kl[asynchronous accepted]{accepted} by this \kl[distributed automaton]{automaton}
    if and only if,
    starting at $\Node[2]$
    and following $\Digraph$'s \kl{edges} in the backward direction,
    it is possible to reach some \kl{node}~$\Node[1]$ \kl{labeled} with~$1$
    from which it is impossible to reach any directed cycle.
  }
  \label{fig:automaton}
\end{figure}

Having introduced the necessary background,
we can finally establish the semantics of~$\Formula$
with respect to $\Digraph$:
the set
$\lsemf{\Formula}[\Digraph] =
 \lrsetbuilder{\Node \in \NodeSet{\Digraph}}
              {\pver{\Digraph}{\Node} \Models \Formula}$
of \kl{nodes} at which $\Formula$ holds
is precisely~$\LFixpoint_1$,
the first component of~$\vec{\LFixpoint}$.
Accordingly,
the \kl{pointed digraph}~$\pver{\Digraph}{\Node}$ lies in
the \kl[pointed-digraph language]{language}
$\semf{\Formula}[\pDIGRAPH[\BitCount][1]]$
\kl{defined} by~$\Formula$
if and only if
$\Node \in \LFixpoint_1$,
and we denote by
$\semF{\SigmaMu{1}(\bML)}[\pDIGRAPH[\BitCount][1]]$
the class of all \kl{pointed-digraph languages}
\kl{defined} by some \kl{formula} of the \kl{backward $\mu$-fragment}.

Figure~\ref{fig:automaton} provides an example
of a \kl{quasi-acyclic} \kl{asynchronous distributed automaton}
and an \kl[device equivalent]{equivalent} \kl{formula} of the \kl{backward $\mu$-fragment}.

\section{Equivalence with the backward mu-fragment}
\label{sec:result}

\marginnote{
  This may seem counterintuitive at first sight,
  but it is actually consistent
  with the standard terminology of distributed computing:
  an asynchronous algorithm can always serve as a synchronous algorithm
  (i.e., it can be executed in a synchronous environment),
  but the converse is not~true.
}
Based on the definitions given in Section~\ref{sec:preliminaries-nonlocal},
\kl{asynchronous automata} are a special case of \kl{lossless-asynchronous automata},
which in turn are a special case of \kl{synchronous automata}.
Furthermore,
quasi-acyclicity constitutes an additional
(possibly orthogonal) restriction on these models.
We thus immediately obtain the hierarchy of classes
depicted in Figure~\ref{fig:before-result}.

Our main result provides a simplification of this hierarchy:
the classes $\semA{\aQDA}[\pDIGRAPH[\BitCount][1]]$ and $\semA{\laQDA}[\pDIGRAPH[\BitCount][1]]$ are actually equal to
the class of \kl{pointed-digraph languages} \kl{definable} in the \kl{backward $\mu$-fragment}.
This yields the revised diagram shown in Figure~\ref{fig:after-result}.

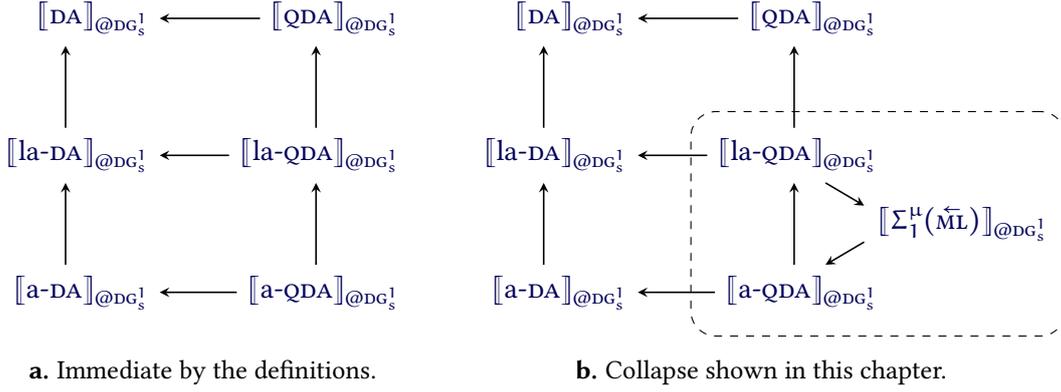
\begin{figure}[tp]
  \centering
  \widefloat{
    \begin{subfigure}[t]{0.45\textwidth}
      \captionsetup{skip=3ex}
      \centering
      \begin{tikzpicture}[->, >=stealth, semithick, node distance=11ex]
  \node (SA) {$\semA{\DA}[\pDIGRAPH[\BitCount][1]]$};
  \node (LA) [below=of SA.east,anchor=east] {$\semA{\laDA}[\pDIGRAPH[\BitCount][1]]$};
  \node (AA) [below=of LA.east,anchor=east] {$\semA{\aDA}[\pDIGRAPH[\BitCount][1]]$};
  \node (QSA) [right=8ex of SA] {$\semA{\QDA}[\pDIGRAPH[\BitCount][1]]$};
  \node (QLA) [below=of QSA.east,anchor=east] {$\semA{\laQDA}[\pDIGRAPH[\BitCount][1]]$};
  \node (QAA) [below=of QLA.east,anchor=east] {$\semA{\aQDA}[\pDIGRAPH[\BitCount][1]]$};

  \path
    (QSA) edge (SA)
    (QLA) edge (LA)
    (QAA) edge (AA)
    ([xshift=-7.5ex]AA.north east)  edge ([xshift=-7.5ex]LA.south east)
    ([xshift=-7.5ex]LA.north east)  edge ([xshift=-7.5ex]SA.south east)
    ([xshift=-7.5ex]QAA.north east) edge ([xshift=-7.5ex]QLA.south east)
    ([xshift=-7.5ex]QLA.north east) edge ([xshift=-7.5ex]QSA.south east);
\end{tikzpicture}
      \caption{Immediate by the definitions.}
      \label{fig:before-result}
    \end{subfigure}
    \hspace{2ex}
    \begin{subfigure}[t]{0.6\textwidth}
      \captionsetup{skip=3ex}
      \centering
      \begin{tikzpicture}[->, >=stealth, semithick, node distance=11ex]
  \node (SA) {$\semA{\DA}[\pDIGRAPH[\BitCount][1]]$};
  \node (LA) [below=of SA.east,anchor=east] {$\semA{\laDA}[\pDIGRAPH[\BitCount][1]]$};
  \node (AA) [below=of LA.east,anchor=east] {$\semA{\aDA}[\pDIGRAPH[\BitCount][1]]$};
  \node (QSA) [right=8ex of SA] {$\semA{\QDA}[\pDIGRAPH[\BitCount][1]]$};
  \node (QLA) [below=of QSA.east,anchor=east] {$\semA{\laQDA}[\pDIGRAPH[\BitCount][1]]$};
  \node (QAA) [below=of QLA.east,anchor=east] {$\semA{\aQDA}[\pDIGRAPH[\BitCount][1]]$};
  \node (mu)  [xshift=13ex] at ($(QLA)!0.5!(QAA)$)
              [inner sep=0ex] {$\semF{\SigmaMu{1}(\bML)}[\pDIGRAPH[\BitCount][1]]$};
  \path
    (QSA) edge (SA)
    (QLA) edge (LA)
    (QAA) edge (AA)
    ([xshift=-7.5ex]AA.north east)  edge ([xshift=-7.5ex]LA.south east)
    ([xshift=-7.5ex]LA.north east)  edge ([xshift=-7.5ex]SA.south east)
    ([xshift=-7.5ex]QAA.north east) edge ([xshift=-7.5ex]QLA.south east)
    ([xshift=-7.5ex]QLA.north east) edge ([xshift=-7.5ex]QSA.south east)
    ([xshift=-5ex]QLA.south east)   edge ([xshift=-1ex]mu.north west)
    ([xshift=-1ex]mu.south west)          edge ([xshift=-5ex]QAA.north east);
  \begin{scope}[overlay,on background layer]
    \node[fit=(QLA) (mu) (QAA),
          draw,dashed,rounded corners=2ex,inner sep=1.3ex] {};
  \end{scope}
\end{tikzpicture}
      \caption{Collapse shown in this \lcnamecref{ch:nonlocal}.}
      \label{fig:after-result}
    \end{subfigure}
  }
  \caption{
    Hierarchy of the classes of \protect\kl{pointed-digraph languages}
    \kl{recognizable} by \kl{distributed automata}~($\DA$),
    depending on whether the \kl[distributed automata]{automata} are
    \kl[synchronous automata]{synchronous} (neither~“la” nor~“a”),
    \kl[lossless-asynchronous automata]{lossless-asynchronous}~(“la”),
    \kl[asynchronous automata]{asynchronous}~(“a”), or
    \kl{quasi-acyclic}~(“$\Acronym{q}$”).
    The arrows denote inclusion
    (e.g., $\semA{\laDA}[\pDIGRAPH[\BitCount][1]] \subseteq \semA{\DA}[\pDIGRAPH[\BitCount][1]]$).
  }
  \label{fig:result}
\end{figure}

\begin{theorem}[{$\semF{\SigmaMu{1}(\bML)}[\pDIGRAPH[\BitCount][1]] = \semA{\aQDA}[\pDIGRAPH[\BitCount][1]] = \semA{\laQDA}[\pDIGRAPH[\BitCount][1]]$}]
  \label{thm:main-result}
  When restricted to finite \kl{digraphs},
  the \kl{backward $\mu$-fragment}
  is effectively \kl[device equivalent]{equivalent} to the classes of
  \kl{quasi-acyclic} \kl{asynchronous automata} and
  \kl{quasi-acyclic} \kl{lossless-asynchronous automata}.
\end{theorem}
\begin{proof}
  The forward direction is given by
  Proposition~\ref{prp:logic-to-automata} (in Section~\ref{sec:logic-to-automata}),
  which asserts that $\semF{\SigmaMu{1}(\bML)}[\pDIGRAPH[\BitCount][1]] \subseteq \semA{\aQDA}[\pDIGRAPH[\BitCount][1]]$,
  and the trivial observation that $\semA{\aQDA}[\pDIGRAPH[\BitCount][1]] \subseteq \semA{\laQDA}[\pDIGRAPH[\BitCount][1]]$.
  For the backward direction,
  we use Proposition~\ref{prp:automata-to-logic} (in Section~\ref{sec:automata-to-logic}),
  which asserts that $\semA{\laQDA}[\pDIGRAPH[\BitCount][1]] \subseteq \semF{\SigmaMu{1}(\bML)}[\pDIGRAPH[\BitCount][1]]$.
\end{proof}

As stated before,
\kl{synchronous automata} are more powerful than the \kl{backward $\mu$-fragment}
(and incomparable with $\MSOL$).
This holds even if we consider only \kl{quasi-acyclic automata},
i.e., the inclusion $\semF{\SigmaMu{1}(\bML)}[\pDIGRAPH[\BitCount][1]] \subset \semA{\QDA}[\pDIGRAPH[\BitCount][1]]$ is known to be strict
(see \cite[Prp.~6]{DBLP:conf/csl/Kuusisto13}).
Moreover,
an upcoming paper will show that
the inclusion $\semA{\QDA}[\pDIGRAPH[\BitCount][1]] \subset \semA{\DA}[\pDIGRAPH[\BitCount][1]]$ is also strict.

In contrast,
it remains open whether \kl{quasi-acyclicity}
is in fact necessary for characterizing $\semF{\SigmaMu{1}(\bML)}[\pDIGRAPH[\BitCount][1]]$.
On the one hand,
this notion is crucial for our proof
(see Proposition~\ref{prp:automata-to-logic}),
but on the other hand,
no \kl{pointed-digraph language} separating $\semA{\aDA}[\pDIGRAPH[\BitCount][1]]$ or $\semA{\laDA}[\pDIGRAPH[\BitCount][1]]$ from $\semF{\SigmaMu{1}(\bML)}[\pDIGRAPH[\BitCount][1]]$
has been found so far.

\section{Computing least fixpoints using asynchronous automata}
\label{sec:logic-to-automata}

In this section,
we prove the easy direction of the main result.
Given a \kl{formula}~$\Formula$ of the \kl{backward $\mu$-fragment},
it is straightforward to construct
a (synchronous) \kl{distributed automaton}~$\Automaton$
that computes on any \kl{digraph} the \kl{least fixpoint}~$\vec{\LFixpoint}$
of the operator associated with~$\Formula$.
As long as it operates in the synchronous setting,
$\Automaton$ simply follows the sequence of approximants
$\tuple{\vec{\LFixpoint}^0, \vec{\LFixpoint}^1, \dots}$
described in Section~\ref{sec:preliminaries-nonlocal}.
It is important to stress that
the very same observation has previously been made
in~\cite[Prp.~7]{DBLP:conf/csl/Kuusisto13}
(formulated from a different point of view).
In the following proposition,
we refine this observation
by giving a more precise characterization
of the obtained \kl[distributed automaton]{automaton}:
it is always quasi-acyclic and
capable of operating in a (possibly lossy) asynchronous environment.

\begin{proposition}[{$\semF{\SigmaMu{1}(\bML)}[\pDIGRAPH[\BitCount][1]] \subseteq \semA{\aQDA}[\pDIGRAPH[\BitCount][1]]$}]
  \label{prp:logic-to-automata}
  For every \kl{formula} of the \kl{backward $\mu$-fragment},
  we can effectively construct
  an \kl[device equivalent]{equivalent} \kl{quasi-acyclic} \kl{asynchronous automaton}.
\end{proposition}
\begin{proof}
  Let 
  $\Formula = \MU \tuple{\SetVariable_1,\dots,\SetVariable_\VarMax} . \tuple{\Formula_1,\dots,\Formula_\VarMax}$
  be a \kl{formula} of the \kl{backward $\mu$-fragment}
  with $\BitCount$ \kl{set constants}.
  Without loss of generality,
  we may assume that the \kl[formulas]{subformulas} $\Formula_1,\dots,\Formula_\VarMax$
  do not contain any nested \kl{modalities}.
  To see this,
  suppose that ${\Formula_i = \bdm \Formula[2]}$.
  Then $\Formula$ is \kl[device equivalent]{equivalent} to
  $\Formula' = \MU \tuple{\SetVariable_1,\dots,\SetVariable_i,\dots,\SetVariable_\VarMax,\SetVariable[2]}
                 . \tuple{\Formula_1,\dots,\Formula'_i,\dots,\Formula_\VarMax,\Formula[2]}$,
  where $\SetVariable[2]$ is a fresh \kl{set variable}
  and $\Formula'_i = \bdm \SetVariable[2]$.
  The operator $\bbx$ and Boolean combinations of $\bdm$ and $\bbx$ are handled analogously.

  We now convert $\Formula$ into an \kl[device equivalent]{equivalent} \kl[distributed automaton]{automaton}
  $\Automaton = \tuple{\StateSet,\InitFunc,\TransFunc,\AcceptSet}$
  with \kl{state} set
  $\StateSet =
   \powerset{\set{\SetConstant_1,\dots,\SetConstant_\BitCount,\SetVariable_1,\dots,\SetVariable_\VarMax}}$.
  The idea is that
  each \kl{node} $\Node$ of the input \kl{digraph} has to remember
  which of the atomic propositions
  $\PosIn{\SetConstant_1},\dots,\PosIn{\SetConstant_\BitCount},
   \PosIn{\SetVariable_1},\dots,\PosIn{\SetVariable_\VarMax}$
  have, so far, been verified to hold at~$\Node$.
  Therefore,
  we define the \kl{initialization function} such that
  $\InitFunc(\String) = \setbuilder{\SetConstant_i}{\String(i)=1}$
  for all $\String \in \Boolean^\BitCount$.
  Let us write
  $\tuple{\State,\NeighborSet} \models \Formula_i$
  to indicate that a pair
  $\tuple{\State,\NeighborSet} \in \StateSet \times \powerset{\StateSet}$
  satisfies a \kl[formula]{subformula}~$\Formula_i$ of~$\Formula$.
  This is the case precisely when $\Formula_i$ holds at any \kl{node}~$\Node$
  that \kl{satisfies} exactly the atomic propositions in $\State$
  and whose \kl{incoming neighbors} \kl{satisfy} exactly
  the propositions specified by $\NeighborSet$.
  Note that this satisfaction relation is well-defined in our context
  because the nesting depth of \kl{modal operators} in $\Formula_i$
  is at most $1$.
  With that,
  the \kl{transition function} of~$\Automaton$ can be succinctly described by
  $\TransFunc(\State,\NeighborSet) =
   \State \cup \setbuilder{\SetVariable_i}{\tuple{\State,\NeighborSet} \models \Formula_i}$.
  Since $\State \subseteq \TransFunc(\State,\NeighborSet)$,
  we are guaranteed that the \kl[distributed automaton]{automaton} is \kl{quasi-acyclic}.
  Finally,
  the \kl[accepting state]{accepting} set is given by
  $\AcceptSet = \setbuilder{\State}{\SetVariable_1 \in \State}$.

  It remains to prove that $\Automaton$ is asynchronous and \kl[device equivalent]{equivalent} to $\Formula$.
  For this purpose,
  let
  $\Digraph = \tuple{\NodeSet{\Digraph}, \EdgeSet{\Digraph}, \Labeling{\Digraph}}$
  be an $\BitCount$-bit \kl{labeled} \kl{digraph}
  and
  $\vec{\LFixpoint} =
   \tuple{\LFixpoint_1,\dots,\LFixpoint_\VarMax} \in (\powerset{\NodeSet{\Digraph}})^{\VarMax}$
  be the \kl{least fixpoint} of the operator~$\Operator$
  associated with $\tuple{\Formula_1,\dots,\Formula_\VarMax}$.
  Due to the asynchrony condition,
  we must consider an arbitrary \kl{timing}
  $\Timing = \tuple{\Timing_1, \Timing_2, \dots}$
  of~$\Digraph$.
  The corresponding \kl[asynchronous run]{run} $\Run = \tuple{\Run_0, \Run_1, \dots}$
  of $\Automaton$ on $\Digraph$ timed by~$\Timing$
  engenders an infinite sequence
  $\tuple{\vec{\NodeSubset}^0, \vec{\NodeSubset}^1, \dots}$,
  where each tuple
  $\vec{\NodeSubset}^{\Time} =
   \tuple{\NodeSubset^{\Time}_1,\dots,\NodeSubset^{\Time}_\VarMax} \in (\powerset{\NodeSet{\Digraph}})^{\VarMax}$
  specifies the valuation of every \kl{set variable} $\SetVariable_i$ at time~$\Time$,
  i.e., $\NodeSubset^{\Time}_i = \setbuilder{\Node \in \NodeSet{\Digraph}}{\SetVariable_i \in \Run_{\Time}(\Node)}$.
  Since $\Automaton$ is quasi-acyclic and $\NodeSet{\Digraph}$ is finite,
  this sequence must eventually stabilize at some value $\vec{\NodeSubset}^\infty$,
  and each \kl{node} \kl[asynchronous accepts]{accepts} if and only if it belongs to~$\NodeSubset^\infty_1$.
  Reformulated this way,
  our task is to demonstrate that $\vec{\NodeSubset}^\infty$ equals $\vec{\LFixpoint}$,
  regardless of the \kl{timing}~$\Timing$.

  “$\vec{\NodeSubset}^\infty \subseteq \vec{\LFixpoint}$”:
  We show by induction that
  $\vec{\NodeSubset}^{\Time} \subseteq \vec{\LFixpoint}$
  for all $\Time \in \Natural$.
  This obviously holds for $\Time = 0$,
  since $\vec{\NodeSubset}^0 = \tuple{\EmptySet,\dots,\EmptySet}$.
  Now,
  consider any node $\Node \in \NodeSet{\Digraph}$ at an arbitrary time~$\Time$.
  Let $\State$ be the current state of $\Node$
  and $\NeighborSet$ be the set of current \kl{states} of its \kl{incoming neighbors}.
  Depending on~$\Timing$,
  it might be the case that $\Node$ actually receives
  some outdated information $\NeighborSet'$
  instead of~$\NeighborSet$.
  However,
  given that the \kl[incoming neighbors]{neighbors}' previous \kl{states}
  cannot contain more \kl{set variables} than their current ones
  (by construction),
  and that \kl{set variables} can only occur positively in each~$\Formula_i$,
  we know that
  $\tuple{\State,\NeighborSet'} \Models \Formula_i$
  implies
  $\tuple{\State,\NeighborSet} \Models \Formula_i$.
  Hence,
  if $\Node$ performs a local transition at time~$\Time$,
  then the only new \kl{set variables} that can be added to its \kl{state} must lie in
  $\setbuilder{\SetVariable_i}{\tuple{\State,\NeighborSet} \Models \Formula_i}$.
  On a global scale,
  this means that
  $\vec{\NodeSubset}^{\Time+1} \setminus \vec{\NodeSubset}^{\Time}
   \subseteq \Operator(\vec{\NodeSubset}^{\Time})$.
  Furthermore,
  by the induction hypothesis, the \kl{monotonicity} of $\Operator$,
  and the fact that $\vec{\LFixpoint}$ is a \kl{fixpoint},
  we have
  $\Operator(\vec{\NodeSubset}^{\Time}) \subseteq \Operator(\vec{\LFixpoint})
   = \vec{\LFixpoint}$.
  Putting both together,
  and again relying on the induction hypothesis,
  we obtain~$\vec{\NodeSubset}^{\Time+1} \subseteq \vec{\LFixpoint}$.

  “$\vec{\NodeSubset}^\infty \supseteq \vec{\LFixpoint}$”:
  For the converse direction,
  we make use of the Knaster-Tarski theorem,
  which gives us the equality
  $\vec{\LFixpoint} =
   \bigcap \setbuilder{\vec{\NodeSubset} \in (\powerset{\NodeSet{\Digraph}})^{\VarMax}}
                      {\Operator(\vec{\NodeSubset}) \subseteq \vec{\NodeSubset}}$.
  With this,
  it suffices to show that
  $\Operator(\vec{\NodeSubset}^\infty) \subseteq \vec{\NodeSubset}^\infty$.
  Consider some time $\Time \in \Natural$
  such that $\vec{\NodeSubset}^{\Time'} = \vec{\NodeSubset}^\infty$
  for all $\Time' \geq \Time$.
  Although we know
  that every \kl{node} has reached its final \kl{state} at time $\Time$,
  the $\FIFO$ buffers of some \kl{edges} might still contain obsolete \kl{states} from previous times.
  However,
  the \kl{fairness property} of~$\Timing$ guarantees that
  our customized $\popfirst$ operation is executed infinitely often at every \kl{edge},
  while the $\pushlast$ operation has no effect
  because all the \kl{states} remain unchanged.
  Therefore,
  there must be a time $\Time' \geq \Time$
  from which on each buffer contains only the current \kl{state}
  of its incoming \kl{node},
  i.e., $\Run_{\Time''}(\edge{\Node[1]}{\Node[2]}) = \Run_{\Time''}(\Node[1])$
  for all $\Time'' \geq \Time'$ and $\edge{\Node[1]}{\Node[2]} \in \EdgeSet{\Digraph}$.
  Moreover,
  the \kl{fairness property} of $\Timing$ also ensures that
  every \kl{node} $\Node$ reevaluates the local \kl{transition function}~$\TransFunc$ infinitely often,
  based on its own current \kl{state} $\State$
  and the set $\NeighborSet$ of \kl{states} in the buffers
  associated with its \kl{incoming neighbors}.
  As this has no influence on $\Node$'s \kl{state},
  we can deduce that
  $\setbuilder{\SetVariable_i}{\tuple{\State,\NeighborSet} \Models \Formula_i} \subseteq \State$.
  Consequently,
  we have
  $\Operator(\vec{\NodeSubset}^{\Time'}) \subseteq \vec{\NodeSubset}^{\Time'}$,
  which is equivalent to
  $\Operator(\vec{\NodeSubset}^\infty) \subseteq \vec{\NodeSubset}^\infty$.
\end{proof}

\section{Capturing asynchronous runs using least fixpoints}
\label{sec:automata-to-logic}

This section is dedicated to proving
the converse direction of the main result,
which will allow us to translate any
\kl{quasi-acyclic} \kl{lossless-asynchronous automaton}
into an \kl[device equivalent]{equivalent} \kl{formula} of the \kl{backward $\mu$-fragment}
(see Proposition~\ref{prp:automata-to-logic}).
Our proof builds on two concepts:
the invariance of \kl{distributed automata} under \kl{backward bisimulation}
(stated in Proposition~\ref{prp:bisimulation-invariance})
and an ad-hoc relation “$\enables$”
that captures the possible behaviors
of a fixed \kl{lossless-asynchronous automaton}~$\Automaton$
(in a specific sense described in Lemma~\ref{lem:enables-relation}).

We start with the notion of \kl{backward bisimulation},
which is defined like the standard notion of bisimulation
(see, e.g., \cite[Def.~2.16]{BlackburnRV02} or \cite[Def.~5]{BlackburnB07}),
except that \kl{edges} are followed in the backward direction.
Formally,
a \Intro{backward bisimulation}
between two $\BitCount$-bit \kl{labeled} \kl{digraphs}
$\Digraph = \tuple{\NodeSet{\Digraph}, \EdgeSet{\Digraph}, \Labeling{\Digraph}}$
and
$\Digraph' = \tuple{\NodeSet{\Digraph'}, \EdgeSet{\Digraph'}, \Labeling{\Digraph'}}$
is a binary relation
$\Bisimulation \subseteq \NodeSet{\Digraph} \times \NodeSet{\Digraph'}$
that fulfills the following conditions
for all $\edge{\Node[2]}{\Node[2]'} \in \Bisimulation$:
\begin{enumerate}
\item $\Labeling{\Digraph}(\Node[2]) = \Labeling{\Digraph'}(\Node[2]')$,
\item if $\edge{\Node[1]}{\Node[2]} \in \EdgeSet{\Digraph}$,
  then there exists $\Node[1]' \in \NodeSet{\Digraph'}$
  such that
  $\Node[1]'\Node[2]' \in \EdgeSet{\Digraph'}$
  and
  $\edge{\Node[1]}{\Node[1]'} \in \Bisimulation$, \\
  and, conversely,
\item if $\Node[1]'\Node[2]' \in \EdgeSet{\Digraph'}$,
  then there exists $\Node[1] \in \NodeSet{\Digraph}$
  such that
  $\edge{\Node[1]}{\Node[2]} \in \EdgeSet{\Digraph}$
  and
  $\edge{\Node[1]}{\Node[1]'} \in \Bisimulation$.
\end{enumerate}
We say that
the \kl{pointed digraphs}
$\pver{\Digraph}{\Node}$ and $\pver{\Digraph'}{\Node'}$
are \Intro{backward bisimilar}
if there exists such a \kl{backward bisimulation} $\Bisimulation$
relating $\Node$ and $\Node'$.
It is easy to see that
\kl{distributed automata} cannot distinguish
between \kl{backward bisimilar} \kl{structures}:

\begin{proposition}
  \label{prp:bisimulation-invariance}
  \kl{Distributed automata} are invariant under \kl{backward bisimulation}.
  That is,
  for every \kl[distributed automaton]{automaton} $\Automaton$,
  if two \kl{pointed digraphs}
  $\pver{\Digraph}{\Node[2]}$ and $\pver{\Digraph'}{\Node[2]'}$
  are \kl{backward bisimilar},
  then
  $\Automaton$ \kl[asynchronous accepts]{accepts} $\pver{\Digraph}{\Node[2]}$
  if and only if
  it \kl[asynchronous accepts]{accepts} $\pver{\Digraph'}{\Node[2]'}$.
\end{proposition}
\begin{proof}
  Let $\Bisimulation$ be a \kl{backward bisimulation}
  between $\Digraph$ and $\Digraph'$
  such that $\edge{\Node[2]}{\Node[2]'} \in \Bisimulation$.
  Since \kl[asynchronous acceptance]{acceptance} is defined with respect to
  the synchronous behavior of the \kl[distributed automaton]{automaton},
  we need only consider the \kl{synchronous runs}
  $\Run = \tuple{\Run_0, \Run_1, \dots}$ and
  $\Run' = \tuple{\Run'_0, \Run'_1, \dots}$
  of~$\Automaton$ on~$\Digraph$ and~$\Digraph'$, respectively.
  Now,
  given that the $\FIFO$ buffers on the \kl{edges} of the \kl{digraphs}
  merely contain the current \kl{state} of their incoming \kl{node},
  it is straightforward to prove by induction on $\Time$ that
  every pair of \kl{nodes} $\edge{\Node[1]}{\Node[1]'} \in \Bisimulation$ satisfies
  $\Run_{\Time}(\Node[1]) = \Run'_{\Time}(\Node[1]')$
  for all $\Time \in \Natural$.
\end{proof}

We now turn to the mentioned relation “$\enables$”,
which is defined with respect to
a fixed \kl[distributed automaton]{automaton}.
For the remainder of this section,
let~$\Automaton$ denote an \kl[distributed automaton]{automaton}
$\tuple{\StateSet,\InitFunc,\TransFunc,\AcceptSet}$,
and let~$\TraceSet$ denote its set of \kl{traces}.
The relation
\Intro*{${\enables} \subseteq (\powerset{\TraceSet} \times \TraceSet)$}
specifies whether,
in a lossless-asynchronous environment,
a given \kl{trace}~$\Trace$ can be traversed by a \kl{node}
whose \kl{incoming neighbors} traverse the \kl{traces} of a given set~$\NeighborHistory$.
Loosely speaking,
the intended meaning of
$\NeighborHistory \enables \Trace$ (“$\NeighborHistory$~enables~$\Trace$”)
is the following:
Take an appropriately chosen \kl{digraph}
under some \kl{lossless-asynchronous timing}~$\Timing$,
and observe the corresponding \kl[asynchronous run]{run} of $\Automaton$
up to a specific time~$\Time$;
if \kl{node}~$\Node$ was initially in \kl{state} $\Trace.\first$
and at time~$\Time$ it has \emph{seen} its \kl{incoming neighbors}
traversing precisely the \kl{traces} in~$\NeighborHistory$,
then it is possible for~$\Timing$ to be such that
at time~$\Time$,
\kl{node} $\Node$ has traversed exactly the \kl{trace}~$\Trace$.
This relation can be defined inductively:
As the base case,
we specify that for every
$\State \in \StateSet$ and $\NeighborSet \subseteq \StateSet$,
we have
${\NeighborSet \enables \State.\pushlast(\TransFunc(\State,\NeighborSet))}$.
For the inductive clause,
consider a \kl{trace} $\Trace[2] \in \TraceSet$
and two finite (possibly equal) sets of \kl{traces}
$\NeighborHistory, \NeighborHistory' \subseteq \TraceSet$
such that the \kl{traces} in $\NeighborHistory'$ can be obtained
by appending at most one \kl{state} to the \kl{traces} in $\NeighborHistory$.
More precisely,
if $\Trace[1] \in \NeighborHistory$,
then $\Trace[1].\pushlast(\State[1]) \in \NeighborHistory'$
for some $\State[1] \in \StateSet$,
and conversely,
if $\Trace[1]' \in \NeighborHistory'$,
then $\Trace[1]' = \Trace[1].\pushlast(\Trace[1]'.\last)$
for some $\Trace[1] \in \NeighborHistory$.
We shall denote this auxiliary relation by
\Intro*{$\NeighborHistory \becomes \NeighborHistory'$}.
If it holds,
then
$\NeighborHistory \enables \Trace[2]$
implies
$\NeighborHistory' \enables \Trace[2].\pushlast(\State[2])$,
where
$\State[2] =
 \TransFunc(\Trace[2].\last,
            \setbuilder{\Trace[1]'.\last}{\Trace[1]' \in \NeighborHistory'})$.

The next step is to show (in Lemma~\ref{lem:enables-relation})
that our definition of “$\enables$”
does indeed capture the intuition given above.
To formalize this,
we first introduce two further pieces of terminology.

First,
the notions of \kl[asynchronous configuration]{configuration} and \kl[asynchronous run]{run} can be enriched
to facilitate discussions about the past.
Let $\Run = \tuple{\Run_0, \Run_1, \dots}$
be a \kl[asynchronous run]{run} of $\Automaton$ on a \kl{digraph}
$\Digraph = \tuple{\NodeSet{\Digraph}, \EdgeSet{\Digraph}, \Labeling{\Digraph}}$
(timed by some \kl{timing}~$\Timing$).
The corresponding \Intro{enriched run} is the sequence
$\RichRun = \tuple{\RichRun_0, \RichRun_1, \dots}$
of \Intro{enriched configurations}
that we obtain from $\Run$
by requiring each \kl{node} to remember the entire \kl{trace} it has traversed so far.
Formally,
for $\Time \in \Natural$, $\Node \in \NodeSet{\Digraph}$ and $\Edge \in \EdgeSet{\Digraph}$,
\begin{equation*}
  \RichRun_0(\Node) = \Run_0(\Node),
  \quad\;
  \RichRun_{\Time+1}(\Node) = \RichRun_{\Time}(\Node).\pushlast(\Run_{\Time+1}(\Node))
  \quad \text{and} \quad
  \RichRun_{\Time}(\Edge) = \Run_{\Time}(\Edge).
\end{equation*}

Second,
we will need to consider finite segments of \kl{timings} and \kl{enriched runs}.
A \Intro{lossless-asynchronous timing segment} of a \kl{digraph}~$\Digraph$
is a sequence
${\Timing = \tuple{\Timing_1, \dots, \Timing_\EndTime}}$
that could be extended to a whole \kl{lossless-asynchronous timing}
$\tuple{\Timing_1, \dots, \Timing_\EndTime, \Timing_{\EndTime+1}, \dots}$.
Likewise,
for an initial \kl{enriched configuration}~$\RichRun_0$ of~$\Digraph$,
the corresponding \Intro{enriched run segment} timed by $\Timing$
is the sequence $\tuple{\RichRun_0, \dots, \RichRun_\EndTime}$,
where each $\RichRun_{\Time+1}$ is computed
from $\RichRun_{\Time}$ and $\Timing_{\Time+1}$
in the same way as for an entire \kl{enriched run}.

Equipped with the necessary terminology,
we can now state and prove a (slightly technical) lemma
that will allow us to derive benefit from the relation “$\enables$”.
This lemma essentially states that
if $\NeighborHistory \enables \Trace[2]$ holds
and we are given enough \kl{nodes}
that traverse the \kl{traces} in~$\NeighborHistory$,
then we can take those \kl{nodes} as the \kl{incoming neighbors}
of a new \kl{node}~$\Node[2]$
and delay the messages received by~$\Node[2]$
in such a way that $\Node[2]$ traverses $\Trace[2]$,
without losing any messages.

\begin{lemma}
  \label{lem:enables-relation}
  For every \kl{trace} $\Trace[2] \in \TraceSet$
  and every finite (possibly empty) set of \kl{traces}
  $\NeighborHistory = \set{\Trace[1]_1, \dots, \Trace[1]_\ell} \subseteq \TraceSet$
  that satisfy the relation $\NeighborHistory \enables \Trace[2]$,
  there are lower bounds $m_1, \dots, m_\ell \in \Positive$
  such that the following statement holds true:

  For any $n_1, \dots, n_\ell \in \Positive$ satisfying $n_i \geq m_i$,
  let $\Digraph$ be a \kl{digraph} consisting of
  the \kl{nodes} $\tuple{\Node[1]_i^j}_{i,j}$ and $\Node[2]$,
  and the \kl{edges} $\tuple{\Node[1]_i^j\Node[2]}_{i,j}$,
  with index ranges $1 \leq i \leq \ell$ and $1 \leq j \leq n_i$.
  If we start from the \kl{enriched configuration}~$\RichRun_0$ of~$\Digraph$,
  where
  \begin{equation*}
    \RichRun_0(\Node[1]_i^j) = \Trace[1]_i,
    \quad\;
    \RichRun_0(\Node[1]_i^j\Node[2]) = \Trace[1]_i
    \quad \text{and} \quad
    \RichRun_0(\Node[2]) = \Trace[2].\first,
  \end{equation*}
  then we can construct a (nonempty) \kl{lossless-asynchronous timing segment}
  $\Timing = \tuple{\Timing_1, \dots, \Timing_\EndTime}$ of~$\Digraph$,
  where
  $\Timing_{\Time}(\Node[1]_i^j) = 0$ and
  $\Timing_{\Time}(\Node[2]) = 1$
  for $1 \leq \Time \leq \EndTime$,
  such that the corresponding \kl{enriched run segment}
  $\RichRun = \tuple{\RichRun_0, \dots, \RichRun_\EndTime}$ timed by $\Timing$
  satisfies
  \begin{equation*}
    \RichRun_{\EndTime-1}(\Node[1]_i^j\Node[2]) = \Trace[1]_i.\last
    \quad \text{and} \quad
    \RichRun_\EndTime(\Node[2]) = \Trace[2].
    \qedhere
  \end{equation*}
\end{lemma}
\begin{proof}
  We proceed by induction on the definition of~“$\enables$”.
  In the base case,
  where
  $\NeighborHistory = \set{\State[1]_1, \dots, \State[1]_\ell} \subseteq \StateSet$
  and
  $\Trace[2] = \State[2].\pushlast(\TransFunc(\State[2],\NeighborHistory))$
  for some $\State[2] \in \StateSet$,
  the statement holds with $m_1 = \dots = m_\ell = 1$.
  This is witnessed by a \kl[lossless-asynchronous timing segment]{timing segment}
  $\Timing = \tuple{\Timing_1}$,
  where
  $\Timing_1(\Node[1]_i^j) = 0$,\,
  $\Timing_1(\Node[2]) = 1$,
  and $\Timing_1(\Node[1]_i^j\Node[2])$ can be chosen as desired.

  For the inductive step,
  assume that the statement holds for
  $\Trace[2]$ and
  $\NeighborHistory = \set{\Trace[1]_1, \dots, \Trace[1]_\ell}$
  with some values $m_1, \dots, m_\ell$.
  Now consider any other set of \kl{traces}
  $\NeighborHistory' = \set{\Trace[1]'_1, \dots, \Trace[1]'_{\ell'}}$
  such that $\NeighborHistory \becomes \NeighborHistory'$,
  and let $\Trace[2]' = \Trace[2].\pushlast(\State[2])$,
  where
  $\State[2] =
   \TransFunc(\Trace[2].\last,
              \setbuilder{\Trace[1]'_k.\last}{\Trace[1]'_k \in \NeighborHistory'})$.
  Since $\NeighborHistory \enables \Trace[2]$,
  we have $\NeighborHistory' \enables \Trace[2]'$.
  The remainder of the proof consists in showing that
  the statement also holds for $\Trace[2]'$ and $\NeighborHistory'$
  with some large enough integers $m'_1, \dots, m'_{\ell'}$.
  Let us fix
  $m'_k = \sum \setbuilder{m_i}{\Trace[1]_i.\pushlast(\Trace[1]'_k.\last) = \Trace[1]'_k}$.
  (As there is no need to find minimal values, we opt for easy expressibility.)

  Given any numbers $n'_1, \dots, n'_{\ell'}$ with $n'_k \geq m'_k$,
  we choose suitable values $n_1, \dots, n_\ell$ with $n_i \geq m_i$,
  and consider the corresponding \kl{digraph} $\Digraph$ described in the lemma.
  Because we have $\NeighborHistory \becomes \NeighborHistory'$,
  we can assign to each \kl{node} $\Node[1]_i^j$ a \kl{state} $\State[1]_i^j$
  such that
  $\Trace[1]_i.\pushlast(\State[1]_i^j) \in \NeighborHistory'$.
  Moreover,
  provided our choice of $n_1, \dots, n_\ell$ was adequate,
  we can also ensure that for each $\Trace[1]'_k \in \NeighborHistory'$,
  there are exactly $n'_k$ \kl{nodes} $\Node[1]_i^j$ such that
  $\Trace[1]_i.\pushlast(\State[1]_i^j) = \Trace[1]'_k$.
  (Note that \kl{nodes} with distinct \kl{traces}
  $\Trace[1]_i, \Trace[1]_{i'} \in \NeighborHistory$
  might be mapped to the same \kl{trace}
  $\Trace[1]'_k \in \NeighborHistory'$,
  in case $\Trace[1]_{i'} = \Trace[1]_i\State[1]_i^j$.)
  It is straightforward to verify
  that such a choice of numbers and such an assignment of \kl{states}
  are always possible,
  given the lower bounds $m'_1, \dots, m'_{\ell'}$ specified above.

  Let us now consider
  the \kl{lossless-asynchronous timing segment}
  $\Timing = \tuple{\Timing_1, \dots, \Timing_\EndTime}$
  and the corresponding \kl{enriched run segment}
  $\RichRun = \tuple{\RichRun_0, \dots, \RichRun_\EndTime}$
  provided by the induction hypothesis.
  Since the $\popfirst$ operation has no effect on a \kl{trace} of length $1$,
  we may assume without loss of generality
  that
  $\Timing_{\Time}(\Node[1]_i^j\Node[2]) = 0$
  if
  $\RichRun_{\Time-1}(\Node[1]_i^j\Node[2])$ has length $1$,
  for $\Time < \EndTime$.
  Consequently,
  if we start from the alternative \kl{enriched configuration}~$\RichRun'_0$,
  where
  \begin{equation*}
    \RichRun'_0(\Node[1]_i^j) = \Trace[1]_i.\pushlast(\State[1]_i^j),
    \quad\;
    \RichRun'_0(\Node[1]_i^j\Node[2]) = \Trace[1]_i.\pushlast(\State[1]_i^j)
    \quad \text{and} \quad
    \RichRun'_0(\Node[2]) = \Trace[2].\first,
  \end{equation*}
  then the corresponding \kl{enriched run segment}
  $\tuple{\RichRun'_0, \dots, \RichRun'_\EndTime}$ timed by $\Timing$
  can be derived from~$\RichRun$ by simply applying “$\pushlast(\State[1]_i^j)$”
  to
  $\RichRun_{\Time}(\Node[1]_i^j)$ and $\RichRun_{\Time}(\Node[1]_i^j\Node[2])$,
  for $\Time < \EndTime$.
  We thus~get
  \begin{equation*}
    \RichRun'_{\EndTime-1}(\Node[1]_i^j\Node[2]) = \Trace[1]_i.\last.\pushlast(\State[1]_i^j)
    \quad \text{and} \quad
    \RichRun'_\EndTime(\Node[2]) = \Trace[2].
  \end{equation*}
  We may also assume without loss of generality
  that $\Timing_\EndTime(\Node[1]_i^j\Node[2]) = 1$
  if $\RichRun'_{\EndTime-1}(\Node[1]_i^j\Node[2])$ has length~$2$,
  since this does not affect $\RichRun$
  and lossless-asynchrony is ensured by $\Timing_\EndTime(\Node[2]) = 1$.
  Hence,
  it suffices to extend $\Timing$ by an additional map $\Timing_{\EndTime+1}$,
  where
  $\Timing_{\EndTime+1}(\Node[1]_i^j) = 0$,\,
  $\Timing_{\EndTime+1}(\Node[2]) = 1$,
  and $\Timing_{\EndTime+1}(\Node[1]_i^j\Node[2])$ can be chosen as desired.
  The resulting \kl{enriched run segment}
  $\tuple{\RichRun'_0, \dots, \RichRun'_{\EndTime+1}}$
  satisfies
  \begin{align*}
    \RichRun'_\EndTime(\Node[1]_i^j\Node[2]) &= \State[1]_i^j = \Trace[1]'_k.\last
    \quad \text{(for some $\Trace[1]'_k \in \NeighborHistory'$)}
    \quad \text{and} \\
    \RichRun'_{\EndTime+1}(\Node[2]) &= \Trace[2].\pushlast(\State[2]) = \Trace[2]'.
    \qedhere
  \end{align*}
\end{proof}

Finally,
we can put all the pieces together
and prove the converse direction of Theorem~\ref{thm:main-result}:

\begin{proposition}[{$\semA{\laQDA}[\pDIGRAPH[\BitCount][1]] \subseteq \semF{\SigmaMu{1}(\bML)}[\pDIGRAPH[\BitCount][1]]$}]
  \label{prp:automata-to-logic}
  For every \kl{quasi-acyclic} \kl{lossless-asynchronous automaton},
  we can effectively construct
  an \kl[device equivalent]{equivalent} \kl{formula} of the \kl{backward $\mu$-fragment}.
\end{proposition}
\begin{proof}
  Assume that
  $\Automaton = \tuple{\StateSet,\InitFunc,\TransFunc,\AcceptSet}$
  is a \kl{quasi-acyclic} \kl{lossless-asynchronous automaton}
  over $\BitCount$-bit \kl{labeled} \kl{digraphs}.
  Since it is quasi-acyclic,
  its set of \kl{traces} $\TraceSet$ is finite,
  and thus we can afford to introduce
  a separate \kl{set variable} $\SetVariable_{\Trace}$
  for each \kl{trace} $\Trace \in \TraceSet$.
  Making use of the relation~“$\enables$”,
  we convert~$\Automaton$ into an \kl[device equivalent]{equivalent} \kl{formula}
  $\Formula =
   \MU \bigl[ \SetVariable_1, \tuple{\SetVariable_{\Trace}}_{\Trace \in \TraceSet} \bigr] .
       \bigl[ \Formula_1, \tuple{\Formula_{\Trace}}_{\Trace \in \TraceSet} \bigr]$
  of the \kl{backward $\mu$-fragment},
  where
  \begin{align}
    \Formula_1 &=
    \smashoperator[r]{\bigvee_{\substack{\Trace \in \TraceSet \\ \Trace\ldotp\last \in \AcceptSet}}}
    \SetVariable_{\Trace},
    \tag{a} \label{eq:accept} \\
    \Formula_{\State} &=
    \smashoperator[r]{\bigvee_{\substack{\String \in \Boolean^\BitCount \\ \InitFunc(\String)=\State}}} \:\:
    \Bigl(
      \smashoperator[r]{\bigwedge_{\String(i)=1}} \SetConstant_i
      \,\AND\!
      \smashoperator[r]{\bigwedge_{\String(i)=0}} \NOT \SetConstant_i
    \Bigr)
    \quad\; \text{for $\State \in \StateSet$,\quad and}
    \tag{b} \label{eq:init} \\
    \Formula_{\Trace[2]} &=
    \SetVariable_{\Trace[2]\ldotp\first}
    \,\AND
    \bigvee_{\substack{\NeighborHistory \subseteq \TraceSet \\ \NeighborHistory \:\!\enables\:\! \Trace[2]}}
    \Bigl(
      \bigl(\, \smashoperator{\bigwedge_{\Trace[1] \in \NeighborHistory}} \bdm \SetVariable_{\Trace[1]} \bigr)
      \AND
      \bigl( \bbx \smashoperator{\bigvee_{\Trace[1] \in \NeighborHistory}} \SetVariable_{\Trace[1]} \bigr)
    \Bigr)
    \quad\; \text{for $\Trace[2] \in \TraceSet$ with $\length{\Trace[2]} \geq 2$.}
    \tag{c} \label{eq:transition}
  \end{align}
  Note that this \kl{formula} can be constructed effectively
  because an inductive computation of~“$\enables$” must terminate
  after at most $\card{\TraceSet} \cdot 2^{\card{\TraceSet}}$ iterations.

  To prove that $\Formula$ is indeed \kl[device equivalent]{equivalent} to $\Automaton$,
  let us consider an arbitrary $\BitCount$-bit \kl{labeled} \kl{digraph}
  $\Digraph = \tuple{\NodeSet{\Digraph}, \EdgeSet{\Digraph}, \Labeling{\Digraph}}$
  and the corresponding \kl{least fixpoint}
  $\vec{\LFixpoint} =
   \tuple{\LFixpoint_1, \tuple{\LFixpoint_{\Trace}}_{\Trace \in \TraceSet}}
   \in (\powerset{\NodeSet{\Digraph}})^{\card{\TraceSet}+1}$
  of the operator $\Operator$ associated with
  $\tuple{\Formula_1, \tuple{\Formula_{\Trace}}_{\Trace \in \TraceSet}}$.

  The easy direction is to show that
  for all \kl{nodes} $\Node \in \NodeSet{\Digraph}$,
  if $\Automaton$ \kl[asynchronous accepts]{accepts} $\pver{\Digraph}{\Node}$,
  then $\pver{\Digraph}{\Node}$ \kl{satisfies} $\Formula$.
  For that,
  it suffices to consider the \kl[synchronous run]{synchronous} \kl{enriched run}
  $\RichRun = \tuple{\RichRun_0, \RichRun_1, \dots}$
  of~$\Automaton$ on~$\Digraph$.
  (Any other \kl[asynchronous run]{run} timed by a \kl{lossless-asynchronous timing}
  would exhibit the same \kl[asynchronous acceptance behavior]{acceptance behavior}.)
  As in the proof of Proposition~\ref{prp:bisimulation-invariance},
  we can simply ignore the $\FIFO$ buffers on the \kl{edges} of~$\Digraph$
  because $\RichRun_{\Time}(\edge{\Node[1]}{\Node[2]}) = \RichRun_{\Time}(\Node[1]).\last$.
  Using this,
  a straightforward induction on $\Time$ shows that
  every node $\Node[2] \in \NodeSet{\Digraph}$ satisfies
  ${\setbuilder{\RichRun_{\Time}(\Node[1])}{\edge{\Node[1]}{\Node[2]} \in \EdgeSet{\Digraph}}
    \enables \RichRun_{\Time+1}(\Node[2])}$
  for all $\Time \in \Natural$.
  (For~$\Time = 0$,
   the claim follows from the base case of the definition of~“$\enables$”;
   for the step from~$\Time$ to~$\Time+1$,
   we can immediately apply the inductive clause of the definition.)
  This in turn allows us to prove that
  each \kl{node}~$\Node[2]$ is contained in all the components of $\vec{\LFixpoint}$
  that correspond to a \kl{trace} traversed by~$\Node[2]$ in~$\RichRun$,
  i.e., $\Node[2] \in \LFixpoint_{\RichRun_{\Time}(\Node[2])}$ for all $\Time \in \Natural$.
  Naturally, we proceed again by induction:
  For $\Time = 0$,
  we have $\RichRun_0(\Node[2]) = \InitFunc(\Labeling{\Digraph}(\Node[2])) \in \StateSet$,
  hence the \kl[formula]{subformula} $\Formula_{\RichRun_0(\Node[2])}$
  defined in equation~\eqref{eq:init}
  holds at~$\Node[2]$,
  and thus $\Node[2] \in \LFixpoint_{\RichRun_0(\Node[2])}$.
  For the step from~$\Time$ to~$\Time+1$,
  we need to distinguish two cases.
  If $\RichRun_{\Time+1}(\Node[2])$ is of length $1$,
  then it is equal to $\RichRun_{\Time}(\Node[2])$,
  and there is nothing new to prove.
  Otherwise,
  we must consider the appropriate \kl[formula]{subformula}
  $\Formula_{\RichRun_{\Time+1}(\Node[2])}$
  given by equation~\eqref{eq:transition}.
  We already know from the base case that the conjunct
  $\SetVariable_{\RichRun_{\Time+1}(\Node[2])\ldotp\first} = \SetVariable_{\RichRun_0(\Node[2])}$
  holds at $\Node[2]$,
  with respect to any \kl[set variable]{variable} assignment
  that \kl{interprets} each $\SetVariable_{\Trace}$ as $\LFixpoint_{\Trace}$.
  Furthermore,
  by the induction hypothesis,
  $\SetVariable_{\RichRun_{\Time}(\Node[1])}$
  holds at every \kl{incoming neighbor} $\Node[1]$ of $\Node[2]$.
  Since
  ${\setbuilder{\RichRun_{\Time}(\Node[1])}{\edge{\Node[1]}{\Node[2]} \in \EdgeSet{\Digraph}}
    \enables \RichRun_{\Time+1}(\Node[2])}$,
  we conclude that the second conjunct of $\Formula_{\RichRun_{\Time+1}(\Node[2])}$
  must also hold at $\Node[2]$,
  and thus $\Node[2] \in \LFixpoint_{\RichRun_{\Time+1}(\Node[2])}$.
  Finally,
  assuming $\Automaton$ \kl[asynchronous accepts]{accepts} $\pver{\Digraph}{\Node}$,
  we know by definition that
  $\RichRun_{\Time}(\Node).\last \in \AcceptSet$ for some $\Time \in \Natural$.
  Since $\Node \in \LFixpoint_{\RichRun_{\Time}(\Node)}$,
  this implies that
  the \kl[formula]{subformula}~$\Formula_1$ defined in equation~\eqref{eq:accept}
  holds at $\Node$,
  and therefore that $\pver{\Digraph}{\Node}$ \kl{satisfies} $\Formula$.

  For the converse direction of the \kl[device equivalence]{equivalence},
  we have to overcome the difficulty
  that~$\Formula$ is more permissive than~$\Automaton$,
  in the sense that
  a \kl{node} $\Node$ might lie in $\LFixpoint_{\Trace}$,
  and yet not be able to follow the trace $\Trace$ under any \kl{timing} of $\Digraph$.
  Intuitively,
  the reason why we still obtain an \kl[device equivalence]{equivalence}
  is that $\Automaton$ cannot take advantage
  of all the information provided by any particular \kl[asynchronous run]{run},
  because it must ensure that for \emph{all} \kl{digraphs},
  its \kl[asynchronous acceptance behavior]{acceptance behavior} is independent of the \kl{timing}.
  It turns out that
  even if $\Node$ cannot traverse $\Trace$,
  some other \kl{node}~$\Node'$ in an indistinguishable \kl{digraph}
  will be able to do so.
  More precisely, we will show that
  \begin{equation}
    \parbox{0.75\textwidth}{
      if $\Node \in \LFixpoint_{\Trace}$,
      then there exists a \kl{pointed digraph} $\pver{\Digraph'}{\Node'}$,
      \kl{backward bisimilar} to $\pver{\Digraph}{\Node}$,
      and a \kl{lossless-asynchronous timing} $\Timing'$ of $\Digraph'$,
      such that $\RichRun'_{\Time}(\Node') = \Trace$
      for some $\Time \in \Natural$,
    }
    \tag{$\ast$} \label{stm:traversable}
  \end{equation}
  where $\RichRun'$ is the \kl{enriched run}
  of $\Automaton$ on $\Digraph'$ timed by $\Timing'$.
  Now suppose that $\pver{\Digraph}{\Node}$ \kl{satisfies} $\Formula$.
  By equation~\eqref{eq:accept},
  this means that $\Node \in \LFixpoint_{\Trace}$
  for some trace $\Trace$ such that $\Trace.\last \in \AcceptSet$.
  Consequently,
  $\Automaton$~\kl[asynchronous accepts]{accepts}
  the \kl{pointed digraph} $\pver{\Digraph'}{\Node'}$
  postulated in~\eqref{stm:traversable},
  based on the claim that~$\Node'$ traverses~$\Trace$ under \kl{timing}~$\Timing'$
  and the fact that $\Automaton$ is lossless-asynchronous.
  Since $\pver{\Digraph}{\Node}$ and $\pver{\Digraph'}{\Node'}$
  are \kl{backward bisimilar},
  it follows from Proposition~\ref{prp:bisimulation-invariance}
  that~$\Automaton$ also \kl[asynchronous accepts]{accepts}~$\pver{\Digraph}{\Node}$.

  It remains to verify \eqref{stm:traversable}.
  We achieve this by
  computing the \kl{least fixpoint}~$\vec{\LFixpoint}$ inductively
  and proving the statement by induction on the sequence of approximants
  $\tuple{\vec{\LFixpoint}^0, \vec{\LFixpoint}^1, \dots}$.
  Note that we do not need to consider the limit case,
  since $\vec{\LFixpoint} = \vec{\LFixpoint}^n$ for some $n \in \Natural$.

  The base case is trivially true
  because all the components of $\vec{\LFixpoint}^0$ are empty.
  Furthermore,
  if~$\Trace[2]$~consists of a single \kl{state}~$\State$,
  then we do not even need to argue by induction,
  as it is evident from equation~\eqref{eq:init} that
  for all $j \geq 1$,
  \kl{node} $\Node[2]$ lies in $\LFixpoint^j_{\State}$
  precisely when
  $\InitFunc(\Labeling{\Digraph}(\Node[2]))= \State$.
  It thus suffices to set $\pver{\Digraph'}{\Node[2]'} = \pver{\Digraph}{\Node[2]}$
  and choose the \kl{timing}~$\Timing'$ arbitrarily.
  Clearly,
  we have $\RichRun'_0(\Node[2]') = \InitFunc(\Labeling{\Digraph}(\Node[2]))= \State$
  if $\Node[2] \in \LFixpoint^j_{\State}$.

  On the other hand,
  if~$\Trace[2]$ is of length at least $2$,
  we must assume that
  statement~\eqref{stm:traversable} holds for the components of~$\vec{\LFixpoint}^j$
  in order to prove it for $\LFixpoint^{j+1}_{\Trace[2]}$.
  To this end,
  consider an arbitrary \kl{node} $\Node[2] \in \LFixpoint^{j+1}_{\Trace[2]}$.
  By the first conjunct in~\eqref{eq:transition}
  and the preceding remarks regarding the trivial cases,
  we know that $\InitFunc(\Labeling{\Digraph}(\Node[2])) = \Trace[2].\first$
  (and incidentally that $j \geq 1$).
  Moreover,
  the second conjunct ensures
  the existence of a (possibly empty) set of \kl{traces} $\NeighborHistory$
  that satisfies
  $\NeighborHistory \enables \Trace[2]$
  and that represents
  a “projection” of~$\Node[2]$'s \kl{incoming neighborhood} at stage~$j$.
  By the latter we mean that
  for all $\Trace[1] \in \NeighborHistory$,
  there exists $\Node[1] \in \NodeSet{\Digraph}$
  such that
  $\edge{\Node[1]}{\Node[2]} \in \EdgeSet{\Digraph}$ and $\Node[1] \in \LFixpoint^j_{\Trace[1]}$,
  and conversely,
  for all $\Node[1] \in \NodeSet{\Digraph}$ with $\edge{\Node[1]}{\Node[2]} \in \EdgeSet{\Digraph}$,
  there exists $\Trace[1] \in \NeighborHistory$
  such that $\Node[1] \in \LFixpoint^j_{\Trace[1]}$.

  Now,
  for each \kl{trace} $\Trace[1] \in \NeighborHistory$
  and each \kl{incoming neighbor}~$\Node[1]$ of~$\Node[2]$
  that is contained in $\LFixpoint^j_{\Trace[1]}$,
  the induction hypothesis provides us with a \kl{pointed digraph}
  $\pver{\Digraph'_{\Node[1]:\Trace[1]}}{\Node[1]'_{\Trace[1]}}$
  and a corresponding \kl{timing} $\Timing'_{\Node[1]:\Trace[1]}$,
  as described in~\eqref{stm:traversable}.
  We make $n_{\Node[1]:\Trace[1]} \in \Natural$ distinct copies
  of each such \kl{digraph}~$\Digraph'_{\Node[1]:\Trace[1]}$.
  From this,
  we construct $\Digraph' = \tuple{\NodeSet{\Digraph'}, \EdgeSet{\Digraph'}, \Labeling{\Digraph'}}$
  by taking the disjoint union of all the $\sum n_{\Node[1]:\Trace[1]}$ \kl{digraphs},
  and adding a single new node $\Node[2]'$
  with $\Labeling{\Digraph'}(\Node[2]') = \Labeling{\Digraph}(\Node[2])$,
  together with all the \kl{edges} of the form~$\Node[1]'_{\Trace[1]}\Node[2]'$
  (i.e., one such \kl{edge} for each copy of every~$\Node[1]'_{\Trace[1]}$).
  Given that every $\pver{\Digraph'_{\Node[1]:\Trace[1]}}{\Node[1]'_{\Trace[1]}}$
  is \kl{backward bisimilar} to $\pver{\Digraph}{\Node[1]}$,
  we can guarantee that the same holds
  for $\pver{\Digraph'}{\Node[2]'}$ and $\pver{\Digraph}{\Node[2]}$
  by choosing the numbers of \kl{digraph} copies in $\Digraph'$ such that
  each \kl{incoming neighbor}~$\Node[1]$ of~$\Node[2]$ is represented
  by at least one \kl{incoming neighbor} of~$\Node[2]'$.
  That is, for every~$\Node[1]$,
  we require that ${n_{\Node[1]:\Trace[1]} \geq 1}$ for some~$\Trace[1]$.

  Finally,
  we construct a suitable \kl{lossless-asynchronous timing} $\Timing'$ of~$\Digraph'$,
  which proceeds in two phases
  to make $\Node[2]'$ traverse $\Trace[2]$
  in the corresponding \kl{enriched run}~$\RichRun'$.
  In the first phase,
  where $0 < \Time \leq \Time_1$,
  node~$\Node[2]'$ remains inactive,
  which means that
  every~$\Timing_{\Time}$ assigns~$0$ to~$\Node[2]'$ and its incoming \kl{edges}.
  The \kl{state} of~$\Node[2]'$ at time~$\Time_1$ is thus still $\Trace[2].\first$.
  Meanwhile,
  in every copy of each \kl{digraph}~$\Digraph'_{\Node[1]:\Trace[1]}$,
  the \kl{nodes} and \kl{edges} behave according to \kl{timing}~$\Timing'_{\Node[1]:\Trace[1]}$
  until the respective copy of~$\Node[1]'_{\Trace[1]}$
  has completely traversed~$\Trace[1]$,
  whereupon the entire subgraph becomes inactive.
  By choosing $\Time_1$ large enough,
  we make sure that the $\FIFO$ buffer on each \kl{edge}
  of the form~$\Node[1]'_{\Trace[1]}\Node[2]'$
  contains precisely $\Trace[1]$ at time~$\Time_1$.
  In the second phase,
  which lasts from~$\Time_1 + 1$ to~$\Time_2$,
  the only active parts of $\Digraph'$ are $\Node[2]'$ and its incoming \kl{edges}.
  Since the number~$n_{\Node[1]:\Trace[1]}$ of copies
  of each \kl{digraph}~$\Digraph'_{\Node[1]:\Trace[1]}$
  can be chosen as large as required,
  we stipulate that for every \kl{trace}~$\Trace[1] \in \NeighborHistory$,
  the sum of $n_{\Node[1]:\Trace[1]}$ over all $\Node[1]$
  exceeds the lower bound~$m_{\Trace[1]}$ that is associated with $\Trace[1]$
  when invoking Lemma~\ref{lem:enables-relation}
  for~$\Trace[2]$ and~$\NeighborHistory$.
  Applying that lemma,
  we obtain a \kl{lossless-asynchronous timing segment}
  of the subgraph induced by $\Node[2]'$ and its \kl{incoming neighbors}.
  This segment determines our \kl{timing}~$\Timing'$ between~$\Time_1 + 1$ and~$\Time_2$
  (the other parts of~$\Digraph'$ being inactive),
  and gives us $\RichRun'_{\Time_2}(\Node[2]') = \Trace[2]$, as desired.
  Naturally,
  the remainder of~$\Timing'$, starting at~$\Time_2 + 1$, can be chosen arbitrarily,
  so long as it satisfies the properties of a \kl{lossless-asynchronous timing}.

  As a closing remark,
  note that the \kl{pointed digraph}~$\pver{\Digraph'}{\Node[2]'}$
  constructed above is very similar to the standard unraveling
  of $\pver{\Digraph}{\Node[2]}$ into a (possibly infinite) tree.
  (The set of \kl{nodes} of that tree-unraveling is precisely
  the set of all directed paths in $\Digraph$ that start at $\Node[2]$;
  see, e.g., \cite[Def.~4.51]{BlackburnRV02} or \cite[\S~3.2]{BlackburnB07}).
  However, there are a few differences:
  First,
  we do the unraveling backwards,
  because we want to generate a \kl{backward bisimilar} \kl{structure},
  where all the \kl{edges} point toward the \kl{root}.
  Second,
  we may duplicate the \kl{incoming neighbors} (i.e., children) of each \kl{node} in the tree,
  in order to satisfy the lower bounds imposed by Lemma~\ref{lem:enables-relation}.
  Third,
  we stop the unraveling process at a finite depth
  (not necessarily the same for each subtree),
  and place a copy of the original \kl{digraph}~$\Digraph$ at every~leaf.
\end{proof}

\mychapterpreamble{%
  \nameCref{ch:emptiness} based on the
  conference paper~\cite{DBLP:journals/corr/KuusistoR17}.}

\chapter{Emptiness Problems}
\label{ch:emptiness}

This \lcnamecref{ch:emptiness} is concerned with
the decidability of the \kl{emptiness problem}
for several classes of \kl{nonlocal distributed automata}.
Given such an \kl[distributed automaton]{automaton},
the task is to decide algorithmically
whether it \kl{accepts} on at least one input \kl{digraph}.
For our main variants of \kl{local automata},
we can easily determine if this is possible,
simply on the basis of their logical characterizations:
emptiness is decidable for $\LDA$'s
because they are effectively \kl[device equivalent]{equivalent} to $\bML$,
for which the (finite) satisfiability problem
is known to be $\PSPACE$-complete;
on the other hand,
it is undecidable for $\ALDAg$'s
because they are effectively \kl[device equivalent]{equivalent} to $\MSOL$,
for which (finite) satisfiability is undecidable.
We have also shown in \cref{sec:ndga-emptiness},
that the corresponding problem for $\NLDAg$'s is decidable,
using a simple finite-model argument.
Furthermore,
by the results on \kl{nonlocal automata} presented in \cref{ch:nonlocal},
we know that emptiness is decidable for $\aQDA$'s and $\laQDA$'s,
since (finite) satisfiability for the (backward) $\mu$-calculus
is $\EXPTIME$-complete.
However,
for \kl{nonlocal automata} in general,
the decidability question
has been left open by Kuusisto in~\cite{DBLP:conf/csl/Kuusisto13}.
Indeed,
since the logical characterization given there
is in terms of the newly introduced modal substitution calculus
(for which no decidability results have been previously established),
it does not provide us with an immediate answer.
Here,
we obtain a negative answer for the general case
and also consider the question
for three subclasses of \kl{nonlocal distributed automata}.

Our first variant, dubbed \kl{forgetful automata},
is characterized by the fact
that \kl{nodes} can see their \kl{incoming neighbors}' \kl{states}
but cannot remember their own \kl{state}.
Although this restriction might seem very artificial,
it bears an intriguing connection to classical automata theory:
\kl{forgetful distributed automata} turn out to be
\kl[device equivalent]{equivalent} to finite \kl[pointed dipath]{word} automata
(and hence $\MSOL$)
when restricted to \kl{pointed dipaths},
but strictly more expressive
than finite \kl[pointed ordered ditree]{tree} automata
(and hence $\MSOL$)
when restricted to \kl{pointed ordered ditrees}.
As shown in~\mbox{\cite[Prp.~8]{DBLP:conf/csl/Kuusisto13}},
the situation is different on arbitrary \kl{digraphs},
where \kl{distributed automata}
(and hence \kl[forgetful automata]{forgetful} ones)
are unable to \kl{recognize} non-reachability properties
that can be easily expressed in $\MSOL$.
Hence,
none of the two formalisms can simulate the other in general.
However,
while satisfiability for $\MSOL$ is undecidable,
we obtain a $\LOGSPACE$ algorithm
that decides the \kl{emptiness problem} for \kl{forgetful distributed automata}.

The preceding decidability result begs the question of
what happens if we drop the forgetfulness condition.
Motivated by the \kl[device equivalence]{equivalence} of
finite \kl[pointed dipath]{word} automata and
\kl{forgetful distributed automata},
we first investigate this question
when restricted to \kl{dipaths}.
In sharp contrast to the forgetful case,
we find that
for arbitrary \kl{distributed automata},
it is undecidable
whether an \kl[distributed automaton]{automaton}
\kl{accepts} on some \kl{dipath}.
Although our proof follows the standard approach
of simulating a Turing machine,
it has an unusual twist:
we exchange the roles of space and time,
in the sense that
the space of the simulated Turing machine $\Machine$
is encoded into
the time of the simulating \kl{distributed automaton} $\Automaton$,
and conversely,
the time of $\Machine$ is encoded into the space of $\Automaton$.
To lift this result to arbitrary \kl{digraphs},
we introduce the class of \kl{monovisioned distributed automata},
where \kl{nodes} enter a rejecting \kl{sink} \kl{state} as soon as
they see more than one \kl{state} in their \kl{incoming neighborhood}.
For every \kl{distributed automaton}~$\Automaton$,
one can construct a \kl{monovisioned automaton}~$\Automaton'$
that satisfies the emptiness property
if and only if $\Automaton$ does so on \kl{dipaths}.
Hence,
the \kl{emptiness problem} is undecidable for \kl{monovisioned automata},
and thus also in general.

Our third and last class consists
of the \kl{quasi-acyclic distributed automata}.
The motivation for considering this particular class is threefold.
First,
quasi-acyclicity may be seen as a natural intermediate stage
between \kl[local automata]{local} and \kl{nonlocal distributed automata},
because \kl{local automata}
(for which the \kl{emptiness problem} is decidable)
can be characterized as those \kl[distributed automata]{automata}
whose \kl{state} diagram is acyclic as long as we ignore \kl{sink} \kl{states}
(see \cref{sec:distributed-automata}).
Second,
the Turing machine simulation mentioned above
makes crucial use of directed cycles
in the diagram of the simulating \kl[distributed automaton]{automaton},
which suggests that cycles might be the source of undecidability.
Third,
the notion of quasi-acyclic \kl{state} diagrams also plays a major role
in \cref{ch:nonlocal},
where it serves as an ingredient
for $\aQDA$'s and $\laQDA$'s
(for which the \kl{emptiness problem} is also decidable).
However,
contrary to what one might expect from these clues,
we show that quasi-acyclicity alone is not sufficient
to make the \kl{emptiness problem} decidable,
thereby giving an alternative proof
of undecidability for the general case.

The remainder of this \lcnamecref{ch:emptiness} is organized as follows:
We first introduce some formal definitions in \cref{sec:preliminaries}
and establish the connections between \kl{forgetful distributed automata}
and classical \kl[pointed dipath]{word} and
\kl[pointed ordered ditree]{tree} automata in \cref{sec:classical}.
Then,
in \cref{sec:forgetfulness},
we show the positive decidability result for \kl{forgetful automata}.
Finally,
we establish the negative results for \kl{monovisioned automata}
in \cref{sec:space-time}
and for \kl{quasi-acyclic automata} in \cref{sec:fireworks}.

\section{Preliminaries}
\label{sec:preliminaries}

Given a \kl{distributed automaton}~$\Automaton$,
the (general) \Intro{emptiness problem} consists in deciding effectively
whether the \kl[pointed-digraph language]{language} of~$\Automaton$ is nonempty,
i.e., whether there is a \kl{pointed digraph} $\pver{\Digraph}{\Node}$
that is \kl{accepted} by~$\Automaton$.
Similarly,
the \Intro{dipath-emptiness problem}
is to decide whether $\Automaton$ \kl{accepts} some \kl{pointed dipath}.

We now define \kl{forgetful distributed automata},
which are characterized by the fact that
in each communication round,
the \kl{nodes} of the input \kl{digraph}
can see their \kl[incoming neighbors]{neighbors}' \kl{states}
but cannot remember their own \kl{state}.
As this entails
that they are not able to access their own label
by storing it in their \kl{state},
we instead let them reread that label in each round.

\begin{definition}[Forgetful distributed automaton]
  A \Intro{forgetful distributed automaton} (\Intro*{$\FDA$})
  over $\Alphabet$-\kl{labeled}, $\RelCount$-relational \kl{digraphs}
  is a tuple
  $\Automaton = \tuple{\StateSet,\InitState,\tuple{\TransFunc_{\Letter}}_{\Letter\in\Alphabet},\AcceptSet}$,
  where
  $\StateSet$
  is a finite nonempty set of \kl{states},
  $\InitState \in \StateSet$
  is an \Intro{initial state},
  $\TransFunc_{\Letter} \colon (\powerset{\StateSet})^\RelCount \to \StateSet$
  is a \reintro[transition function]{transition function associated with label~$\Letter \in \Alphabet$},
  and
  $\AcceptSet \subseteq \StateSet$
  is a set of \kl{accepting states}.
\end{definition}

The semantics is completely analogous to the one defined
in \cref{sec:distributed-automata},
for the unrestricted \kl[distributed automata]{automata}
of \cref{def:distributed-automaton}.
For a given $\Alphabet$-\kl{labeled}, $\RelCount$-relational \kl{digraph}~$\Digraph$,
the \kl{run}~$\Run$ of $\Automaton$ on $\Digraph$ is
the infinite sequence of \kl{configurations}
$\tuple{\Run_0, \Run_1, \Run_2, \dots}$,
which are defined inductively as follows,
for $\Time \in \Natural$ and $\Node[2] \in \NodeSet{\Digraph}$:
\begin{equation*}
  \Run_0(\Node[2]) = \InitState
  \qquad\text{and}\qquad
  \Run_{\Time+1}(\Node[2]) =
  \TransFunc_{\Labeling{\Digraph}(\Node[2])}
      \Bigl(\bigtuple{\setbuilder{\Run_{\Time}(\Node[1])}{\edge{\Node[1]}{\Node[2]} \in \EdgeSet[i]{\Digraph}}
                     }_{1 \leq i \leq \RelCount}
      \Bigr).
\end{equation*}
The definition of \kl{acceptance} remains
exactly the same as in \cref{sec:distributed-automata},
i.e., for $\Node \in \NodeSet{\Digraph}$,
the \kl{pointed digraph} $\pver{\Digraph}{\Node}$
is \kl{accepted} by~$\Automaton$
if and only if
there exists $\Time \in \Natural$
such that $\Run_{\Time}(\Node) \in \AcceptSet$.

\section{Comparison with classical automata}
\label{sec:classical}

The purpose of this \lcnamecref{sec:classical} is to motivate
our interest in \kl{forgetful distributed automata}
by establishing their connection
with classical \kl[pointed dipath]{word}
and \kl[pointed ordered ditree]{tree} automata.

\begin{proposition}[{
    $\semA{\FDA}[\pDIPATH[\Alphabet]] =
     \semF{\MSOL}[\pDIPATH[\Alphabet]]$
  }]
  \label{prp:equivalence-word-automata}
  When restricted to the class of \kl{pointed dipaths},
  \kl{forgetful distributed automata} are
  \kl[device equivalent]{equivalent} to finite \kl[pointed dipath]{word} automata,
  and thus to $\MSOL$.
\end{proposition}
\begin{proof}
  Let us denote a (deterministic) finite \kl[pointed dipath]{word} automaton
  over some finite alphabet~$\Alphabet$
  by a tuple
  $\ClAutomaton = \tuple{\ClStateSet, \ClInitState, \ClTransFunc, \ClAcceptSet}$,
  where
  $\ClStateSet$ is the set of states,
  $\ClInitState$ is the initial state,
  $\ClTransFunc \colon \ClStateSet \times \Alphabet \to \ClStateSet$
  is the transition function, and
  $\ClAcceptSet$ is the set of accepting states.

  Given such a \kl[pointed dipath]{word} automaton~$\ClAutomaton$,
  we construct a \kl{forgetful distributed automaton}
  ${\Automaton = \tuple{\StateSet,\InitState,\tuple{\TransFunc_{\Letter}}_{\Letter\in\Alphabet},\AcceptSet}}$
  that simulates~$\ClAutomaton$ on $\Alphabet$-\kl{labeled} \kl{dipaths}.
  For this,
  it suffices to set $\StateSet = \ClStateSet \cup \set{\Waiting}$,\,
  $\InitState = \Waiting$,\, $\AcceptSet = \ClAcceptSet$, and
  \begin{equation*}
    \TransFunc_{\Letter}(\NeighborSet) =
    \begin{cases*}
      \ClTransFunc(\ClInitState, \Letter) & if $\NeighborSet = \EmptySet$, \\
      \ClTransFunc(\ClState, \Letter)     & if $\NeighborSet = \set{\ClState}$
                                            for some $\ClState \in \ClStateSet$, \\
      \Waiting                            & otherwise.
    \end{cases*}
  \end{equation*}
  When $\Automaton$ is \kl{run} on a \kl{dipath},
  each \kl{node}~$\Node$ starts in a waiting phase, represented by~$\Waiting$,
  and remains idle until its predecessor
  has computed the state~$\ClState$
  that $\ClAutomaton$ would have reached
  just before reading the local letter~$\Letter$ of~$\Node$.
  (If there is no predecessor, $\ClState$ is set to~$\ClInitState$.)
  Then, $\Node$ switches to the state~$\ClTransFunc(\ClState, \Letter)$
  and stays there forever.
  Consequently,
  the \kl[distinguished node]{distinguished} last \kl{node} of the \kl{dipath} will end up
  in the state reached by~$\ClAutomaton$
  at the end of the \kl[pointed dipath]{word},
  and it \kl{accepts} if and only if~$\ClAutomaton$ does.

  For the converse direction,
  we convert a given \kl{forgetful distributed automaton}
  $\Automaton = \tuple{\StateSet,\InitState,\tuple{\TransFunc_{\Letter}}_{\Letter\in\Alphabet},\AcceptSet}$
  into the \kl[pointed dipath]{word} automaton
  $\ClAutomaton = \tuple{\ClStateSet, \ClInitState, \ClTransFunc, \ClAcceptSet}$
  with components
  $\ClStateSet = \powerset{\StateSet}$,\,
  $\ClInitState = \EmptySet$,\,
  $\ClAcceptSet = \setbuilder{\StateSubset \subseteq \StateSet}
                             {\StateSubset \cap \AcceptSet \neq \EmptySet}$,
  and
  \begin{equation*}
    \ClTransFunc(\ClState, \Letter) =
    \set{\InitState} \cup
    \begin{cases*}
      \set{\TransFunc_{\Letter}(\EmptySet)}
      & if $\ClState = \ClInitState$, \\
      \setbuilder{\TransFunc_{\Letter}(\set{\State})}{\State \in \ClState}
      & otherwise.
    \end{cases*}
  \end{equation*}
  On any $\Alphabet$-\kl{labeled} \kl{dipath}~$\Digraph$,
  our construction guarantees that
  the set of \kl{states} visited by~$\Automaton$ at the $i$-th \kl{node}
  is equal to
  the state that~$\ClAutomaton$ reaches
  just after processing the $i$-th letter
  of the \kl[pointed dipath]{word} associated with~$\Digraph$.
  We can easily verify this by induction on~$i$:
  At the first \kl{node},
  which is \kl{labeled} with $\Letter_1$,
  \kl[distributed automaton]{automaton}~$\Automaton$ starts in \kl{state}~$\InitState$
  and then remains forever in \kl{state}~$\TransFunc_{\Letter_1}(\EmptySet)$.
  \kl{Node} number~$i + 1$ also starts in~$\InitState$,
  and transitions to
  $\TransFunc_{\Letter_{i+1}}(\set{\State_{\Time}^i})$ at time~$\Time + 1$,
  where $\Letter_{i+1}$ is the \kl{node}'s own label
  and~$\State_{\Time}^i$ is the \kl{state} of its predecessor at time~$\Time$.
  In agreement with this behavior,
  we know by the induction hypothesis and the definition of $\ClTransFunc$
  that the state of~$\ClAutomaton$ after reading $\Letter_{i+1}$
  is precisely
  $\set{\InitState} \cup
   \setbuilder{\TransFunc_{\Letter_{i+1}}(\set{\State_{\Time}^i})}{\Time \in \Natural}$.
  As a result,
  the final state reached by~$\ClAutomaton$ will be accepting
  if and only if
  $\Automaton$ visits some \kl{accepting state} at the last \kl{node}.
\end{proof}

A (deterministic, bottom-up) finite \kl[pointed ordered ditree]{tree} automaton
over $\Alphabet$-\kl{labeled}, $\RelCount$-relational \kl{ordered ditrees}
can be defined as a tuple
$\ClAutomaton = \tuple{\ClStateSet, \tuple{\ClTransFunc_k}_{0\leq k\leq\RelCount}, \ClAcceptSet}$,
where
$\ClStateSet$ is a finite nonempty set of states,
$\ClTransFunc_k \colon \ClStateSet^k \times \Alphabet \to \ClStateSet$
is a transition function of arity~$k$, and
$\ClAcceptSet \subseteq \ClStateSet$ is a set of accepting states.
Such an automaton assigns a state of~$\ClStateSet$
to each \kl{node} of a given \kl{pointed ordered ditree},
starting from the leaves and working its way up to the \kl{root}.
If \kl{node}~$\Node$ is \kl{labeled} with letter~$\Letter$
and its $k$ children have been assigned the states
$\ClState_1, \dots, \ClState_k$
(following the numbering order of the $k$ first \kl{edge relations}),
then $\Node$ is assigned the state
$\ClTransFunc_k(\ClState_1, \dots, \ClState_k, \Letter)$.
Note that leaves are covered by the special case $k = 0$.
Based on this,
the \kl{pointed ditree} is accepted
if and only if
the state at the \kl{root} belongs to~$\ClAcceptSet$.
For a more detailed presentation
see, e.g., \cite[\S~3.3]{DBLP:books/ws/automata2012/Loding12}.

\begin{proposition}[{
    $\semA{\FDA}[\pODITREE[\Alphabet][\RelCount]] \supsetneqq
     \semF{\MSOL}[\pODITREE[\Alphabet][\RelCount]]$
  }]
  \label{prp:inclusion-tree-automata}
  When restricted to the class of \kl{pointed ordered ditrees},
  \kl{forgetful distributed automata} are
  strictly more expressive than finite \kl[pointed ordered ditree]{tree} automata,
  and thus than $\MSOL$.
\end{proposition}
\begin{proof}
  To convert a \kl[pointed ordered ditree]{tree} automaton
  $\ClAutomaton = \tuple{\ClStateSet, \tuple{\ClTransFunc_k}_{0\leq k\leq\RelCount}, \ClAcceptSet}$
  into a \kl{forgetful distributed automaton}
  $\Automaton = \tuple{\StateSet,\InitState,\tuple{\TransFunc_{\Letter}}_{\Letter\in\Alphabet},\AcceptSet}$
  that is \kl[device equivalent]{equivalent} to $\ClAutomaton$
  over $\Alphabet$-\kl{labeled}, $\RelCount$-relational \kl{ordered ditrees},
  we use a simple generalization of the construction
  in the proof of Proposition~\ref{prp:equivalence-word-automata}:
  $\StateSet = \ClStateSet \cup \set{\Waiting}$,\,
  $\InitState = \Waiting$,\,
  $\AcceptSet = \ClAcceptSet$, and
  \begin{equation*}
    \TransFunc_{\Letter}(\vec{\NeighborSet}) =
    \begin{cases*}
      \ClTransFunc_k(\ClState_1, \dots, \ClState_k, \Letter)
      & if $\vec{\NeighborSet} =
        \bigtuple{\set{\ClState_1}, \dots, \set{\ClState_k}, \EmptySet, \dots, \EmptySet}$
        with $\ClState_1,\dots,\ClState_k \in \ClStateSet$, \\
      \Waiting
      & otherwise.
    \end{cases*}
  \end{equation*}

  In contrast,
  a conversion in the other direction is not always possible,
  as can be seen from the following example on binary \kl{ditrees}.
  Consider the \kl{forgetful distributed automaton}
  $\Automaton' = \tuple{\set{\Waiting,\Finished,\Accepting}, \Waiting, \TransFunc, \set{\Accepting}}$,
  with
  \begin{equation*}
    \TransFunc(\NeighborSet_1,\NeighborSet_2) =
    \begin{cases*}
      \Waiting
      & if $\NeighborSet_1 = \NeighborSet_2 = \set{\Waiting}$ \\
      \Finished
      & if $\NeighborSet_1, \NeighborSet_2 \in \set{\EmptySet, \set{\Finished}}$ \\
      \Accepting
      & otherwise.
    \end{cases*}
  \end{equation*}
  When \kl{run} on an \kl[labeled]{unlabeled}, $2$-relational \kl{ordered ditree},
  $\Automaton'$ \kl{accepts} at the \kl{root}
  precisely if the \kl{ditree} is \emph{not} perfectly balanced,
  i.e., if there exists a \kl{node}
  whose left and right subtrees have different heights.
  To achieve this,
  each \kl{node} starts in the waiting \kl{state}~$\Waiting$,
  where it remains as long as it has two children
  and those children are also in~$\Waiting$.
  If the \kl{ditree} is perfectly balanced,
  then all the leaves switch permanently from~$\Waiting$ to~$\Finished$
  in the first round,
  their parents do so in the second round,
  their parents' parents in the third round,
  and so forth,
  until the signal reaches the \kl{root}.
  Therefore,
  the \kl{root} will transition directly from~$\Waiting$ to~$\Finished$,
  never visiting \kl{state}~$\Accepting$,
  and hence the \kl{pointed ditree} is rejected.
  On the other hand,
  if the \kl{ditree} is not perfectly balanced,
  then there must be some lowermost internal \kl{node}~$\Node$
  that does not have two subtrees of the same height
  (in particular, it might have only one child).
  Since its subtrees are perfectly balanced,
  they behave as in the preceding case.
  At some point in time,
  only one of~$\Node$'s children will be in \kl{state}~$\Waiting$,
  at which point $\Node$ will switch to \kl{state}~$\Accepting$.
  This triggers an upward-propagating chain reaction,
  eventually causing the \kl{root} to also visit~$\Accepting$,
  and thus to \kl{accept}.
  Note that $\Accepting$ is just an intermediate \kl{state};
  regardless of whether or not the \kl{ditree} is perfectly balanced,
  every \kl{node} will ultimately end up in~$\Finished$.

  To prove that $\Automaton'$ is not \kl[device equivalent]{equivalent} to any \kl[pointed ordered ditree]{tree} automaton,
  one can simply invoke the pumping lemma for regular \kl[pointed ordered ditree]{tree} \kl{languages}
  to show that the complement \kl[pointed-digraph language]{language} of~$\Automaton'$
  is not recognizable by any \kl[pointed ordered ditree]{tree} automaton.
  The claim then follows from the fact
  that regular \kl[pointed ordered ditree]{tree} \kl{languages} are closed under complementation.
\end{proof}

\section{Exploiting forgetfulness}
\label{sec:forgetfulness}

We now give an algorithm deciding
the \kl{emptiness problem} for \kl{forgetful distributed automata}
(on arbitrary \kl{digraphs}).
Its space complexity is linear in the number of \kl{states}
of the given \kl[forgetful automaton]{automaton}.
However,
as an uncompressed binary encoding of a \kl{distributed automaton}
requires space exponential in the number of \kl{states},
this results in $\LOGSPACE$ complexity.
Obviously,
the statement might not hold anymore
if the \kl[forgetful automaton]{automaton} were instead represented by a more compact device,
such as a logical \kl{formula}.

\begin{theorem}
  \label{thm:forgetful-emptiness}
  We can decide the \kl{emptiness problem} for \kl{forgetful distributed automata}
  with $\LOGSPACE$ complexity.
\end{theorem}
\begin{proof}
  Let
  $\Automaton = \tuple{\StateSet,\InitState,\tuple{\TransFunc_{\Letter}}_{\Letter\in\Alphabet},\AcceptSet}$
  be some \kl{forgetful distributed automaton}
  over $\Alphabet$-\kl{labeled}, $\RelCount$-relational \kl{digraphs}.
  Consider the infinite sequence of sets of \kl{states}
  $\StateSubset_0, \StateSubset_1, \StateSubset_2 \cdots$
  such that
  $\StateSubset_{\Time}$ contains precisely those \kl{states}
  that can be visited by~$\Automaton$
  at some \kl{node} in some \kl{digraph} at time~$\Time$.
  That is,
  $\State \in \StateSubset_{\Time}$
  if and only if
  there exists a \kl{pointed digraph} $\pver{\Digraph}{\Node}$
  such that
  $\Run_{\Time}(\Node) = \State$,
  where $\Run$ is the \kl{run} of $\Automaton$ on $\Digraph$.
  From this point of view,
  the \kl{pointed-digraph language} of $\Automaton$ is nonempty
  precisely if
  there is some $\Time \in \Natural$
  for which
  $\StateSubset_{\Time} \cap \AcceptSet \neq \EmptySet$.

  By definition,
  we have $\StateSubset_0 = \set{\InitState}$.
  Furthermore,
  exploiting the fact that $\Automaton$ is \kl{forgetful},
  we can specify a simple function
  $\BigTransFunc \colon \powerset{\StateSet} \to \powerset{\StateSet}$
  such that
  $\StateSubset_{\Time + 1} = \BigTransFunc(\StateSubset_{\Time})$:
  \begin{equation*}
    \BigTransFunc(\StateSubset[1]) =
    \bigsetbuilder
      {\TransFunc_{\Letter}(\vec{\StateSubset[2]})}
      {\text{
          $\Letter \in \Alphabet$ and
          $\vec{\StateSubset[2]} \in (\powerset{\StateSubset[1]})^\RelCount$
        }
      }
  \end{equation*}
  Obviously,
  $\StateSubset_{\Time + 1} \subseteq \BigTransFunc(\StateSubset_{\Time})$.
  To see that
  $\StateSubset_{\Time + 1} \supseteq \BigTransFunc(\StateSubset_{\Time})$,
  assume we are given a \kl{pointed digraph}
  $\pver{\Digraph_{\State}}{\Node_{\State}}$
  for each state~$\State \in \StateSubset_{\Time}$
  such that
  $\Node_{\State}$ visits $\State$ at time~$\Time$
  in the \kl{run} of~$\Automaton$ on~$\Digraph_{\State}$.
  (Such a \kl{pointed digraph}
  must exist by the definition of $\StateSubset_{\Time}$.)
  Now,
  for any
  $\Letter \in \Alphabet$ and
  $\vec{\StateSubset[2]} =
   \tuple{\StateSubset[2]_1, \dots, \StateSubset[2]_\RelCount}
   \in (\powerset{\StateSubset_{\Time}})^\RelCount$,
  we construct a new \kl{digraph}~$\Digraph$ as follows:
  Starting with a single $\Letter$-\kl{labeled} \kl{node}~$\Node$,
  we add a (disjoint) copy of~$\Digraph_{\State}$
  for each state~$\State$ that occurs in some set~$\StateSubset[2]_k$.
  Then,
  we add a $k$-\kl{edge} from $\Node_{\State}$ to $\Node$
  if and only if $\State \in \StateSubset[2]_k$.
  Each \kl{node}~$\Node_{\State}$ behaves
  the same way in~$\Digraph$ as in~$\Digraph_{\State}$
  because $\Node$ has no influence on its \kl{incoming neighbors}.
  Since $\Automaton$ is \kl{forgetful},
  the state of~$\Node$ at time~$\Time + 1$ depends solely on
  its own label and
  its \kl{incoming neighbors}' \kl{states} at time~$\Time$.
  Consequently,
  $\Node$ visits
  the state~$\TransFunc_{\Letter}(\vec{\StateSubset[2]})$
  at time $\Time + 1$,
  and thus
  $\TransFunc_{\Letter}(\vec{\StateSubset[2]}) \in \StateSubset_{\Time + 1}$.

  Now,
  we know that the sequence
  $\StateSubset_0, \StateSubset_1, \StateSubset_2 \cdots$
  must be eventually periodic
  because its generator function $\BigTransFunc$
  maps the finite set $\powerset{\StateSet}$ to itself.
  Hence,
  it suffices to consider the prefix of length $\card{\powerset{\StateSet}}$
  in order to determine whether
  $\StateSubset_{\Time} \cap \AcceptSet \neq \EmptySet$
  for some $\Time \in \Natural$.
  This leads to the following simple algorithm,
  which decides the \kl{emptiness problem} for \kl{forgetful automata}.
  \begin{align*}
    \textsc{empty}(\Automaton): \quad
    & \StateSubset \gets \set{\InitState} \\
    & \text{\textbf{repeat} at most $\card{\powerset{\StateSet}}$ times}: \\
    & \qquad \StateSubset \gets \BigTransFunc(\StateSubset) \\
    & \qquad \text{\textbf{if} $\StateSubset \cap \AcceptSet \neq \EmptySet$}:\;
             \text{\textbf{return} true} \\
    & \text{\textbf{return} false}
  \end{align*}

  It remains to analyze the space complexity of this algorithm.
  For that,
  we assume that the binary encoding of~$\Automaton$
  given to the algorithm
  contains a lookup table
  for each \kl{transition function}~$\TransFunc_{\Letter}$
  and a bit array representing~$\AcceptSet$,
  which amounts to an asymptotic size of
  $\Theta \bigl(
     \card{\Alphabet} \cdot
     \card{\powerset{\StateSet}}^\RelCount \cdot
     \log\card{\StateSet}
   \bigr)$
  input bits.
  To implement the procedure \textsc{empty},
  we need
  $\card{\StateSet}$ bits of working memory
  to represent the set~$\StateSubset$
  and another $\card{\StateSet}$ bits for the loop counter.
  Furthermore,
  we can compute $\BigTransFunc(\StateSubset[1])$
  for any given set $\StateSubset[1] \subseteq \StateSet$
  by simply iterating over all $\Letter \in \Alphabet$ and
  $\vec{\StateSubset[2]} \in (\powerset{\StateSet})^\RelCount$,
  and adding $\TransFunc_{\Letter}(\vec{\StateSubset[2]})$
  to the returned set
  if all components of~$\vec{\StateSubset[2]}$
  are subsets of~$\StateSubset[1]$.
  This requires
  $\log\card{\Alphabet} + \card{\StateSet} \cdot \RelCount$
  additional bits to keep track of the iteration progress,
  $\Theta \bigl(
     \log\card{\Alphabet} +
     \card{\StateSet} \cdot \RelCount +
     \log\log\card{\StateSet}
   \bigr)$
  bits to store pointers into the lookup tables,
  and $\card{\StateSet}$ bits to store the intermediate result.
  In total,
  the algorithm uses
  $\Theta \bigl(
     \log\card{\Alphabet} + \card{\StateSet} \cdot \RelCount
   \bigr)$
  bits of working memory,
  which is logarithmic in the size of the input.
\end{proof}

\section{Exchanging space and time}
\label{sec:space-time}

In this \lcnamecref{sec:space-time},
we first show the undecidability of
the \kl{dipath-emptiness problem} for arbitrary \kl{distributed automata},
and then lift that result to the general \kl{emptiness problem}.

\begin{theorem}
  \label{thm:dipath-emptiness}
  The \kl{dipath-emptiness problem} for \kl{distributed automata} is undecidable.
\end{theorem}
\begin{proof}[Proof sketch]
  We proceed by reduction from the halting problem for Turing machines.
  For our purposes,
  a Turing machine operates deterministically with one head on a single tape,
  which is one-way infinite to the right and initially empty.
  The problem consists of determining
  whether the machine will eventually reach a designated halting state.
  We show a way of encoding the computation of a Turing machine~$\Machine$
  into the \kl{run} of a \kl{distributed automaton}~$\Automaton$
  over \kl[labeled]{unlabeled} \kl{digraphs},
  such that the \kl[pointed-digraph language]{language} of $\Automaton$ contains a \kl{pointed dipath}
  if and only if~$\Machine$ reaches its halting~state.

\begin{figure}[tbp]
  \centering
  \widefloat{\begin{tikzpicture}[
    line width=0.7pt,
    every matrix/.style = {matrix of math nodes, nodes in empty cells, row sep={4.55ex,between origins}, inner sep=0ex,
                           nodes={draw,minimum size=3ex,inner sep=0ex,anchor=center,font=\boldmath}},
    b/.style  = {fill=black},               
    g/.style  = {fill=lightgray},           
    hg/.style = {draw=darkgray},            
    gg/.style = {g, hg, line width=0.33ex}, 
    G/.style  = {g, hg, line width=0.41ex}, 
    ww/.style = {hg, line width=0.33ex},    
    W/.style  = {hg, line width=0.41ex},    
    plain/.style = {draw opacity=0, font=\unboldmath}
  ]
  \matrix[column sep={3ex,between origins}] (machine) {
    0        &          &          &        &        \\
    |[g]|    & 1        &          &        &        \\
    |[gg]|   & |[gg]|   & |[ww]| 2 & |[ww]| & |[ww]| \\
    |[g]|    & |[g]| 1  & |[g]|    &        &        \\
    |[g]| 0  & |[g]|    & |[g]|    &        &        \\
             & |[g]| 3  & |[g]|    &        &        \\
  };
  \matrix[right=0.4ex of machine] (machine-dots) {
    |[plain]|\cdots \\ |[plain]|\cdots \\ |[plain]|\cdots \\ |[plain]|\cdots \\ |[plain]|\cdots \\ |[plain]|\cdots \\
  };
  \matrix[right=6ex of machine-dots, column sep={4.45ex,between origins}, nodes={circle}] (automaton) {
    |[b]| & 0       &       &       &       &       &       &         &         &         &       &       &        \\
    |[b]| & |[b]|   & |[b]| & |[g]| & 1     &       &       &         &         &         &       &       &        \\
    |[b]| & |[b]|   & |[b]| & |[b]| & |[b]| & |[G]| & |[G]| & |[W]| 2 & |[W]|   & |[W]|   &       &       &        \\
    |[b]| & |[b]|   & |[b]| & |[b]| & |[b]| & |[b]| & |[b]| & |[g]|   & |[g]| 1 & |[g]|   &       &       &        \\
    |[b]| & |[b]|   & |[b]| & |[b]| & |[b]| & |[b]| & |[b]| & |[b]|   & |[b]|   & |[g]| 0 & |[g]| & |[g]| &        \\
    |[b]| & |[b]|   & |[b]| & |[b]| & |[b]| & |[b]| & |[b]| & |[b]|   & |[b]|   & |[b]|   & |[b]| &       & |[g]|3 \\
  };
  \matrix[right=1ex of automaton] (automaton-dots) {
    |[plain]|\cdots \\ |[plain]|\cdots \\ |[plain]|\cdots \\ |[plain]|\cdots \\ |[plain]|\cdots \\ |[plain]|\cdots \\
  };
  \foreach \x [evaluate = \x as \xplusone using int(\x+1)] in {1,...,5}
    \foreach \y in {1,...,13}
      \draw[->] (automaton-\x-\y) -- (automaton-\xplusone-\y);
  \node[above=2.5ex of machine.north west,anchor=mid west] (machine-space) {\small{space}};
  \draw[->] ([xshift=0.5ex]machine-space.mid east) -- ++(5ex,0);
  \node[overlay,left=2.5ex of machine.north west,anchor=mid west,rotate=-90] (machine-time) {\small{time}};
  \draw[overlay,->] ([yshift=-0.5ex]machine-time.mid east) -- ++(0,-6ex);
  \node[above=2.5ex of automaton.north west,anchor=mid west] (automaton-time) {\small{time}};
  \draw[->] ([xshift=0.5ex]automaton-time.mid east) -- ++(5ex,0);
  \node[left=2.5ex of automaton.north west,anchor=mid west,rotate=-90] (automaton-space) {\small{space}};
  \draw[->] ([yshift=-0.5ex]automaton-space.mid east) -- ++(0,-5ex);
  \node[below=3ex of machine,anchor=mid] {\small{\emph{Turing machine}}};
  \node[below=3ex of automaton,anchor=mid] {\small{\emph{Distributed automaton}}};
\end{tikzpicture}}
  \caption{
    Exchanging space and time to prove Theorem~\ref{thm:dipath-emptiness}.
    The left-hand side depicts the computation of a
    Turing machine with state set $\set{\mathbold{0},\mathbold{1},\mathbold{2},\mathbold{3}}$
    and tape alphabet
    $\set{
      \protect\tikz{\protect\node[minimum width=2ex]{};\protect\node[minimum width=1ex,draw]{};},
      \protect\tikz{\protect\node[minimum width=2ex]{};\protect\node[minimum width=1ex,draw,fill=lightgray]{};}
    }$.
    On the right-hand side,
    this machine is simulated by a \kl{distributed automaton} \kl{run} on a \kl{dipath}.
    Waiting \kl{nodes} are represented in black,
    whereas active \kl{nodes} display the content of the “currently visited” cell of the Turing machine
    (i.e., only the third component of the \kl{states} is shown).
  }
  \label{fig:exchange-space-time}
\end{figure}

  Note that since \kl{dipaths} are oriented,
  the communication between their \kl{nodes} is only one-way.
  Hence,
  we cannot simply represent (a section of) the Turing tape as a \kl{dipath}.
  \marginnote{
    It turns out that
    this corresponds to a well-known construction
    in cellular automata theory;
    see \cref{sec:cellular-automata}.
  }
  Instead,
  the key idea of our simulation is to exchange the roles of space and time,
  in the sense that
  the space of $\Machine$ is encoded into the time of $\Automaton$,
  and the time of $\Machine$ into the space of $\Automaton$.
  Assuming the \kl[pointed-digraph language]{language} of~$\Automaton$ contains a \kl{dipath},
  we will think of that \kl{dipath} as representing the timeline of $\Machine$,
  such that
  each \kl{node} corresponds to
  a single point in time in the computation of $\Machine$.
  Roughly speaking,
  when \kl[run]{running}~$\Automaton$,
  the \kl{node}~$\Node_{\Time}$ corresponding to time~$\Time$ will “traverse”
  the configuration~$\Config_{\Time}$ of $\Machine$ at time~$\Time$.
  Here, “traversing” means that
  the sequence of \kl{states} of $\Automaton$ visited by $\Node_{\Time}$
  is an encoding of $\Config_{\Time}$ read from left to right,
  supplemented with some additional bookkeeping information.

  The first element of the \kl{dipath}, \kl{node}~$\Node_0$,
  starts by visiting a \kl{state} of $\Automaton$ representing
  an empty cell that is currently read by $\Machine$ in its initial state.
  Then it transitions to another \kl{state} that simply represents an empty cell,
  and remains in such a \kl{state} forever after.
  Thus $\Node_0$ does indeed “traverse” $\Config_0$.
  We will show that
  it is also possible for any other \kl{node}~$\Node_{\Time}$
  to “traverse” its corresponding configuration~$\Config_{\Time}$,
  based on the information it receives from $\Node_{\Time-1}$.
  In order for this to work,
  we shall give $\Node_{\Time-1}$ a head start of two cells,
  so that $\Node_{\Time}$ can compute the content of cell $\Cell$ in $\Config_{\Time}$
  based on the contents of cells $\Cell-1$,\, $\Cell$ and $\Cell+1$ in $\Config_{\Time-1}$.

  \kl{Node}~$\Node_{\Time}$ enters an \kl{accepting state} of $\Automaton$ precisely if
  it “sees” the halting state of $\Machine$ during its “traversal” of $\Config_{\Time}$.
  Hence,
  $\Automaton$ \kl{accepts} the \kl{pointed dipath} of length~$\Time$
  if and only if $\Machine$ reaches its halting state at time~$\Time$.

  We now describe the inner workings of $\Automaton$ in a semi-formal way.
  In parallel,
  the reader might want to have a look at Figure~\ref{fig:exchange-space-time},
  which illustrates the construction by means of an example.
  Let $\Machine$ be represented by the tuple
  $\tuple{\MStateSet,\MSymbolSet,\MInitState,\MBlank,\MTransFunc,\MHaltState}$,
  where
  $\MStateSet$ is the set of states,
  $\MSymbolSet$ is the tape alphabet,
  $\MInitState$ is the initial state,
  $\MBlank$ is the blank symbol,
  $\MTransFunc\colon
   (\MStateSet\setminus\set{\MHaltState}) \times \MSymbolSet \to
   \MStateSet \times \MSymbolSet \times \set{\LEFT,\RIGHT}$
  is the transition function, and
  $\MHaltState$ is the halting state.
  From this,
  we construct $\Automaton$ as $\tuple{\StateSet,\InitState,\TransFunc,\AcceptSet}$,
  with
  the \kl{state} set
  $\StateSet = (\set{\Waiting} \,\cup\, (\MStateSet\times\MSymbolSet) \,\cup\, \MSymbolSet)^3$,
  the initial \kl{state}
  $\InitState = \tuple{\Waiting,\Waiting,\Waiting}$,
  the \kl{transition function} $\TransFunc$ specified informally below, and
  the \kl[accepting state]{accepting} set $\AcceptSet$ that contains precisely those \kl{states}
  that have $\MHaltState$ in their third component.
  In keeping with the intuition that each \kl{node} of the \kl{dipath}
  “traverses” a configuration of $\Machine$,
  the third component of its \kl{state} indicates
  the content of the “currently visited” cell $\Cell$.
  The two preceding components keep track of the recent history,
  i.e.,
  the second component always holds the content of the previous cell $\Cell-1$,
  and the first component that of $\Cell-2$.
  In the following explanation,
  we concentrate on updating the third component,
  tacitly assuming that the other two are kept up to date.
  The special symbol $\Waiting$ indicates that no cell has been “visited”,
  and we say that a \kl{node} is in the waiting phase
  while its third component is~$\Waiting$.

  In the first round,
  $\Node_0$ sees that it does not have any \kl{incoming neighbor},
  and thus exits the waiting phase
  by setting its third component to $\tuple{\MInitState,\MBlank}$,
  and after that, it sets it to $\MBlank$ for the remainder of the \kl{run}.
  Every other \kl{node} $\Node_{\Time}$ remains in the waiting phase
  as long as its \kl{incoming neighbor}'s second component is $\Waiting$.
  This ensures a delay of two cells with respect to $\Node_{\Time-1}$.
  Once $\Node_{\Time}$ becomes active,
  given the \emph{current} \kl{state} $\tuple{c_1,c_2,c_3}$ of $\Node_{\Time-1}$,
  it computes the third component $d_3$ of its own \emph{next} \kl{state} $\tuple{d_1,d_2,d_3}$
  as follows:
  If none of the components $c_1$, $c_2$, $c_3$ “contain the head of $\Machine$”,
  i.e., if none of them lie in $\MStateSet\times\MSymbolSet$,
  then it simply sets~$d_3$ to be equal to~$c_2$.
  Otherwise,
  a computation step of $\Machine$ is simulated in the natural way.
  For instance,
  if $c_3$ is of the form $\tuple{\MState,\MSymbol}$,
  and $\MTransFunc(\MState,\MSymbol) = \tuple{\MState',\MSymbol',\LEFT}$,
  then $d_3$ is set to $\tuple{\MState',c_2}$.
  This corresponds to the case where, at time $\Time-1$,
  the head of $\Machine$ is located to the right of $\Node_{\Time}$'s next “position”
  and moves to the left.
  As another example,
  if $c_2$ is of the form $\tuple{\MState,\MSymbol}$,
  and $\MTransFunc(\MState,\MSymbol) = \tuple{\MState',\MSymbol',\RIGHT}$,
  then $d_3$ is set to $\MSymbol'$.
  The remaining cases are handled analogously.

  Note that, thanks to the two-cell delay between \kl{adjacent} \kl{nodes},
  the head of $\Machine$ always “moves forward” in the time of $\Automaton$,
  although it may move in both directions with respect to the space of $\Machine$
  (see Figure~\ref{fig:exchange-space-time}).
\end{proof}

To infer from Theorem~\ref{thm:dipath-emptiness}
that the general \kl{emptiness problem} for \kl{distributed automata}
is also undecidable,
we now introduce the notion of
\Intro{monovisioned automata},
which have the property
that \kl{nodes} “expect” to see
no more than one \kl{state} in their \kl{incoming neighborhood} at any given time.
More precisely,
a \kl{distributed automaton}
$\Automaton = \tuple{\StateSet,\InitFunc,\TransFunc,\AcceptSet}$
is \kl{monovisioned} if it has a rejecting \kl{sink} \kl{state}
$\RejectState \in \StateSet \setminus \AcceptSet$,
such that
$\TransFunc(\State,\NeighborSet) = \RejectState$
whenever $\card{\NeighborSet} > 1$
or $\RejectState \in \NeighborSet$
or $\State = \RejectState$,
for all $\State \in \StateSet$ and $\NeighborSet \subseteq \StateSet$.
Obviously,
for every \kl{distributed automaton},
we can construct a \kl{monovisioned automaton}
that has the same \kl{acceptance behavior} on \kl{dipaths}.
Furthermore,
as shown by means of the next two lemmas,
the \kl{emptiness problem} for \kl{monovisioned automata}
is equivalent to its restriction to \kl{dipaths}.
All put together,
we get the desired reduction
from the \kl{dipath-emptiness problem} to the general \kl{emptiness problem}.

\begin{lemma}
  \label{lem:tree-model-property}
  The \kl[pointed-digraph language]{language} of a \kl{distributed automaton} is nonempty
  if and only if it contains a \kl{pointed ditree}.
\end{lemma}
\begin{proof}[Proof sketch]
  We slightly adapt the notion of \emph{tree-unraveling},
  which is a standard tool in modal logic
  (see, e.g., \cite[Def.~4.51]{BlackburnRV02} or \cite[\S~3.2]{BlackburnB07}).
  Consider any \kl{distributed automaton}~$\Automaton$.
  Assume that $\Automaton$ \kl{accepts} some \kl{pointed digraph}
  $\pver{\Digraph}{\Node[2]}$,
  and let~$\Time \in \Natural$ be the first point in time
  at which $\Node$ visits an \kl{accepting state}.
  Based on that,
  we can easily construct a \kl{pointed ditree}
  $\pver{\Digraph'}{\Node[2]'}$
  that is also \kl{accepted} by~$\Automaton$.
  First of all,
  the \kl{root}~$\Node[2]'$ of~$\Digraph'$
  is chosen to be a copy of~$\Node[2]$.
  On the next level of the \kl{ditree},
  the \kl{incoming neighbors} of~$\Node[2]'$
  are chosen to be fresh copies
  $\Node[1]'_1,\dots,\Node[1]'_n$
  of $\Node[2]$'s \kl{incoming neighbors}
  $\Node[1]_1,\dots,\Node[1]_n$.
  Similarly,
  the \kl{incoming neighbors} of
  $\Node[1]'_1,\dots,\Node[1]'_n$
  are fresh copies of the \kl{incoming neighbors} of
  $\Node[1]_1,\dots,\Node[1]_n$.
  If $\Node[1]_i$ and $\Node[1]_j$
  have \kl{incoming neighbors} in common,
  we create distinct copies
  of those \kl[incoming neighbors]{neighbors}
  for $\Node[1]_i'$ and $\Node[1]_j'$.
  This process is iterated until we obtain
  a \kl{ditree} of height~$\Time$.
  It is easy to check
  that~$\Node$ and~$\Node'$ visit
  the same sequence of \kl{states}
  $\State_0, \State_1, \dots, \State_{\Time}$
  during the first~$\Time$ communication rounds.
\end{proof}

\begin{samepage}
  \begin{lemma}
    \label{lem:dipath-model-property}
    The \kl[pointed-digraph language]{language} of a \kl{monovisioned distributed automaton} is nonempty
    if and only if it contains a \kl{pointed dipath}.
  \end{lemma}
\end{samepage}
\begin{proof}[Proof sketch]
  Consider any \kl{monovisioned distributed automaton}~$\Automaton$
  whose \kl[pointed-digraph language]{language} is nonempty.
  By Lemma~\ref{lem:tree-model-property},
  $\Automaton$ \kl{accepts} some \kl{pointed ditree} $\pver{\Digraph}{\Node[2]}$.
  Let~$\Time \in \Natural$ be the first point in time
  at which $\Node[2]$ visits an \kl{accepting state}.
  Now,
  it is easy to prove by induction that
  for all $i \in \set{0, \dots, \Time}$,
  sibling \kl{nodes} at depth~$i$
  traverse the same sequence of \kl{states}
  $\State_0, \State_1, \dots, \State_{\Time - i}$
  between times~$0$ and~$\Time - i$,
  and this sequence does not contain the rejecting \kl{state}~$\RejectState$.
  Thus,
  $\Automaton$ also \kl{accepts}
  any \kl{dipath} from some \kl{node} at depth~$\Time$ to the \kl{root}.
\end{proof}

\section{Timing a firework show}
\label{sec:fireworks}

We now show
that the \kl{emptiness problem} is undecidable
even for \kl{quasi-acyclic automata}.
This also provides an alternative, but more involved
undecidability proof for the general case.
Notice that our proof of Theorem~\ref{thm:dipath-emptiness}
does not go through
if we consider only \kl{quasi-acyclic automata}.

It is straightforward to see that
\kl{quasi-acyclicity} is preserved under a standard product construction,
similar to the one employed for finite automata on \kl[pointed dipaths]{words}.
Hence, we have the following closure property,
which will be used in the subsequent undecidability proof.

\begin{lemma}
  \label{lem:closure-quasi-acyclic}
  The class of \kl[pointed-digraph language]{languages}
  \kl{recognizable} by \kl{quasi-acyclic distributed automata}
  is closed under union and intersection.
\end{lemma}

\begin{theorem}
  \label{thm:emptiness-quasi-acyclic}
  The \kl{emptiness problem} for \kl{quasi-acyclic distributed automata} is undecidable.
\end{theorem}
\begin{proof}[Proof sketch]
  We show this by reduction from Post's correspondence problem ($\PCP$).
  An instance $\Instance$ of $\PCP$ consists of
  a collection of pairs of nonempty finite words
  $\tuple{\UpperWord_\Index,\LowerWord_\Index}_{\Index\in\IndexSet}$
  over the alphabet $\set{0,1}$,
  indexed by some finite set of integers $\IndexSet$.
  It is convenient to view each pair
  $\tuple{\UpperWord_\Index,\LowerWord_\Index}$
  as a domino tile
  labeled with $\UpperWord_\Index$ on the upper half
  and $\LowerWord_\Index$ on the lower half.
  The problem is to decide if there exists a nonempty sequence
  $\Solution = \tuple{\Index_1, \dots, \Index_\SolutionSize}$
  of indices in~$\IndexSet$,
  such that the concatenations
  $\UpperWord_\Solution = \UpperWord_{\Index_1} \cdots\, \UpperWord_{\Index_\SolutionSize}$ and
  $\LowerWord_\Solution = \LowerWord_{\Index_1} \cdots\, \LowerWord_{\Index_\SolutionSize}$
  are equal.
  We construct a \kl{quasi-acyclic automaton} $\Automaton$
  whose \kl[pointed-digraph language]{language} is nonempty if and only if
  $\Instance$ has such a solution~$\Solution$.

  Metaphorically speaking,
  our construction can be thought of as a perfectly timed “firework show”,
  whose only “spectator” will see a putative solution
  $\Solution = \tuple{\Index_1, \dots, \Index_\SolutionSize}$,
  and be able to check whether it is indeed a valid solution of $\Instance$.
  Our “spectator” is the \kl{distinguished node}~$\Root$ of the \kl{pointed digraph}
  on which $\Automaton$ is \kl{run}.
  We assume that $\Root$ has $\SolutionSize$ \kl{incoming neighbors},
  one for each element of $\Solution$.
  Let $\Node_\IIndex$ denote
  the \kl[incoming neighbor]{neighbor} corresponding to $\Index_\IIndex$,
  for $1 \leq \IIndex \leq \SolutionSize$.
  Similarly to our proof of Theorem~\ref{thm:dipath-emptiness},
  we use the time of $\Automaton$ to represent
  the spatial dimension of the words $\UpperWord_\Solution$ and $\LowerWord_\Solution$.
  On an intuitive level,
  $\Root$ will “witness” simultaneous left-to-right traversals of
  $\UpperWord_\Solution$ and $\LowerWord_\Solution$,
  advancing by one bit per time step,
  and it will check that the two words match.
  It is the task of each \kl{node} $\Node_\IIndex$
  to send to $\Root$ the required bits of
  the subwords $\UpperWord_{\Index_\IIndex}$ and $\LowerWord_{\Index_\IIndex}$
  at the appropriate times.
  In keeping with the metaphor of fireworks,
  the correct timing can be achieved by attaching to $\Node_\IIndex$
  a carefully chosen “fuse”,
  which is “lit” at time~$0$.
  Two separate “fire” signals will travel at different speeds
  along this (admittedly sophisticated) “fuse”,
  and once they reach $\Node_\IIndex$,
  they trigger the “firing” of
  $\UpperWord_{\Index_\IIndex}$ and $\LowerWord_{\Index_\IIndex}$, respectively.

  We now go into more details.
  Using the \kl{labeling} of the input \kl[pointed digraph]{graph},
  the \kl[distributed automaton]{automaton} $\Automaton$ distinguishes between
  $2\InstanceSize+1$ different types of \kl{nodes}:
  two types $\Index$ and $\Index'$ for each index $\Index\in\IndexSet$,
  and one additional type $\Spectator$ to identify the “spectator”.
  Motivated by Lemma~\ref{lem:tree-model-property},
  we suppose that the input \kl[pointed digraph]{graph} is a \kl{pointed ditree},
  with a very specific shape that encodes a putative solution
  $\Solution = \tuple{\Index_1, \dots, \Index_\SolutionSize}$.
  An example illustrating the following description
  of such a \kl{ditree}-encoding
  is given in Figure~\ref{fig:firework-show}.
  Although $\Automaton$ is not able to enforce
  all aspects of this particular shape,
  we will make sure that it \kl{accepts} such a \kl{structure}
  if its \kl[pointed-digraph language]{language} is nonempty.
  The \kl{root} (and \kl{distinguished node}) $\Root$
  is the only \kl{node} of type $\Spectator$.
  Its children $\Node_1, \dots, \Node_\SolutionSize$
  are of types $\Index_1, \dots, \Index_\SolutionSize$,
  respectively.
  The “fuse” attached to each child~$\Node_\IIndex$
  is a chain of $\IIndex-1$ \kl{nodes}
  that represents the multiset of indices occurring in
  the $(\IIndex-1)$-prefix of $\Solution$.
  More precisely,
  there is an induced \kl{dipath}
  $\Node_{\IIndex,1} \rightarrow \cdots\, \Node_{\IIndex,\IIndex-1} \rightarrow \Node_\IIndex$,
  such that the multiset of types of the \kl{nodes}
  $\Node_{\IIndex,1}, \dots, \Node_{\IIndex,\IIndex-1}$
  is equal to the multiset of indices occurring in
  $\tuple{\Index_1, \dots, \Index_{\IIndex-1}}$.
  We do not impose any particular order on those \kl{nodes}.
  Finally,
  each \kl{node} of type $\Index \in \IndexSet$
  also has an incoming chain of \kl{nodes} of type $\Index'$
  \mbox{(depicted in gray in Figure~\ref{fig:firework-show})},
  whose length corresponds exactly to
  the product of the types occurring on the part of the “fuse”
  below that \kl{node}.
  That is,
  if we define the alias $\Node_{\IIndex,\IIndex} \defeq \Node_\IIndex$,
  then for every \kl{node} $\Node_{\IIndex,\IIIndex}$ of type $\Index \in \IndexSet$,
  there is an induced \kl{dipath}
  $\Node_{\IIndex,\IIIndex,1} \rightarrow \cdots\,
   \Node_{\IIndex,\IIIndex,\PrimeProduct} \rightarrow
   \Node_{\IIndex,\IIIndex}$,
  where all the \kl{nodes}
  $\Node_{\IIndex,\IIIndex,1},\dots,\Node_{\IIndex,\IIIndex,\PrimeProduct}$
  are of type $\Index'$,
  and the number $\PrimeProduct$ is equal to the product of the types
  of the \kl{nodes} $\Node_{\IIndex,1},\dots,\Node_{\IIndex,\IIIndex-1}$
  (which is $1$ if $\IIIndex = 1$).
  We shall refer to such a chain
  $\Node_{\IIndex,\IIIndex,1},\dots,\Node_{\IIndex,\IIIndex,\PrimeProduct}$
  as a “side fuse”.

  The \kl[distributed automaton]{automaton} $\Automaton$ has to perform two tasks simultaneously:
  First,
  assuming it is \kl{run} on a \kl{ditree}-encoding of a sequence $\Solution$,
  exactly as specified above,
  it must verify that $\Solution$ is a valid solution,
  i.e., that the words $\UpperWord_\Solution$ and $\LowerWord_\Solution$ match.
  Second,
  it must ensure that the input \kl[pointed digraph]{graph} is indeed
  sufficiently similar to such a \kl{ditree}-encoding.
  In particular,
  it has to check that
  the “fuses” used for the first task are consistent with each other.
  Since, by Lemma~\ref{lem:closure-quasi-acyclic},
  \kl{quasi-acyclic distributed automata} are closed under intersection,
  we can consider the two tasks separately,
  and implement them using two independent
  \kl[distributed automata]{automata} $\Automaton_1$ and~$\Automaton_2$.
  In the following,
  we describe both devices in a rather informal manner.
  The important aspect to note is
  that they can be easily formalized using \kl{quasi-acyclic} \kl{state} diagrams.

  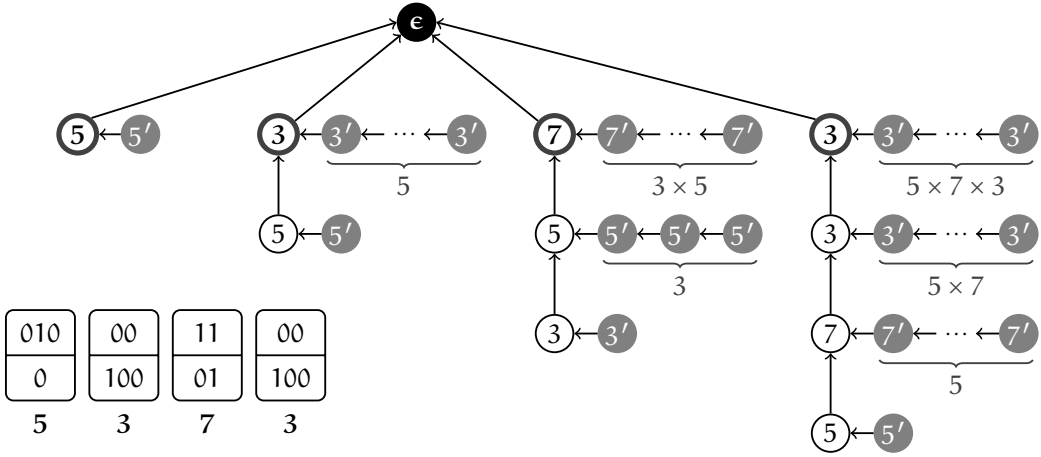
\begin{figure}[tbp]
    \centering
    \widefloat{\begin{tikzpicture}[
    line width=0.7pt,
    edge from parent/.style = {draw, <-},
    level 1/.style = {level distance=9ex},
    level 2/.style = {level distance=8ex},
    sibling distance = 22ex,
    every node/.style = {inner sep=0ex},
    nodraw/.style = {draw=none, fill=none, text=black},
    bold/.style = {draw=darkgray, line width=0.41ex, font=\boldmath},
    main node/.style = {draw, circle, inner sep=0ex, minimum size=3ex},
    root node/.style = {main node, fill=black, text=white, font=\boldmath},
    side node/.style = {main node, draw=gray, fill=gray, text=white},
    side line/.style = {grow=right, every node/.style={side node}, every child/.style={}, level distance=5ex},
    brace/.style = {decorate, decoration={brace,mirror,raise=1ex}, darkgray},
    brace label/.style = {minimum size=0ex, below=2.2ex},
    domino/.style = {draw, rectangle split, rectangle split parts=2, rounded corners=0.6ex, inner sep=1ex, minimum width=5.5ex}
  ]
  \path[every node/.style=main node] node[root node] (root) {$\epsilon$}
     child[xshift=6ex] {node[bold] {$5$} [child anchor=50,parent anchor=west,
                                          every child/.style={grow=down,child anchor=border,parent anchor=border}]
       child[side line,parent anchor=border] {node {$5'$}}
     }
     child {node[bold] {$3$} [every child/.style={grow=down}]
       child {node {$5$}
         child[side line] {node {$5'$}}
       }
       child[side line] {node {$3'$} child {node[nodraw] {\footnotesize\,\dots} child {node {$3'$}}}}
     }
     child {node[bold] {$7$} [every child/.style={grow=down}]
       child {node {$5$}
         child {node {$3$}
           child[side line] {node {$3'$}}
         }
         child[side line] {node {$5'$} child {node {$5'$} child {node {$5'$}}}}
       }
       child[side line] {node {$7'$} child {node[nodraw] {\footnotesize\,\dots} child {node {$7'$}}}}
     }
     child {node[bold] {$3$} [child anchor=north west,parent anchor=east,
                              every child/.style={grow=down,child anchor=border,parent anchor=border}]
       child {node {$3$}
         child {node {$7$}
           child {node {$5$}
             child[side line] {node {$5'$}}
           }
           child[side line] {node {$7'$} child {node[nodraw] {\footnotesize\,\dots} child {node {$7'$}}}}
         }
         child[side line] {node {$3'$} child {node[nodraw] {\footnotesize\,\dots} child {node {$3'$}}}}
       }
       child[side line] {node {$3'$} child {node[nodraw] {\footnotesize\,\dots} child {node {$3'$}}}}
     }
  ;
  \draw[brace] (root-2-2.south west) -- node[brace label] {$5$} (root-2-2-1-1.south east);
  \draw[brace] (root-3-2.south west) -- node[brace label] {$3 \times 5$} (root-3-2-1-1.south east);
  \draw[brace] (root-3-1-2.south west) -- node[brace label] {$3$} (root-3-1-2-1-1.south east);
  \draw[brace] (root-4-2.south west) -- node[brace label] {$5 \times 7 \times 3$} (root-4-2-1-1.south east);
  \draw[brace] (root-4-1-2.south west) -- node[brace label] {$5 \times 7$} (root-4-1-2-1-1.south east);
  \draw[brace] (root-4-1-1-2.south west) -- node[brace label] {$5$} (root-4-1-1-2-1-1.south east);
  \matrix[xshift=-20ex,yshift=-28ex,row sep=1ex,column sep=1ex,matrix of nodes] {
    \node[domino] {$010$ \nodepart{two} $0$};  &
    \node[domino] {$00$ \nodepart{two} $100$}; &
    \node[domino] {$11$ \nodepart{two} $01$};  &
    \node[domino] {$00$ \nodepart{two} $100$}; \\
    $\mathbold{5}$ & $\mathbold{3}$ & $\mathbold{7}$ & $\mathbold{3}$ \\
  };
\end{tikzpicture}}
    \caption{
      Timing a “firework show” to prove Theorem~\ref{thm:emptiness-quasi-acyclic}.
      The domino tiles on the bottom-left visualize the solution $\tuple{5,3,7,3}$
      for the instance
      $\set{3 {\,\mapsto\,} \tuple{00,100},\,
        5 {\,\mapsto\,} \tuple{010,0},\,
        7 {\,\mapsto\,} \tuple{11,01}}$
      of $\PCP$.
      This solution is encoded into the \kl{labeled} \kl{ditree} above,
      with node types $\Spectator$, $3$, $5$, $7$, $3'$, $5'$, $7'$.
      Each domino is represented by a bold-highlighted white \kl{node} of the appropriate type.
      The “fuse” of such a bold \kl{node} consists of
      the chain of white \kl{nodes} below it,
      which lists the indices of the preceding dominos in an arbitrary order.
      Each white \kl{node} also has a gray “side fuse”
      whose length is equal to the product of the white types occurring below that \kl{node}.
      The “firework show” observed at the \kl{root} will feature two simultaneous bitstreams,
      which both represent the sequence $010001100$.
    }
    \label{fig:firework-show}
  \end{figure}
  
  We start with $\Automaton_1$,
  which verifies the solution $\Solution$.
  It takes into account
  only \kl{nodes} with types in $\IndexSet \cup \set{\Spectator}$
  (thus ignoring the gray \kl{nodes} in Figure~\ref{fig:firework-show}).
  At \kl{nodes} of type $\Index \in \IndexSet$,
  the \kl{states} of $\Automaton_1$ have two components,
  associated with the upper and lower halves of the
  domino~$\tuple{\UpperWord_\Index,\LowerWord_\Index}$.
  If a \kl{node} of type $\Index$ sees
  that it does not have any \kl{incoming neighbor},
  then the upper and lower components of its \kl{state}
  immediately start traversing sequences of substates
  representing the bits of $\UpperWord_\Index$ and $\LowerWord_\Index$,
  respectively.
  Since those substates must keep track of
  the respective positions within $\UpperWord_\Index$ and $\LowerWord_\Index$,
  none of them can be visited twice.
  After that,
  both components loop forever on a special substate $\Finished$,
  which indicates the end of transmission.
  The other \kl{nodes} of type $\Index$
  keep each of their two components in a waiting status,
  indicated by another substate~$\Waiting$,
  until the corresponding component of their \kl{incoming neighbor}
  reaches its last substate before~$\Finished$.
  This constitutes the aforementioned “fire” signal.
  Thereupon,
  they start traversing the same sequences of substates as in the previous case.
  Note that both components are updated independently of each other,
  hence there can be an arbitrary time lag between
  the “traversals” of $\UpperWord_\Index$ and $\LowerWord_\Index$.
  Now, assuming the “fuse” of each \kl{node} $\Node_\IIndex$
  really encodes the multiset of indices occurring in
  $\tuple{\Index_1, \dots, \Index_{\IIndex-1}}$,
  the delay accumulated along that “fuse” will be such that
  $\Node_\IIndex$ starts “traversing”
  $\UpperWord_{\Index_\IIndex}$ and $\LowerWord_{\Index_\IIndex}$
  at the points in time corresponding to their respective starting positions
  within $\UpperWord_\Solution$ and $\LowerWord_\Solution$.
  That is,
  for~$\UpperWord_{\Index_\IIndex}$ it starts at time
  $\length{\UpperWord_{\Index_1} \cdots\, \UpperWord_{\Index_{\IIndex-1}}} + 1$,
  and for~$\LowerWord_{\Index_\IIndex}$ at time
  $\length{\LowerWord_{\Index_1} \cdots\, \LowerWord_{\Index_{\IIndex-1}}} + 1$.
  Consequently,
  in each round
  $\Time \leq \min\set{\card{\UpperWord_\Solution}, \card{\LowerWord_\Solution}}$,
  the \kl{root} $\Root$ receives the $\Time$-th bits of
  $\UpperWord_\Solution$ and $\LowerWord_\Solution$.
  At most two distinct children send bits at the same time,
  while the others remain
  in some \kl{state} $\State \in \set{\Waiting,\Finished}^2$.
  With this,
  the behavior of $\Automaton_1$ at $\Root$ is straightforward:
  It enters its only \kl{accepting state} precisely if
  all of its children
  have reached the \kl{state}~$\tuple{\Finished,\Finished}$
  and it has never seen any mismatch between the upper and lower bits.

  We now turn to $\Automaton_2$,
  whose job is to verify that
  the “fuses” used by $\Automaton_1$ are reliable.
  Just like $\Automaton_1$,
  it works under the assumption
  that the input \kl{digraph} is a \kl{ditree} as specified previously,
  but with significantly reduced guarantees:
  The \kl{root} could now have an arbitrary number of children,
  the “fuses” and “side fuses” could be of arbitrary lengths,
  and each “fuse” could represent an arbitrary multiset of indices in $\IndexSet$.
  Again using an approach reminiscent of fireworks,
  we devise a protocol in which
  each child~$\Node$ will send
  two distinct signals to the \kl{root}~$\Root$.
  The first signal $\Signal_1$ indicates that
  the current time~$\Time$ is equal to the product of
  the types of all the \kl{nodes} on $\Node$'s “fuse”.
  Similarly,
  the second signal $\Signal_2$ indicates that
  the current time is equal to that same product
  multiplied by $\Node$'s own type.
  To achieve this,
  we make use of the “side fuses”,
  along which two additional signals
  $\SideSignal_1$ and $\SideSignal_2$ are propagated.
  For each \kl{node} of type $\Index \in \IndexSet$,
  the \kl{nodes} of type $\Index'$ on the corresponding “side fuse”
  operate in a way such that
  $\SideSignal_1$ advances by one \kl{node} per time step,
  whereas $\SideSignal_2$ is delayed by $\Index$ time units at every \kl{node}.
  Hence,
  $\SideSignal_1$ travels $\Index$ times faster than~$\SideSignal_2$.
  Building on that,
  each \kl{node} $\Node$ of type~$\Index$
  (not necessarily a child of the \kl{root})
  sends $\Signal_1$ to its parent,
  either at time~$1$, if it does not have any predecessor on the “fuse”,
  or one time unit before receiving $\Signal_2$ from its predecessor.
  The latter is possible,
  because the predecessor also sends a pre-signal~$\PreSignal_2$
  before sending $\Signal_2$.
  Then,
  $\Node$ checks that signal $\SideSignal_1$ from its “side fuse”
  arrives exactly at the same time as
  $\Signal_2$ from its predecessor,
  or at time $1$ if there is no predecessor.
  Otherwise,
  it immediately enters a rejecting \kl{state}.
  This will guarantee, by induction, that the length of the “side fuse”
  is equal to the product of the types on the “fuse” below.
  Finally,
  two rounds prior to receiving $\SideSignal_2$,
  while that signal is still being delayed
  by the last \kl{node} on the “side fuse”,
  $\Node$ first sends the pre-signal $\PreSignal_2$\!,
  and then the signal $\Signal_2$ in the following round.
  For this to work,
  we assume that
  each \kl{node} on the “side fuse” waits for at least two rounds
  between receiving $\SideSignal_2$ from its predecessor
  and forwarding the signal to its successor,
  i.e., all indices in $\IndexSet$ must be strictly greater than~$2$.
  Due to the delay accumulated by $\SideSignal_2$ along the “side fuse”,
  the time at which $\Signal_2$ is sent corresponds precisely to
  the length of the “side fuse” multiplied by $\Index$.

  Without loss of generality,
  we require that
  the set of indices $\IndexSet$ contains only prime numbers
  (as in Figure~\ref{fig:firework-show}).
  Hence,
  by the unique-prime-factorization theorem,
  each multiset of numbers in $\IndexSet$ is uniquely determined
  by the product of its elements.
  This leads to a simple verification procedure
  performed by $\Automaton_2$ at the \kl{root}:
  At time $1$,
  \kl{node} $\Root$ checks that it receives $\Signal_1$ and not $\Signal_2$.
  After that,
  it expects to never again see $\Signal_1$ without $\Signal_2$,
  and remains in a loop as long as it gets
  either no signal at all or both $\Signal_1$~and~$\Signal_2$.
  Upon receiving $\Signal_2$ alone,
  it exits the loop
  and verifies that all of its children have sent both signals,
  which is apparent from the \kl{state} of each child.
  The \kl{root} rejects immediately
  if any of the expectations above are violated,
  or if two \kl{nodes} with different types
  send the same signal at the same time.
  Otherwise,
  it enters an \kl{accepting state} after leaving the loop.
  Now,
  consider the sequence
  $\TimeSequence = \tuple{\Time_1,\dots,\Time_{\SolutionSize+1}}$
  of rounds in which $\Root$ receives
  at least one of the signals $\Signal_1$ and $\Signal_2$.
  It is easy to see by induction on $\TimeSequence$ that
  successful completion of the procedure above
  ensures that there is a sequence
  $\Solution = \tuple{\Index_1, \dots, \Index_\SolutionSize}$
  of indices in $\IndexSet$
  with the following properties:
  For each $\IIndex \in \set{1,\dots,\SolutionSize}$,
  the \kl{root} has at least one child $\Node_\IIndex$ of type $\Index_\IIndex$
  that sends $\Signal_1$ at time~$\Time_\IIndex$
  and $\Signal_2$ at time~$\Time_{\IIndex+1}$,
  and the “fuse” of $\Node_\IIndex$ encodes precisely
  the multiset of indices occurring in
  $\tuple{\Index_1, \dots, \Index_{\IIndex-1}}$.
  Conversely,
  each child of $\Root$ can be associated in the same manner
  with a unique element of $\Solution$.

  To conclude our proof,
  we have to argue that the \kl[distributed automaton]{automaton} $\Automaton$,
  which simulates $\Automaton_1$ and $\Automaton_2$ in parallel,
  \kl{accepts} some \kl{labeled} \kl{pointed digraph}
  if and only if
  $\Instance$ has a solution~$\Solution$.
  The “if” part is immediate,
  since, by construction,
  $\Automaton$ \kl[accept]{accepting} a \kl{ditree}-encoding of $\Solution$
  is equivalent to $\Solution$ being a valid solution of $\Instance$.
  To show the “only if” part,
  we start with a \kl{pointed digraph} \kl{accepted} by $\Automaton$,
  and incrementally transform it into a \kl{ditree}-encoding of a solution $\Solution$,
  while maintaining \kl{acceptance} by $\Automaton$:
  First of all,
  by Lemma~\ref{lem:tree-model-property},
  we may suppose that the \kl{digraph} is a \kl{ditree}.
  Its \kl{root} must be of type~$\Spectator$,
  since $\Automaton$ would not \kl{accept} otherwise.
  Next,
  we require that $\Automaton$ raises an alarm
  at \kl{nodes} that see an unexpected set of \kl{states} in their \kl{incoming neighborhood},
  and that this alarm is propagated up to the \kl{root},
  which then reacts by entering a rejecting \kl{sink} \kl{state}.
  This ensures that the repartition of types is consistent with our specification;
  for example,
  that the children of a \kl{node} of type~$\Index'$ must be of type~$\Index'$ themselves.
  We now prune the \kl{ditree}
  in such a way that
  \kl{nodes} of type~$\Index$ keep at most two children
  and \kl{nodes} of type~$\Index'$ keep at most one child.
  (The behavior of the deleted children must be indistinguishable from the behavior of the remaining children,
  since otherwise an alarm would be raised.)
  This leaves us with a \kl{ditree}
  corresponding exactly to the input “expected”
  by the \kl[distributed automaton]{automaton} $\Automaton_2$.
  Since it is \kl{accepted} by $\Automaton_2$,
  this \kl{ditree} must be very close to an encoding
  of a solution $\Solution = \tuple{\Index_1, \dots, \Index_\SolutionSize}$,
  with the only difference that
  each element $\Index_\IIndex$ of~$\Solution$ may be represented
  by several \kl{nodes} $\Node_\IIndex^1,\dots,\Node_\IIndex^\CloneCount$.
  However,
  we know by construction that $\Automaton$ behaves the same
  on all of these representatives.
  We can therefore remove the subtrees
  rooted at $\Node_\IIndex^2,\dots,\Node_\IIndex^\CloneCount$,
  and thus
  we obtain a \kl{ditree}-encoding of $\Solution$
  that is \kl{accepted} by~$\Automaton$.
\end{proof}
\mychapterpreamble{%
  \nameCref{ch:alternation} based on the
  preprint~\cite{DBLP:journals/corr/Reiter16}.}

\chapter{Alternation Hierarchies}
\label{ch:alternation}

In this \lcnamecref{ch:alternation},
we transfer the \kl{set quantifiers} of $\MSOL$
to the setting of \kl{modal logic}
and investigate the resulting alternation hierarchies.
More precisely,
we establish separation results for the hierarchies
that one obtains
by alternating existential and universal \kl{set quantifiers}
in several logics of the form $\MSO(\FormulaSet)$,
where $\FormulaSet$ is some variant of \kl{modal logic}.

Within the context of this thesis,
the motivation for such hybrids
between modal logic and classical logic
stems from their close connection to \kl{local distributed automata}.
By~\cite{DBLP:conf/podc/HellaJKLLLSV12,DBLP:journals/dc/HellaJKLLLSV15},
$\LDA$'s are \kl[device equivalent]{equivalent} to $\bML$ (\cref{thm:LDA-bML}),
and by \cref{ch:local},
$\ALDAg$'s are \kl[device equivalent]{equivalent} to $\MSOL$ (\cref{thm:adga=mso}).
As mentioned in \cref{sec:summary},
the combination of those two results
suggests an alternative logical characterization of $\ALDAg$'s
using $\MSO(\bMLg)$ instead of $\MSOL$.
The \kl[device equivalence]{equivalence} of $\MSO(\bMLg)$ and $\MSOL$
can be easily proven by a standard technique
that simulates \kl{node quantifiers} through \kl{set quantifiers}
(see, e.g., \cite[\S~3]{DBLP:conf/aiml/Kuusisto08,DBLP:journals/apal/Kuusisto15}).
Yet in some sense,
$\MSO(\bMLg)$ provides a more faithful representation of $\ALDAg$'s
because it preserves the expressive power
of each \kl[set quantifier]{quantifier} alternation level.
For instance,
the existential fragment $\SigmaMSO{1}(\bMLg)$
specifies exactly the same \kl{digraph languages} as $\NLDAg$'s,
whereas $\EMSOL$ is strictly more powerful
(see \cref{sec:summary}).
Therefore,
if we want to precisely examine the power of alternation
between nondeterministic decisions and universal branchings
in $\ALDAg$'s,
then we can do so
from a purely logical perspective using $\MSO(\bMLg)$.
This has the advantage that,
compared to \kl{state} diagrams,
\kl{formulas} take up less space and are usually easier to manipulate.

As it turns out,
the above considerations are closely related
to an old problem in modal logic.
Already in \osf{1983},
van~Benthem asked in \cite{Benthem83}
whether the syntactic hierarchy
obtained by alternating existential and universal \kl{set quantifiers}
in $\MSO(\ML)$
induces a corresponding hierarchy on the semantic side.
\marginnote{
  In~\textnormal{\cite{DBLP:journals/jphil/Cate06}} and
  \textnormal{\cite{DBLP:conf/aiml/Kuusisto08,DBLP:journals/apal/Kuusisto15}},
  $\MSO(\ML)$ is called $\SOPML$
  (second-order propositional modal~logic).
}
Remaining unanswered,
the question was raised again by ten~Cate
in~\cite{DBLP:journals/jphil/Cate06},
and finally a positive answer was provided
by Kuusisto in~\cite{DBLP:conf/aiml/Kuusisto08,DBLP:journals/apal/Kuusisto15}:
he showed that $\MSO(\ML)$
induces an \emph{infinite} hierarchy over \kl{pointed digraphs}.
This tells us that the hierarchy does not completely collapse at some level,
but a priori leaves open whether or not
each number of \kl[set quantifier]{quantifier} alternations
corresponds to a separate semantic level.

Kuusisto's proof builds upon the work
of Matz, Schweikardt and Thomas in \cite{DBLP:journals/iandc/MatzST02}
(elaborating on their previous results
in~\cite{DBLP:conf/lics/MatzT97} and~\cite{DBLP:conf/csl/Schweikardt97}),
where they have shown that
in the case of $\MSOL$ on \kl{digraphs},
the alternation hierarchy is \emph{strict}.
Thus,
each additional alternation between the two types of \kl{set quantifiers}
properly extends the family of \kl{definable} \kl{digraph languages}.
Significantly,
this separation also holds on \kl{grids},
a more restrictive class of \kl{structures},
where it can be established
using techniques from classical automata theory.
Furthermore,
taken in conjunction
with the \kl[device equivalence]{equivalence} of $\MSO(\bMLg)$ and $\MSOL$,
the result on \kl{digraphs} immediately implies that
the corresponding hierarchy of $\MSO(\bMLg)$ is \emph{infinite}.
But since the alternation levels of that logic
are not the same as those of $\MSOL$,
it does not seem obvious how strictness could be inferred.

The present \lcnamecref{ch:alternation}
provides an alternative way of transferring
the results of Matz, Schweikardt and Thomas
to the \kl[modal logic]{modal} setting.
In particular,
our method allows to show directly that
the \kl{set quantifier} alternation hierarchies
of $\MSO(\ML)$ and $\MSO(\MLg)$
are \emph{strict} over (\kl[pointed digraph]{pointed}) \kl{digraphs}.
\marginnote{
  To avoid the \kl{backward modalities} of $\MSO(\bMLg)$,
  we work instead with $\MSO(\MLg)$,
  which is called $\SOPMLE$
  in~\mbox{\textnormal{\cite{DBLP:conf/aiml/Kuusisto08,DBLP:journals/apal/Kuusisto15}}}.
  By duality,
  separating one alternation hierarchy also separates the other.
}
At first sight,
this seems to expand the existing body of knowledge,
especially since the strictness question for $\MSO(\ML)$
has been mentioned as an open problem
in~\cite{DBLP:conf/aiml/Kuusisto08} and~\cite{Kuusisto13}.
However,
it turns out that in both cases,
strictness is actually a consequence of infiniteness
[\osf{A. Kuusisto, personal communication, \printdate{2016-03-03}}].
Although this observation has so far not been formally published,
it appears to be folklore in the model-theory community.
Hence,
this \lcnamecref{ch:alternation} contributes new proofs
to essentially known results.
Just as Kuusisto has done in \cite{DBLP:conf/aiml/Kuusisto08,DBLP:journals/apal/Kuusisto15},
we use as a starting point
the strictness result of~\cite{DBLP:journals/iandc/MatzST02} for $\MSOL$ on \kl{grids}.
But from there on,
the two proof methods diverge considerably.

The original approach of Kuusisto
is mainly based on the fact
that one can simulate \kl{first-order quantifiers}
by means of \kl{set quantifiers},
combined with a \kl{formula} stating that a set is a singleton.
As already mentioned,
this can be used to show that $\MSO(\MLg)$ is \kl[device equivalent]{equivalent} to $\MSOL$.
The spirit of the proof in~\cite{DBLP:conf/aiml/Kuusisto08,DBLP:journals/apal/Kuusisto15}
is essentially the same for $\MSO(\ML)$,
although the details are much more technical,
since this logic is less expressive than $\MSOL$
on arbitrary \kl{pointed structures}.
It is precisely the use of additional
\kl[set quantifiers]{second-order quantifiers}
that leads to the temporary loss
of the specific separation results provided by~\cite{DBLP:journals/iandc/MatzST02}.

In contrast,
one simple insight
will allow us to directly transfer those results:
When restricted to the class of \kl{grids},
$\MSO(\MLg)$ and $\MSOL$ are more than just \kl[device equivalent]{equivalent}~--
they are \emph{levelwise} \kl[device equivalent]{equivalent},
and consequently
all the separation results shown for $\MSOL$
also hold for $\MSO(\MLg)$ on \kl{grids}.
This approach is based on the observation
that the existential fragment of $\MSO(\MLg)$
can simulate another model, called \kl[tiling systems]{\emph{tiling systems}},
which has been shown to be \kl[device equivalent]{equivalent} to the existential fragment of $\MSOL$
in \cite{DBLP:journals/iandc/GiammarresiRST96}.
On the basis of this new finding,
we can then transfer the given separation results from $\MSO(\MLg)$ on \kl{grids}
to other classes of \kl{digraphs} and other extensions of \kl{modal logic},
such as $\MSO(\ML)$.
While this works along the same general principle as the
\emph{strong first-order reductions} used in \cite{DBLP:journals/iandc/MatzST02},
the additional limitations imposed by \kl{modal logic}
force us to introduce custom encoding techniques
that cope with the lack of expressive power.

The remainder of this \lcnamecref{ch:alternation} is organized in a top-down manner.
After introducing the necessary notation in \cref{sec:alternation-preliminaries},
we present the main results in \cref{sec:results},
and almost immediately get to the central proof in \cref{sec:proofs}.
The latter relies on several other propositions,
but since those are treated as “black boxes”,
the main line of reasoning might be comprehensible
without reading any further.
We then provide all the missing details
in the last two \lcnamecrefs{sec:grids},
which are independent of each other.
\Cref{sec:grids} establishes the levelwise \kl[device equivalence]{equivalence}
of three different alternation hierarchies on \kl{grids},
and may thus be interesting on its own.
On the other hand,
\cref{sec:encodings} is dedicated to encoding functions,
which constitute the more technical part of our demonstration.

\section{Preliminaries}
\label{sec:alternation-preliminaries}

Assume we are given some set of \kl{formulas} $\FormulaSet$,
referred to as \Intro{kernel},
which is free of \kl{set quantifiers} and closed under negation
(e.g., $\MLg$).
Then,
for $\Level≥0$, the class \Intro*{$\SigmaMSO{\Level}(\FormulaSet)$} consists of those \kl{formulas}
that one can construct by taking a member of $\FormulaSet$
and prepending to it at most~$\Level$ consecutive blocks of \kl{set quantifiers},
alternating between existential and universal blocks,
such that the first block is existential.
Reformulating this
solely in terms of existential \kl[set quantifiers]{quantifiers} and negations,
we get
\begin{align*}
  \SigmaMSO{0}(\FormulaSet)   &\defeq \FormulaSet \quad \text{and} \\
  \SigmaMSO{\Level+1}(\FormulaSet) &\defeq \lrsetbuilder{\Exists{\SetSymbol}}{\SetSymbol∈\SetSymbolSet}^*\!·\lrsetbuilder{\NOT\Formula}{\Formula∈\SigmaMSO{\Level}(\FormulaSet)},
\end{align*}
where the second line uses set concatenation and the Kleene star.
We define \Intro*{$\PiMSO{\Level}(\FormulaSet)$} as the corresponding dual class,
i.e., the set of all negations of \kl{formulas} in $\SigmaMSO{\Level}(\FormulaSet)$.
Generalizing this to arbitrary Boolean combinations,
let \Intro*{$\BC\SigmaMSO{\Level}(\FormulaSet)$} denote
the smallest superclass of $\SigmaMSO{\Level}(\FormulaSet)$
that is closed under negation and disjunction.

The \kl{formulas} in $\SigmaMSO{\Level}(\FormulaSet)$ and $\PiMSO{\Level}(\FormulaSet)$
are said to be in \Intro{prenex normal form}
with respect to the \kl{kernel} $\FormulaSet$.
It is well known that every \kl[class-formula]{$\MSOL$-formula}
can be transformed into \kl{prenex normal form}
with \kl{kernel} class $\FOL$.
This is based on the observation that \kl{first-order quantifiers}
can be replaced by \kl[set quantifiers]{second-order ones}.
Using the construction of
\cref{ex:uniqueness} in \cref{sec:example-formulas},
it is not difficult to see that
the analogue holds for $\MSO(\ML)$, $\MSO(\dML)$, $\MSO(\MLg)$ and $\MSO(\dMLg)$
with respect to their corresponding \kl{kernel} classes.
A more elaborate explanation can be found in \cite[Prp.~3]{DBLP:journals/jphil/Cate06}.

For the sake of clarity,
we break with the tradition of implicit quantification
that is customary in modal logic.
Instead of evaluating \kl[class-formula]{$\MSO(\ML)$-formulas} on \kl[pointed structure]{non-pointed structures}
by means of “hidden” universal quantification,
we shall explicitly put a \kl{global box} in front of our \kl{formulas}.
This leads to the class
\Phantomintro{\GBX}
\begin{equation*}
  \reintro*{\GBX\SigmaMSO{\Level}(\ML)} \defeq \set{\gbx}·\SigmaMSO{\Level}(\ML).
\end{equation*}
Analogously, we also define \reintro*{$\GBX\PiMSO{\Level}(\ML)$}.

All of our results will be stated in terms of the semantic classes
that one obtains by evaluating the preceding \kl{formula} classes
on some set of \kl{structures} $\StructClass$.
On the semantic side,
we will additionally consider the class
\Phantomintro{\DeltaMSO}
\begin{equation*}
  \semF{\reintro*{\DeltaMSO{\Level}}(\FormulaSet)}[\StructClass] \defeq \semF{\SigmaMSO{\Level}(\FormulaSet)}[\StructClass] ∩ \semF{\PiMSO{\Level}(\FormulaSet)}[\StructClass].
\end{equation*}
Since it is not based on any syntactic counterpart,
there is no meaning attributed to
the notation $\DeltaMSO{\Level}(\FormulaSet)$ by itself (without the brackets).

\section{Separation results}
\label{sec:results}

\begin{table*}[tp]
  \centering
  \newcommand*{\seplinespace}{\addlinespace[2\defaultaddspace]}
  \widefloat{
    \begin{tabular}{lll@{\hspace{0ex}}r@{\hspace{3ex}}l}
      \toprule
      \textit{Separation result} & \textit{Kernel} & \textit{Structures} & \textit{Levels} & \textit{Theorem} \\
          & Class $\FormulaSet[1]$ & Class $\StructClass$ & $\Level≥{·}$\, \\
      \midrule\addlinespace
      $\semF{\DeltaMSO{\Level+1}(\FormulaSet[1])}[\StructClass]⊈\semF{\BC\SigmaMSO{\Level}(\FormulaSet[1])}[\StructClass]$
          & $\FOL$         & $\GRID$, $\DIGRAPH$, $\GRAPH$ & $1$ & \ttheorem{MST}{thm:DB-FO} $\bast$ \\
          & $\dMLg$, $\MLg$ & $\GRID$, $\DIGRAPH$, $\GRAPH[1][1]$ & $1$ & \ttheorem{R}{thm:DB-HBG-HG} \\\seplinespace
      $\semF{\SigmaMSO{\Level}(\FormulaSet[1])}[\StructClass]\incomparable\semF{\PiMSO{\Level}(\FormulaSet[1])}[\StructClass]$
          & $\FOL$         & $\GRID$, $\DIGRAPH$, $\GRAPH$ & $1$ & \ttheorem{MST}{thm:SP-FO} $\bast$ \\
          & $\dMLg$, $\MLg$ & $\GRID$, $\DIGRAPH$, $\GRAPH[1][1]$ & $1$ & \ttheorem{R}{thm:SP-HBG-HG} \\
          & $\ML$          & $\pDIGRAPH$ & $1$ & \ttheorem{R}{thm:SP-H} \\\seplinespace
      $\semF{\GBX\SigmaMSO{\Level}(\FormulaSet[1])}[\StructClass]⊈\semF{\GBX\PiMSO{\Level}(\FormulaSet[1])}[\StructClass]$
          & $\ML$          & $\DIGRAPH$ & $2$ & \ttheorem{R}{thm:gSP-H} \\
      \addlinespace[1.5\defaultaddspace]\bottomrule
    \end{tabular}
  }
  \caption{The specific separation results of \cref{thm:separation-MST,thm:separation-R}.
    \Cref{thm:separation-MST} (marked by asterisks)
    is due to Matz, Schweikardt and Thomas.}
  \label{tab:results}
\end{table*}

With the notation in place,
we are ready to formally enunciate the main \lcnamecref{thm:separation-R},
whose complete proof will be the subject of the remainder of this \lcnamecref{ch:alternation}.
It is an extension to \kl[modal logic]{modal} \kl{kernel} \kl{formulas}
of the following result of Matz, Schweikardt and Thomas,
obtained by combining
\mbox{\cite[Thm.~1]{DBLP:journals/iandc/MatzST02}} and \mbox{\cite[Thm.~2.26]{DBLP:journals/tcs/Matz02}}\footnote{
  \cite[Thm.~2.26]{DBLP:journals/tcs/Matz02} states that
  $\semF{\SigmaMSO{\Level}(\FOL)}[\GRID]⊉\semF{\PiMSO{\Level}(\FOL)}[\GRID]$,
  which, by duality, also implies
  $\semF{\SigmaMSO{\Level}(\FOL)}[\GRID]⊈\semF{\PiMSO{\Level}(\FOL)}[\GRID]$.
}:

\begin{theorem}[Matz, Schweikardt, Thomas]
  \label{thm:separation-MST}
  The \kl{set quantifier} alternation hierarchy of $\MSOL$ is strict
  over the classes of \kl{grids}, \kl{digraphs} and \kl{undirected graphs}.

  \emph{A more precise statement of this theorem, referred to as
    \cref{thm:separation-MST}~\ref{thm:DB-FO}~and~\ref{thm:SP-FO},
    is given in \cref{tab:results}.}
\end{theorem}

Roughly speaking,
the extension provided in the present \lcnamecref{ch:alternation} tells us
that the preceding separations are largely maintained
if we replace the \kl[first-order logic]{first-order} \kl{kernel} by certain classes
of \kl[modal logic]{modal} \kl{formulas}.
To facilitate comparisons,
the formal statements of both \lcnamecrefs{thm:separation-MST}
are presented together in the same \lcnamecref{tab:results}.

\begin{theorem}[Main Results]
  \label{thm:separation-R}
  The \kl{set quantifier} alternation hierarchies of $\MSO(\dMLg)$ and $\MSO(\MLg)$ are strict
  over the classes of \kl{grids}, \kl{digraphs} and $1$-bit \kl{labeled} \kl{undirected graphs}.

  Furthermore, the corresponding hierarchies of $\MSO(\ML)$ and~$\GBX\+\MSO(\ML)$ are (mostly) strict
  over the classes of \kl{pointed digraphs} and \kl{digraphs}, respectively.

  \emph{A more precise statement of this theorem, referred to as
    \cref{thm:separation-R}~\ref{thm:DB-HBG-HG},~\ref{thm:SP-HBG-HG},~\ref{thm:SP-H}~and~\ref{thm:gSP-H},
    is given in \cref{tab:results}.}
\end{theorem}

By basic properties of predicate logic\marginnote{
  In particular, the inclusion
  of\, $\semF{\BC\SigmaMSO{\Level}(\FormulaSet[1])}[\StructClass]$
  in~$\semF{\DeltaMSO{\Level+1}(\FormulaSet[1])}[\StructClass]$
  follows from the fact that,
  when transforming a Boolean combination
  of \kl[class-formula]{$\SigmaMSO{\Level}(\FormulaSet[1])$-formulas}
  into \kl{prenex normal form},
  one is free to choose whether the resulting \kl{formula}
  (with up to $\Level+1$ \kl[set quantifier]{quantifier} alternations)
  should start with an existential or a universal \kl[set quantifier]{quantifier}.}
and the transitivity of set inclusion,
it is easy to infer from \cref{thm:separation-R}
the hierarchy diagrams represented in
\cref{fig:hierarchy-delta,fig:hierarchy-nodelta}.

If we take into account all the depicted relations,
the diagram in \cref{fig:hierarchy-delta}
is the same as in \cite{DBLP:journals/iandc/MatzST02} and \cite{DBLP:journals/tcs/Matz02}.
Hence,
when switching to one of the \kl[modal logic]{modal} \kl{kernels}
that include \kl{global modalities},
i.e., $\dMLg$ or $\MLg$,
the separations of \cref{thm:separation-MST}
are completely preserved on \kl{grids} and \kl{digraphs}.
Our proof method also allows us to
easily transfer this result to \kl{undirected graphs},
as long as we admit that
the vertices may be \kl{labeled} with at least one bit.
Additional work would be required to eliminate this condition.

\begin{SCfigure}
  \centering
  \begin{tikzpicture}[semithick,node distance=13ex]
  \node (b1) {$\semF{\BC\SigmaMSO{\Level}(\FormulaSet)}[\StructClass]$};
  \node (s1) [below left of=b1] {$\semF{\SigmaMSO{\Level}(\FormulaSet)}[\StructClass]$};
  \node (p1) [below right of=b1] {$\semF{\PiMSO{\Level}(\FormulaSet)}[\StructClass]$};
  \node (d2) [above of=b1,yshift=-4ex] {$\semF{\DeltaMSO{\Level+1}(\FormulaSet)}[\StructClass]$};
  \node (s2) [above left of=d2] {$\semF{\SigmaMSO{\Level+1}(\FormulaSet)}[\StructClass]$};
  \node (p2) [above right of=d2] {$\semF{\PiMSO{\Level+1}(\FormulaSet)}[\StructClass]$};
  \path[every node/.style={sloped,allow upside down,auto=false,inner sep=0.4ex}]
    (s2) edge node [anchor=south] {$⊈$}
              node [anchor=north] {$⊉$} (p2)
    (s2) edge node [anchor=north] {$\supsetneqq$} (d2)
    (d2) edge node [anchor=north] {$\subsetneqq$} (p2)
    (b1) edge node [anchor=south] {$⊆$}
              node [anchor=north] {$(⊉)$} (d2)
    (s1) edge node [anchor=south] {$\subsetneqq$} (b1)
    (b1) edge node [anchor=south] {$\supsetneqq$} (p1)
    (s1) edge node [anchor=south] {$⊈$}
              node [anchor=north] {$⊉$} (p1);
\end{tikzpicture}
  \hspace{1.5ex}
  \caption{The \kl{set quantifier} alternation hierarchies established by
    \cref{thm:separation-R}~\ref{thm:DB-HBG-HG},~\ref{thm:SP-HBG-HG}~and~\ref{thm:SP-H}.
    If we include the noninclusion in parentheses,
    this diagram holds for
    $\FormulaSet[1]{\:∈\,}\set{\dMLg,\MLg}$\, and\, $\StructClass{\:∈\,}\set{\GRID,\DIGRAPH,\GRAPH[1][1]}$.
    If we ignore that noninclusion,
    it is also verified for
    $\FormulaSet[1] = \ML$\, and\, $\StructClass = \pDIGRAPH$.
    In both cases, we assume $\Level≥1$.}
  \label{fig:hierarchy-delta}
\end{SCfigure}

\begin{SCfigure}
  \centering
  \begin{tikzpicture}[semithick,node distance=15ex]
  \node (s1) {$\semF{\GBX\SigmaMSO{\Level}(\FormulaSet)}[\StructClass]$};
  \node (p1) [right of=s1,xshift=5ex] {$\semF{\GBX\PiMSO{\Level}(\FormulaSet)}[\StructClass]$};
  \node (s2) [above of=s1] {$\semF{\GBX\SigmaMSO{\Level+1}(\FormulaSet)}[\StructClass]$};
  \node (p2) [above of=p1] {$\semF{\GBX\PiMSO{\Level+1}(\FormulaSet)}[\StructClass]$};
  \newlength{\labelshift}\setlength{\labelshift}{9ex}
  \path[every node/.style={sloped,allow upside down,auto=false,inner sep=0.4ex}]
    (s2) edge node [anchor=south] {$⊈$} (p2)
    (s1) edge node [anchor=south] {$\subsetneqq$} (s2)
    (s2) edge node [anchor=north] {\hspace{-1.1\labelshift} $\supsetneqq$} (p1)
    (s1) edge node [anchor=north] {\hspace{\labelshift} $⊆$} (p2)
    (p1) edge node [anchor=north] {$\subsetneqq$}(p2)
    (s1) edge node [anchor=south] {$⊈$} (p1);
\end{tikzpicture}
  \caption{The \kl{set quantifier} alternation hierarchy implied by
    \cref{thm:separation-R}~\ref{thm:gSP-H}\,
    for $\FormulaSet[1] = \ML$,\, $\StructClass = \DIGRAPH$,\, and $\Level≥2$.}
  \label{fig:hierarchy-nodelta}
\end{SCfigure}

As a spin-off,
\cref{thm:separation-R} also provides an extension
of some of these separations to $\ML$,
a \kl{kernel} class without \kl{global modalities}.
Following \cite{DBLP:conf/aiml/Kuusisto08,DBLP:journals/apal/Kuusisto15},
we consider the alternation hierarchies of both $\MSO(\ML)$ and $\GBX\+\MSO(\ML)$.
For the former,
which is evaluated on \kl{pointed digraphs},
\cref{fig:hierarchy-delta} gives a detailed picture,
leaving open only whether the inclusion
$\semF{\BC\SigmaMSO{\Level}(\FormulaSet[1])}[\StructClass] ⊆ \semF{\DeltaMSO{\Level+1}(\FormulaSet[1])}[\StructClass]$ is proper.
Inferring the strictness of this inclusion from the preceding results
does not seem very difficult,
but would call for a generalization of our framework.
In contrast,
the second hierarchy based on $\ML$ is arguably less natural,
since every \kl[class-formula]{$\GBX\+\MSO(\ML)$-formula} is prefixed by a \kl{global box},
regardless of the occurring \kl{set quantifiers}.
This creates a certain asymmetry
between the $\SigmaMSO{\Level}$-{} and $\PiMSO{\Level}$-levels,
which becomes apparent when considering
the missing relations in \cref{fig:hierarchy-nodelta}.
Unlike for the other hierarchies,
one cannot simply argue by duality
to deduce from
$\semF{\GBX\SigmaMSO{\Level}(\FormulaSet[1])}[\StructClass]⊈\semF{\GBX\PiMSO{\Level}(\FormulaSet[1])}[\StructClass]$
that the converse noninclusion also holds.
Nevertheless,
the presented result is strong enough to answer
the specific strictness question mentioned in~\cite{DBLP:conf/aiml/Kuusisto08}:
For arbitrarily high~$\Level$, we~have
\begin{equation*}
  \semF{\GBX\SigmaMSO{\Level}(\ML)}[\DIGRAPH]⊉\semF{\GBX\SigmaMSO{\Level+1}(\ML)}[\DIGRAPH] \+.
\end{equation*}

\section{Top-level proofs}
\label{sec:proofs}

In accordance with our top-down approach,
the present \lcnamecref{sec:proofs} already provides the proof
of our main \lcnamecref{thm:separation-R},
where everything comes together.
It therefore acts as a gateway to the \lcnamecrefs{sec:grids}
with the technical parts,
especially \cref{sec:encodings}.

\subsection{Figurative inclusions}

First of all,
we need to introduce the primary tool with which
we will transfer separation results from one setting to another.
It can be seen as an abstraction of the \emph{strong first-order reductions}
used in \cite{DBLP:journals/iandc/MatzST02}.
Unlike the latter,
it is formulated independently of any logical language,
which allows us to postpone the technical details to the end of the \lcnamecref{ch:alternation}.

\begin{definition}[Figurative Inclusion]
  \NoMathBreak
  Consider two sets $\StructClass[1]$ and $\StructClass[2]$
  and a partial \emph{injective} function $\Encoding\colon\StructClass[1]\pto\StructClass[2]$.
  For any two families of subsets $\StructLangClass[1]⊆\powerset{\StructClass[1]}$ and $\StructLangClass[2]⊆\powerset{\StructClass[2]}$\!,
  we say that $\StructLangClass[1]$ is \Intro{forward included} in $\StructLangClass[2]$
  \Intro{figuro} $\Encoding$, and write \Intro*{$\StructLangClass[1] \figsubeq{\Encoding} \StructLangClass[2]$},
  if for every set $\StructLang[1]∈\StructLangClass[1]$, there is a set $\StructLang[2]∈\StructLangClass[2]$
  such that $\Encoding(\StructLang[1]) = \StructLang[2] ∩ \Encoding(\StructClass[1])$.
\end{definition}

Figuratively speaking,
the partial bijection $\Encoding$ creates a tunnel between $\StructClass[1]$ and $\StructClass[2]$,
and all the sets in $\StructLangClass[1]$ and $\StructLangClass[2]$ are cropped
to fit through that tunnel.
Two original sets are considered to be equal
if their cropped versions are mapped onto each other by~$\Encoding$.

\marginnote{
  We denote the inverse function of $\Encoding$
  by \Intro*{$\invf{\Encoding}$} \\
  and the identity \\
  function on $\StructClass$
  by \Intro*{$\id[\StructClass]$}.
}
We also define the shorthands $\figsupeq{\Encoding}$\, and $\figeq{\Encoding}$\,
as natural extensions of the previous notation:
\Intro*{$\StructLangClass[1]\figsupeq{\Encoding}\StructLangClass[2]$},
which is defined as $\StructLangClass[2]\figsubeq{\invf{\Encoding}}\StructLangClass[1]$,
means that $\StructLangClass[2]$ is \Intro{backward included} in $\StructLangClass[1]$ \reintro{figuro} $\Encoding$,
and \Intro*{$\StructLangClass[1]\figeq{\Encoding}\StructLangClass[2]$},
an abbreviation for the conjunction of
$\StructLangClass[1]\figsubeq{\Encoding}\StructLangClass[2]$ and $\StructLangClass[1]\figsupeq{\Encoding}\StructLangClass[2]$,
states that $\StructLangClass[1]$ is \Intro{forward equal} to $\StructLangClass[2]$ \reintro{figuro} $\Encoding$.
All of these relations are referred to as \Intro{figurative inclusions}.

Note that ordinary inclusion is
a special case of \kl{figurative inclusion}, i.e.,
for $\StructClass[1] = \StructClass[2]$,
\begin{equation*}
  \StructLangClass[1] ⊆ \StructLangClass[2] \quad \text{if and only if} \quad \StructLangClass[1] \figsubeq{\id[\StructClass[1]]} \StructLangClass[2].
\end{equation*}
Furthermore, \kl{figurative inclusion} is transitive in the sense that
\begin{equation*}
  \StructLangClass[1] \,\figsubeq{\Encoding[1]}\, \StructLangClass[2] \,\figsubeq{\Encoding[2]}\, \StructLangClass[3]
  \quad \text{implies} \quad
  \StructLangClass[1] \,\figsubeq{\Encoding[2]∘\Encoding[1]}\, \StructLangClass[3].
\end{equation*}
(This depends crucially on the fact that $\Encoding[2]$ is injective.)
\begin{proof}
  Consider three sets $\StructClass[1]$, $\StructClass[2]$ and $\StructClass[3]$,
  two partial injective functions $\Encoding[1]\colon\StructClass[1]\pto\StructClass[2]$ and $\Encoding[2]\colon\StructClass[2]\pto\StructClass[3]$,
  and three families of subsets $\StructLangClass[1]⊆\powerset{\StructClass[1]}$,\, $\StructLangClass[2]⊆\powerset{\StructClass[2]}$ and $\StructLangClass[3]⊆\powerset{\StructClass[3]}$.
  Assume that we have $\StructLangClass[1] \,\figsubeq{\Encoding[1]}\, \StructLangClass[2] \,\figsubeq{\Encoding[2]}\, \StructLangClass[3]$.
  Choose an arbitrary set $\StructLang[1]∈\StructLangClass[1]$.
  Since $\StructLangClass[1] \figsubeq{\Encoding[1]} \StructLangClass[2]$,
  there must be a set $\StructLang[2]∈\StructLangClass[2]$ such that $\Encoding[1](\StructLang[1]) = \StructLang[2] ∩ \Encoding[1](\StructClass[1])$.
  Furthermore, as $\StructLangClass[2] \figsubeq{\Encoding[2]} \StructLangClass[3]$,
  there is also a set $\StructLang[3]∈\StructLangClass[3]$ such that $\Encoding[2](\StructLang[2]) = \StructLang[3] ∩ \Encoding[2](\StructClass[2])$.
  Hence,
  \begin{align*}
    (\Encoding[2]∘\Encoding[1])(\StructLang[1]) &= \Encoding[2](\StructLang[2] ∩ \Encoding[1](\StructClass[1])) \\
             &= \Encoding[2](\StructLang[2]) ∩ (\Encoding[2]∘\Encoding[1])(\StructClass[1]) \tag{$*$}\label{eq:distributivity} \\
             &= \StructLang[3] ∩ \Encoding[2](\StructClass[2]) ∩ (\Encoding[2]∘\Encoding[1])(\StructClass[1]) \\
             &= \StructLang[3] ∩ (\Encoding[2]∘\Encoding[1])(\StructClass[1]).
  \end{align*}
  Equality~\eqref{eq:distributivity} holds because $\Encoding[2]$ is injective.
  Since the choice of~$\StructLang[1]$ was arbitrary,
  there is such an $\StructLang[3]∈\StructLangClass[3]$ for every $\StructLang[1]∈\StructLangClass[1]$,
  and thus $\StructLangClass[1] \figsubeq{\Encoding[2]∘\Encoding[1]} \StructLangClass[3]$.
\end{proof}

In our specific context,
given a noninclusion $\semF{\FormulaSet[1]_2}[\StructClass[1]] ⊈ \semF{\FormulaSet[1]_1}[\StructClass[1]]$,
we shall use the concept of \kl{figurative inclusion} to infer from it
another noninclusion $\semF{\FormulaSet[2]_2}[\StructClass[2]] ⊈ \semF{\FormulaSet[2]_1}[\StructClass[2]]$.
Here,
$\FormulaSet[1]_1,\FormulaSet[1]_2,\FormulaSet[2]_1,\FormulaSet[2]_2$ and $\StructClass[1],\StructClass[2]$
refer to some classes of \kl{formulas} and \kl{structures}, respectively.
The key part of the argument will be
to construct an appropriate encoding function $\Encoding\colon\StructClass[1]\to\StructClass[2]$,
in order to apply the following \lcnamecref{lem:separation-transfer}.~

\begin{lemma}
  \label{lem:separation-transfer}
  \NoMathBreak
  Let $\StructLangClass[1]_1,\StructLangClass[1]_2⊆\powerset{\StructClass[1]}$ and $\StructLangClass[2]_1,\StructLangClass[2]_2⊆\powerset{\StructClass[2]}$ be families of subsets of some sets $\StructClass[1]$ and $\StructClass[2]$.
  If there is a \emph{total \mbox{injective}} function $\Encoding\colon\StructClass[1]\to\StructClass[2]$
  such that $\StructLangClass[1]_2 \figsubeq{\Encoding} \StructLangClass[2]_2$ and $\StructLangClass[1]_1 \figsupeq{\Encoding} \StructLangClass[2]_1$,
  then
  \begin{equation*}
    \StructLangClass[1]_2 ⊈ \StructLangClass[1]_1 \quad \text{implies} \quad \StructLangClass[2]_2 ⊈ \StructLangClass[2]_1. \qedhere
  \end{equation*}
\end{lemma}
\begin{proof}
  \NoMathBreak
  To show the contrapositive,
  let us suppose that $\StructLangClass[2]_2 ⊆ \StructLangClass[2]_1$,
  or, equivalently, $\StructLangClass[2]_2 \figsubeq{\id[\StructClass[2]]} \StructLangClass[2]_1$.
  Then the chain of \kl{figurative inclusions}
  \begin{equation*}
    \StructLangClass[1]_2 \:\figsubeq{\Encoding}\: \StructLangClass[2]_2 \:\figsubeq{\id[\StructClass[2]]}\: \StructLangClass[2]_1 \:\figsubeq{\invf{\Encoding}}\: \StructLangClass[1]_1
  \end{equation*}
  yields $\StructLangClass[1]_2 \figsubeq{\id[\StructClass[1]]} \StructLangClass[1]_1$,
  since $(\invf{\Encoding}\!∘\id[\StructClass[2]]∘\Encoding) = \id[\StructClass[1]]$.
  (This depends on $\Encoding$ being total and injective.)
  Consequently, we have $\StructLangClass[1]_2 ⊆ \StructLangClass[1]_1$.
\end{proof}

In some cases,
we can combine two given \kl{figurative inclusions} in order to obtain a new one
that relates the corresponding intersection classes.
This property will be very useful for establishing \kl{figurative inclusions}
between classes of the form $\semF{\DeltaMSO{\Level}(\FormulaSet[1])}[\StructClass[1]]$.

\begin{lemma}
  \label{lem:figsubeq-intersection}
  \NoMathBreak
  Consider two sets $\StructClass[1]$ and $\StructClass[2]$,
  a partial injective function $\Encoding\colon\StructClass[1]\pto\StructClass[2]$,
  and four families of subsets $\StructLangClass[1]_1,\StructLangClass[1]_2⊆\powerset{\StructClass[1]}$ and $\StructLangClass[2]_1,\StructLangClass[2]_2⊆\powerset{\StructClass[2]}$.
  If $\Encoding(\StructClass[1])$ is a member of $\StructLangClass[2]_1∩\StructLangClass[2]_2$,
  and $\StructLangClass[2]_1$, $\StructLangClass[2]_2$ are both closed under intersection, then
  \begin{gather*}
   \StructLangClass[1]_1 \figsubeq{\Encoding} \StructLangClass[2]_1 \quad \text{and} \quad \StructLangClass[1]_2 \figsubeq{\Encoding} \StructLangClass[2]_2 \quad
   \quad \text{imply} \quad \StructLangClass[1]_1∩\StructLangClass[1]_2 \,\figsubeq{\Encoding}\, \StructLangClass[2]_1∩\StructLangClass[2]_2.
   \qedhere
  \end{gather*}
\end{lemma}
\begin{proof}
  Let $\StructLang[1]$ be any set in $\StructLangClass[1]_1∩\StructLangClass[1]_2$.
  Since $\StructLangClass[1]_1 \figsubeq{\Encoding} \StructLangClass[2]_1$,
  there is, by definition, a set $\StructLang[2]$ in $\StructLangClass[2]_1$
  such that $\Encoding(\StructLang[1]) = \StructLang[2] ∩ \Encoding(\StructClass[1])$.
  Furthermore, we also know that $\Encoding(\StructClass[1])$ lies in $\StructLangClass[2]_1$,
  and that the latter is closed under intersection.
  Hence, $\Encoding(\StructLang[1])∈\StructLangClass[2]_1$.
  Analogously, we also get that $\Encoding(\StructLang[1])∈\StructLangClass[2]_2$.
  Finally, knowing that
  for all $\StructLang[1]$ in $\StructLangClass[1]_1∩\StructLangClass[1]_2$,
  $\Encoding(\StructLang[1])$ lies in $\StructLangClass[2]_1∩\StructLangClass[2]_2$,
  we obviously have a sufficient condition for
  $\StructLangClass[1]_1∩\StructLangClass[1]_2 \figsubeq{\Encoding} \StructLangClass[2]_1∩\StructLangClass[2]_2$.
\end{proof}

\subsection{Proving the main theorem}
\label{ssec:main-proof}

We are now ready to give the central proof of this \lcnamecref{ch:alternation}.
Although it makes references to many statements of \cref{sec:grids,sec:encodings},
it is formulated in a way that can be understood without having read
anything beyond this point.

\begin{proof}[Proof of \cref{thm:separation-R}]
  The basis of our proof shall be laid in \cref{sec:grids},
  where the case $\BitCount=0$ of \cref{thm:levelwise-equivalence}
  will state the following:
  When restricted to the class of \kl{grids},
  the \kl{set quantifier} alternation hierarchies of
  $\MSOL$, $\MSO(\dMLg)$ and $\MSO(\MLg)$ are \kl[device equivalent]{equivalent}.
  More precisely,
  for every $\Level≥1$ and
  $\FormulaClassOperator ∈ \set{\SigmaMSO{\Level},\,\PiMSO{\Level},\,\BC\SigmaMSO{\Level},\,\DeltaMSO{\Level}}$,
  it holds that
  \begin{equation*}
    \semF{\+\FormulaClassOperator\+(\FOL)}[\GRID] \:=\; \semF{\+\FormulaClassOperator\+(\dMLg)}[\GRID] \:=\; \semF{\+\FormulaClassOperator\+(\MLg)}[\GRID] \+.
  \end{equation*}
  Hence,
  if we consider only the case $\StructClass=\GRID$,
  the separation results for the \kl{kernel} class $\FOL$ stated in
  \cref{thm:separation-MST}~\ref{thm:DB-FO}~and~\ref{thm:SP-FO}
  immediately imply those for $\dMLg$ and $\MLg$ in
  \cref{thm:separation-R}~\ref{thm:DB-HBG-HG}~and~\ref{thm:SP-HBG-HG}.

  The remainder of the proof now consists of
  establishing suitable \kl{figurative inclusions},
  in order to transfer these results
  to other classes of \kl{structures} and,
  to some extent,
  to weaker classes of \kl{kernel} \kl{formulas}.
  For this purpose,
  we shall introduce in \cref{sec:encodings} a notion of translatability
  between two classes of \kl{kernel} \kl{formulas} $\FormulaSet[1]$ and $\FormulaSet[2]$,
  with respect to a given total injective function $\Encoding$
  that encodes \kl{structures} from a class~$\StructClass[1]$ into \kl{structures} of some class~$\StructClass[2]$.
  As will be shown in \cref{lem:translation-inclusion},
  bidirectional translatability implies
  \begin{equation}
    \label{eqn:levelwise-figeq}
    \semF{\+\FormulaClassOperator\+(\FormulaSet[1])}[\StructClass[1]] \;\figeq{\Encoding}\; \semF{\+\FormulaClassOperator\+(\FormulaSet[2])}[\StructClass[2]]
    \tag{$\ast$}
  \end{equation}
  for all $\FormulaClassOperator ∈ \set{\SigmaMSO{\Level},\,\PiMSO{\Level},\,\BC\SigmaMSO{\Level}}$ with $\Level≥0$.
  If we can additionally show that $\Encoding(\StructClass[1])$ is (at most)
  $\DeltaMSO{2}(\FormulaSet[2])$-\kl{definable} over $\StructClass[2]$,
  then, by \cref{lem:figsubeq-intersection},
  the \kl{figurative equality} \eqref{eqn:levelwise-figeq}
  also holds for $\FormulaClassOperator = \DeltaMSO{\Level+1}$ with $\Level≥1$.
  Note that the backward part “$\figsupeq{\Encoding}$\!” is always true,
  since $\invf{\Encoding}(\StructClass[2])$ is trivially $\DeltaMSO{2}(\FormulaSet[1])$-\kl{definable} over $\StructClass[1]$.

  The groundwork being in place,
  we proceed by applying \cref{lem:separation-transfer} as follows:
  \begin{itemize}[nosep]
  \item If we have established \eqref{eqn:levelwise-figeq} for $\FormulaClassOperator ∈ \set{\SigmaMSO{\Level},\,\PiMSO{\Level}}$,
    then we can transfer the separation
    \begin{equation*}
      \label{eqn:separation-SP}
      \semF{\SigmaMSO{\Level}(\FormulaSet[1])}[\StructClass[1]]\incomparable\semF{\PiMSO{\Level}(\FormulaSet[1])}[\StructClass[1]]
      \tag{1}
    \end{equation*}
    to the \kl{kernel} class $\FormulaSet[2]$ evaluated on the class of \kl{structures}~$\StructClass[2]$.
  \item Similarly,
    if \eqref{eqn:levelwise-figeq} holds
    for $\FormulaClassOperator ∈ \set{\BC\SigmaMSO{\Level},\,\DeltaMSO{\Level+1}}$,
    then
    \begin{equation*}
      \label{eqn:separation-DB}
      \semF{\DeltaMSO{\Level+1}(\FormulaSet[1])}[\StructClass[1]]⊈\semF{\BC\SigmaMSO{\Level}(\FormulaSet[1])}[\StructClass[1]]
      \tag{2}
    \end{equation*}
    can also be transferred to $\FormulaSet[2]$ on $\StructClass[2]$.
  \end{itemize}

  It remains to provide concrete \kl{figurative inclusions}
  to prove the different parts of \cref{thm:separation-R}.

  \proofparagraph{\ref{thm:DB-HBG-HG}\:\!,~\ref{thm:SP-HBG-HG}}
  The first two parts are treated in parallel.
  We start by transferring~\eqref{eqn:separation-SP} and~\eqref{eqn:separation-DB}
  from \kl{grids} to \kl{digraphs}, for the \kl{kernel} class $\dMLg$,
  taking a detour via $2$-relational \kl{digraphs},
  and then via $2$-bit \kl{labeled} ones.
  For all $\FormulaClassOperator ∈ \set{\SigmaMSO{\Level},\,\PiMSO{\Level},\,\BC\SigmaMSO{\Level},\,\DeltaMSO{\Level+1}}$ with $\Level≥1$,
  we get
  \begin{alignat*}{2}
    \semF{\+\FormulaClassOperator\+(\dMLg)}[\GRID] \;&\figeq{\id[\+\GRID]} & \;&\semF{\+\FormulaClassOperator\+(\dMLg)}[{\DIGRAPH[0][2]}] \\
                               &\figeq{\Encoding_1}          &   &\semF{\+\FormulaClassOperator\+(\dMLg)}[{\DIGRAPH[2][1]}] \\
                               &\figeq{\Encoding_2}          &   &\semF{\+\FormulaClassOperator\+(\dMLg)}[\DIGRAPH] \+.
  \end{alignat*}
  The first line is trivial for $\FormulaClassOperator ∈ \set{\SigmaMSO{\Level},\,\PiMSO{\Level},\,\BC\SigmaMSO{\Level}}$,
  since $\GRID ⊆ \DIGRAPH[0][2]$.
  It also holds for $\FormulaClassOperator = \DeltaMSO{\Level+1}$ because
  $\GRID$ is $\PiMSO{1}(\dMLg)$-\kl{definable} over $\DIGRAPH[0][2]$,
  as shall be demonstrated in \cref{prp:grid-definability}.
  The other two lines rely on the existence of
  adequate injective functions $\Encoding_1$ and $\Encoding_2$
  that allow us to apply
  \crefnosort{lem:translation-inclusion,lem:figsubeq-intersection}
  in the way explained above.
  They will be provided by
  \cref{prp:multirelational-labeled,prp:labeled-unlabeled},
  respectively.

  We proceed in a similar way
  to transfer \eqref{eqn:separation-SP} and \eqref{eqn:separation-DB}
  from $\dMLg$ to $\MLg$ on \kl{digraphs}:
  \begin{alignat*}{2}
    \semF{\+\FormulaClassOperator\+(\dMLg)}[\DIGRAPH] \,&\figeq{\Encoding_3} & \;\,&\semF{\+\FormulaClassOperator\+(\MLg)}[{\DIGRAPH[0][2]}] \\
                                  &\figeq{\Encoding_1} &     &\semF{\+\FormulaClassOperator\+(\MLg)}[{\DIGRAPH[2][1]}] \\
                                  &\figeq{\Encoding_2} &     &\semF{\+\FormulaClassOperator\+(\MLg)}[\DIGRAPH] \+,
  \end{alignat*}
  for $\FormulaClassOperator ∈ \set{\SigmaMSO{\Level},\,\PiMSO{\Level},\,\BC\SigmaMSO{\Level},\,\DeltaMSO{\Level+1}}$ with $\Level≥1$.
  The very simple encoding function $\Encoding_3$,
  which lets us eliminate \kl{backward modalities}
  and again use \crefnosort{lem:translation-inclusion,lem:figsubeq-intersection},
  will be supplied by \cref{prp:hbg-to-hg}.
  The encodings $\Encoding_1$ and $\Encoding_2$ are the same as before,
  because the properties asserted by
  \cref{prp:multirelational-labeled,prp:labeled-unlabeled}
  hold for both $\MLg$ and $\dMLg$ as \kl{kernel} classes.
  Incidentally, this means
  we could transfer \eqref{eqn:separation-SP}
  directly from $\MLg$ on \kl{grids} to $\MLg$ on \kl{digraphs},
  without even mentioning $\dMLg$.

  To show that \eqref{eqn:separation-SP} and \eqref{eqn:separation-DB}
  are also valid for $\MLg$ on $1$-bit \kl{labeled} \kl{undirected graphs},
  we establish
  \begin{equation*}
    \semF{\+\FormulaClassOperator\+(\dMLg)}[\DIGRAPH] \, \figeq{\Encoding_4} \, \semF{\+\FormulaClassOperator\+(\MLg)}[{\GRAPH[1][1]}] \+,
  \end{equation*}
  again for all $\FormulaClassOperator ∈ \set{\SigmaMSO{\Level},\,\PiMSO{\Level},\,\BC\SigmaMSO{\Level},\,\DeltaMSO{\Level+1}}$ with $\Level≥1$.
  The appropriate encoding $\Encoding_4$ shall be constructed in \cref{prp:digraph-1bitgraph}.
  Since \kl{backward modalities} do not offer any additional expressive power
  on \kl{undirected graphs},
  the separations we obtain also hold for the \kl{kernel} $\dMLg$.

  \proofparagraph{\ref{thm:SP-H}}
  Next,
  to transfer \eqref{eqn:separation-SP}
  from $\MLg$ on \kl{digraphs} to $\ML$ on \kl{pointed digraphs},
  we show that, for $\FormulaClassOperator ∈ \set{\SigmaMSO{\Level},\,\PiMSO{\Level}}$ with $\Level≥1$,
  we have
  \begin{equation*}
    \semF{\+\FormulaClassOperator\+(\MLg)}[\DIGRAPH] \:\figeq{\Encoding_5}\; \semF{\+\FormulaClassOperator\+(\ML)}[\pDIGRAPH] \+.
  \end{equation*}
  The injective function $\Encoding_5$,
  which satisfies the translatability property
  required to obtain this \kl{figurative equality} via \cref{lem:translation-inclusion},
  will be provided by \cref{prp:digraph-pdigraph}.
  Its image $\Encoding_5(\DIGRAPH)$ is not $\MSO(\ML)$-\kl{definable},
  for the simple reason
  that an \kl[class-formula]{$\MSO(\ML)$-formula} is unable to distinguish between two \kl{structures}
  that are isomorphic when restricted to
  the \kl{connected} component containing the position marker $\PosSymbol$.
  Hence, we cannot merely apply \cref{lem:figsubeq-intersection}
  to show~\eqref{eqn:separation-DB}.
  Our approach would have to be refined
  to take into account equivalence classes of \kl{structures},
  which we shall not do in this thesis.

  \proofparagraph{\ref{thm:gSP-H}}
  Finally,
  \cref{prp:digraph-pdigraph} will also state that
  $\Encoding_5$ can be converted into an encoding $\Encoding_5'$, from $\DIGRAPH$ back into $\DIGRAPH$,
  that satisfies the following \kl{figurative inclusions} for all $\Level≥2$:
  \begin{alignat*}{2}
    \semF{\SigmaMSO{\Level}(\MLg)}[\DIGRAPH] &\,\figsubeq{\Encoding_5'}\;\, & &\semF{\GBX\SigmaMSO{\Level}(\ML)}[\DIGRAPH] \,, \\
    \semF{\PiMSO{\Level}(\MLg)}[\DIGRAPH] &\,\figeq{\Encoding_5'}\;\,    & &\semF{\GBX\PiMSO{\Level}(\ML)}[\DIGRAPH] \,.
  \end{alignat*}
  Using \eqref{eqn:separation-SP} for $\MLg$ on \kl{digraphs},
  and applying \cref{lem:separation-transfer},
  we can infer from this that
  \begin{equation*}
    \semF{\GBX\SigmaMSO{\Level}(\ML)}[\DIGRAPH]⊈\semF{\GBX\PiMSO{\Level}(\ML)}[\DIGRAPH] \+.
    \qedhere
  \end{equation*}
\end{proof}

\section{Grids as a starting point}
\label{sec:grids}

In this \lcnamecref{sec:grids},
we establish that the \kl{set quantifier} alternation hierarchies of
$\MSOL$, $\MSO(\dMLg)$ and $\MSO(\MLg)$ are \kl[device equivalent]{equivalent} on \kl{labeled} \kl{grids}.
In addition,
we give a \kl[class-formula]{$\eclF{\PiMSO{1}(\dMLg)}$-formula}
that characterizes the class of \kl{grids}.

\subsection{The standard translation}

Our first building block is a well-known property of modal logic,
which holds even if we do not confine ourselves to the setting of \kl{grids}.

\begin{proposition}
  \label{prp:standard-translation}
  For every \kl[class-formula]{$\dMLg$-formula},
  there is an \kl[device equivalent]{equivalent} \kl[class-formula]{$\FOL$-formula}, i.e.,
  \begin{equation*}
    \semF{\dMLg} \;⊆\; \semF{\FOL} \+.
    \qedhere
  \end{equation*}
\end{proposition}
\begin{proof}
  Given an \kl[class-formula]{$\dMLg$-formula} $\Formula[1]$,
  we have to construct an \kl[class-formula]{$\FOL$-formula} $\Formula[2]_{\Formula[1]}$
  such that $\Structure \Models \Formula[1]$ if and only if $\Structure \Models \Formula[2]_{\Formula[1]}$,
  for every \kl{structure} $\Structure$.
  This is simply a matter of transcribing the semantics of $\dMLg$
  given in \cref{tab:syntax-semantics}
  to the language of \kl{first-order logic},
  a method known as the \emph{standard translation} in modal logic
  (see, e.g., \cite[Def.~2.45]{BlackburnRV02}).
  The following table gives a recursive specification of this translation. \\
  \begin{center}
    \begin{tabular}{ll}
      \toprule
      $\Formula[1]∈\dMLg$
        & \kl[device equivalent]{Equivalent} \kl{formula}\, $\Formula[2]_{\Formula[1]}\!∈\FOL$ \\
      \midrule
      $\PosIs{\NodeSymbol}$
        & $\PosSymbol \Equals \NodeSymbol$
        \\\addlinespace
      $\PosIn{\SetSymbol}$
        & $\InSet{\SetSymbol}{\PosSymbol}$
        \\\addlinespace
      $\NOT \Formula[1]_1$
        & $\NOT \Formula[2]_{\Formula[1]_1}$
        \\\addlinespace
      $\Formula[1]_1 \OR \Formula[1]_2$
        & $\Formula[2]_{\Formula[1]_1} \OR \Formula[2]_{\Formula[1]_2}$
        \\\addlinespace
      $\dm[\RelSymbol](\Formula[1]_1,…,\Formula[1]_{\Arity})$
        & $\Exists{\NodeSymbol_1,…,\NodeSymbol_{\Arity}}
           \bigl(\,
           \InRel{\RelSymbol}{\PosSymbol,\NodeSymbol_1,…,\NodeSymbol_{\Arity}} \;\AND{}$
           \hspace*{\fill} $\bigAND_{1≤i≤\Arity}\subst{\Formula[2]_{\Formula[1]_i}}{\PosSymbol}{\NodeSymbol_i}
           \,\bigr)$
        \\\addlinespace
      $\bdm[\RelSymbol](\Formula[1]_1,…,\Formula[1]_{\Arity})$
        & \mbox{as above, except $\InRel{\RelSymbol}{\NodeSymbol_{\Arity},…,\NodeSymbol_1,\PosSymbol}$}
        \\\addlinespace
      $\gdm \Formula[1]_1$
        & $\Exists{\PosSymbol}\+ \Formula[2]_{\Formula[1]_1}$
        \\\addlinespace
      \bottomrule
    \end{tabular}
  \end{center}
  \vspace{\belowdisplayskip}
  Here, $\NodeSymbol∈\NodeSymbolSet$,\;
  $\SetSymbol∈\SetSymbolSet$,\; $\RelSymbol∈\RelSymbolSet{\Arity+1}$,\;
  $\Formula[1]_1,…,\Formula[1]_{\Arity}∈\dMLg$, for $\Arity≥1$,
  and $\NodeSymbol_1,…,\NodeSymbol_{\Arity}$ are \kl{node symbols},
  chosen such that $\NodeSymbol_i ∉ \free(\Formula[2]_{\Formula[1]_i})$.
  The notation $\subst{\Formula[2]_{\Formula[1]_i}}{\PosSymbol}{\NodeSymbol_i}$ designates the formula obtained by
  substituting each \kl{free} occurrence of~$\PosSymbol$ in $\Formula[2]_{\Formula[1]_i}$ by $\NodeSymbol_i$.
\end{proof}

\subsection{A detour through tiling systems}

By restricting our focus to the class of \kl{labeled} \kl{grids},
we can take advantage of a well-studied automaton model
introduced by Giammarresi and Restivo
in~\cite{DBLP:journals/ijprai/GiammarresiR92},
which is closely related to $\MSOL$.
A “machine” in this model,
called a \Intro{tiling system},
is defined as a tuple $\TilingSystem = \tuple{\Alphabet,\TStateSet,\TTileSet}$,\, where
\begin{itemize}
\item $\Alphabet = \Boolean^{\BitCount}$ is seen as an alphabet, with $\BitCount≥0$,
\item $\TStateSet$ is a finite set of sates,\, and
\item $\TTileSet ⊆ \bigl((\Alphabet×\TStateSet)∪\set{\#}\bigr)^{4\mathstrut}$
  is a set of $2{×}2$-tiles
  that may use a fresh letter $\#$ not contained in $(\Alphabet×\TStateSet)$.
  \qedhere
\end{itemize}
For a fixed number of bits $\BitCount$,
we denote by \Intro*{$\TS_{\BitCount}$} the set of all \kl{tiling systems} with alphabet $\Alphabet = \Boolean^{\BitCount}$.

Given a $\BitCount$-bit \kl{labeled} \kl{grid} $\Grid$,
a \kl{tiling system} $\TilingSystem∈\TS_{\BitCount}$ operates similarly to a nondeterministic finite automaton
generalized to two dimensions.
A run of~$\TilingSystem$ on $\Grid$ is an extended \kl{labeled} \kl{grid} $\Grid^\#$\!,
obtained by nondeterministically labeling each cell of~$\Grid$
with some state $\TState∈\TStateSet$
and surrounding the entire \kl{grid} with a border
consisting of new $\#$-\kl{labeled} cells.
We consider $\Grid^\#$ to be a valid run
if each of its $2{×}2$-\kl[grids]{subgrids} can be identified with some tile in $\TTileSet$.
The set recognized by $\TilingSystem$
consists precisely of those \kl{labeled} \kl{grids}
for which such a run exists.
By analogy with our existing notation,
we write \Intro*{$\semT{\TS_{\BitCount}}[\GRID[\BitCount]]$}
for the class formed by
the sets of~$\BitCount$-bit \kl{labeled} \kl{grids} that are recognized by some \kl{tiling system} in $\TS_{\BitCount}$.

Exploiting a locality property of \kl{first-order logic},
Giammarresi, Restivo, Seibert and Thomas
have shown in \cite{DBLP:journals/iandc/GiammarresiRST96}
that \kl{tiling systems} capture precisely
the existential fragment of $\MSOL$ on \kl{labeled} \kl{grids}:

\begin{theorem}[Giammarresi, Restivo, Seibert, Thomas]
  \label{thm:equivalence-ts-emso}
  For arbitrary $\BitCount≥0$, a set of $\BitCount$-bit \kl{labeled} \kl{grids}
  is \mbox{$\TS$-recognizable} if and only if it is $\SigmaMSO{1}(\FOL)$-\kl{definable}
  over $\GRID[\BitCount]$, i.e.,
  \begin{equation*}
    \semT{\TS_{\BitCount}}[\GRID[\BitCount]] \;=\; \semF{\SigmaMSO{1}(\FOL)}[\GRID[\BitCount]] \+.
    \qedhere
  \end{equation*}
\end{theorem}

The preceding result is extremely useful for our purposes,
because, from the perspective of \kl{modal logic},
it provides a much easier access to $\MSOL$.
This brings us to the following \lcnamecref{prp:inclusion-ts-ehgs}.

\begin{proposition}
  \label{prp:inclusion-ts-ehgs}
  For arbitrary $\BitCount≥0$, if a set of $\BitCount$-bit \kl{labeled} \kl{grids}
  is \mbox{$\TS$-recognizable}, then it is also $\SigmaMSO{1}(\MLg)$-\kl{definable}
  over $\GRID[\BitCount]$, i.e.,
  \begin{equation*}
    \semT{\TS_{\BitCount}}[\GRID[\BitCount]] \;⊆\; \semF{\SigmaMSO{1}(\MLg)}[\GRID[\BitCount]] \+.
    \qedhere
  \end{equation*}
\end{proposition}
\begin{proof}
  \newcommand*{\Middle}{{\Acronym{m}}}
  \newcommand*{\Top}{{\Acronym{t}}}
  \newcommand*{\Bottom}{{\Acronym{b}}}
  \newcommand*{\Left}{{\Acronym{l}}}
  \newcommand*{\Right}{{\Acronym{r}}}
  \newcommand*{\border}{{\operatorname{border}}}
  Let $\TilingSystem = \tuple{\Alphabet,\TStateSet,\TTileSet}$ be a \kl{tiling system}
  with alphabet $\Alphabet = \Boolean^{\BitCount}$.
  We have to construct a $\SigmaMSO{1}(\MLg)$-\kl{sentence} $\Formula[1]_\TilingSystem$
  over the \kl{signature} $\set{\SetConstant_1,…,\SetConstant_{\BitCount},\RelSymbol_1,\RelSymbol_2}$,
  such that each \kl{labeled} \kl{grid} $\Grid∈\GRID[\BitCount]$ \kl{satisfies} $\Formula[1]_\TilingSystem$
  if and only if it is accepted by $\TilingSystem$.

  The idea is standard:
  We represent the states of~$\TilingSystem$ by additional \kl{set symbols} $\tuple{\SetVariable[1]_\TState}_{\TState∈\TStateSet}$,
  and our \kl{formula} asserts that there exists a corresponding partition of~$\NodeSet{\Grid}$
  into $\card{\TStateSet}$ subsets
  that represent a run $\Grid^\#$ of~$\TilingSystem$ on~$\Grid$.
  To verify that it is indeed a valid run,
  we have to check that each $2{×}2$-\kl[grid]{subgrid} of~$\Grid^\#$
  corresponds to some tile
  \begin{equation*}
    \TTile =
    \begin{bmatrix*}[l]
      \TTile_1 & \TTile_2 \\
      \TTile_3 & \TTile_4
    \end{bmatrix*}
  \end{equation*}
  in $\TTileSet$.
  If the entry $\TTile_1$ is different from $\#$,
  we can easily write down an \kl[class-formula]{$\ML$-formula} $\Formula[1]_\TTile$
  that checks at a given position $\Node∈\NodeSet{\Grid}$,
  whether the $2{×}2$-\kl[grid]{subgrid} of~$\Grid^\#$ with upper-left corner $\Node$
  matches $\TTile$.
  Here, $\TTile_1$ is chosen as the representative entry of~$\TTile$,
  because the upper-left corner of the tile can “see” the other \kl{nodes}
  by following the directed $\RelSymbol_1$-{} and $\RelSymbol_2$-\kl{edges}.
  Otherwise, if $\TTile_1$ is equal to $\#$, there is no such \kl{node}~$\Node$,
  since $\Grid$ does not contain special border \kl{nodes}.
  However, we can always choose some other entry $\TTile_i$, different from $\#$,
  to be the representative of~$\TTile$,
  and write a \kl{formula} $\Formula[1]_\TTile$ describing the tile
  from the point of view of a \kl{node} corresponding to $\TTile_i$.
  This choice is never arbitrary,
  because the representative must be able to “see”
  the other \mbox{non-$\#$} entries of the tile.
  Consequently,
  we divide $\TTileSet$ into four disjoint sets $\TTileSet_1$, $\TTileSet_2$, $\TTileSet_3$, $\TTileSet_4$,
  such that $\TTileSet_i$ contains those tiles $\TTile$ that are represented by their entry $\TTile_i$.
  In order to facilitate the subsequent formalization,
  we further subdivide each set into partitions
  according to the $\#$-borders that occur within the tiles:
  $\TTileSet_\Middle$ contains the “middle tiles” (all entries different from~$\#$),
  $\TTileSet_\Left$ the “left tiles” (with $\TTile_1$ and $\TTile_3$ equal to~$\#$),
  $\TTileSet_{\Bottom\Right}$ the “bottom-right tiles”,
  and so forth~…
  Altogether,
  $\TTileSet$ is partitioned into nine subsets, grouped into four types:
  \begin{alignat*}{2}
    \TTileSet_1 &= \TTileSet_\Middle \dcup \TTileSet_\Bottom \dcup \TTileSet_\Right \dcup \TTileSet_{\Bottom\Right} \qquad& \TTileSet_2 &= \TTileSet_\Left \dcup \TTileSet_{\Bottom\Left} \\
    \TTileSet_3 &= \TTileSet_\Top \dcup \TTileSet_{\Top\Right}                                       & \TTileSet_4 &= \TTileSet_{\Top\Left}
  \end{alignat*}

  We now construct the \kl{formula} $\Formula[1]_\TilingSystem$ in a bottom-up manner,
  starting with a \kl[formula]{subformula} $\Formula[1]_{\TTile_i}$
  for each entry $\TTile_i$ other than $\#$, for every tile $\TTile∈\TTileSet$.
  Letting $\TTile_i$ be equal to $\tuple{\Letter,\TState}∈\Alphabet×\TStateSet$, with $\Letter = \Letter_1\dots\Letter_\BitCount$,
  the \kl{formula} $\Formula[1]_{\TTile_i}$ checks at a given position $\Node∈\NodeSet{\Grid}$
  if the \kl{labeling} of~$\Node$ matches~$\TTile_i$.
  \begin{equation*}
    \Formula[1]_{\TTile_i} \+=\; \smashoperator{\bigAND_{\Letter_j=1}}\PosIn{\SetConstant_j} \,\AND\,
                 \smashoperator{\bigAND_{\Letter_j=0}}\NOT \PosIn{\SetConstant_j} \,\AND\,
                 \PosIn{\SetVariable[1]_\TState} \,\AND\,
                 \smashoperator{\bigAND_{\TState'≠\TState}}\NOT \PosIn{\SetVariable[1]_{\TState'}}
  \end{equation*}

  Building on this, we can define for each tile $\TTile∈\TTileSet$
  the \kl{formula} $\Formula[1]_\TTile$ mentioned above.
  Since $\MLg$ does not have \kl{backward modalities},
  there is a certain asymmetry between tiles in $\TTileSet_1$,
  where the representative can “see” the entire $2{×}2$-\kl[grid]{subgrid},
  and the remaining tiles,
  where the representative must “know”
  that it lies in the leftmost column or the uppermost row of the \kl{grid} $\Grid$.
  We shall address this issue shortly,
  and just assume that information not accessible to the representative
  is verified by another part of the ultimate \kl{formula} $\Formula[1]_\TilingSystem$.
  For tiles in $\TTileSet_\Middle$, $\TTileSet_{\Bottom\Right}$, $\TTileSet_\Left$, $\TTileSet_{\Top\Left}$,
  the definitions of~$\Formula[1]_\TTile$ are given in the following table.
  For tiles in $\TTileSet_\Bottom$, $\TTileSet_\Right$, $\TTileSet_{\Bottom\Left}$, $\TTileSet_\Top$, $\TTileSet_{\Top\Right}$,
  the method is completely analogous. \\
  \begin{center}
    \begin{tabular}{rl}
      \toprule
      $\swl{\;\TTile}{
        \bigl[ \begin{smallmatrix*}[l]
          \TTile_1 & \TTile_2 \\
          \TTile_3 & \TTile_4
        \end{smallmatrix*} \bigr]
      }$
      & $\Formula[1]_\TTile$ \\
      \midrule\addlinespace
      $\swl{\TTileSet_\Middle}{\TTileSet_{\Bottom\Right}} ∋
        \bigl[ \begin{smallmatrix*}[l]
          \TTile_1 & \TTile_2 \\
          \TTile_3 & \TTile_4
        \end{smallmatrix*} \bigr]$
      & $\Formula[1]_{\TTile_1} {\+\AND\,} \dm[2]\Formula[1]_{\TTile_2} {\+\AND\,} \dm[1]\Formula[1]_{\TTile_3} {\+\AND\,} \dm[1]\dm[2]\Formula[1]_{\TTile_4}$ \\\addlinespace
      $\TTileSet_{\Bottom\Right} ∋
        \bigl[ \begin{smallmatrix*}[l]
          \TTile_1 & \# \\
          \# & \swl{\scriptstyle\#}{\TTile_4}
        \end{smallmatrix*} \bigr]$
      & $\Formula[1]_{\TTile_1} \AND\, \bx[2]\False \+\AND\, \bx[1]\False$ \\\addlinespace
      $\swl{\TTileSet_\Left}{\TTileSet_{\Bottom\Right}} ∋
        \bigl[ \begin{smallmatrix*}[l]
          \# & \TTile_2 \\
          \swl{\scriptstyle\#}{\TTile_3} & \TTile_4
        \end{smallmatrix*} \bigr]$
      & $\Formula[1]_{\TTile_2} \AND \dm[1]\Formula[1]_{\TTile_4}$ \\\addlinespace
      $\TTileSet_{\Top\Left} ∋
        \bigl[ \begin{smallmatrix*}[l]
          \# & \# \\
          \swl{\scriptstyle\#}{\TTile_3} & \TTile_4
        \end{smallmatrix*} \bigr]$
      & $\Formula[1]_{\TTile_4}$ \\\addlinespace
      \bottomrule
    \end{tabular}
  \end{center}
  \vspace{\belowdisplayskip}

  It remains to mark the top and left borders of~$\Grid$,
  using two additional predicates $\SetVariable[2]_\Top$ and $\SetVariable[2]_\Left$,
  over which we will \kl[set quantifier]{quantify} existentially.
  To this end,
  we write an \kl[class-formula]{$\MLg$-formula} $\Formula[1]_\border$, checking
  that top [resp.~left] \kl{nodes}
  have no $\RelSymbol_1$-{} [resp.~$\RelSymbol_2$-] predecessor,
  that there is a top-left \kl{node},
  and that being top [resp.~left] is passed on to the $\RelSymbol_2$-{} [resp.~$\RelSymbol_1$-] successor,
  if it exists.
  \begin{equation*}
    \begin{aligned}
      \Formula[1]_\border \+=\; \NOT&\gdm(\dm[1]\PosIn{\SetVariable[2]_\Top}\OR\dm[2]\PosIn{\SetVariable[2]_\Left}) \;\AND\; \gdm(\PosIn{\SetVariable[2]_\Top}\AND \PosIn{\SetVariable[2]_\Left}) \;\AND{} \\
                     &\+\+\gbx\Bigl((\PosIn{\SetVariable[2]_\Top}{\,\IMP\,}\bx[2]\PosIn{\SetVariable[2]_\Top}) \AND (\PosIn{\SetVariable[2]_\Left}{\,\IMP\,}\bx[1]\PosIn{\SetVariable[2]_\Left})\Bigr)
    \end{aligned}
  \end{equation*}

  Finally,
  we can put everything together
  to describe the acceptance condition of~$\TilingSystem$.
  Every \kl{node} $\Node∈\NodeSet{\Grid}$ has to ensure
  that it corresponds to the upper-left corner of some tile in $\TTileSet_1$.
  Furthermore, \kl{nodes} in the leftmost column or uppermost row of~$\Grid$
  must additionally check that the assignment of states is compatible
  with the tiles in $\TTileSet_2$, $\TTileSet_3$, $\TTileSet_4$.
  This leads to the desired \kl{formula}~$\Formula[1]_\TilingSystem$:
  \begin{align*}
    \Exists{\tuple{\SetVariable[1]_\TState}_{\TState∈\TStateSet},\,\SetVariable[2]_\Top,\,\SetVariable[2]_\Left}\biggl(\:
    &\Formula[1]_\border \;\AND{} \\
    &\,\gbx\bigl(\,\smashoperator{\bigOR_{\TTile ∈ \TTileSet_1}}\!\Formula[1]_\TTile\bigr)
     \;\AND\; \gbx\bigl(\PosIn{\SetVariable[2]_\Left}\IMP\smashoperator{\bigOR_{\TTile ∈ \TTileSet_2}}\!\Formula[1]_\TTile\bigr) \;\AND{} \\[-1ex]
    &\,\gbx\bigl(\PosIn{\SetVariable[2]_\Top}\IMP\smashoperator{\bigOR_{\TTile ∈ \TTileSet_3}}\!\Formula[1]_\TTile\bigr)
     \;\AND\; \gbx\bigl(\PosIn{\SetVariable[2]_\Top} \AND \PosIn{\SetVariable[2]_\Left}\IMP\smashoperator{\bigOR_{\TTile ∈ \TTileSet_4}}\!\Formula[1]_\TTile\bigr)\:\biggr)
  \end{align*}
  Note that we do not need a separate \kl[formula]{subformula} to check that the
  \kl{interpretations} of $\tuple{\SetVariable[1]_\TState}_{\TState∈\TStateSet}$ form a partition of~$\NodeSet{\Grid}$,
  since this is already done implicitly in the conjunct $\gbx(\bigOR_{\!\TTile ∈ \TTileSet_1}\!\Formula[1]_\TTile)$.
\end{proof}

\subsection{Equivalent hierarchies on grids}

We now have all we need
to prove the levelwise \kl[device equivalence]{equivalence} of
$\MSOL$, $\MSO(\dMLg)$ and $\MSO(\MLg)$ on \kl{labeled} \kl{grids}.

\pagebreak[3]
\begin{theorem}
  \label{thm:levelwise-equivalence}
  Let $\BitCount≥0$,\: $\Level≥1$ and
  $\FormulaClassOperator ∈ \set{\SigmaMSO{\Level},\,\PiMSO{\Level},\,\BC\SigmaMSO{\Level},\,\DeltaMSO{\Level}}$.
  When restricted to the class of $\BitCount$-bit \kl{labeled} \kl{grids},
  $\FormulaClassOperator\+(\FOL)$, $\FormulaClassOperator\+(\dMLg)$ and $\FormulaClassOperator\+(\MLg)$
  are \kl[device equivalent]{equivalent},\, i.e.,
  \begin{align*}
    \semF{\+\FormulaClassOperator\+(\FOL)}[{\GRID[\BitCount]}] \:&=\; \semF{\+\FormulaClassOperator\+(\dMLg)}[{\GRID[\BitCount]}] \\
                                   &=\; \semF{\+\FormulaClassOperator\+(\MLg)}[{\GRID[\BitCount]}] \+.
    \qedhere
  \end{align*}
\end{theorem}
\begin{proof}
  First,
  we show that the claim holds for the case $\FormulaClassOperator = \SigmaMSO{1}$
  (with arbitrary $\BitCount≥0$).
  This can be seen from the following circular chain of inclusions:
  \begin{align}
    \semF{\SigmaMSO{1}(\MLg)}[{\GRID[\BitCount]}] \:&⊆\; \semF{\SigmaMSO{1}(\dMLg)}[{\GRID[\BitCount]}] \label{eq:ehgs-ehbgs} \tag{a} \\
                                    &⊆\; \semF{\SigmaMSO{1}(\FOL)}[{\GRID[\BitCount]}] \label{eq:ehbgs-emso} \tag{b} \\
                                    &⊆\; \semT{\TS_{\BitCount}}[\GRID[\BitCount]] \label{eq:emso-ts} \tag{c} \\
                                    &⊆\; \semF{\SigmaMSO{1}(\MLg)}[{\GRID[\BitCount]}] \label{eq:ts-ehgs} \tag{d}
  \end{align}
  \begin{enumerate}
  \item[(\ref{eq:ehgs-ehbgs})]
    The first inclusion follows from the fact that
    $\SigmaMSO{1}(\MLg)$ is a syntactic fragment of $\SigmaMSO{1}(\dMLg)$.
  \item[(\ref{eq:ehbgs-emso})]
    For the second inclusion,
    consider any \kl[class-formula]{$\SigmaMSO{1}(\dMLg)$-formula}
    $\widehat{\Formula[1]} = \Exists{\SetVariable_1,…,\SetVariable_n}(\Formula[1])$,
    where $\SetVariable_1,…,\SetVariable_n$ are \kl{set symbols} and $\Formula[1]$ is
    an \kl[class-formula]{$\dMLg$-formula}.
    By \cref{prp:standard-translation},
    we can replace $\Formula[1]$ in $\widehat{\Formula[1]}$ by an \kl[device equivalent]{equivalent} $\FOL$-formula $\Formula[2]_{\Formula[1]}$.
    This results in the \kl[class-formula]{$\SigmaMSO{1}(\FOL)$-formula}
    $\Formula[2]_{\widehat{\Formula[1]}} = \Exists{\SetVariable_1,…,\SetVariable_n}(\Formula[2]_{\Formula[1]})$,
    which is \kl[device equivalent]{equivalent} to $\widehat{\Formula[1]}$ on arbitrary \kl{structures},
    and thus, in particular, on $\BitCount$-bit \kl{labeled} \kl{grids}.
  \item[(\ref{eq:emso-ts})]
    The translation from $\SigmaMSO{1}(\FOL)$ on \kl{labeled} \kl{grids} to \kl{tiling systems}
    corresponds to the more challenging direction of \cref{thm:equivalence-ts-emso},
    which is the main result of \cite{DBLP:journals/iandc/GiammarresiRST96}.
  \item[(\ref{eq:ts-ehgs})]
    The last inclusion is given by \cref{prp:inclusion-ts-ehgs}.
  \end{enumerate}

  The general version of the \lcnamecref{thm:levelwise-equivalence}
  can now be obtained by induction on $\Level$.
  This is straightforward,
  because the classes
  $\PiMSO{\Level}(\FormulaSet[1])$,\, $\BC\SigmaMSO{\Level}(\FormulaSet[1])$ and $\SigmaMSO{\Level+1}(\FormulaSet[1])$
  are defined syntactically in terms of $\SigmaMSO{\Level}(\FormulaSet[1])$,
  for any set of \kl{kernel} \kl{formulas} $\FormulaSet[1]$
  (see \cref{sec:alternation-preliminaries}),
  and if the claim holds for $\FormulaClassOperator ∈ \set{\SigmaMSO{\Level},\,\PiMSO{\Level}}$,
  then it also holds for the intersection classes of the form $\semF{\DeltaMSO{\Level}(\FormulaSet[1])}[{\GRID[\BitCount]}]$.
\end{proof}

\subsection{A logical characterization of grids}

We conclude this \lcnamecref{sec:grids} by showing
that a single layer of universal \kl{set quantifiers} is enough
to describe \kl{grids} in $\MSO(\dMLg)$.

\begin{proposition}
  \label{prp:grid-definability}
  The set of all \kl{grids} is $\PiMSO{1}(\dMLg)$-\kl{definable}
  over \mbox{$2$-relational} \kl{digraphs}, i.e.,
  \begin{equation*}
    \GRID \;∈\; \semF{\PiMSO{1}(\dMLg)}[{\DIGRAPH[0][2]}] \+.
    \qedhere
  \end{equation*}
\end{proposition}
\begin{proof}
  \newcommand*{\Source}{{\operatorname{src}}}
  In the course of this proof,
  we give a list of properties,
  items~\labelcref{itm:injective-function,itm:reach-sink,itm:unique-source,itm:horiz-sources-linked,itm:down-right-diagonal,itm:down-right-commute},
  which are obviously necessary
  for a $2$-relational \kl{digraph} $\Digraph$ to be a \kl{grid},
  and show how to express them as \kl[class-formula]{$\eclF{\PiMSO{1}(\dMLg)}$-formulas}.
  We argue that the conjunction of all of these properties
  also constitutes a sufficient condition for being a \kl{grid},
  which immediately provides us with the required \kl{formula},
  since $\semF{\PiMSO{1}(\dMLg)}$ is closed under intersection.
  \begin{enumerate}
  \item \label{itm:injective-function}
    For each \kl{relation symbol} $\RelSymbol ∈ \set{\RelSymbol_1,\RelSymbol_2}$,
    every \kl{node} has at most one $\RelSymbol$-predecessor and at most one $\RelSymbol$-successor;
    in other words,
    $\inp{\RelSymbol_1}{\Digraph}$ and $\inp{\RelSymbol_2}{\Digraph}$ are partial injective functions.
    \begin{equation*}
      \smashoperator[r]{\bigAND_{\RelSymbol\,∈\,\set{\RelSymbol_1,\,\invr{\RelSymbol_1}\!,\,\RelSymbol_2,\,\invr{\RelSymbol_2}}}} \:
      \Forall{\SetVariable}\+\gbx \bigl( \dm[\RelSymbol]\PosIn{\SetVariable} \IMP \bx[\RelSymbol]\PosIn{\SetVariable} \bigr)
    \end{equation*}
  \item \label{itm:reach-sink}
    Again considering each $\RelSymbol ∈ \set{\RelSymbol_1,\RelSymbol_2}$ separately,
    there is a directed $\RelSymbol$-path from every \kl{node} to an $\RelSymbol$-\kl{sink},
    i.e., to some \kl{node} without $\RelSymbol$-successor.
    \begin{equation*}
      \smashoperator[r]{\bigAND_{\RelSymbol\,∈\,\set{\RelSymbol_1,\,\RelSymbol_2}}} \,
      \Forall{\SetVariable} \Bigl(
      \gdm \PosIn{\SetVariable} {\,\AND\,} \gbx\bigl( \PosIn{\SetVariable} {\,\IMP\,} \bx[\RelSymbol]\PosIn{\SetVariable} \bigr)
      \IMP \gdm\bigl( \PosIn{\SetVariable} {\,\AND\,} \bx[\RelSymbol]\False \bigr)
      \Bigr)
    \end{equation*}
  \end{enumerate}
  Taken together,
  properties~\labelcref{itm:injective-function,itm:reach-sink}
  state that $\inp{\RelSymbol_1}{\Digraph}$ and $\inp{\RelSymbol_2}{\Digraph}$
  each form a collection of directed, acyclic, pairwise vertex-disjoint paths.
  Let us refer to the first \kl{nodes} of those paths
  as $\RelSymbol_1$-{} and $\RelSymbol_2$-\kl{sources}, respectively.
  \begin{enumerate}[resume*]
  \item \label{itm:unique-source}
    There is precisely one \kl{node} that is both an $\RelSymbol_1$-{} and an $\RelSymbol_2$-\kl{source}.
    \begin{equation*}
      \totone\bigl(\+ \bbx[1]\False \AND \bbx[2]\False \bigr)
    \end{equation*}
    (Here, $\totone$ is the schema from
    \cref{ex:uniqueness} in \cref{sec:example-formulas}.)
  \item \label{itm:horiz-sources-linked}
    The $\RelSymbol_1$-predecessors and $\RelSymbol_1$-successors of $\RelSymbol_2$-\kl{sources}
    must be $\RelSymbol_2$-\kl{sources} themselves.
    \begin{equation*}
      \gbx \bigl(\+ \bbx[2]\False \+\IMP\+ \bbx[1]\bbx[2]\False \AND \bx[1]\bbx[2]\False \bigr)
    \end{equation*}
  \end{enumerate}
  By adding~\labelcref{itm:unique-source,itm:horiz-sources-linked}
  to our list of conditions,
  we ensure that there is an $\RelSymbol_1$-path consisting precisely of the $\RelSymbol_2$-\kl{sources},
  thereby also forcing the \kl{digraph} $\Digraph$ to be \kl{connected}.
  \begin{enumerate}[resume*]
  \item \label{itm:down-right-diagonal}
    If a \kl{node} has both an $\RelSymbol_1$- and an $\RelSymbol_2$-successor,
    then it also has a descendant
    reachable by first taking an $\RelSymbol_1$-\kl{edge} and then an $\RelSymbol_2$-\kl{edge}.
    \begin{equation*}
      \gbx \bigl( \dm[1]\True \AND \dm[2]\True \+\IMP\+ \dm[1]\dm[2]\True \bigr)
    \end{equation*}
  \item \label{itm:down-right-commute}
    The relations $\inp{\RelSymbol_1}{\Digraph}$ and $\inp{\RelSymbol_2}{\Digraph}$ commute.
    This means that following an $\RelSymbol_1$-\kl{edge} and then an $\RelSymbol_2$-\kl{edge}
    leads to the same \kl{node} as
    first taking an $\RelSymbol_2$-\kl{edge} and then an $\RelSymbol_1$-\kl{edge}.
    \begin{equation*}
      \Forall{\SetVariable}\+\gbx \bigl( \dm[1]\dm[2]\PosIn{\SetVariable} \IFF \dm[2]\dm[1]\PosIn{\SetVariable} \bigr)
    \end{equation*}
  \end{enumerate}
  Considered in conjunction with
  condition~\labelcref{itm:injective-function},
  there are only two ways to
  satisfy~\labelcref{itm:down-right-diagonal,itm:down-right-commute}
  from the point of view of
  two \kl{nodes} $\Node[1],\Node[2]∈\NodeSet{\Digraph}$ that are connected by an $\RelSymbol_1$-\kl{edge}
  from $\Node[1]$ to $\Node[2]$:
  either both \kl{nodes} are $\RelSymbol_2$-\kl{sinks},
  or they have $\RelSymbol_2$-successors $\Node[1]'$ and $\Node[2]'$, respectively,
  with an $\RelSymbol_1$-\kl{edge} from $\Node[1]'$ to $\Node[2]'$.
  Moreover,
  $\Node[2]'$ only possesses an $\RelSymbol_1$-successor if $\Node[2]$ does.
  Now, imagine we start from the left border,
  i.e., from the $\RelSymbol_1$-path that consists of all the $\RelSymbol_2$-\kl{sources},
  which is provided by
  properties~\labelcref{itm:injective-function,itm:reach-sink,itm:unique-source,itm:horiz-sources-linked},
  and iteratively enforce the requirements just mentioned.
  Then, in doing so, we propagate the grid topology through the entire \kl{digraph}.
  More specifically,
  the additional requirements of~\labelcref{itm:down-right-diagonal,itm:down-right-commute}
  entail that all the $\RelSymbol_2$-paths have the same length,
  and that the \kl{nodes} lying at a fixed (horizontal) position of those $\RelSymbol_2$-paths
  constitute an independent $\RelSymbol_1$-path,
  ordered in the same way as their respective $\RelSymbol_2$-predecessors.
\end{proof}

\section{A toolbox of encodings}
\label{sec:encodings}

In this \lcnamecref{sec:encodings},
we provide all the encoding functions
used in the proof of \cref{thm:separation-R} (see \cref{ssec:main-proof}),
and show that they satisfy suitable translatability properties,
allowing us to establish the required \kl{figurative inclusions}.
With a view to modularity and reusability,
some of our constructions are more general than needed.

Given a set of \kl{symbols} $\Signature$,
the extension $\Signature ∪ \set{\PosSymbol}$
will be abbreviated to~\Intro*{$\enrichedsig{\Signature}{\PosSymbol}$}.

\subsection{Encodings that allow for translation}

We shall only consider encoding functions
that are linear in the following sense:

\begin{definition}[Linear Encoding]
  \label{def:linear-encoding}
  \NoMathBreak
  Let $\StructClass[1]$, $\StructClass[2]$ be two classes of \kl{structures},
  and $\EncodingScale$, $\EncodingOffset$ be integers
  such that $1≤\EncodingScale≤\EncodingOffset$.
  A \Intro{linear encoding} from $\StructClass[1]$
  into $\StructClass[2]$
  with parameters $\EncodingScale$, $\EncodingOffset$
  is a total injective function
  $\Encoding\colon\StructClass[1]\to\StructClass[2]$
  that assigns to each \kl{structure} $\Structure[1]∈\StructClass[1]$
  a \kl{structure} $\Encoding(\Structure[1])∈\StructClass[2]$,
  whose \kl{domain} is composed of~$\EncodingScale$ disjoint copies
  of the \kl{domain} of $\Structure[1]$
  and $\EncodingOffset-\EncodingScale$ additional \kl{nodes}, i.e.,
  \begin{equation*}
    \NodeSet{\Encoding(\Structure[1])} =
    \bigl(\range[1]{\EncodingScale}×\NodeSet{\Structure[1]}\bigr) \;∪\;
    \lrange[\EncodingScale]{\EncodingOffset}.
    \qedhere
  \end{equation*}
\end{definition}

Given such a \kl{linear encoding} $\Encoding$
and some \kl[class-formula]{$\dMLg$-formula} $\Formula[1]$,
we want to be able to construct a new \kl{formula} $\Formula[2]_{\Formula[1]}$,
such that
evaluating
$\Formula[1]$ on $\StructClass[1]$
is equivalent to evaluating
$\Formula[2]_{\Formula[1]}$ on $\Encoding(\StructClass[1])$.
Conversely,
we also desire a way of constructing a \kl{formula} $\Formula[1]_{\Formula[2]}$
that is equivalent on $\StructClass[1]$
to a given \kl{formula} $\Formula[2]$ on $\Encoding(\StructClass[1])$.
The following two \lcnamecrefs{def:forward-translation}
formalize this translatability property for both directions.
We then show in \cref{lem:translation-inclusion}
that they adequately capture our intended meaning.
Although the underlying idea is very simple,
the presentation is a bit lengthy
because we have to exhaustively cover the structure
of \kl[class-formula]{$\dMLg$-formulas}.

\begin{definition}[Forward Translation]
  \label{def:forward-translation}
  \NoMathBreak
  Consider two classes of \kl{structures} $\StructClass[1]$ and $\StructClass[2]$
  over \kl{signatures} $\Signature[1]$ and $\Signature[2]$, respectively,
  two classes of \kl{formulas}
  $\FormulaSet[1],\FormulaSet[2]∈\set{\ML,\,\dML,\,\MLg,\,\dMLg}$,
  and a \kl{linear encoding}
  $\Encoding\colon\StructClass[1]\to\StructClass[2]$.
  We say that $\Encoding$
  allows for \Intro{forward translation} from $\FormulaSet[1]$ to $\FormulaSet[2]$
  if the following properties are satisfied:
  \begin{enumerate}
  \item \label{itm:fwd-set}
    For each \kl{node symbol} or \kl{set symbol} $\SetConstant[1]$
    in $\Signature[1]$,
    there is a $\FormulaSet[2]$-\kl{sentence} $\Formula[2]_{\SetConstant[1]}$
    over $\enrichedsig{\Signature[2]}{\PosSymbol}$,
    such that
    \begin{equation*}
      \ver{\Structure[1]}{\PosSymbol}{\Node[1]}
      \,\Models\, \PosIn{\SetConstant[1]}
      \quad\text{iff}\quad
      \ver{\Encoding(\Structure[1])}{\PosSymbol}{\tuple{1,\Node[1]}}
      \,\Models\,
      \Formula[2]_{\SetConstant[1]}\+,
    \end{equation*}
    for all $\Structure[1]∈\StructClass[1]$
    and $\Node[1]∈\NodeSet{\Structure[1]}$.
  \item \label{itm:fwd-relation}
    For each \kl{relation symbol} $\RelSymbol[1]$ in $\Signature[1]$
    of \kl{arity} $\Arity+1≥2$,
    there is a $\FormulaSet[2]$-\kl{sentence} $\Formula[2]_{\RelSymbol[1]}$
    over $\enrichedsig{\Signature[2]}{\PosSymbol}$
    enriched with additional \kl{set symbols} $\tuple{\SetVariable[2]_i}_{1≤i≤\Arity}$,
    such that
    \begin{align*}
      &\bigver{\Structure[1]}
              {\PosSymbol,\tuple{\SetVariable[1]_i}_{i≤\Arity}}
              {\Node[1],\tuple{\NodeSubset[1]_i}_{i≤\Arity}}
       \,\Models\,
       \dm[\RelSymbol[1]]\tuple{\PosIn{\SetVariable[1]_i}}_{i≤\Arity} \\
      &\quad\text{if and only if} \\
      &\bigver{\Encoding(\Structure[1])}
              {\PosSymbol,\tuple{\SetVariable[2]_i}_{i≤\Arity}}
              {\tuple{1,\Node[1]},\tuple{\NodeSubset[2]_i}_{i≤\Arity}}
       \,\Models\,
       \Formula[2]_{\RelSymbol[1]}\+, \\[0.5ex]
      &\quad
       \text{
         assuming
         $\NodeSubset[1]_i,\NodeSubset[2]_i$
         satisfy
         $\Node[1]'{∈\,}\NodeSubset[1]_i⇔\tuple{1,\Node[1]'}{\,∈\,}\NodeSubset[2]_i$,
       }
    \end{align*}
    for all $\Structure[1]∈\StructClass[1]$,\;
    $\Node[1]∈\NodeSet{\Structure[1]}$,
    sets $\tuple{\NodeSubset[1]_i}_{1≤i≤\Arity}⊆\NodeSet{\Structure[1]}$
    and $\tuple{\NodeSubset[2]_i}_{1≤i≤\Arity}⊆\NodeSet{\Encoding(\Structure[1])}$,
    and \kl{set symbols} $\tuple{\SetVariable[1]_i}_{1≤i≤\Arity}$.
  \item \label{itm:fwd-backward}
    If $\FormulaSet[1]$ includes \kl{backward modalities},
    then for each \kl{relation symbol} $\RelSymbol[1]$ in $\Signature[1]$
    of \kl{arity} at least $2$,
    there is a \kl[class-formula]{$\FormulaSet[2]$-formula} $\Formula[2]_{\invr{\RelSymbol[1]}}$
    that satisfies
    the property of item~\ref{itm:fwd-relation}
    for $\invr{\RelSymbol[1]}$ instead of $\RelSymbol[1]$.
  \item \label{itm:fwd-global}
    If $\FormulaSet[1]$ includes \kl{global modalities},
    then there is a \kl[class-formula]{$\FormulaSet[2]$-formula} $\Formula[2]_{\GlobalRelSymbol}$
    that satisfies
    the property of item~\ref{itm:fwd-relation}
    for~$\GlobalRelSymbol$ instead of $\RelSymbol[1]$ and $\Arity=1$.
  \item \label{itm:fwd-initial}
    \newcommand*{\XgdmX}{%
      \NoHeight{\scalebox{0.8}{%
          $\displaystyle\frac{\PosIn{\SetVariable[1]}}{\gdm \PosIn{\SetVariable[1]} \vphantom{\big(}}$%
      }}%
    }
    There is a $\FormulaSet[2]$-\kl{sentence} $\Formula[2]_\ini$ over $\Signature[2]$
    enriched with an additional \kl{set symbol} $\SetVariable[2]$, such that
    \begin{align*}
      &\ver{\Structure[1]}{\SetVariable[1]}{\NodeSubset[1]}
       \,\Models\,
       \XgdmX
       \quad\text{iff}\quad
       \ver{\Encoding(\Structure[1])}{\SetVariable[2]}{\NodeSubset[2]}
       \,\Models\,
       \Formula[2]_\ini, \\[1ex]
      &\quad\text{
        assuming
        $\NodeSubset[1],\NodeSubset[2]$
        satisfy
        $\Node[1]∈\NodeSubset[1]⇔\tuple{1,\Node[1]}∈\NodeSubset[2]$,
       } \\
      &\quad\text{
        where
        $\XgdmX$ is \,$\PosIn{\SetVariable[1]}$\,
        if $\PosSymbol∈\Signature[1]$,\,
        and $\gdm \PosIn{\SetVariable[1]}$ otherwise,
       }
    \end{align*}
    for all $\Structure[1]∈\StructClass[1]$,\,
    $\NodeSubset[1]⊆\NodeSet{\Structure[1]}$,\,
    $\NodeSubset[2]⊆\NodeSet{\Encoding(\Structure[1])}$
    and $\SetVariable[1]∈\RelSymbolSet{1}$.
    \qedhere
  \end{enumerate}
\end{definition}

\begin{definition}[Backward Translation]
  \label{def:backward-translation}
  \NoMathBreak
  Consider two classes of \kl{structures} $\StructClass[1]$ and $\StructClass[2]$
  over \kl{signatures} $\Signature[1]$ and $\Signature[2]$, respectively,
  two classes of \kl{formulas} $\FormulaSet[1],\FormulaSet[2]∈\set{\ML,\,\dML,\,\MLg,\,\dMLg}$,
  and a \kl{linear encoding} $\Encoding\colon\StructClass[1]\to\StructClass[2]$
  with parameters $\EncodingScale$, $\EncodingOffset$.
  We say that $\Encoding$
  allows for \Intro{backward translation} from $\FormulaSet[2]$ to $\FormulaSet[1]$
  if the following properties are satisfied:
  \begin{enumerate}
  \item \label{itm:bwd-set}
    For each \kl{node symbol} or \kl{set symbol} $\SetConstant[2]$ in $\Signature[2]$ and all $h∈\range{\EncodingOffset}$,
    there is a $\FormulaSet[1]$-\kl{sentence} $\Formula[1]_{\SetConstant[2]}^h$ over $\enrichedsig{\Signature[1]}{\PosSymbol}$, such that
    \begin{align*}
      &\ver{\Structure[1]}{\PosSymbol}{\Node[1]} \,\Models\, \Formula[1]_{\SetConstant[2]}^h \quad\text{iff}\quad \ver{\Encoding(\Structure[1])}{\PosSymbol}{\Node[2]} \,\Models\, \PosIn{\SetConstant[2]}, \\[0.5ex]
      &\quad\text{where $\Node[2]$ is $\tuple{h,\Node[1]}$ if $h≤\EncodingScale$, and $h$ otherwise,}
    \end{align*}
    for all $\Structure[1]∈\StructClass[1]$ and $\Node[1]∈\NodeSet{\Structure[1]}$.
  \item \label{itm:bwd-relation}
    For each \kl{relation symbol} $\RelSymbol[2]$ in $\Signature[2]$ of \kl{arity} $\Arity+1≥2$, and all $h∈\range{\EncodingOffset}$,
    there is a $\FormulaSet[1]$-\kl{sentence} $\Formula[1]_{\RelSymbol[2]}^h$ over $\enrichedsig{\Signature[1]}{\PosSymbol}$ \strut
    enriched with additional \kl{set symbols} $\tuple{\SetVariable[1]_i^j}_{1≤i≤\Arity}^{1≤j≤\EncodingOffset}$,\, such that
    \begin{align*}
      &\bigver{\Structure[1]}{\PosSymbol,\tuple{\SetVariable[1]_i^j}_{i≤\Arity}^{j≤\EncodingOffset}}{\Node[1],\tuple{\NodeSubset[1]_i^j}_{i≤\Arity}^{j≤\EncodingOffset}} \,\Models\, \Formula[1]_{\RelSymbol[2]}^h \\
      &\quad\text{if and only if} \\
      &\bigver{\Encoding(\Structure[1])}{\PosSymbol,\tuple{\SetVariable[2]_i}_{i≤\Arity}}{\Node[2], \tuple{\NodeSubset[2]_i}_{i≤\Arity}} \,\Models\, \dm[\RelSymbol[2]]\tuple{\PosIn{\SetVariable[2]_i}}_{i≤\Arity}, \\[1ex]
      &\quad\text{where $\Node[2]$ is $\tuple{h,\Node[1]}$ if $h≤\EncodingScale$, otherwise $h$,\, and} \\
      &\quad \NodeSubset[2]_i =\; \smashoperator{\bigcup_{1≤j≤\EncodingScale}}\,\bigl(\set{j}{×}\NodeSubset[1]_i^j\bigr)
      \,∪\, \,\smashoperator{\bigcup_{\EncodingScale<j≤\EncodingOffset}}\, \setbuilder{j}{\NodeSubset[1]_i^j=\NodeSet{\Structure[1]}},
    \end{align*}
    for all $\Structure[1]∈\StructClass[1]$, \kl{nodes} $\Node[1]∈\NodeSet{\Structure[1]}$, sets $\tuple{\NodeSubset[1]_i^j}_{1≤i≤\Arity}^{1≤j≤\EncodingScale}⊆\NodeSet{\Structure[1]}$\,
    and $\tuple{\NodeSubset[1]_i^j}_{1≤i≤\Arity}^{\EncodingScale<j≤\EncodingOffset}∈\set{\EmptySet,\NodeSet{\Structure[1]}}$,
    and \kl{set symbols} $\tuple{\SetVariable[2]_i}_{1≤i≤\Arity}$.
  \item \label{itm:bwd-backward}
    If $\FormulaSet[2]$ includes \kl{backward modalities},
    then for each \kl{relation symbol} $\RelSymbol[2]$ in $\Signature[2]$ of \kl{arity} at least $2$, and all $h∈\range{\EncodingOffset}$,
    there is a \kl[class-formula]{$\FormulaSet[1]$-formula} $\Formula[1]_{\invr{\RelSymbol[2]}}^h$ that satisfies
    the property of item~\ref{itm:bwd-relation} for $\invr{\RelSymbol[2]}$ instead of $\RelSymbol[2]$.
  \item \label{itm:bwd-global}
    If $\FormulaSet[2]$ includes \kl{global modalities}, then for all $h∈\range{\EncodingOffset}$,
    there is a \kl[class-formula]{$\FormulaSet[1]$-formula} $\Formula[1]_{\GlobalRelSymbol}^h$ that satisfies
    the property of item~\ref{itm:bwd-relation} for $\GlobalRelSymbol$ instead of $\RelSymbol[2]$ and $\Arity=1$.
  \item \label{itm:bwd-initial}
    \newcommand*{\YgdmY}{\NoHeight{\scalebox{0.8}{$\displaystyle\frac{\PosIn{\SetVariable[2]}}{\gdm \PosIn{\SetVariable[2]} \vphantom{\big(}}$}}}
    There is a $\FormulaSet[1]$-\kl{sentence} $\Formula[1]_\ini$ over $\Signature[1]$
    enriched with additional \kl{set symbols} $\tuple{\SetVariable[1]^j}^{1≤j≤\EncodingOffset}$, such that
    \begin{align*}
      &\bigver{\Structure[1]}{\tuple{\SetVariable[1]^j}^{j≤\EncodingOffset}}{\tuple{\NodeSubset[1]^j}^{j≤\EncodingOffset}} \,\Models\, \Formula[1]_\ini \\
      &\quad\text{if and only if} \\
      &\ver{\Encoding(\Structure[1])}{\SetVariable[2]}{\NodeSubset[2]} \,\Models\, \YgdmY \,, \\[1ex]
      &\quad\text{where $\YgdmY$ is \,$\PosIn{\SetVariable[2]}$\, if $\PosSymbol∈\Signature[2]$,\, otherwise $\gdm \PosIn{\SetVariable[2]}$, and} \\[0.5ex]
      &\quad \NodeSubset[2] =\; \smashoperator{\bigcup_{1≤j≤\EncodingScale}}\,\bigl(\set{j}{×}\NodeSubset[1]^j\bigr)
      \,∪\, \,\smashoperator{\bigcup_{\EncodingScale<j≤\EncodingOffset}}\, \setbuilder{j}{\NodeSubset[1]^j=\NodeSet{\Structure[1]}},
    \end{align*}
    for all \kl{structures} $\Structure[1]∈\StructClass[1]$,\, sets $\tuple{\NodeSubset[1]^j}^{1≤j≤\EncodingScale}⊆\NodeSet{\Structure[1]}$
    and $\tuple{\NodeSubset[1]^j}^{\EncodingScale<j≤\EncodingOffset}∈\set{\EmptySet,\NodeSet{\Structure[1]}}$,
    and $\SetVariable[2]∈\RelSymbolSet{1}$.
    \qedhere
  \end{enumerate}
\end{definition}

To simplify matters slightly,
we shall say that a \kl{linear encoding} $\Encoding$
allows for \Intro{bidirectional translation} between $\FormulaSet[1]$ and $\FormulaSet[2]$,
if it allows for both
forward  translation from $\FormulaSet[1]$ to $\FormulaSet[2]$ and
\kl{backward translation} from $\FormulaSet[2]$ to $\FormulaSet[1]$.
Furthermore,
in case $\FormulaSet[1]=\FormulaSet[2]$,
we may say “within $\FormulaSet[1]$” instead of “between $\FormulaSet[1]$ and $\FormulaSet[1]$”.

Let us now prove
that our notion of translatability is indeed sufficient
to imply \kl{figurative inclusion} on the semantic side,
even if we bring \kl{set quantifiers} into~play.

\begin{lemma}
  \label{lem:translation-inclusion}
  Consider two classes of \kl{structures} $\StructClass[1]$ and $\StructClass[2]$,
  a \kl{linear encoding} $\Encoding\colon\StructClass[1]\to\StructClass[2]$,
  two classes of \kl{formulas} $\FormulaSet[1],\FormulaSet[2]∈\set{\ML,\,\dML,\,\MLg,\,\dMLg}$,
  and let $\FormulaClassOperator ∈ \set{\SigmaMSO{\Level},\,\PiMSO{\Level},\,\BC\SigmaMSO{\Level}}$,\,
  for some arbitrary $\Level≥0$.
  \begin{enumerate}
  \item \label{itm:forward-inclusion}
    If $\Encoding$ allows for \kl{forward translation} from $\FormulaSet[1]$ to $\FormulaSet[2]$,
    then we have
    \begin{equation*}
      \semF{\+\FormulaClassOperator\+(\FormulaSet[1])}[\StructClass[1]] \;\figsubeq{\Encoding}\; \semF{\+\FormulaClassOperator\+(\FormulaSet[2])}[\StructClass[2]]\+.
    \end{equation*}
  \item \label{itm:backward-inclusion}
    Similarly,
    if $\Encoding$ allows for \kl{backward translation} from $\FormulaSet[2]$ to $\FormulaSet[1]$,
    then we have
    \begin{equation*}
      \semF{\+\FormulaClassOperator\+(\FormulaSet[1])}[\StructClass[1]] \;\figsupeq{\Encoding}\; \semF{\+\FormulaClassOperator\+(\FormulaSet[2])}[\StructClass[2]]\+.
      \qedhere
    \end{equation*}
  \end{enumerate}
\end{lemma}
\begin{proof}
  Let $\Signature[1]$ and $\Signature[2]$ be the \kl{signatures} underlying $\StructClass[1]$ and $\StructClass[2]$, respectively.
  Parts \ref{itm:forward-inclusion} and \ref{itm:backward-inclusion}
  of the \lcnamecref{lem:translation-inclusion}
  are treated separately in the following proof.

  In several places,
  given some \kl[class-formula]{$\MSO(\dMLg)$-formula} $\Formula[1]$,
  the need will arise to
  substitute newly created \kl[class-formula]{$\dMLg$-formulas}
  $\Formula[1]_1,…,\Formula[1]_k$ for \kl{set symbols} $\SetVariable[1]_1,…,\SetVariable[1]_k$.
  We shall write \Intro*{$\subst{\Formula[1]}{\tuple{\SetVariable[1]_i}_{i≤k}}{\tuple{\Formula[1]_i}_{i≤k}}$}
  to denote the \kl[class-formula]{$\MSO(\dMLg)$-formula}
  that one obtains by simultaneously replacing
  every \kl{free} occurrence of each $\SetVariable[1]_i$ in $\Formula[1]$ by the \kl{formula}~$\Formula[1]_i$.

  \proofsection{\ref{itm:forward-inclusion}.}
  For every $\FormulaClassOperator\+(\FormulaSet[1])$-\kl{sentence} $\Formula[1]$ over $\Signature[1]$,
  we must construct a $\FormulaClassOperator\+(\FormulaSet[2])$-\kl{sentence} $\Formula[2]_{\Formula[1]}$ over $\Signature[2]$,
  such that $\Formula[2]_{\Formula[1]}$ says about $\Encoding(\Structure[1])$ the same as $\Formula[1]$ says about $\Structure[1]$,
  for all \kl{structures} $\Structure[1]∈\StructClass[1]$.

  We start by focusing on the \kl{kernel} classes $\FormulaSet[1],\FormulaSet[2]$,
  and show the following by induction on the structure of \kl[class-formula]{$\FormulaSet[1]$-formulas}:
  For every $\FormulaSet[1]$-\kl{sentence} $\Formula[1]$ over \mbox{$\enrichedsig{\Signature[1]}{\PosSymbol}∪\ExtraSetVariableSet$},
  with $\ExtraSetVariableSet=\set{\SetVariable[3]_1,…,\SetVariable[3]_{\ESVCount}}$ being any collection of \kl{set symbols} disjoint from $\Signature[1]$ and $\Signature[2]$
  (i.e., \kl{free} \kl{set variables}),
  there is a $\FormulaSet[2]$-\kl{sentence} $\Formula[2]_{\Formula[1]}^*$ over $\enrichedsig{\Signature[2]}{\PosSymbol}∪\ExtraSetVariableSet$ such that
  \begin{align*}
    &\bigver{\Structure[1]}{\PosSymbol,\tuple{\SetVariable[3]_{\ESVIndex}}_{\ESVIndex≤\ESVCount}}{\Node[1],\tuple{\NodeSubset[1]_{\ESVIndex}}_{\ESVIndex≤\ESVCount}} \,\Models\, \Formula[1] \\
    &\quad\text{if and only if} \\
    &\bigver{\Encoding(\Structure[1])}{\PosSymbol,\tuple{\SetVariable[3]_{\ESVIndex}}_{\ESVIndex≤\ESVCount}}{\tuple{1,\Node[1]},\tuple{\NodeSubset[2]_{\ESVIndex}}_{\ESVIndex≤\ESVCount}} \,\Models\, \Formula[2]_{\Formula[1]}^*\+, \\[0.5ex]
    &\quad\text{assuming $\NodeSubset[1]_{\ESVIndex},\NodeSubset[2]_{\ESVIndex}$ satisfy $\Node[1]'{∈\,}\NodeSubset[1]_{\ESVIndex}⇔\tuple{1,\Node[1]'}{\,∈\,}\NodeSubset[2]_{\ESVIndex}$,}
  \end{align*}
  for all \kl{structures} $\Structure[1]∈\StructClass[1]$, \kl{nodes} $\Node[1]∈\NodeSet{\Structure[1]}$,
  and sets $\tuple{\NodeSubset[1]_{\ESVIndex}}_{1≤\ESVIndex≤\ESVCount}⊆\NodeSet{\Structure[1]}$ and $\tuple{\NodeSubset[2]_{\ESVIndex}}_{1≤\ESVIndex≤\ESVCount}⊆\NodeSet{\Encoding(\Structure[1])}$.
  \begin{itemize}
  \item If $\Formula[1] = \PosIs{\PosSymbol}$ or $\Formula[1] = \PosIn{\SetVariable[3]}$, for some $\SetVariable[3]∈\ExtraSetVariableSet$,
    it suffices to set $\Formula[2]_{\Formula[1]}^* = \Formula[1]$.
  \item If $\Formula[1] = \PosIn{\SetConstant[1]}$, for some \kl{node symbol} or \kl{set symbol} $\SetConstant[1]$ in $\Signature[1]$,
    we exploit that $\Encoding$ allows for \kl{forward translation} from $\FormulaSet[1]$ to $\FormulaSet[2]$,
    and choose $\Formula[2]_{\Formula[1]}^* = \Formula[2]_{\SetConstant[1]}$.
    Here, $\Formula[2]_{\SetConstant[1]}$ is the \kl{formula} postulated by
    \cref{def:forward-translation}~\ref{itm:fwd-set};
    it fulfills the induction hypothesis,
    since adding \kl{interpretations} of the \kl{symbols} $\SetVariable[3]_1,…,\SetVariable[3]_{\ESVCount}$
    to a \kl{structure}
    has no influence on whether or not that \kl{structure} \kl{satisfies}
    a \kl{sentence} over a \kl{signature} that does not contain these \kl{symbols}.
  \item If $\Formula[1] = \NOT \Formula[1]_1$ \,or\, $\Formula[1] = \Formula[1]_1 \OR \Formula[1]_2$,
    where $\Formula[1]_1$ and $\Formula[1]_2$ are \kl{formulas} that satisfy the induction hypothesis,
    we set $\Formula[2]_{\Formula[1]}^* = \NOT \Formula[2]_{\Formula[1]_1}^*$ \,or\, $\Formula[2]_{\Formula[1]}^* = \Formula[2]_{\Formula[1]_1}^* \OR \Formula[2]_{\Formula[1]_2}^*$,\, respectively.
  \item If $\Formula[1] = \dm[\RelSymbol[1]]\tuple{\Formula[1]_i}_{i≤\Arity}$,
    where $\RelSymbol[1]$ is a \kl{relation symbol} in $\Signature[1]$ of \kl{arity} $\Arity+1≥2$,
    and $\tuple{\Formula[1]_i}_{i≤\Arity}$ are $\FormulaSet[1]$-\kl{sentences} over $\enrichedsig{\Signature[1]}{\PosSymbol}∪\ExtraSetVariableSet$
    satisfying the induction hypothesis,
    we again use the fact that $\Encoding$ allows for \kl{forward translation} from $\FormulaSet[1]$ to $\FormulaSet[2]$.
    The desired \kl{formula} $\Formula[2]_{\Formula[1]}^*$ is obtained by substituting $\tuple{\Formula[2]_{\Formula[1]_i}^*}_{i≤\Arity}$
    for the \kl{symbols} $\tuple{\SetVariable[2]_i}_{i≤\Arity}$ in the \kl{formula} $\Formula[2]_{\RelSymbol[1]}$,
    whose existence is asserted by
    \cref{def:forward-translation}~\ref{itm:fwd-relation}, i.e.,
    \begin{equation*}
      \Formula[2]_{\Formula[1]}^* = \bigsubst{\Formula[2]_{\RelSymbol[1]}}{\tuple{\SetVariable[2]_i}_{i≤\Arity}}{\tuple{\Formula[2]_{\Formula[1]_i}^*}_{i≤\Arity}}.
    \end{equation*}
    For any integer $i \in \range[1]{\Arity}$,
    let $\NodeSubset[1]'_i$ be the set of \kl{nodes} $\Node[1]'∈\NodeSet{\Structure[1]}$
    that \kl{satisfy} $\Formula[1]_i$ in $\bigver{\Structure[1]}{\tuple{\SetVariable[3]_{\ESVIndex}}_{\ESVIndex≤\ESVCount}}{\tuple{\NodeSubset[1]_{\ESVIndex}}_{\ESVIndex≤\ESVCount}}$,
    and let $\NodeSubset[2]'_i$ be the set of \kl{nodes} $\Node[2]'∈\NodeSet{\Encoding(\Structure[1])}$
    that \kl{satisfy} $\Formula[2]_{\Formula[1]_i}$ in $\bigver{\Encoding(\Structure[1])}{\tuple{\SetVariable[3]_{\ESVIndex}}_{\ESVIndex≤\ESVCount}}{\tuple{\NodeSubset[2]_{\ESVIndex}}_{\ESVIndex≤\ESVCount}}$.
    By induction hypothesis, we are guaranteed that
    all the sets $\NodeSubset[1]'_i$, $\NodeSubset[2]'_i$ are such that
    a \kl{node} $\Node[1]'$ lies in $\NodeSubset[1]'_i$ if and only if $\tuple{1,\Node[1]'}$ lies in $\NodeSubset[2]'_i$.
    Thus, we have
    \begin{equation*}
      \begin{aligned}
        &\bigver{\Structure[1]}{\PosSymbol,\tuple{\SetVariable[3]_{\ESVIndex}}_{\ESVIndex≤\ESVCount}}{\Node[1],\tuple{\NodeSubset[1]_{\ESVIndex}}_{\ESVIndex≤\ESVCount}} \,\Models\, \Formula[1] \\[0.5ex]
        \text{iff}\quad
        &\bigver{\Structure[1]}{\PosSymbol,\tuple{\SetVariable[1]_i}_{i≤\Arity}}{\Node[1],\tuple{\NodeSubset[1]'_i}_{i≤\Arity}} \,\Models\, \dm[\RelSymbol[1]]\tuple{\PosIn{\SetVariable[1]_i}}_{i≤\Arity} \\[0.5ex]
        \text{iff}\quad
        &\bigver{\Encoding(\Structure[1])}{\PosSymbol,\tuple{\SetVariable[2]_i}_{i≤\Arity}\:\!}{\tuple{1,\Node[1]},\tuple{\NodeSubset[2]'_i}_{i≤\Arity}} \,\Models\, \Formula[2]_{\RelSymbol[1]} \\[0.5ex]
        \text{iff}\quad
        &\bigver{\Encoding(\Structure[1])}{\PosSymbol,\tuple{\SetVariable[3]_{\ESVIndex}}_{\ESVIndex≤\ESVCount}}{\tuple{1,\Node[1]},\tuple{\NodeSubset[2]_{\ESVIndex}}_{\ESVIndex≤\ESVCount}} \,\Models\, \Formula[2]_{\Formula[1]}^*\+.
      \end{aligned}
    \end{equation*}
  \item If $\Formula[1] = \bdm[\RelSymbol[1]]\tuple{\Formula[1]_i}_{i≤\Arity}$,
    assuming $\FormulaSet[1]$ incorporates \kl{backward modalities},
    we obtain $\Formula[2]_{\Formula[1]}^*$ by applying the same argument as in the previous case,
    but this time considering $\invr{\RelSymbol[1]}$ instead of $\RelSymbol[1]$ and invoking
    \cref{def:forward-translation}~\ref{itm:fwd-backward}.
  \item If $\Formula[1] = \gdm \Formula[1]_1$,
    supposing $\FormulaSet[1]$ includes \kl{global modalities},
    we again follow the same line of reasoning as in the case $\Formula[1] = \dm[\RelSymbol[1]]\tuple{\Formula[1]_i}_{i≤\Arity}$,
    referring to \cref{def:forward-translation}~\ref{itm:fwd-global}
    and using~$\GlobalRelSymbol$ instead of $\RelSymbol[1]$, with $\Arity=1$.
  \end{itemize}

  Now we can consider the case where
  the \kl{position symbol}~$\PosSymbol$ is not (re)mapped,
  and then look beyond the \kl{kernel} classes
  to finally deal with \kl{set quantifiers}.
  Arguing once more by structural induction,
  we extend the preceding claim as follows:
  For every $\FormulaClassOperator\+(\FormulaSet[1])$-\kl{sentence} $\Formula[1]$ over $\Signature[1]∪\ExtraSetVariableSet$,
  with $\ExtraSetVariableSet=\set{\SetVariable[3]_1,…,\SetVariable[3]_{\ESVCount}}$ as before (possibly empty),
  there is a $\FormulaClassOperator\+(\FormulaSet[2])$-\kl{sentence} $\Formula[2]_{\Formula[1]}$ over $\Signature[2]∪\ExtraSetVariableSet$ such that
  \begin{align*}
    &\bigver{\Structure[1]}{\tuple{\SetVariable[3]_{\ESVIndex}}_{\ESVIndex≤\ESVCount}}{\tuple{\NodeSubset[1]_{\ESVIndex}}_{\ESVIndex≤\ESVCount}} \,\Models\, \Formula[1] \\
    &\quad\text{if and only if} \\
    &\bigver{\Encoding(\Structure[1])}{\tuple{\SetVariable[3]_{\ESVIndex}}_{\ESVIndex≤\ESVCount}}{\tuple{\NodeSubset[2]_{\ESVIndex}}_{\ESVIndex≤\ESVCount}} \,\Models\, \Formula[2]_{\Formula[1]}\+, \\[0.5ex]
    &\quad\text{assuming $\NodeSubset[1]_{\ESVIndex},\NodeSubset[2]_{\ESVIndex}$ satisfy $\Node[1]{\,∈\,}\NodeSubset[1]_{\ESVIndex}⇔\tuple{1,\Node[1]}{\,∈\,}\NodeSubset[2]_{\ESVIndex}$,}
  \end{align*}
  for all $\Structure[1]∈\StructClass[1]$,\,
  $\tuple{\NodeSubset[1]_{\ESVIndex}}_{1≤\ESVIndex≤\ESVCount}⊆\NodeSet{\Structure[1]}$ and $\tuple{\NodeSubset[2]_{\ESVIndex}}_{1≤\ESVIndex≤\ESVCount}⊆\NodeSet{\Encoding(\Structure[1])}$.
  \begin{itemize}
  \item If $\Formula[1]$ lies in the \kl{kernel} $\FormulaSet[1]$,
    we make use of the claim just proven,
    together with the \kl{formula} $\Formula[2]_\ini$ described in
    \cref{def:forward-translation}~\ref{itm:fwd-initial}.
    We set\, $\Formula[2]_{\Formula[1]} = \subst{\Formula[2]_\ini}{\SetVariable[2]}{\Formula[2]_{\Formula[1]}^*}$.
    \begin{itemize}
    \item If $\PosSymbol$ belongs to $\Signature[1]$,
      the asserted property of $\Formula[2]_\ini$ guarantees
      that $\Formula[1]$ holds at the initial position $\inp{\PosSymbol}{\Structure[1]}$
      in the \kl[extended variant]{$\ExtraSetVariableSet$-extended variant} of $\Structure[1]$
      if and only if $\Formula[2]_{\Formula[1]}$ is \kl{satisfied} by the \kl[extended variant]{$\ExtraSetVariableSet$-extended variant} of $\Encoding(\Structure[1])$.
    \item Otherwise, $\PosSymbol$ cannot be \kl{free} in $\Formula[1]$,
      since $\Formula[1]$ is a \kl{sentence} over $\Signature[1]∪\ExtraSetVariableSet$,
      which also implies that $\FormulaSet[1]$ incorporates \kl{global modalities}.
      It follows that $\Formula[1]$ is \kl[device equivalent]{equivalent} to $\gdm \Formula[1]$.
      Again applying the definition of $\Formula[2]_\ini$,
      we obtain that the \kl[extended variant]{$\ExtraSetVariableSet$-extended variant} of $\Structure[1]$ \kl{satisfies} $\gdm \Formula[1]$,
      and thus $\Formula[1]$,
      if and only if the \kl[extended variant]{$\ExtraSetVariableSet$-extended variant} of $\Encoding(\Structure[1])$ \kl{satisfies} $\Formula[2]_{\Formula[1]}$.
    \end{itemize}
  \item If $\Formula[1]$ is a Boolean combination of \kl{formulas}
    that satisfy the induction hypothesis,
    the translation is straightforward,
    just as in the previous part of the proof.
  \item If $\Formula[1] = \Exists{\SetVariable[3]_{\ESVCount+1}}\Formula[1]_1$,
    where $\Formula[1]_1$ is a $\FormulaClassOperator\+(\FormulaSet[1])$-\kl{sentence} over \mbox{$\Signature[1]∪\set{\SetVariable[3]_1,…,\SetVariable[3]_{\ESVCount+1}}$}
    that satisfies the hypothesis,
    we choose $\Formula[2]_{\Formula[1]} = \Exists{\SetVariable[3]_{\ESVCount+1}}\Formula[2]_{\Formula[1]_1}$.
    To justify this choice,
    let $\Structure[1]'$ and $\Encoding(\Structure[1])'$ denote the \kl[extended variant]{$\ExtraSetVariableSet$-extended variants} of $\Structure[1]$ and $\Encoding(\Structure[1])$, respectively.
    We get the following by induction:
    \begin{itemize}
    \item If choosing $\map{\SetVariable[3]_{\ESVCount+1}}{\NodeSubset[1]_{\ESVCount+1}}$ leads to \kl[satisfy]{satisfaction} of $\Formula[1]_1$ in $\Structure[1]'$,
      then choosing $\map{\SetVariable[3]_{\ESVCount+1}}{\set{1}\+{×}\+\NodeSubset[1]_{\ESVCount+1}}$ does the same for $\Formula[2]_{\Formula[1]_1}$ in $\Encoding(\Structure[1])'$.
    \item Conversely, if $\map{\SetVariable[3]_{\ESVCount+1}}{\NodeSubset[2]_{\ESVCount+1}}$ is a \kl[satisfy]{satisfying} choice for $\Formula[2]_{\Formula[1]_1}$ in $\Encoding(\Structure[1])'$,
      then so is $\map{\SetVariable[3]_{\ESVCount+1}}{{\setbuilder{\Node[1]}{\tuple{1,\Node[1]}∈\NodeSubset[2]_{\ESVCount+1}}}}$ for $\Formula[1]_1$ in $\Structure[1]'$.
    \end{itemize}
  \end{itemize}

  \proofsection{\ref{itm:backward-inclusion}.}
  The proof of the reverse direction of the \lcnamecref{lem:translation-inclusion}
  is very similar to the previous one, but a bit more cumbersome,
  because each \kl{node} of a \kl{structure} $\Structure[1]$ has to play the role
  of several different \kl{nodes} in $\Encoding(\Structure[1])$.
  Given any $\FormulaClassOperator\+(\FormulaSet[2])$-\kl{sentence} $\Formula[2]$ over $\Signature[2]$,
  we need to construct a $\FormulaClassOperator\+(\FormulaSet[1])$-\kl{sentence} $\Formula[1]_{\Formula[2]}$ over $\Signature[1]$,
  such that evaluating $\Formula[1]_{\Formula[2]}$ on $\Structure[1]$ is equivalent to evaluating $\Formula[2]$ on $\Encoding(\Structure[1])$,
  for all $\Structure[1]∈\StructClass[1]$.
  For the remainder of this proof,
  let $\EncodingScale$, $\EncodingOffset$ be the parameters of the \kl{linear encoding} $\Encoding$.

  Again, we first deal with the \kl{kernel} classes $\FormulaSet[1],\FormulaSet[2]$,
  and show the following claim by induction on the structure of \kl[class-formula]{$\FormulaSet[2]$-formulas}:
  For every $\FormulaSet[2]$-\kl{sentence} $\Formula[2]$ over \mbox{$\enrichedsig{\Signature[2]}{\PosSymbol}∪\ExtraSetVariableSet$} and all \mbox{$h∈\range{\EncodingOffset}$},
  with $\ExtraSetVariableSet=\set{\SetVariable[3]_1,…,\SetVariable[3]_{\ESVCount}}⊆\RelSymbolSet{1}{\setminus}\Signature[2]$,
  there is a $\FormulaSet[1]$-\kl{sentence} $\Formula[1]_{\Formula[2]}^h$ over \mbox{$\enrichedsig{\Signature[1]}{\PosSymbol}∪\tilde{\ExtraSetVariableSet}$},
  with $\tilde{\ExtraSetVariableSet}=\set{\SetVariable[3]_1^1,…,\SetVariable[3]_{\ESVCount}^{\EncodingOffset}}⊆\RelSymbolSet{1}{\setminus}\Signature[1]$,\,
  such that
  \begin{align*}
    &\bigver{\Structure[1]}{\PosSymbol,\tuple{\SetVariable[3]_{\ESVIndex}^j}_{\ESVIndex≤\ESVCount}^{j≤\EncodingOffset}}{\Node[1],\tuple{\NodeSubset[1]_{\ESVIndex}^j}_{\ESVIndex≤\ESVCount}^{j≤\EncodingOffset}} \,\Models\, \Formula[1]_{\Formula[2]}^h \\
    &\quad\text{if and only if} \\
    &\bigver{\Encoding(\Structure[1])}{\PosSymbol,\tuple{\SetVariable[3]_{\ESVIndex}}_{\ESVIndex≤\ESVCount}}{\Node[2],\tuple{\NodeSubset[2]_{\ESVIndex}}_{\ESVIndex≤\ESVCount}} \,\Models\, \Formula[2], \\[1ex]
    &\quad\text{where $\Node[2]$ is $\tuple{h,\Node[1]}$ if $h≤\EncodingScale$, otherwise $h$,\, and} \\
    &\quad \NodeSubset[2]_{\ESVIndex} =\; \smashoperator{\bigcup_{1≤j≤\EncodingScale}}\,\bigl(\set{j}{×}\NodeSubset[1]_{\ESVIndex}^j\bigr)
    \,∪\, \,\smashoperator{\bigcup_{\EncodingScale<j≤\EncodingOffset}}\, \setbuilder{j}{\NodeSubset[1]_{\ESVIndex}^j=\NodeSet{\Structure[1]}},
  \end{align*}
  for all $\Structure[1]∈\StructClass[1]$,\, $\Node[1]∈\NodeSet{\Structure[1]}$, and sets $\tuple{\NodeSubset[1]_{\ESVIndex}^j}_{1≤\ESVIndex≤\ESVCount}^{1≤j≤\EncodingScale}⊆\NodeSet{\Structure[1]}$
  and $\tuple{\NodeSubset[1]_{\ESVIndex}^j}_{1≤\ESVIndex≤\ESVCount}^{\EncodingScale<j≤\EncodingOffset}∈\set{\EmptySet,\NodeSet{\Structure[1]}}$.
  \begin{itemize}
  \item If $\Formula[2] = \PosIs{\PosSymbol}$, it suffices to set $\Formula[1]_{\Formula[2]}^h = \PosIs{\PosSymbol}$.
  \item If $\Formula[2] = \PosIn{\SetVariable[3]_{\ESVIndex}}$, for some $\SetVariable[3]_{\ESVIndex}∈\ExtraSetVariableSet$,
    the translation is given by $\Formula[1]_{\Formula[2]}^h = \PosIn{\SetVariable[3]_{\ESVIndex}^h}$.
  \item If $\Formula[2] = \PosIn{\SetConstant[2]}$, for some \kl{node symbol} or \kl{set symbol} $\SetConstant[2]$ in $\Signature[2]$,
    we use the fact that $\Encoding$ allows for \kl{backward translation} from $\FormulaSet[2]$ to $\FormulaSet[1]$,
    and choose $\Formula[1]_{\Formula[2]}^h$ to be the \kl{formula} $\Formula[1]_{\SetConstant[2]}^h$,
    which is provided by \cref{def:backward-translation}~\ref{itm:bwd-set}.
    The definition asserts that
    this \kl{formula} fulfills the induction hypothesis for the case
    where $\Structure[1]$ and $\Encoding(\Structure[1])$ are not extended using additional \kl{set symbols}
    from $\tilde{\ExtraSetVariableSet}$ and $\ExtraSetVariableSet$.
    But since these \kl{symbols} do not occur \kl{freely} in $\Formula[1]_{\SetConstant[2]}^h$ and $\PosIn{\SetConstant[2]}$,
    their \kl{interpretations} do not influence the evaluation of the \kl{formulas}.
  \item If $\Formula[2] = \NOT \Formula[2]_1$ \,or\, $\Formula[2] = \Formula[2]_1 \OR \Formula[2]_2$,
    where $\Formula[2]_1$ and $\Formula[2]_2$ are \kl{formulas} that satisfy the induction hypothesis,
    we set $\Formula[1]_{\Formula[2]}^h = \NOT \Formula[1]_{\Formula[2]_1}^h$ \,or\, $\Formula[1]_{\Formula[2]}^h = \Formula[1]_{\Formula[2]_1}^h \OR \Formula[1]_{\Formula[2]_2}^h$,\, respectively.
  \item If $\Formula[2] = \dm[\RelSymbol[2]]\tuple{\Formula[2]_i}_{i≤\Arity}$,
    where $\RelSymbol[2]$ is a \kl{relation symbol} in $\Signature[2]$ of \kl{arity} $\Arity+1≥2$,
    and $\tuple{\Formula[2]_i}_{i≤\Arity}$ are $\FormulaSet[2]$-\kl{sentences} over $\enrichedsig{\Signature[2]}{\PosSymbol}∪\ExtraSetVariableSet$
    satisfying the hypothesis,
    we again rely on the premise
    that $\Encoding$ allows for \kl{backward translation} from $\FormulaSet[2]$ to $\FormulaSet[1]$.
    We construct $\Formula[1]_{\Formula[2]}^h$ by plugging the \kl{formulas} $\tuple{\Formula[1]_{\Formula[2]_i}^j}_{i≤\Arity}^{j≤\EncodingOffset}$
    provided by induction into the \kl{formula} $\Formula[1]_{\RelSymbol[2]}^h$ of
    \cref{def:backward-translation}~\ref{itm:bwd-relation} as follows:
    \begin{equation*}
      \Formula[1]_{\Formula[2]}^h = \bigsubst{\Formula[1]_{\RelSymbol[2]}^h}{\tuple{\SetVariable[1]_i^j}_{i≤\Arity}^{j≤\EncodingOffset}}{\tuple{\Formula[1]_{\Formula[2]_i}^j}_{i≤\Arity}^{j≤\EncodingOffset}}.
    \end{equation*}
    For $1≤i≤\Arity$ and $1≤j≤\EncodingOffset$, let $\NodeSubset[1]_i^{j\prime}$ be the set of \kl{nodes} $\Node[1]'∈\NodeSet{\Structure[1]}$
    that \kl{satisfy} $\Formula[1]_{\Formula[2]_i}^j$ in $\bigver{\Structure[1]}{\tuple{\SetVariable[3]_{\ESVIndex}^j}_{\ESVIndex≤\ESVCount}^{j≤\EncodingOffset}}{\tuple{\NodeSubset[1]_{\ESVIndex}^j}_{\ESVIndex≤\ESVCount}^{j≤\EncodingOffset}}$,
    and let $\NodeSubset[2]'_i$ be the set of \kl{nodes} $\Node[2]'∈\NodeSet{\Encoding(\Structure[1])}$
    that \kl{satisfy} $\Formula[2]_i$ in $\bigver{\Encoding(\Structure[1])}{\tuple{\SetVariable[3]_{\ESVIndex}}_{\ESVIndex≤\ESVCount}}{\tuple{\NodeSubset[2]_{\ESVIndex}}_{\ESVIndex≤\ESVCount}}$.
    The induction hypothesis ensures that
    \begin{equation*}
      \NodeSubset[2]'_i =\; \smashoperator{\bigcup_{1≤j≤\EncodingScale}}\,\bigl(\set{j}{×}\NodeSubset[1]_i^{j\prime}\bigr)
      \,∪\, \,\smashoperator{\bigcup_{\EncodingScale<j≤\EncodingOffset}}\, \setbuilder{j}{\NodeSubset[1]_i^{j\prime}=\NodeSet{\Structure[1]}}.
    \end{equation*}
    Hence, we obtain the required equivalence as follows:
    \begin{equation*}
      \begin{aligned}
        &\bigver{\Structure[1]}{\PosSymbol,\tuple{\SetVariable[3]_{\ESVIndex}^j}_{\ESVIndex≤\ESVCount}^{j≤\EncodingOffset}}{\Node[1],\tuple{\NodeSubset[1]_{\ESVIndex}^j}_{\ESVIndex≤\ESVCount}^{j≤\EncodingOffset}} \,\Models\, \Formula[1]_{\Formula[2]}^h \\[0.5ex]
        \text{iff}\quad
        &\bigver{\Structure[1]}{\PosSymbol,\tuple{\SetVariable[1]_i^j}_{i≤\Arity}^{j≤\EncodingOffset}\:\!}{\Node[1],\tuple{\NodeSubset[1]_i^{j\prime}}_{i≤\Arity}^{j≤\EncodingOffset}\!} \,\Models\, \Formula[1]_{\RelSymbol[2]}^h \\[0.5ex]
        \text{iff}\quad
        &\bigver{\Encoding(\Structure[1])}{\PosSymbol,\tuple{\SetVariable[2]_i}_{i≤\Arity}\:\!}{\Node[2],\tuple{\NodeSubset[2]'_i}_{i≤\Arity}} \,\Models\, \dm[\RelSymbol[2]]\tuple{\PosIn{\SetVariable[2]_i}}_{i≤\Arity} \\[0.5ex]
        \text{iff}\quad
        &\bigver{\Encoding(\Structure[1])}{\PosSymbol,\tuple{\SetVariable[3]_{\ESVIndex}}_{\ESVIndex≤\ESVCount}}{\Node[2],\tuple{\NodeSubset[2]_{\ESVIndex}}_{\ESVIndex≤\ESVCount}} \,\Models\, \Formula[2].
      \end{aligned}
    \end{equation*}
  \item If $\Formula[2] = \bdm[\RelSymbol[2]]\tuple{\Formula[2]_i}_{i≤\Arity}$,
    supposing $\FormulaSet[2]$ includes \kl{backward modalities},
    we construct $\Formula[1]_{\Formula[2]}^h$ using the same approach as in the previous case,
    the only difference being that we consider $\invr{\RelSymbol[2]}$ instead of $\RelSymbol[2]$ and invoke
    \cref{def:backward-translation}~\ref{itm:bwd-backward} instead of
    \ref{def:backward-translation}~\ref{itm:bwd-relation}.
  \item If $\Formula[2] = \gdm \Formula[2]_1$, in case $\FormulaSet[2]$ includes \kl{global modalities},
    we again proceed as for the case $\Formula[2] = \dm[\RelSymbol[2]]\tuple{\Formula[2]_i}_{i≤\Arity}$,
    this time using~$\GlobalRelSymbol$ instead of $\RelSymbol[2]$, with $\Arity=1$,
    and referring to
    \cref{def:backward-translation}~\ref{itm:bwd-global}.
  \end{itemize}

  Similarly to the proof of part~\ref{itm:forward-inclusion},
  we now extend the previous property to cover
  \kl{formulas} with \kl{set quantifiers},
  evaluated on \kl{structures} that may \kl{interpret} the \kl{position symbol} $\PosSymbol$ arbitrarily.
  Our induction hypothesis is the following:
  For every $\FormulaClassOperator\+(\FormulaSet[2])$-\kl{sentence} $\Formula[2]$ over \mbox{$\Signature[2]∪\ExtraSetVariableSet$},
  with $\ExtraSetVariableSet=\set{\SetVariable[3]_1,…,\SetVariable[3]_{\ESVCount}}⊆\RelSymbolSet{1}{\setminus}\Signature[2]$ (possibly empty),
  there is a $\FormulaClassOperator\+(\FormulaSet[1])$-\kl{sentence} $\Formula[1]_{\Formula[2]}$ over \mbox{$\Signature[1]∪\tilde{\ExtraSetVariableSet}$},
  with $\tilde{\ExtraSetVariableSet}=\set{\SetVariable[3]_1^1,…,\SetVariable[3]_{\ESVCount}^{\EncodingOffset}}⊆\RelSymbolSet{1}{\setminus}\Signature[1]$,\,
  such that
  \begin{align*}
    &\bigver{\Structure[1]}{\tuple{\SetVariable[3]_{\ESVIndex}^j}_{\ESVIndex≤\ESVCount}^{j≤\EncodingOffset}}{\tuple{\NodeSubset[1]_{\ESVIndex}^j}_{\ESVIndex≤\ESVCount}^{j≤\EncodingOffset}} \,\Models\, \Formula[1]_{\Formula[2]} \\
    &\quad\text{if and only if} \\
    &\bigver{\Encoding(\Structure[1])}{\tuple{\SetVariable[3]_{\ESVIndex}}_{\ESVIndex≤\ESVCount}}{\tuple{\NodeSubset[2]_{\ESVIndex}}_{\ESVIndex≤\ESVCount}} \,\Models\, \Formula[2], \quad\text{where} \\[1ex]
    &\quad \NodeSubset[2]_{\ESVIndex} =\; \smashoperator{\bigcup_{1≤j≤\EncodingScale}}\,\bigl(\set{j}{×}\NodeSubset[1]_{\ESVIndex}^j\bigr)
    \,∪\, \,\smashoperator{\bigcup_{\EncodingScale<j≤\EncodingOffset}}\, \setbuilder{j}{\NodeSubset[1]_{\ESVIndex}^j=\NodeSet{\Structure[1]}},
  \end{align*}
  for all \kl{structures} $\Structure[1]∈\StructClass[1]$, and sets $\tuple{\NodeSubset[1]_{\ESVIndex}^j}_{1≤\ESVIndex≤\ESVCount}^{1≤j≤\EncodingScale}⊆\NodeSet{\Structure[1]}$
  and $\tuple{\NodeSubset[1]_{\ESVIndex}^j}_{1≤\ESVIndex≤\ESVCount}^{\EncodingScale<j≤\EncodingOffset}∈\set{\EmptySet,\NodeSet{\Structure[1]}}$.
  \begin{itemize}
  \item If $\Formula[2]$ belongs to the \kl{kernel} class $\FormulaSet[2]$,
    we apply the claim just proven,
    and construct $\Formula[1]_{\Formula[2]}$ by substituting into the \kl{formula} $\Formula[1]_\ini$ provided by
    \cref{def:backward-translation}~\ref{itm:bwd-initial}:
    \mbox{$\Formula[1]_{\Formula[2]} = \bigsubst{\Formula[1]_\ini}{\tuple{\SetVariable[1]^j}^{j≤\EncodingOffset}}{\tuple{\Formula[1]_{\Formula[2]}^j}^{j≤\EncodingOffset}}$}.
    Proceeding analogously to the proof of part~\ref{itm:forward-inclusion},
    we have to distinguish whether or not the \kl{position symbol} $\PosSymbol$ belongs to $\Signature[2]$.
    (If it does not, $\Formula[2]$ is necessarily \kl[device equivalent]{equivalent} to $\gdm \Formula[2]$.)
    In both cases, the definition of $\Formula[1]_\ini$ guarantees
    that the \kl[extended variant]{$\tilde{\ExtraSetVariableSet}$-extended variant} of $\Structure[1]$ \kl{satisfies} $\Formula[1]_{\Formula[2]}$
    if and only if the \kl[extended variant]{$\ExtraSetVariableSet$-extended variant} of $\Encoding(\Structure[1])$ \kl{satisfies} $\Formula[2]$.
  \item If $\Formula[2]$ is a Boolean combination of \kl[formulas]{subformulas}
    that satisfy the induction hypothesis,
    then $\Formula[1]_{\Formula[2]}$ is simply the corresponding Boolean combination
    of the translated \kl[formulas]{subformulas}.
  \item If $\Formula[2] = \Exists{\SetVariable[3]_{\ESVCount+1}}\Formula[2]_1$,
    where $\Formula[2]_1$ is a $\FormulaClassOperator\+(\FormulaSet[2])$-\kl{sentence} over \mbox{$\Signature[2]∪\set{\SetVariable[3]_1,…,\SetVariable[3]_{\ESVCount+1}}$}
    that satisfies the induction hypothesis,
    we choose $\Formula[1]_{\Formula[2]}$ to be the \kl{formula}
    \begin{equation*}
      \Exists{\tuple{\SetVariable[3]_{\ESVCount+1}^j}^{j≤\EncodingScale}}
      \Bigl(\,
      \smashoperator{\bigOR_{N\,⊆\;\lrange[\EncodingScale\:\!\!]{\:\!\!\EncodingOffset} \vphantom{\big(}}}
      \bigsubst{\Formula[1]_{\Formula[2]_1}}{\bigtuple{\SetVariable[3]_{\ESVCount+1}^j}^{j>\EncodingScale}\!}{\!\bigtuple{N(j)}^{j>\EncodingScale}}
      \,\Bigr),
    \end{equation*}
    with $N(j)=\True$ if $j∈N$, and $N(j)=\False$ otherwise.
    For each set $N⊆\lrange[\EncodingScale]{\EncodingOffset}$,
    let $\Formula[1]_{\Formula[2]_1}^N$ denote the disjunct corresponding to $N$ in the \kl{formula} above.
    By induction, we have the following equivalence:
    the \kl{interpretation} map
    $\map{\tuple{\SetVariable[3]_{\ESVCount+1}^j}^{j≤\EncodingScale}}{\tuple{\NodeSubset[1]_{\ESVCount+1}^j}^{j≤\EncodingScale}}$
    leads to \kl[satisfy]{satisfaction} of $\Formula[1]_{\Formula[2]_1}^N$ in the \kl[extended variant]{$\tilde{\ExtraSetVariableSet}$-extended variant} of $\Structure[1]$
    if and only if
    \begin{equation*}
      \SetVariable[3]_{\ESVCount+1} \,\mapsto\,\;
      \smashoperator{\bigcup_{1≤j≤\EncodingScale}}\,\bigl(\set{j}{×}\NodeSubset[1]_{\ESVCount+1}^j\bigr) \,∪\, \,N
    \end{equation*}
    is a \kl[satisfy]{satisfying} choice for $\Formula[2]_1$ in the \kl[extended variant]{$\ExtraSetVariableSet$-extended variant} of $\Encoding(\Structure[1])$.
    \qedhere
  \end{itemize}
\end{proof}

\subsection{Getting rid of multiple edge relations}

We now show how to encode a multi-relational \kl{digraph} into a $1$-relational one,
by inserting additional \kl{labeled} \kl{nodes}
that represent the different \kl{edge relations}.

\begin{proposition}
  \label{prp:multirelational-labeled}
  \NoMathBreak
  For all $\BitCount,\RelCount≥0$ and $\FormulaSet[1]∈\set{\MLg,\dMLg}$,
  there is a \kl{linear encoding} $\Encoding$ from $\DIGRAPH[\BitCount][\RelCount]$ into $\DIGRAPH[\BitCount+\RelCount][1]$
  that allows for \kl{bidirectional translation} within $\FormulaSet[1]$.

  Moreover, $\Encoding(\DIGRAPH[\BitCount][\RelCount])$ is $\PiMSO{1}(\MLg)$-\kl{definable} over $\DIGRAPH[\BitCount+\RelCount][1]$.
\end{proposition}
\begin{proof}
  We choose $\Encoding$ to be the \kl{linear encoding}
  that assigns to each $\BitCount$-bit \kl{labeled}, $\RelCount$-relational \kl{digraph} $\Digraph[1]$
  the \mbox{$(\BitCount+\RelCount)$-bit} \kl{labeled} ($1$-relational) \kl{digraph} $\Encoding(\Digraph[1])=\Digraph[2]$
  with \kl{domain} $\range{\RelCount+1}×\NodeSet{\Digraph[1]}$,
  \kl{labeling sets} $\inp{\SetConstant_i}{\Digraph[2]}=\set{1}×\inp{\SetConstant_i}{\Digraph[1]}$\!, for $1≤i≤\BitCount$,
  and $\inp{\SetConstant_{\BitCount+i}}{\Digraph[2]}=\set{i+1}×\NodeSet{\Digraph[1]}$, for $1≤i≤\RelCount$,
  and \kl{edge relation}
  \begin{equation*}
    \begin{aligned}
      \inp{\RelSymbol}{\Digraph[2]} = \; \smashoperator{\bigcup_{1≤i≤\RelCount}} \:\,
      \Bigl(\,&\bigsetbuilder{\bigtuple{\tuple{1,\Node[1]},\tuple{i+1,\Node[1]}}}{\Node[1]∈\NodeSet{\Digraph[1]}} \;\; ∪ \\[-1.6ex]
              &\bigsetbuilder{\bigtuple{\tuple{i+1,\Node[1]},\tuple{1,\Node[1]}}}{\Node[1]∈\NodeSet{\Digraph[1]}} \;\; ∪ \\[-0.6ex]
              &\bigsetbuilder{\bigtuple{\tuple{i+1,\Node[1]},\tuple{i+1,\Node[1]'}}}{\tuple{\Node[1],\Node[1]'}∈\inp{\RelSymbol_i}{\Digraph[1]}}\,\Bigr).
    \end{aligned}
  \end{equation*}
  That is,
  for each \kl{node} $\Node[1]∈\NodeSet{\Digraph[1]}$ and for $1≤i≤\RelCount$,
  we introduce an additional \kl{node}
  representing the “$\RelSymbol_i$-port” of $\Node[1]$,
  and connect everything accordingly.

  Our \kl{forward translation},
  from $\FormulaSet[1]$ on $\DIGRAPH[\BitCount][\RelCount]$ to $\FormulaSet[1]$ on $\Encoding(\DIGRAPH[\BitCount][\RelCount])$,
  is given by
  \begin{equation*}
    \begin{aligned}
      &\begin{aligned}
        \Formula[2]_{\SetConstant_i}       &= \PosIn{\SetConstant_i} \quad\text{for $1≤i≤\BitCount$}, \\
        \Formula[2]_{\RelSymbol_i}       &= \dm\bigl(\Formula[2]_{i+1} \AND \dm\dm(\Formula[2]_1 \AND \PosIn{\SetVariable[2]})\bigr) \quad\text{for $1≤i≤\RelCount$}, \\
        \Formula[2]_{\invr{\RelSymbol_i}} &= \dm\bigl(\Formula[2]_{i+1} \AND \bdm\dm(\Formula[2]_1 \AND \PosIn{\SetVariable[2]})\bigr) \quad\text{for $1≤i≤\RelCount$}, \\
        \Formula[2]_{\GlobalRelSymbol}     &= \gdm(\Formula[2]_1 \AND \PosIn{\SetVariable[2]}),\\
        \Formula[2]_\ini        &= \Formula[2]_{\GlobalRelSymbol}\+,
      \end{aligned} \\[1ex]
      &\begin{alignedat}{3}
        \text{where}&\; &    \Formula[2]_1 &= \textstyle \NOT\bigOR_{1≤i≤\RelCount}(\Formula[2]_{i+1}) & \qquad&\text{(“regular”)}, \\
                    &   & \Formula[2]_{i+1} &= \PosIn{\SetConstant_{\BitCount+i}} \quad\text{for $1≤i≤\RelCount$}    &       &\text{(“$\RelSymbol_i$-port”)}.
      \end{alignedat}
    \end{aligned}
  \end{equation*}

  Our translation in the other direction,
  from $\FormulaSet[1]$ on $\Encoding(\DIGRAPH[\BitCount][\RelCount])$ to $\FormulaSet[1]$ on $\DIGRAPH[\BitCount][\RelCount]$,
  is given by
  \begin{align*}
    \Formula[1]_{\SetConstant_i}^{h+1}   &= \begin{cases*}
                       \PosIn{\SetConstant_i} & for $h=0$ and $1≤i≤\BitCount$, \\
                       \False   & for $h=0$ and $\BitCount+1≤i≤\BitCount+\RelCount$, \\
                       \True   & for $1≤h≤\RelCount$ and $i=\BitCount+h$, \\
                       \False   & for $1≤h≤\RelCount$ and $i≠\BitCount+h$,
                     \end{cases*} \\
    \Formula[1]_{\RelSymbol}^{h+1}      &= \begin{cases*}
                       \bigOR_{1≤i≤\RelCount}\PosIn{\SetVariable[1]^{i+1}} & for $h=0$, \\
                       \PosIn{\SetVariable[1]^1} \OR \dm[h]\PosIn{\SetVariable[1]^{h+1}}   & for $1≤h≤\RelCount$,
                     \end{cases*} \\
    \Formula[1]_{\invr{\RelSymbol}}^{h+1}   &= \begin{cases*}
                       \bigOR_{1≤i≤\RelCount}\PosIn{\SetVariable[1]^{i+1}} & for $h=0$, \\
                       \PosIn{\SetVariable[1]^1} \OR \bdm[h]\PosIn{\SetVariable[1]^{h+1}}  & for $1≤h≤\RelCount$,
                     \end{cases*} \\
    \Formula[1]_{\GlobalRelSymbol}^{h+1} &= \gdm(\PosIn{\SetVariable[1]^1} \OR … \OR \PosIn{\SetVariable[1]^{\RelCount+1}}) \quad\;\,\text{for $0≤h≤\RelCount$}, \\
    \Formula[1]_\ini         &= \Formula[1]_{\GlobalRelSymbol}^1.
  \end{align*}

  We can characterize $\Encoding(\DIGRAPH[\BitCount][\RelCount])$ over the class $\DIGRAPH[\BitCount+\RelCount][1]$
  by the conjunction of the following $\PiMSO{1}(\MLg)$-\kl{definable} properties,
  using our helper \kl{formulas} $\tuple{\Formula[2]_i}_{1≤i≤\RelCount+1}$
  from the \kl{forward translation}.
  \begin{itemize}
  \item A “port” that corresponds to a relation symbol $\RelSymbol_i$
    may not be associated with any other relation symbol $\RelSymbol_j$,
    nor be \kl{labeled} with predicates $\tuple{\SetConstant_j}_{1≤j≤\BitCount}$.
    \begin{equation*}
      \smashoperator{\bigAND_{1≤i≤\RelCount}} \gbx\Bigl(
      \Formula[2]_{i+1} \,\IMP\;
      \NOT\smashoperator{\bigOR_{1≤j≤\RelCount,\;j≠i}}(\Formula[2]_{j+1}) \+\AND\,
      \NOT\smashoperator{\bigOR_{1≤j≤\BitCount}}(\PosIn{\SetConstant_j})
      \Bigr)
    \end{equation*}
  \item Every “regular \kl{node}” is connected to its $\RelCount$ different “ports”,
    and nothing else.
    The uniqueness of each “$\RelSymbol_i$-port” can be expressed by
    the \kl[class-formula]{$\eclF{\PiMSO{1}(\MLg)}$-formula} $\seeone(\Formula[2]_{i+1})$,
    using the construction from
    \cref{ex:uniqueness} in \cref{sec:example-formulas}.
    \begin{equation*}
      \gbx\Bigl(
      \Formula[2]_1 \,\IMP\;
      \NOT\dm \Formula[2]_1 \+\AND\+
      \smashoperator{\bigAND_{1≤i≤\RelCount}}\seeone(\Formula[2]_{i+1})
      \Bigr)
    \end{equation*}
  \item Similarly,
    each “port” is connected to precisely one “regular \kl{node}”
    and to an arbitrary number of “ports” of the same \kl{relation symbol},
    but not to any other ones.
    \begin{equation*}
      \smashoperator{\bigAND_{1≤i≤\RelCount}} \gbx\Bigl(
      \Formula[2]_{i+1} \,\IMP\;
      \seeone(\Formula[2]_1) \+\AND\,
      \NOT\smashoperator{\bigOR_{1≤j≤\RelCount,\;j≠i}}\dm \Formula[2]_{j+1}
      \Bigr)
    \end{equation*}
  \item Finally,
    the links between “regular \kl{nodes}” and “ports”
    have to be bidirectional:
    for each \kl{edge} from a \kl{node} of one type
    to a \kl{node} of a different type,
    the corresponding inverse \kl{edge} must also exist.
    \begin{equation*}
      \smashoperator{\bigAND_{1≤i≤\RelCount+1}}
      \Forall{\SetVariable} \gbx\bigl( \Formula[2]_i \AND\+ \PosIn{\SetVariable} \:\IMP\: \bx(\NOT \Formula[2]_i \IMP \dm \PosIn{\SetVariable}) \bigr)
    \end{equation*}
    Note that,
    in combination with the previous properties,
    this ensures that we have the same total number of \kl{nodes}
    for each type $i ∈ \range[1]{\RelCount+1}$.
    \qedhere
  \end{itemize}
\end{proof}

\subsection{Getting rid of vertex labels}

Being able to eliminate multiple \kl{edge relations}
at the cost of additional \kl{labeling sets}
(see \cref{prp:multirelational-labeled}),
our natural next step is to encode \kl{labeled} \kl{digraphs} into \kl[labeled]{unlabeled} ones.

\begin{proposition}
  \label{prp:labeled-unlabeled}
  For all $\BitCount≥1$ and $\FormulaSet[1]∈\set{\MLg,\dMLg}$,
  there is a \kl{linear encoding} $\Encoding$ from $\DIGRAPH[\BitCount][1]$ into $\DIGRAPH$
  that allows for \kl{bidirectional translation} within $\FormulaSet[1]$.

  Moreover, $\Encoding(\DIGRAPH[\BitCount][1])$ is $\PiMSO{1}(\MLg)$-\kl{definable} over $\DIGRAPH$.
\end{proposition}
\begin{proof}
  We construct the \kl{linear encoding} $\Encoding$
  that assigns to each $\BitCount$-bit \kl{labeled} \kl{digraph} $\Digraph[1]$
  the (\kl[labeled]{unlabeled}) \kl{digraph} $\Encoding(\Digraph[1])=\Digraph[2]$
  with \kl{domain} $(\set{1}×\NodeSet{\Digraph[1]}) ∪ \range[2]{\BitCount+3}$
  and \kl{edge relation}
  \begin{align*}
    \inp{\RelSymbol}{\Digraph[2]} =
    &\phantom{{}∪\:{}}\bigsetbuilder{\bigtuple{\tuple{1,\Node[1]},\tuple{1,\Node[1]'}}}{\tuple{\Node[1],\Node[1]'}∈\inp{\RelSymbol}{\Digraph[1]}} \\
    &∪\:\bigsetbuilder{\bigtuple{\tuple{1,\Node[1]},\,3}}{\Node[1]∈\NodeSet{\Digraph[1]}} \\
    &∪\:\:\smashoperator{\bigcup_{1≤i≤\BitCount}}\; \bigsetbuilder{\bigtuple{\tuple{1,\Node[1]},\,i+3}}{\Node[1]∈\inp{\SetConstant_i}{\Digraph[1]}} \\[-0.2ex]
    &∪\:\bigsetbuilder{\tuple{i+3,\:i+2}}{1≤i≤\BitCount} \\
    &∪\:\bigsetbuilder{\tuple{i+3,\:2}}{0≤i≤\BitCount}.
  \end{align*}
  The idea is to introduce a gadget that contains a separate \kl{node}
  for each \kl{labeling set} of the original \kl{digraph},
  and then connect the “regular \kl{nodes}” to this gadget
  in a way that corresponds to their respective \kl{labeling}.
  The gadget is easily identifiable because
  it contains the only \kl{node} in the \kl{digraph} that has no outgoing \kl{edge}
  (namely, \kl{node}~$2$).
  We ensure this by connecting all the “regular \kl{nodes}” to \kl{node}~$3$.

  Our \kl{forward translation},
  from $\FormulaSet[1]$ on $\DIGRAPH[\BitCount][1]$ to $\FormulaSet[1]$ on $\Encoding(\DIGRAPH[\BitCount][1])$,
  is given by
  \begin{gather*}
    \begin{aligned}
      \Formula[2]_{\SetConstant_i}   &= \dm \Formula[2]_{i+3} \quad\text{for $1≤i≤\BitCount$}, \\
      \Formula[2]_{\RelSymbol}      &= \dm(\Formula[2]_1 \AND \PosIn{\SetVariable[2]}), \\
      \Formula[2]_{\invr{\RelSymbol}}   &= \bdm \PosIn{\SetVariable[2]}, \\
      \Formula[2]_{\GlobalRelSymbol} &= \gdm(\Formula[2]_1 \AND \PosIn{\SetVariable[2]}),\\
      \Formula[2]_\ini    &= \Formula[2]_{\GlobalRelSymbol}\+,
    \end{aligned} \\[1ex]
    \begin{alignedat}{2}
      \text{where\!}& &    \Formula[2]_1 &= \textstyle \NOT\bigOR_{2≤i≤\BitCount+3}(\Formula[2]_i), \\
                    & &    \Formula[2]_2 &= \bx \False, \\
                    & &    \Formula[2]_3 &= \dm \Formula[2]_2 \AND \bx \Formula[2]_2, \\
                    & & \Formula[2]_{i+3} &= \dm \Formula[2]_2 \AND \dm \Formula[2]_{i+2} \AND \bx(\Formula[2]_2 \OR \Formula[2]_{i+2}) \quad\text{for $1≤i≤\BitCount$}.
    \end{alignedat}
  \end{gather*}

  Our translation in the other direction,
  from $\FormulaSet[1]$ on $\Encoding(\DIGRAPH[\BitCount][1])$ to $\FormulaSet[1]$ on $\DIGRAPH[\BitCount][1]$,
  is given by
  \begin{equation*}
    \begin{aligned}
      \Formula[1]_{\RelSymbol}^h       &= \begin{cases*}
                       \dm \PosIn{\SetVariable[1]^1} \,\OR\, \PosIn{\SetVariable[1]^3} \;\OR{} \\[-1ex]
                       \quad\bigOR_{1≤i≤\BitCount}(\PosIn{\SetConstant_i} \AND \PosIn{\SetVariable[1]^{i+3}}) \hspace{1ex}& for $h=1$, \\
                       \False            & for $h=2$, \\
                       \PosIn{\SetVariable[1]^2}          & for $h=3$, \\
                       \PosIn{\SetVariable[1]^2} \OR \PosIn{\SetVariable[1]^{h-1}} & for $4≤h≤\BitCount+3$,
                     \end{cases*} \\
      \Formula[1]_{\invr{\RelSymbol}}^h    &= \begin{cases*}
                       \bdm \PosIn{\SetVariable[1]^1}                    & for $h=1$, \\
                       \bigOR_{0≤i≤\BitCount}\PosIn{\SetVariable[1]^{i+3}}        & for $h=2$, \\
                       \gdm \PosIn{\SetVariable[1]^1} \OR \PosIn{\SetVariable[1]^{h+1}}           & for $h=3$, \\
                       \gdm(\PosIn{\SetConstant_{h-3}} \AND \PosIn{\SetVariable[1]^1}) \OR \PosIn{\SetVariable[1]^{h+1}} \hspace{-0.5ex}& for $4≤h≤\BitCount+2$, \\
                       \gdm(\PosIn{\SetConstant_{h-3}} \AND \PosIn{\SetVariable[1]^1})          & for $h=\BitCount+3$,
                     \end{cases*} \\
      \Formula[1]_{\GlobalRelSymbol}^h &= \gdm(\PosIn{\SetVariable[1]^1} \OR … \OR \PosIn{\SetVariable[1]^{\BitCount+3}}) \quad\+\text{for $1≤h≤\BitCount+3$}, \\
      \Formula[1]_\ini       &= \Formula[1]_{\GlobalRelSymbol}^1.
    \end{aligned}
  \end{equation*}

  Using the helper \kl{formulas} $\tuple{\Formula[2]_i}_{1≤i≤\BitCount+3}$ from the \kl{forward translation},
  we can characterize $\Encoding(\DIGRAPH[\BitCount][1])$ over $\DIGRAPH$ as
  \begin{equation*}
    \gdm \Formula[2]_1
    \,\AND\, \smashoperator{\bigAND_{2≤i≤\BitCount+3}}\totone(\Formula[2]_i)
    \,\AND\, \gbx(\Formula[2]_1\IMP \dm \Formula[2]_3 \AND \NOT\dm \Formula[2]_2).
  \end{equation*}
  Here, each $\eclF{\PiMSO{1}(\MLg)}$-subformula \,$\totone(\Formula[2]_i)$
  is obtained through
  the singleton construction from 
  \cref{ex:uniqueness} in \cref{sec:example-formulas}.
\end{proof}

\subsection{Getting rid of backward modalities}

For the sake of completeness,
we also describe the encoding
that lets us simulate \kl{backward modalities}
by means of an additional \kl{edge relation}.

\begin{proposition}
  \label{prp:hbg-to-hg}
  There is a \kl{linear encoding} $\Encoding$ from $\DIGRAPH$ into $\DIGRAPH[0][2]$
  that allows for \kl{bidirectional translation} \mbox{between} $\dMLg$ and $\MLg$.

  Moreover, $\Encoding(\DIGRAPH)$ is $\PiMSO{1}(\MLg)$-\kl{definable} over $\DIGRAPH[0][2]$.
\end{proposition}
\begin{proof}
  The encoding is straightforward:
  to each \kl{digraph} $\Digraph[1]$,
  we assign a copy $\Encoding(\Digraph[1]) = \Digraph[2]$
  that is enriched with a second \kl{edge relation},
  which coincides with the inverse of the first.
  Formally,
  $\NodeSet{\Digraph[2]} = \set{1} × \NodeSet{\Digraph[1]}$,
  \begin{align*}
    \inp{\RelSymbol_1}{\Digraph[2]} &= \bigsetbuilder{\bigtuple{\tuple{1,\Node[1]},\tuple{1,\Node[1]'}}}{\tuple{\Node[1],\Node[1]'}∈\inp{\RelSymbol}{\Digraph[1]}},
    \quad \text{and} \\
    \inp{\RelSymbol_2}{\Digraph[2]} &= \bigsetbuilder{\tuple{\Node[2]'\!,\Node[2]}}{\tuple{\Node[2],\Node[2]'}∈\inp{\RelSymbol_1}{\Digraph[2]}}.
  \end{align*}

  With this,
  in order to translate between $\dMLg$ on $\DIGRAPH$ and $\MLg$ on $\Encoding(\DIGRAPH)$,
  we merely have to replace \kl{backward modalities} by $\RelSymbol_2$-\kl{modalities},
  and vice versa.
  Hence, when we fix our \kl{forward translation},
  we choose $\Formula[2]_{\RelSymbol} = \dm[1] \PosIn{\SetVariable[2]}$ and $\Formula[2]_{\invr{\RelSymbol}} = \dm[2] \PosIn{\SetVariable[2]}$,
  and for the \kl{backward translation}
  we set $\Formula[1]_{\RelSymbol_1}^1 = \dm \PosIn{\SetVariable[1]^1}$ and $\Formula[1]_{\RelSymbol_2}^1 = \bdm \PosIn{\SetVariable[1]^1}$\!.

  To define $\Encoding(\DIGRAPH)$ over $\DIGRAPH[0][2]$,
  we can use the following \kl[class-formula]{$\PiMSO{1}(\MLg)$-formula}:
  \begin{equation*}
    \Forall{\SetVariable}\gbx\bigl(\PosIn{\SetVariable} \IMP \bx[1]\dm[2]\PosIn{\SetVariable} \AND \bx[2]\dm[1]\PosIn{\SetVariable}\bigr)
    \qedhere
  \end{equation*}
\end{proof}

\subsection{Getting rid of directed edges}

In order to encode a \kl{digraph} into an \kl{undirected graph},
we proceed in a similar manner to the elimination of multiple \kl{edge relations}
in \cref{prp:multirelational-labeled}.
Using an ad-hoc trick,
we can do this by introducing only one additional \kl{labeling set}.

\begin{proposition}
  \label{prp:digraph-1bitgraph}
  There is a \kl{linear encoding} $\Encoding$ from $\DIGRAPH$ into $\GRAPH[1][1]$
  that allows for \kl{bidirectional translation} \mbox{between} $\dMLg$ and $\MLg$.

  Moreover, $\Encoding(\DIGRAPH)$ is $\PiMSO{1}(\MLg)$-\kl{definable} over $\GRAPH[1][1]$.
\end{proposition}
\begin{proof}
  A suitable choice for $\Encoding$ is to take the function that
  assigns to every \kl{digraph} $\Digraph[1]$
  the \mbox{$1$-bit} \kl{labeled} \kl{undirected graph} $\Encoding(\Digraph[1])=\Graph[2]$
  with \kl{domain} $(\range{3}×\NodeSet{\Digraph[1]}) ∪ \range[4]{6}$,\,
  \kl{labeling set} $\inp{\SetConstant}{\Graph[2]} = \range[4]{6}$,
  and \kl{edge relation} $\inp{\RelSymbol}{\Graph[2]}=\bigsetbuilder{\tuple{\Node[2],\Node[2]'}}{\set{\Node[2],\Node[2]'}∈\UndirectedEdgeSet{\Graph[2]}}$,\,
  where
  \begin{align*}
    \UndirectedEdgeSet{\Graph[2]} =
    &\phantom{{}∪\:{}}\bigsetbuilder{\set{\tuple{1,\Node[1]},\tuple{2,\Node[1]}}}{\Node[1]∈\NodeSet{\Digraph[1]}} \\
    &∪\:\bigsetbuilder{\set{\tuple{1,\Node[1]},\tuple{3,\Node[1]}}}{\Node[1]∈\NodeSet{\Digraph[1]}} \\
    &∪\:\bigsetbuilder{\set{\tuple{2,\Node[1]},\tuple{3,\Node[1]'}}}{\tuple{\Node[1],\Node[1]'}∈\inp{\RelSymbol}{\Digraph[1]}} \\
    &∪\:\bigsetbuilder{\set{\tuple{2,\Node[1]},\,4}}{\Node[1]∈\NodeSet{\Digraph[1]}} \\
    &∪\:\bigsetbuilder{\set{\tuple{3,\Node[1]},\,5}}{\Node[1]∈\NodeSet{\Digraph[1]}} \\
    &∪\:\bigset{\set{5,\,6}}.
  \end{align*}
  The idea is that we connect each original \kl{node} ${\Node[1]∈\NodeSet{\Digraph[1]}}$
  to two new \kl{nodes},
  which represent the “outgoing port” and “incoming port” of $d$,
  and use undirected \kl{edges} between “ports”
  to simulate directed \kl{edges} between “regular \kl{nodes}”.
  In order to distinguish the different types of \kl{nodes},
  we connect them in different ways
  to the additional $\SetConstant$-\kl{labeled} \kl{nodes}.

  Our \kl{forward translation},
  from $\dMLg$ on $\DIGRAPH$ to $\MLg$ on $\Encoding(\DIGRAPH)$,
  is given by
  \begin{gather*}
    \begin{aligned}
      \Formula[2]_{\RelSymbol}      &= \dm\bigl(\Formula[2]_2 \AND \dm\dm(\Formula[2]_1 \AND \PosIn{\SetVariable[2]})\bigr), \\
      \Formula[2]_{\invr{\RelSymbol}}   &= \dm\bigl(\Formula[2]_3 \AND \dm\dm(\Formula[2]_1 \AND \PosIn{\SetVariable[2]})\bigr), \\
      \Formula[2]_{\GlobalRelSymbol} &= \gdm(\Formula[2]_1 \AND \PosIn{\SetVariable[2]}),\\
      \Formula[2]_\ini    &= \Formula[2]_{\GlobalRelSymbol}\+,
    \end{aligned} \\[1ex]
    \begin{alignedat}{2}
      \text{where} \quad \Formula[2]_1 &= \NOT(\Formula[2]_2 \OR … \OR \Formula[2]_6) & &\text{(“regular”)}, \\
                         \Formula[2]_2 &= \dm \Formula[2]_4          & &\text{(“outgoing”)}, \\
                         \Formula[2]_3 &= \NOT \PosIn{\SetConstant} \AND \dm \Formula[2]_5     & &\text{(“incoming”)}, \\
                         \Formula[2]_4 &= \PosIn{\SetConstant} \AND \NOT\dm \PosIn{\SetConstant} \AND \phantom{\NOT}\dm \NOT \PosIn{\SetConstant}, \!\! \\
                         \Formula[2]_5 &= \PosIn{\SetConstant} \AND \phantom{\NOT}\dm \PosIn{\SetConstant} \AND \phantom{\NOT}\dm \NOT \PosIn{\SetConstant}, \!\! \\
                         \Formula[2]_6 &= \PosIn{\SetConstant} \AND \phantom{\NOT}\dm \PosIn{\SetConstant} \AND \NOT\dm \NOT \PosIn{\SetConstant}. \!\!
    \end{alignedat}
  \end{gather*}

  Our \kl{backward translation},
  from $\MLg$ on $\Encoding(\DIGRAPH)$ to $\dMLg$ on $\DIGRAPH$,
  is given by
  \begin{align*}
    \Formula[1]_{\SetConstant}^h       &= \begin{cases*}
                       \False   & for $1≤h≤3$, \\
                       \True   & for $4≤h≤6$,
                   \end{cases*} \displaybreak[0]\\
    \Formula[1]_{\RelSymbol}^h       &= \begin{cases*}
                     \PosIn{\SetVariable[1]^2} \OR \PosIn{\SetVariable[1]^3}             & for $h=1$, \\
                     \PosIn{\SetVariable[1]^1} \OR \dm \PosIn{\SetVariable[1]^3} \OR \PosIn{\SetVariable[1]^4}   \hspace{0.4ex}& for $h=2$, \\
                     \PosIn{\SetVariable[1]^1} \OR \bdm \PosIn{\SetVariable[1]^2} \OR \PosIn{\SetVariable[1]^5}  & for $h=3$, \\
                     \gdm \PosIn{\SetVariable[1]^2}              & for $h=4$, \\
                     \gdm \PosIn{\SetVariable[1]^3} \OR \PosIn{\SetVariable[1]^6}        & for $h=5$, \\
                     \PosIn{\SetVariable[1]^5}                   & for $h=6$,
                   \end{cases*} \displaybreak[0]\\
    \Formula[1]_{\GlobalRelSymbol}^h &= \gdm(\PosIn{\SetVariable[1]^1} \OR … \OR \PosIn{\SetVariable[1]^6}) \quad\text{for $1≤h≤6$}, \\
    \Formula[1]_\ini       &= \Formula[1]_{\GlobalRelSymbol}^1.
  \end{align*}

  We can define $\Encoding(\DIGRAPH)$ over $\GRAPH[1][1]$
  with the following \kl[class-formula]{$\eclF{\PiMSO{1}(\MLg)}$-formula}.
  It makes use of
  our helper \kl{formulas} $\tuple{\Formula[2]_i}_{1≤i≤6}$ from the \kl{forward translation}
  and the constructions \,$\seeone(\Formula[2]_i)$\, and \,$\totone(\Formula[2]_i)$\,
  from \cref{ex:uniqueness} in \cref{sec:example-formulas}.
  \begin{equation*}
    \begin{aligned}
      &\smashoperator{\bigAND_{4≤i≤6}} \totone(\Formula[2]_i) \,\,\AND {} \\
      &\gbx\bigl( \Formula[2]_2 \,\IMP\, \seeone(\Formula[2]_1) \AND \bx(\Formula[2]_1 \OR \Formula[2]_3 \OR \Formula[2]_4) \bigr) \,\,\AND {} \\
      &\gbx\bigl( \Formula[2]_3 \,\IMP\, \seeone(\Formula[2]_1) \AND \bx(\Formula[2]_1 \OR \Formula[2]_2 \OR \Formula[2]_5) \bigr) \,\,\AND {} \\
      &\gbx\bigl( \Formula[2]_1 \,\IMP\, \seeone(\Formula[2]_2) \AND \seeone(\Formula[2]_3) \AND \bx(\Formula[2]_2 \OR \Formula[2]_3) \bigr)
    \end{aligned}
  \end{equation*}
  The first line states that the three $\SetConstant$-\kl{labeled} \kl{nodes} are unique,
  which forces $5$ and $6$ to be connected.
  The remaining lines ensure
  that each “port” is connected to exactly one “regular \kl{node}”,
  and, conversely,
  that every “regular \kl{node}” is linked to precisely
  one “outgoing port” and one “incoming port”.
  As a consequence,
  the number of “regular \kl{nodes}” must be the same
  as the number of “ports” of each type.
  Furthermore,
  the \kl{formula} restricts the types of \kl{neighbors} each \kl{node} can have,
  while the usage of the helper \kl{formulas} $\Formula[2]_2$ and $\Formula[2]_3$
  makes sure that the required connections to the $\SetConstant$-\kl{labeled} \kl{nodes}
  are established.
  Finally,
  the fact that $\Formula[2]_1$ characterizes
  the “regular \kl{nodes}” as the “remaining ones”
  guarantees that there are no unaccounted-for \kl{nodes}.
\end{proof}

\subsection{Getting rid of global modalities}

Our last encoding function lets us simulate \kl{global modalities}
by inserting a new \kl{node}
that is bidirectionally connected to all the “regular \kl{nodes}”.

\begin{proposition}
  \label{prp:digraph-pdigraph}
  There is a \kl{linear encoding} $\Encoding$ from $\DIGRAPH$ into $\pDIGRAPH$
  that allows for \kl{bidirectional translation} \mbox{between} $\MLg$ and $\ML$.

  Furthermore, $\Encoding$ can be easily adapted into
  a \kl{linear encoding} $\Encoding'$ from $\DIGRAPH$ into $\DIGRAPH$
  that satisfies the following \kl{figurative inclusions},
  for arbitrary $\Level≥2$:
  \begin{align*}
    \semF{\SigmaMSO{\Level}(\MLg)}[\DIGRAPH] &\,\figsubeq{\Encoding'}\;\, \semF{\GBX\SigmaMSO{\Level}(\ML)}[\DIGRAPH] \,, \\
    \semF{\PiMSO{\Level}(\MLg)}[\DIGRAPH] &\,\figeq{\Encoding'}\;\,    \semF{\GBX\PiMSO{\Level}(\ML)}[\DIGRAPH] \,.
    \qedhere
  \end{align*}
\end{proposition}
\begin{proof}
  We choose $\Encoding$ to be the \kl{linear encoding}
  that maps each \kl{digraph} $\Digraph[1]$
  to the \kl{pointed digraph} $\Encoding(\Digraph[1])=\PointedDigraph[2]$
  with \kl{domain} $(\set{1}×\NodeSet{\Digraph[1]}) ∪ \range[2]{3}$,\,
  position $\inp{\PosSymbol}{\PointedDigraph[2]} = 2$,
  and \kl[edge relation]{edge~relation}
  \begin{align*}
    \inp{\RelSymbol}{\PointedDigraph[2]} =
    &\phantom{{}∪\:{}}\bigsetbuilder{\bigtuple{\tuple{1,\Node[1]},\tuple{1,\Node[1]'}}}{\tuple{\Node[1],\Node[1]'}∈\inp{\RelSymbol}{\Digraph[1]}} \\
    &∪\:\bigsetbuilder{\bigtuple{\tuple{1,\Node[1]},\,2}}{\Node[1]∈\NodeSet{\Digraph[1]}} \\
    &∪\:\bigsetbuilder{\bigtuple{2,\,\tuple{1,\Node[1]}}}{\Node[1]∈\NodeSet{\Digraph[1]}} \\
    &∪\:\bigset{\tuple{2,\,3}}.
  \end{align*}
  One can distinguish \kl{node}~$2$ from the others
  because it is connected to~$3$,
  which is the only \kl{node} without any outgoing \kl{edge}.

  Our \kl{forward translation},
  from $\MLg$ on $\DIGRAPH$ to $\ML$ on $\Encoding(\DIGRAPH)$,
  is given by
  \begin{gather*}
    \begin{aligned}
      \Formula[2]_{\RelSymbol}      &= \dm(\Formula[2]_1 \AND \PosIn{\SetVariable[2]}), \\
      \Formula[2]_{\GlobalRelSymbol} &= \dm(\Formula[2]_2 \AND\+ \dm(\Formula[2]_1 \AND \PosIn{\SetVariable[2]})),\\
      \Formula[2]_\ini    &= \dm(\Formula[2]_1 \AND \PosIn{\SetVariable[2]}),
    \end{aligned} \\[1ex]
    \begin{gathered}
      \text{where}\quad \Formula[2]_1 = \dm\dm\bx \False \quad\text{and}\quad \Formula[2]_2 = \dm\bx \False.
    \end{gathered}
  \end{gather*}

  Our \kl{backward translation},
  from $\ML$ on $\Encoding(\DIGRAPH)$ to $\MLg$ on $\DIGRAPH$,
  is given by
  \begin{align*}
    \Formula[1]_{\RelSymbol}^h &= \begin{cases*}
               \dm \PosIn{\SetVariable[1]^1} \OR \PosIn{\SetVariable[1]^2}  & for $h=1$, \\
               \gdm \PosIn{\SetVariable[1]^1} \OR \PosIn{\SetVariable[1]^3} & for $h=2$, \\
               \False              & for $h=3$,
             \end{cases*} \\
    \Formula[1]_\ini &= \gdm \PosIn{\SetVariable[1]^2}.
  \end{align*}

  Turning to the second claim of the \lcnamecref{prp:digraph-pdigraph},
  we obtain $\Encoding'(\Digraph[1])$ by simply removing the position marker from $\Encoding(\Digraph[1])$,
  i.e., for every \kl{digraph} $\Digraph[1]$,\,
  $\Encoding'(\Digraph[1])$ is such that $\ver{\Encoding'(\Digraph[1])}{\PosSymbol}{2} = \Encoding(\Digraph[1])$.

  For the \kl[forward inclusion]{forward} \kl{figurative inclusions},
  let $\FormulaClassOperator ∈ \set{\SigmaMSO{\Level},\, \PiMSO{\Level}}$,\, for some arbitrary $\Level≥0$.
  By applying \cref{lem:translation-inclusion}~\ref{itm:forward-inclusion}
  on $\Encoding$, we get that
  for every $\FormulaClassOperator\+(\MLg)$-\kl{sentence} $\Formula[1]$ over $\set{\RelSymbol}$,
  there is a $\FormulaClassOperator\+(\ML)$-\kl{sentence} $\Formula[2]_{\Formula[1]}$ over $\set{\PosSymbol,\RelSymbol}$
  such that, for all $\Digraph[1]∈\DIGRAPH$,
  \begin{alignat*}{1}
    \Digraph[1] \,\Models\, \Formula[1] \quad&\text{iff}\quad \ver{\Encoding'(\Digraph[1])}{\PosSymbol}{2} \,\Models\, \Formula[2]_{\Formula[1]}\+, \\
                     &\text{iff}\quad \Encoding'(\Digraph[1]) \,\Models\, \gbx(\Formula[2]_2 \IMP \Formula[2]_{\Formula[1]}).
  \end{alignat*}
  Hence,\,
  $\semF{\+\FormulaClassOperator\+(\MLg)}[\DIGRAPH] \figsubeq{\Encoding'} \semF{\GBX\,\FormulaClassOperator\+(\ML)}[\DIGRAPH]\+$.

  For the \kl[backward inclusion]{backward} \kl{figurative inclusion},
  we require that $\Level≥2$.
  Slightly adapting the proof of
  \cref{lem:translation-inclusion}~\ref{itm:backward-inclusion}
  to discard the part where we make use of the \kl{formula} $\Formula[1]_\ini$ from
  \cref{def:backward-translation}~\ref{itm:bwd-initial}
  (incidentally allowing us to merge the two consecutive induction proofs),
  it is easy to show the following:
  Given $h∈\range{3}$ and any $\PiMSO{\Level}(\ML)$-\kl{sentence} $\Formula[2]$ over $\set{\PosSymbol,\RelSymbol}$,
  we can construct a $\PiMSO{\Level}(\MLg)$-\kl{sentence} $\Formula[1]_{\Formula[2]}^h$ over $\set{\PosSymbol,\RelSymbol}$
  such that, for all $\Digraph[1]∈\DIGRAPH$ and $\Node[1]∈\NodeSet{\Digraph[1]}$,
  \begin{align*}
    &\ver{\Digraph[1]}{\PosSymbol}{\Node[1]} \,\Models\, \Formula[1]_{\Formula[2]}^h \quad\text{iff}\quad \ver{\Encoding'(\Digraph[1])}{\PosSymbol}{\Node[2]} \,\Models\, \Formula[2], \\[0.5ex]
    &\quad\text{where $\Node[2]$ is $\tuple{h,\Node[1]}$ if $h=1$, and $h$ otherwise.}
  \end{align*}
  This immediately gives us a way of translating $\gbx \Formula[2]$:
  \begin{equation*}
    \Digraph[1] \,\Models\, \gbx(\Formula[1]_{\Formula[2]}^1 \AND \Formula[1]_{\Formula[2]}^2 \AND \Formula[1]_{\Formula[2]}^3) \quad\text{iff}\quad \Encoding'(\Digraph[1]) \,\Models\, \gbx \Formula[2].
  \end{equation*}
  The left-hand side \kl{sentence} can be transformed into \kl{prenex normal form}
  by simulating the \kl{global box} with a universal \kl{set quantifier}.
  Checking that a given set is \emph{not} a singleton can be done in $\SigmaMSO{1}(\MLg)$,
  since the negation is $\PiMSO{1}(\MLg)$-expressible
  (see \cref{ex:uniqueness} in \cref{sec:example-formulas}).
  Thus,
  the given \kl{formula} is \kl[device equivalent]{equivalent} to a \kl[class-formula]{$\PiMSO{\Level}(\MLg)$-formula},
  and we obtain that\,
  $\vphantom{\Bigl(}
  \semF{\PiMSO{\Level}(\MLg)}[\DIGRAPH] \figsupeq{\Encoding'} \semF{\GBX\PiMSO{\Level}(\ML)}[\DIGRAPH]\+$.
\end{proof}
\chapter{Perspectives}
\label{ch:perspectives}

Coming to the end of this thesis,
we discuss some ideas for future research.
They can be separated into two categories:
rather focused questions
that directly follow up on the results presented here,
and broader questions
that aim at the bigger picture.

\section{Focused questions}

Let us start with the topics directly related to this work,
following roughly the order of discussion in the document.

\subsection{Is there an alternation level that covers first-order logic?}

In \cref{ch:local},
we have related the classes of \kl{digraph languages}
\kl[global recognizable]{recognizable} by our three flavors of $\LDAg$'s
to those \kl{definable} in $\MSOL$, $\EMSOL$ and $\FOL$.
As shown in \cref{fig:venn-diagram} on page \pageref{fig:venn-diagram},
$\ALDAg$'s cover $\FOL$
(as a direct consequence of their equivalence to $\MSOL$),
whereas $\NLDAg$'s do not.
It is also easy to see
that every $\NLDAg$ is \kl[device equivalent]{equivalent}
to an $\NLDAg$ of \kl[automaton length]{length} $1$,
since each \kl{node} can simply guess
all of its nondeterministic transitions at once,
and then verify in one round of communication
that its own choices are consistent
with those of its \kl[incoming neighbors]{neighbors}.
Furthermore,
we know from \cref{ch:alternation}
(\cref{thm:separation-R} on page \pageref{thm:separation-R}) that
$\ALDAg$'s of \kl[automaton length]{length} $\Level + 1$
are strictly more expressive than
$\ALDAg$'s of \kl[automaton length]{length} $\Level$
(or equivalently,
that the \kl{set quantifier} alternation hierarchy of $\MSO(\bMLg)$
is strict over \kl{digraphs}).
This means that \kl[automaton length]{length}-restricted $\ALDAg$'s
form an infinite hierarchy of \kl[\LDAg]{automata} classes
between $\NLDAg$'s and $\ALDAg$'s.

Against this backdrop,
a natural question is whether there exists a bound $\Level$
such that $\ALDAg$'s of \kl[automaton length]{length} $\Level$
can \kl[global recognize]{recognize}
all \kl[digraph languages]{languages}
\kl{definable} in $\FOL$ on arbitrary \kl{digraphs}.
Note that this would also imply
that $\ALDAg$'s of \kl[automaton length]{length} $\Level + 1$
fully cover~$\EMSOL$.

When restricted to \kl{digraphs} of bounded degree,
the answer is positive.
This can be seen using Hanf's locality theorem,
which basically states that on \kl{digraphs} of bounded degree,
every \kl[class-formula]{$\FOL$-formula}
is \kl[device equivalent]{equivalent} to
a Boolean combination of conditions of the form
“$r$-sphere $\Digraph[2]$ occurs at least $n$ times”,
where an $r$-sphere is a \kl{pointed digraph}
that represents the $r$-neighborhood of its \kl{distinguished node}
(see, e.g., \cite[Thm~4.1]{DBLP:conf/dimacs/Thomas96}
or \cite[Thm~4.24]{DBLP:books/sp/Libkin04}).
Based on this characterization,
it is relatively easy to show
that any \kl[class-formula]{$\FOL$-formula}
can be translated to an $\ALDAg$ of \kl[automaton length]{length}~$3$.
However,
for \kl{digraphs} of unbounded degree,
the author does not know the answer to the above question.

\subsection{Does asynchrony entail quasi-acyclicity?}

As already mentioned in \cref{sec:result},
we make crucial use of \kl{quasi-acyclicity}
to prove the \kl[device equivalence]{equivalence}
of $\aQDA$'s and the \kl{backward $\mu$-fragment}
(i.e., \cref{thm:main-result} on page~\pageref{thm:main-result}).
It is however open
whether we really need to impose this condition
on our \kl{asynchronous automata}
in order to be able to convert them into \kl{formulas}
of $\SigmaMu{1}(\bML)$.
\kl[asynchronous automaton]{Asynchrony} is a very strong requirement,
and it might well be the case
that every \kl{asynchronous automaton} is in fact
\kl[device equivalent]{equivalent} to a \kl{quasi-acyclic} one.
Moreover,
if this assumption turned out to be true,
it would be interesting to know
if it extends to \kl{lossless-asynchronous automata}.

\subsection{Is asynchrony decidable?}

Another natural question
concerning \kl[asynchronous automaton]{asynchrony}
is whether there exists an algorithm
that decides if a given \kl{distributed automaton}
is \kl[asynchronous automaton]{asynchronous},
or alternatively,
if it is \kl[lossless-asynchronous automaton]{lossless-asynchronous}.
Even though we can effectively translate
from \kl{quasi-acyclic}
\kl[asynchronous automata]{(lossless-)asynchronous automata}
to the \kl{backward $\mu$-fragment}
(see \cref{prp:automata-to-logic}),
our translation procedure relies on the guarantee
that the given \kl[distributed automaton]{automaton}
is indeed an $\laQDA$.
From a practical perspective,
it would be advantageous
if the procedure could also check that its input is valid.
While \kl{quasi-acyclicity} can be easily verified,
\kl[asynchronous automaton]{(lossless-)asynchrony}
seems to present a more challenging problem.

\subsection{Are forgetful automata useful as tree automata?}

In \cref{ch:emptiness},
we have seen that
\kl{forgetful distributed automata}
are strictly more expressive than
classical \kl[pointed ordered ditree]{tree} automata
on \kl{ordered ditrees}
(\cref{prp:inclusion-tree-automata} on page \pageref{prp:inclusion-tree-automata}).
Moreover,
their \kl{emptiness problem} is decidable on arbitrary \kl{digraphs}
(\cref{thm:forgetful-emptiness} on page \pageref{thm:forgetful-emptiness}),
and since all \kl{distributed automata} satisfy a tree-model property
(see \cref{lem:tree-model-property} on page \pageref{lem:tree-model-property}),
it is straightforward to adapt our decision procedure
to the special case of \kl{ordered ditrees}.

This begs the question
whether \kl{forgetful automata}
could be of use in typical application areas
of \kl[pointed ordered ditree]{tree} automata,
such as program verification and processing of \textsc{xml}-like data.
A first step towards an answer would be
to investigate their closure properties
(which are probably not as nice as
those of \kl[pointed ordered ditree]{tree} automata)
and to precisely analyze the complexities of their decision problems.
Indeed,
the $\LOGSPACE$ complexity of the \kl{emptiness problem}
(stated in \cref{thm:forgetful-emptiness})
has to be revised for the case of \kl{ordered ditrees}
because \kl{distributed automata}
that operate exclusively on such \kl{structures}
do not have to deal with sets of \kl{states}
and thus can be represented more compactly;
this leads to higher computational complexity.

\subsection{How powerful are quasi-acyclic automata on dipaths?}

We have presented two different constructions
to prove the undecidability
of the \kl{emptiness problem} for \kl{distributed automata}.
The first
(\cref{thm:dipath-emptiness} on page \pageref{thm:dipath-emptiness})
uses the idea of exchanging space and time
to simulate a Turing machine by a \kl{distributed automaton}
that runs on a \kl{dipath}.
This simulation shows
that even the \kl{dipath-emptiness problem} is undecidable,
but it works only if the \kl{state} diagram
of the simulating \kl[distributed automaton]{automaton}
may contain cycles.
Our second approach
(\cref{thm:emptiness-quasi-acyclic} on page \pageref{thm:emptiness-quasi-acyclic})
shows that also for \kl{quasi-acyclic automata}
the \kl{emptiness problem} is undecidable,
but the construction is much more technical
and does not work if we restrict ourselves to \kl{dipaths}.

It is part of ongoing work to establish
a precise characterization of \kl{quasi-acyclic automata} on \kl{dipaths}
in terms of counter machines.
As a corollary,
this will yield a stronger undecidability result
that supersedes the two previous ones.

\section{Broader questions}

To conclude,
let us expand our focus
by suggesting possible extensions
and asking how the present work fits into the wider landscape
of \kl[digraph]{graph} automata and distributed computing.

\subsection{What about distributed automata on infinite digraphs?}

Although in this thesis
we have considered only finite \kl{structures},
this restriction is by no means necessary;
\kl{distributed automata} could also run on infinite \kl{digraphs},
and this would not even require changing their definition.
It is straightforward to see that
the \kl[device equivalence]{equivalence} of $\ALDAg$'s and $\MSOL$
established in \cref{ch:local}
immediately extends to the infinite setting
(simply by verifying that the given proofs remain applicable).
However,
this is not the case for all the results presented here.
In particular in \cref{ch:nonlocal},
we have relied on the fact that our \kl{digraphs} are finite
to prove the \kl[device equivalence]{equivalence} of
\kl{quasi-acyclic} \kl{asynchronous automata}
and the \kl[backward $\mu$-fragment]{backward \mbox{$\mu$-fragment}}
(see the proof of \cref{prp:logic-to-automata}
on page~\pageref{prp:logic-to-automata}).
It seems that a more powerful \kl{acceptance condition}
would be required
in order to get a corresponding
\kl[device equivalence]{equivalence} on infinite \kl{digraphs}.

For future research on \kl{distributed automata},
it would be worthwhile to
systematically consider both the finite and the infinite case.

\subsection{What is the overlap with cellular automata?}
\label{sec:cellular-automata}

Obviously,
\kl{distributed automata} are closely related to cellular automata.
The only noteworthy difference is that
\kl{distributed automata} can operate on arbitrary \kl{digraphs},
whereas cellular automata are usually confined to regular \kl{structures},
such as (doubly linked) \kl{grids} or \kl{dipaths}.
That is,
if we restrict ourselves to the appropriate classes of \kl{digraphs},
then the two models are exactly the same.
Furthermore,
there is a branch of research concerned with
cellular automata as \kl[pointed-digraph language]{language} acceptors
(see, e.g.,
\mbox{\cite[\S~6.5]{DBLP:series/sci/Kutrib08}} or
\cite{DBLP:reference/nc/Terrier12}).
In order not to “reinvent the wheel”,
it is thus important to relate
questions arising in the study of \kl{distributed automata}
to the existing body of knowledge in cellular automata theory.

An example where this was not done thoroughly enough
can be found in \cref{sec:space-time}.
As the author has been recently informed by N. Bacquey
(on \osf{\printdate{2017-10-20}}),
the idea of exchanging space and time
is well-known within the community of cellular automata.
It is for instance documented in
\cite[Fig.~9]{DBLP:reference/nc/Terrier12},
where it is employed to simulate
a real-time one-dimensional two-way cellular automaton
by a corresponding one-way automaton.
Although the presentation and purpose
differ considerably from those in the present work,
the technical construction is essentially the same.

\subsection{Can we characterize more powerful models?}

As mentioned at the beginning of \cref{ch:introduction},
the original motivation for this thesis
was to work toward
a descriptive complexity theory for distributed computing.
By focusing on \kl{distributed automata},
we were able to make some progress in that direction,
but the main challenge remains
to establish logical characterizations
of stronger models of computation,
powerful enough
to cover the kinds of algorithms
usually considered in distributed computing.
In order to be of practical interest,
such a characterization should be
in terms of \emph{finite} \kl{formulas},
just like the one provided by Fagin's theorem
for nondeterministic polynomial-time Turing machines.

There are several ideas “in the air”
on how one might characterize
distributed finite-state machines
equipped with unique identifiers,
or even distributed Turing machines
subject to certain time and space constraints.
As of the time of writing,
the author is not aware of any fully developed solution,
but new results should be expected in the next few years.

\backmatter

\bibliography{thesis.bib}
\printindex

\end{document}